\def\Gslat{\relax{\slash\kern-.58em G}}
\def\Dslat{\relax{\slash\kern-.68em D}}
\def\Fslat{\relax{\slash\kern-.68em F}}
\def\Phislat{\relax{\slash\kern-.65em \Phi}}
\def\tilde{\widetilde}
\def\1bar{1\hskip -.275cm -}
\def\2bar{2\hskip -.275cm -}
\def\3bar{3\hskip -.275cm -}
\newtheorem{theorem}{Theorem} 
\newtheorem{definition}{Definition} 
\def\ii{{\rm i}}
\def\bfone{\relax{\rm 1\kern-.35em 1}}
\def\bfzero{\relax{\rm 0\kern -.45 em 0}}
\newcommand{\so}{\mathfrak{ {}so}}
\newcommand{\osp}{\mathfrak{ {}osp}}
\tikzstyle{block}=[draw opacity=0.7,line width=1.4cm]
\def\cleardoublepage{\clearpage\if@twoside
\ifodd\c@page
\else\hbox{}\thispagestyle{empty}\newpage
\if@twocolumn\hbox{}\newpage\fi\fi\fi}
\setlist[enumerate,itemize,description]{topsep=0em}
\title{Group Theoretical Hidden Structure \\ of Supergravity Theories \\ in Higher Dimensions}
\author{Lucrezia Ravera}
\keywords{{LaTeX} {PhD Thesis} {Physics} {Politecnico di Torino}}
\begin{document}

\frontmatter

\maketitle


\begin{declaration}

I hereby declare that the contents and organization of this dissertation constitute my own original work and does not compromise in any way the rights of third parties, including those relating to the security of personal data.


\end{declaration}


\begin{dedication} 
\begin{flushright}
\textit{Ai miei genitori, \\ che hanno reso tutto questo possibile.} 
\end{flushright}
\end{dedication}


\begin{acknowledgements}      

This thesis is the result of three years of hard work and dedication, and all this would not have been possible without the help and support of many people in so many different ways.

In particular, I am deeply grateful to my supervisor, Prof. Laura Andrianopoli, for helping me with kindness and irreplaceable encouragements to achieve my goals, and to Prof. Riccardo D'Auria, for his guidance, patience, and constant support. I would also thank Prof. Mario Trigiante for the enlightening discussions and fruitful suggestions. I am extremely thankful to all of them for introducing me to many interesting topics.

I am eternally grateful to my parents, Margherita and Walter, for their love, support, and for helping me to
accomplish my dreams. To my brother, Leonardo, for constantly reminding me how beautiful and important it is to be just ourselves.

I wish to say ``\textit{Muchas gracias}'' to my colleagues and friends from Chile, Evelyn Karina Rodr\'{\i}guez, Patrick Keissy Concha Aguilera, and Diego Molina Pe\~{n}afiel, not only for the works done together but also for all the wonderful moments spent together in Italy. Thank you very much to Serena Fazzini and Paolo Giaccone for their friendship and all laughter done together. Thanks a lot to all the PhD students of ``\textit{Sala Dottorandi Giovanni Rana Secondo Piano}''. A special thanks goes to Fabio Lingua, for his invaluable friendship, for his way of facing life and looking at things, and for the questions and doubts to which we have sought (and still try) to find an answer.

I thank all my PhD colleagues and Professors for the path spent together during these three years. It was really a pleasure to meet and know each of them.

Last but not least, I wish to thank Luca Bergesio heartily, for supporting and suggesting me in daily life.

\end{acknowledgements}

\cleardoublepage
\setsinglecolumn
\chapter*{\centering \Large Publications}
\thispagestyle{empty}
\begin{enumerate}

\item P.~Fré, P.~A.~Grassi, L.~Ravera and M.~Trigiante,
  ``Minimal $D=7$ Supergravity and the supersymmetry of Arnold-Beltrami Flux branes,''
  JHEP {\bf 1606} (2016) 018, \\
  doi:10.1007/JHEP06(2016)018
  [arXiv:1511.06245 [hep-th]].

\item L.~Andrianopoli, R.~D'Auria and L.~Ravera,
  ``Hidden Gauge Structure of Supersymmetric Free Differential Algebras,''
  JHEP {\bf 1608} (2016) 095, \\
  doi:10.1007/JHEP08(2016)095
  [arXiv:1606.07328 [hep-th]].
  
\item M.~C.~Ipinza, P.~K.~Concha, L.~Ravera and E.~K.~Rodríguez,
  ``On the Supersymmetric Extension of Gauss-Bonnet like Gravity,''
  JHEP {\bf 1609} (2016) 007, \\
  doi:10.1007/JHEP09(2016)007
  [arXiv:1607.00373 [hep-th]].

\item M.~C.~Ipinza, F.~Lingua, D.~M.~Peñafiel and L.~Ravera,
  ``An Analytic Method for $S$-Expansion involving Resonance and Reduction,''
  Fortsch.\ Phys.\  {\bf 64} (2016) no.11-12,  854, \\
  doi:10.1002/prop.201600094
  [arXiv:1609.05042 [hep-th]].

\item D.~M.~Peñafiel and L.~Ravera,
  ``Infinite S-Expansion with Ideal Subtraction and Some Applications,''
  J.\ Math.\ Phys.\  {\bf 58} (2017) no.8,  081701, \\
  doi:10.1063/1.4991378
  [arXiv:1611.05812 [hep-th]].
  
\item D.~M.~Peñafiel and L.~Ravera,
  ``On the Hidden Maxwell Superalgebra underlying D=4 Supergravity,''
  Fortsch.\ Phys.\  {\bf 65} (2017) no.9,  1700005, \\
  doi:10.1002/prop.201700005
  [arXiv:1701.04234 [hep-th]].
  
\item L.~Andrianopoli, R.~D'Auria and L.~Ravera,
  ``More on the Hidden Symmetries of 11D Supergravity,''
  Phys.\ Lett.\ B {\bf 772} (2017) 578, \\
  doi:10.1016/j.physletb.2017.07.016
  [arXiv:1705.06251 [hep-th]].

\end{enumerate}

%
%
%
%
%

\begin{abstract}

The purpose of my PhD thesis is to investigate different group theoretical and geometrical aspects of supergravity theories. 
To this aim, several research topics are explored: On one side, 
the construction of supergravity models in diverse space-time dimensions, including the study of boundary contributions, and the disclosure of the hidden gauge structure of these theories; on the other side, the analysis of the algebraic links among different superalgebras related to supergravity theories.

In the first three chapters, we give a general introduction and furnish the theoretical background necessary for a clearer understanding of the thesis. In particular, we recall the rheonomic (also called geometric) approach to supergravity theories, where the field curvatures are expressed in a basis of superspace. This includes the Free Differential Algebras framework (an extension of the Maurer-Cartan equations to involve higher-degree differential forms), since supergravity theories in $D \geq 4$ space-time dimensions contain gauge potentials described by $p$-forms, of various $p>1$, associated to $p$-index antisymmetric tensors. Considering $D=11$ supergravity in this set up, we also review how the supersymmetric Free Differential Algebra describing the theory can be traded for an ordinary superalgebra of $1$-forms, which was introduced for the first time in the literature in the `80s. This hidden superalgebra underlying $D=11$ supergravity (which we will refer to as the DF-algebra) includes the so called $M$-algebra being, in particular, a spinor central extension of it.

We then move to the original results of my PhD research activity: We start from the development of the so called $AdS$-Lorentz supergravity in $D=4$ by adopting the rheonomic approach and discuss on boundary contributions to the theory. Subsequently, we focus on the analysis of the hidden gauge structure of supersymmetric Free Differential Algebras. More precisely, we concentrate on the hidden superalgebras underlying $D=11$ and $D=7$ supergravities, exploring the symmetries hidden in the theories and the physical role of the nilpotent fermionic generators naturally appearing in the aforementioned superalgebras. After that, we move to the pure algebraic and group theoretical description of (super)algebras, focusing on new analytic formulations of the so called $S$-expansion method. The final chapter contains the summary of the results of my doctoral studies presented in the thesis and possible future developments. In the Appendices, we collect notation, useful formulas, and detailed calculations.

\end{abstract}


\tableofcontents

\listoffigures



\chapter*{Nomenclature}
\thispagestyle{empty}

\textbf{Acronyms / Abbreviations} \\
$\mathcal{N}$ \quad Amount of supersymmetry charges \\
$AdS$ \quad Anti-de Sitter \\
$D$ \quad Space-time dimensions \\
CE-cohomology \quad Chevalley-Eilenberg Lie algebras cohomology \\
CIS \quad Cartan Integrable Systems \\
FDA \quad Free Differential Algebras \\
IW contraction \quad In\"{o}n\"{u}-Wigner contraction


\mainmatter


\chapter{Introduction}  
\label{chapter 1}
\ifpdf
    \graphicspath{{Chapter1/Figs/}{Chapter1/Figs/PDF/}{Chapter1/Figs/}}
\else
    \graphicspath{{Chapter1/Figs/Vector/}{Chapter1/Figs/}}
\fi

\nomenclature[Z]{$D$}{Space-time dimensions}

\nomenclature[Z]{$\mathcal{N}$}{Amount of supersymmetry charges}

\begin{flushright}
``\textit{Il pi\`{u} nobile dei piaceri \`{e} la gioia della conoscenza}.''
\\
Leonardo da Vinci
\end{flushright}

In the following chapter, I discuss the state of the art and give some motivations to supersymmetry and supergravity, the latter being the supersymmetric extension of Einstein's General Relativity.
Then, I also furnish a general introduction to the research activity I have done during my PhD.

\section{State of the art}

Three of the four fundamental forces of Nature (strong nuclear interaction, weak nuclear interaction, and electromagnetic interaction) are successfully described by the \textit{Standard Model} of particle Physics, a remarkably successful and predictive physical theory.
These forces are related to gauge symmetries, allowing renormalizability and ensuring a viable quantum theory. On the other hand, gravity is described by General Relativity, and there is not yet a consistent quantum description of gravity which would allow a possible unification with the other interactions. 

In order to reach a unified theory, it is necessary to unify the internal symmetries with the space-time symmetries. A good candidate for this purpose is \textit{supersymmetry} (we will give a theoretical background on supersymmetry in Chapter \ref{chapter 2}; a general introduction to supersymmetry can be found, for example, in Ref. \cite{sohnius}). 
Supersymmetric theories ``put together'' fermions and bosons into multiplets (which are called \textit{supermultiples}).
One of the phenomenological advantages of suspersymmetry is that it allows to cancel quadratic divergences in quantum corrections to the Higgs mass, helping to solve the so called \textit{hierarchy problem} of the Standard Model.

A new algebraic structure, known as \textit{Lie superalgebra}, is necessary in order to describe a supersymmetric theory. This requires to generalize the Poincar\'{e} algebra, introducing, besides the bosonic generators, also fermionic ones (that is, it involves, besides $c$-numbers, also Grassmann variables). In particular, a Lie superalgebras has both commutation and anticommutation relations.
The \textit{simplest supersymmetric extension of gravity} corresponds to \textit{minimal Poincar\'{e} supergravity}, and it can be viewed as the ``gauge'' theory of the Poincar\'{e} superalgebra (more details on supergravity will be furnished in Chapter \ref{chapter 2}; for an exhaustive review, see, for example, \cite{VanNieuwenhuizen:1981ae}).

There is a particular interest in superalgebras going beyond the super-Poincar\'{e} one, which allow to study richer supergravity theories. 
Furthermore, there are several physical models depending on the amount of supersymmetry charges, $\mathcal{N}$, and on the choice of space-time dimensions, $D$. 
The larger $\mathcal{N}$ and the larger $D$, more constraints are present in the theory. 
The maximally extended supergravity theory in four space-time dimensions has $\mathcal{N} = 8$ supersymmetries ($32$ supercharges), while the maximal space-time dimensions in which supersymmetry can be realized is $D=11$. 
Moreover, the inclusion of matter in supergravity theories leads to a vast variety of supergravity models, with diverse physical implications. 

The purpose of my PhD thesis is to investigate different supergravity theories, using geometrical and group theoretical formulations.
The results obtained during my PhD, with national and international collaborators, are presented in \cite{Minimal, Hidden, Gauss, Analytic, GenIW, SM4, Malg}.\footnote{In my thesis I have also corrected some typos that were still present in the aforementioned papers, and also better clarified and contextualized the analyzes we have done.}

\section{Why supergravity? Some motivations}

We would like to discuss here why physicists have been interested
in studying supergravity theories.

An important goal of Theoretical Physics is the understanding of the laws of Physics inside a single, unifying theory.

A first step in this direction has been the unification of electricity with magnetism in the Maxwell laws, and subsequently the formulation of the Standard Model, which unifies the theory of strong interactions with the electroweak one. In the Standard Model of particle Physics, through the \textit{Higgs mechanism} the gauge group
\begin{equation}
SU(3)_C \times SU(2)_L \times U(1)_Y
\end{equation}
breaks down to 
\begin{equation}
SU(3)_C \times U(1)_Q
\end{equation}
(the color, indicated by the sub-index $C$ in $SU(3)_C$, and the charge symmetry, indicated by the sub-index $Q$ in $U(1)_Q$, are still preserved).

A further step has then been that of trying to introduce a \textit{Grand Unified Theory} (GUT): The gauge theory of some simple group
\begin{equation}
G_{GUT} \supset SU(3)_C \times SU(2)_L \times U(1)_Y ,
\end{equation}
allowing, through a double, step-wise Higgs mechanism
\begin{equation}
G_{GUT} \rightarrow _{M_X} \;\;\; SU(3)_C \times SU(2)_L \times U(1)_Y \rightarrow _{M_W} \;\;\; SU(3)_C \times U(1)_Q ,
\end{equation}
an understanding of the Standard Model and of the strong interactions from a unifying theory (with a unified coupling constant), unbroken at a higher energy $M_X \sim 10^{16}$ GeV.

However, in this context, it becomes difficult to justify the deeply different energy scales $M_X \sim 10^{16}$ GeV and $M_W \sim 2$ GeV of the GUT and electroweak breaking respectively, which give particles with very
different masses (in particular, the two Higgs scalars). This is know as the \textit{hierarchy problem}.\footnote{Moreover, GUTs predict that the proton will eventually decay (while it is generally supposed to be stable), even with a very long life-time ($\tau_p \sim 10^{30}$ years for the minimal $SU(5)$ GUT model), which has not yet been experimentally observed.}

As we have already mentioned, the hierarchy problem already exists at the level of the Standard Model, since the Higgs mass is $M_H \sim 125$ GeV, whereas the gravitational scale is of the order of the Plank mass $M_P \sim 10^{19}$ GeV, and $\frac{M_H}{M_P} \sim 10^{-17} << 1$. We might expect that, in a fundamental theory, they should have the same order of magnitude. The Standard Model is considered to be, in a certain sense, ``unnatural'', the loop corrections to the Higgs mass being much larger than the Higgs mass.

In this scenario, \textit{global supersymmetric theories} (with ``rigid'' supersymmetry) are attractive, because they have better renormalization properties than non-supersymmetric ones (for example,
boson and fermion loop corrections to the masses of scalars have opposite sign and cancel each other out). Moreover, the degeneracy in quantum numbers among bosonic and fermionic (super)partners can justify some particular values taken by the quantum numbers of the fields.
The hierarchy between the electroweak scale and the Planck scale is achieved in a natural way, without fine-tuning, as it would be, instead, in the case of the Standard Model, where it is possible to adjust the loop corrections in such a way to keep the Higgs light (requiring cancellations between apparently unrelated tree-level and loop contributions).

For these reasons, supersymmetry is helpful in solving the hierarchy problem of the Standard Model when it
is extended to some GUT, at least if one supposes that it is unbroken up to the scale $M_W$ of breaking of the electroweak symmetry.
Furthermore, supersymmetry implements a unification, since it puts on the same footing bosons (among which the gauge fields that carry the interactions) and fermions
(namely the matter charged under the gauge group).

However, supersymmetry also introduces some phenomenological problem, mainly related to the fact that it must be a somehow broken symmetry, since Nature does not appear to be supersymmetric. 

The supersymmetric extension of the Standard Model gives a quite satisfying understanding of quantum field theory, that is of all quantum interactions \textit{apart from gravitation}. The latter appears to be hardly treated as a quantum field theory, since it is not renormalizable. However, \textit{local supersymmetry} automatically includes \textit{gravity}.

Thus, due to the fact that global supersymmetric theories have better renormalization properties than non-supersymmetric ones and local supersymmetry automatically
includes gravity, \textit{supergravity} (the supersymmetric theory of gravitation) was thought to be, when it was first formulated in the $`70$s, a suitable bridge between quantum gravity and unification. 
Morevoer, supergravity naturally solves the problems related to the breaking of supersymmetry (even if it is non-renormalizable, inheriting this from General Relativity).

\subsection{Supergravity as an effective theory}

The hope was that, even if non-renormalizable, supergravity could be finite, due to a loop-by-loop cancellation of graphs between bosonic and fermionic degrees of freedom.

However, this turned out not to be the case: Even if the divergences in supergravity are softened with respect to non-supersymmetric gravity, supergravity, in general, does not seem
to be a finite theory, and, therefore, it has to be understood as an \textit{effective theory}: It describes the interactions of the light degrees of freedom of some more fundamental
underlying quantum theory.
The natural candidate for such an underlying theory is \textit{superstring theory}: A finite, anomaly free, theory (as general references on superstring, see, for example, Refs. \cite{String, String2}). Actually, it is expected to lead to the
fundamental theory of Nature, not only describing the structure of elementary particles,
but also providing a natural explanation for all interactions in Nature, and even
for the underlying structure of space-time itself. 

Until $1994$, superstring theory was
only known in its perturbative formulation. Five different consistent theories were found: Type IIA, Type IIB, Type I, Heterotic $E_8 \times E_8$, Heterotic $SO(32)$. A big effort was spent in the study of the phenomenological aspects of these theories, in order to understand which
was the one giving rise to our physical world. The spectrum of each theory contains a finite number of massless states and an infinite tower of massive excitations, with
mass scale of the order of the Plank mass ($M_P \sim 10^{19}$ GeV). A feature that the five superstring theories have in common is that their massless modes are described, at low energies (much lower than $M_P$),
by effective supersymmetric field theories, and, in particular, by supergravity theories in ten
space-time dimensions. 
These theories can be suitable for the description of our four-dimensional physical world if the ten-dimensional space-time is thought to be partially compact, with only four non-compact space-time directions.
Indeed, if superstring theory has to provide an explanation of the interactions in our real world which, at low energies, looks four-dimensional, then the vacuum configuration for space-time has to be thought not as a ten-dimensional Minkowski space, but, instead, it should present the form $\mathcal{M}_{(1,3)} \times \mathcal{M}_6$, where $\mathcal{M}_{(1,3)}$ is the $4$-dimensional space-time, while $\mathcal{M}_6$ is a six-dimensional compact manifold, so small that it cannot be observed at the length-scales experimented in our low-energy world.

The main problem in introducing superstring theory as a unifying theory is that, when going down at low energies, one encounters an enormous degeneracy of vacua for string theory. In this sense, we do not gain any predictive power on the quantities characterizing our world. However, we obtain the very important conceptual achievement of unifying gravity with the other interactions and of giving a natural understanding of the origin of all the parameters involved, which are completely arbitrary in the Standard Model.

The supergravity actions which contain fields up to two
derivatives correspond to the effective actions of superstring theory at the lowest order in the string-length parameter $\alpha'$. 

Nowadays, in the context of superstring, supergravity has taken a rather prominent role. Indeed, the understanding, in $1995$, of \textit{D-branes} (extended objects that are included in ``modern'' superstring theory) as non-perturbative objects of string theory has opened the way for the discovery of a web of \textit{dualities} relating all the five superstring theories and supergravity. The current understanding is that the five superstring theories are actually different vacua of a single underlying theory, called
\textit{$M$-theory}, whose low-energy limit is the supergravity theory in eleven dimensions ($D=11$ supergravity, in the following).
In this new perspective, supergravity plays therefore a central role: Properties of $D=11$ supergravity can shed light on string theories in ten dimensions; moreover, D-branes also emerge in supergravity as solitonic objects (as black-holes or domain-walls),
which are solutions to the supergravity equations of motion. 

\section{Overview on my PhD research activity}

During my 1st PhD year, I concentrated my research mainly on the study of supergravity in $D=7$ dimensions, adopting the so called \textit{rheonomic} (or geometric) approach.\footnote{Also known as (super)group-manifold approach.} In this approach to supergravity, the duality between a superalgebra and the Maurer-Cartan equations is used for writing the curvatures in \textit{superspace}, whose basis is given by the so called \textit{vielbein} and \textit{gravitino} $1$-forms (for a theoretical background on this approach, see Chapter \ref{chapter 2}).
In particular, the work \cite{Minimal} I have done with Professors P. Fr\'{e}, P.A. Grassi, and M. Trigiante (in which we have studied some properties of the Arnold-Beltrami flux-brane solutions to the minimal $\mathcal{N}=2$, $D = 7$ supergravity), has been my first opportunity to deal with rheonomy and to understand how to build up supergravity theories within this approach. Indeed, my main contribution to this paper has actually been the rheonomic construction of the minimal $\mathcal{N}=2$, $D=7$ supergravity theory. This turned out to be a necessary step in order to study particular vacuum configurations of the theory (Arnold-Beltrami fluxes) and their supersymmetry breaking pattern. I will not concentrate on this topic in this thesis, since the part I have worked on just involves a lot of cumbersome, heavy calculations. The interested reader can find the complete rheonomic construction of the minimal $\mathcal{N}=2$, $D=7$ supergravity theory in \cite{Minimal}.

In my thesis I will focus, instead, on what I have done in the works \cite{Hidden, Gauss, Analytic, GenIW, SM4, Malg} during the second and third PhD years. The aim is to go beyond the concepts presented above, exploring the group theoretical hidden structure of supergravity theories in diverse dimensions. 
Let me mention, before introducing the main works I will collect in this thesis, that during the PhD I had two great opportunities: The first was to work with my supervisors, L. Andrianopoli and R. D'Auria, to whom I really owe everything. They introduced me to the world of supergravity and to research topics that I really enjoyed and which I hope the reader will appreciate in this thesis. 

The second opportunity was to collaborate with Chilean colleagues, who introduced me (and my PhD colleague F. Lingua) to the $S$-expansion method and to its powerful features, such as that of disclosing the relations among different superalgebras related to supergravity theories. Thanks to our fortuitous meeting and to willpower, we produced some papers together, just among us, PhD colleagues and, first of all, friends. In particular, our aim was to link the pure algebraic aspect of $S$-expansion and algebras that can be obtained or related with this method, to supergravity theories, analyzing the details at the algebraic level and, in a particular case, also the dynamics.
I thanks them all a lot for the good and fruitful job done together.

After a reading, one could say that the ``key word'' of this thesis is ``algebra''... And would be right. Indeed, what I worked on is strongly based on a theoretical study at the algebraic and group level, which, if successful, allows a profound knowledge of the land in which a physical theory has its roots. Thus, I would say that, in this sense, \textit{``algebra''} is not just a key word, but a true \textit{``key'' to open the ``doors'' of the physical world}.

My thesis, in which I collect, reorganize (also correcting some misprints), and clarify the main results of my PhD research activity in details, is organized as follows:
\begin{itemize}
\item In Chapter \ref{chapter 2} and \ref{chapter 3} I furnish some theoretical background on supersymmetry and supergravity, focusing on concepts and frameworks that are necessary for a clearer understanding of the thesis. I also give a review of the $S$-expansion method, recalling, in particular, definitions and useful theorems.
Then, I move to the original results of my PhD research activity (Chapters \ref{chapter 4} to \ref{chapter 6}).
\item Chapter \ref{chapter 4} is devoted to the study of the so called $AdS$-Lorentz supergravity in $D=4$, developed by adopting the rheonomic (geometric) approach, and to the analysis of the theory in a space-time endowed with a non-trivial boundary. In the presence of a (non-trivial) boundary, the fields do not asymptotically vanish, and this has some consequences on the invariances of the theory; in particular, we will concentrate on the supersymmetry invariance of the $AdS$-Lorentz supergravity theory in $D=4$.
\item Chapter \ref{chapter 5} contains the core of my research activity, that is an analysis of the hidden gauge structure of some supersymmetric Free Differential Algebras. In particular, I will concentrate on the $D=11$ and $D=7$ cases and further present a deeper discussion on the symmetries of $D=11$ supergravity; the aim is a clearer understanding of the relations among the hidden superalgebra underlying the eleven-dimensional supergravity theory and other meaningful superalgebras in this context.
\item In Chapter \ref{chapter 6}, moving to the pure algebraic description of (super)algebras, I focus on (new) analytic formulations of the $S$-expansion method developed with my PhD colleagues.
\item Finally, Chapter \ref{chapter 7} contains the conclusions and some possible future developments. In the Appendices, I collect the notation, useful formulas, and some detailed calculations.
\end{itemize}
This is the outline. At the beginning of each chapter, I will provide an introduction which gives an overview of the content and of the main results obtained.


\chapter{Theoretical background on supersymmetry and supergravity}
\label{chapter 2}
\ifpdf
    \graphicspath{{Chapter2/Figs/}{Chapter2/Figs/PDF/}{Chapter2/Figs/}}
\else
    \graphicspath{{Chapter2/Figs/Vector/}{Chapter2/Figs/}}
\fi

\nomenclature[Z]{$AdS$}{Anti-de Sitter}

\nomenclature[Z]{CIS}{Cartan Integrable Systems}

\nomenclature[Z]{CE-cohomology}{Chevalley-Eilenberg Lie algebras cohomology}

\nomenclature[Z]{FDA}{Free Differential Algebra}

In this chapter, we first recall the main aspects of supersymmetry (following the lines of Ref. \cite{sohnius}). Then, we move to supergravity, reviewing, in particular, its formulation on superspace and the rheonomic (geometric) approach to supergravity (on the lines of \cite{VanNieuwenhuizen:1981ae, Libro1, Libro2}). We also explain how to study $D=4$ pure supergravity theories in the presence of a boundary (and of a cosmological constant) in the geometric approach, following \cite{bdy}. Finally, we introduce the Free Differential Algebras framework (see, for example, Ref. \cite{Libro2}), and, considering $D=11$ supergravity in this set up, we also review how the supersymmetric Free Differential Algebra can be traded for an ordinary superalgebra of $1$-forms (the hidden superalgebra underlying $D=11$ supergravity was disclosed in 1982 by R. D'Auria and P. Fré in \cite{D'AuriaFre}). This will be useful for a clearer understanding of Chapter \ref{chapter 5}.

\section{Supersymmetry and supergravity in some detail}\label{susyandsugra}

Before proceeding to discuss supersymmetry (and supergravity) in some detail, we should first say something about the Fermi-Bose, matter-force dichotomy (following the discussion presented in \cite{sohnius}). Indeed, the wave-particle duality of Quantum Mechanics, together with the subsequent concept of the ``exchange particle'' in perturbative Quantum Field Theory, seemed to have removed that distinction.
However, forces are mediated by gauge potentials, namely by spin-$1$ vector fields, whereas matter is made of quarks and leptons, that is to say, from spin-$1/2$ fermions.\footnote{Besides integer-spin mesons.} The Higgs particles, mediators of the needed spontaneous breakdown of some of the gauge invariances, play in some sense an intermediate role and must have zero spin (they are bosons), but they are not directly related to any of the forces. 
Supersymmetric theories ``put together'' fermions and bosons into multiplets (which go under the name of \textit{supermultiples}), and the distinction between forces and matter becomes phenomenological: Bosons manifest themselves as forces because they can build up coherent classical fields; on the other hand, fermions are seen as matter because no two identical ones can occupy the same point in space (\textit{Pauli exclusion principle}).

As recalled in \cite{sohnius}, there were several attempts to find 
a unifying symmetry which would directly relate multiplets with different spins,\footnote{Here and in the following, we use the term ``spin'' while actually meaning the \textit{helicity}.} but the failure of attempts to make those ``spin symmetries'' relativistically covariant led to the formulation of a series of ``no-go'' theorems, among which, in particular, the ``no-go'' theorem of Coleman and Mandula ($1967$) \cite{Coleman:1967ad}: 
They proved the impossibility of combining space-time and internal symmetries in any but a trivial way. In particular, they showed that a ``unifying'' group must necessarily be locally isomorphic to the direct product of an internal symmetry group and the Poincaré group.

However, one of the assumptions made in the proof presented in \cite{Coleman:1967ad} turned out to be unnecessary: Coleman and Mandula had admitted only those symmetry transformations which form Lie groups with real parameters, whose generators obey well defined \textit{commutation relations}. 

It was subsequently shown that different spins in the same multiplet are allowed if one includes symmetry operations whose generators obey \textit{anticommutation relations}.
This was first proposed in \cite{Golfand:1971iw} and followed up by \cite{Volkov:1972jx}, where the authors gave what we now call a \textit{non-linear realization of supersymmetry}; their model was non-renormalizable.

Subsequently, Wess and Zumino disclosed field theoretical models with an unusual type of symmetry (that was originally named ``supergauge symmetry'' and is now known as ``supersymmetry''), which connects bosonic and fermionic fields and is generated by charges transforming like spinors under the Lorentz group \cite{Wess:1973kz, Wess:1974tw}. These spinorial charges, which may be considered as generators of a continuous group whose parameters are elements of a Grassmann algebra, give rise to a closed system of commutation-anticommutation relations. It turned out that the energy-momentum operators appear among the elements of this system, so that, in some sense, a (non-trivial) fusion between internal and rigid space-time geometric symmetries occurs \cite{Wess:1973kz, Wess:1974tw, Salam:1974yz}.
 
In particular, in $1973$, Wess and Zumino presented a renormalizable field theoretical model of a spin-$1/2$
particle in interaction with two spin-$0$ particles, in which the particles are related by symmetry transformations and therefore ``sit'' in the same multiplet, which is called in many ways: \textit{Chiral multiplet}, \textit{scalar multiplet} (that is the name which was given by Wess and Zumino in their first paper \cite{Wess:1973kz}), and \textit{Wess-Zumino multiplet}.
The limitations imposed by the Coleman-Mandula ``no-go'' theorem were thus circumvented by introducing a fermionic symmetry operator of spin-$1/2$. Such operators obey anticommutation relations with each other and do not
generate Lie groups; therefore, they are not ruled out by the Coleman-Mandula ``no-go'' theorem.

Consequently to this discovery, in $1975$ the authors of \cite{HLS} extended the results of Coleman and
Mandula to include symmetry operations which obey Fermi statistics. They proved that, in the context of
relativistic field theory, the only models which can lead to a solution of the unification problems are \textit{supersymmetric theories}, and they classified all supersymmetry algebras which can play a role in field theory. 

Supersymmetry transformations are generated by quantum \textit{spinor} operators $Q$ (that are called the \textit{supersymmetry charges}) which change fermionic states into bosonic ones and vice versa. Heuristically:
\begin{equation}
Q \vert fermion \rangle = \vert boson \rangle , \;\;\; Q \vert boson \rangle = \vert fermion \rangle .
\end{equation}
Which particular bosons and fermions are related to each other and how many $Q$'s there are depends on the supersymmetric model under analysis.
However, there are some properties which are common to the $Q$'s in any supersymmetric model, such as (see Ref. \cite{sohnius} for details):
\begin{itemize}
\item The $Q$'s combine space-time with internal symmetries.
\item The $Q$'s behaves like spinors under Lorentz transformations.
\item The $Q$'s are invariant under translations.
\item The anticommutator of two $Q$'s is a symmetry generator and, in particular, a Hermitian operator with positive definite eigenvalues.
\item The subsequent operation of two finite supersymmetry transformations induces a translation in space and time of the
states on which they operate.
\end{itemize}
Many of the most important features of supersymmetric theories can be derived from these crucial properties of the supersymmetry generators by chain.
In particular (see Ref. \cite{sohnius} for further details):
\begin{itemize}
\item The spectrum of the energy operator (the Hamiltonian) in a supersymmetric theory contains no
negative eigenvalues.
\item Each supermultiplet must contain at least one boson and one fermion whose spins differ by $1/2$.
\item All states in a multiplet of unbroken supersymmetry have the same mass.
\item Supersymmetry is spontaneously broken if and only if the energy of the lowest lying state (the \textit{vacuum}) is not exactly zero.
\end{itemize} 
Indeed, referring to the latter feature, since our (low-energy) world does \textit{not} appear to be supersymmetric (experiments do not show elementary particles to be accompanied by \textit{superpartners} with different spin but identical mass), if supersymmetry exists and is fundamental to Nature, it can only be realized as a \textit{spontaneously broken symmetry}: The interaction potentials and the basic dynamics are symmetric, but the state with lowest energy (\textit{ground state} or \textit{vacuum}) is not. If a supersymmetry generator acts on the vacuum, the result will not be zero. Due to the fact that the dynamics retain the essential symmetry of the theory, states with very high energy tend to ``lose the memory'' of the asymmetry of the ground state and the ``spontaneously broken (super)symmetry'' ``gets re-established''.

The number of superpartners of a particle state depend on how many generators $Q$'s are present, as conserved charges, in a supersymmetric model.
As we have already already said, the $Q$'s are spinor operators, and a spinor in $D=4$ space-time dimensions must have at least four real components. Therefore, the total number of $Q$'s must be a
multiple of four. A theory with \textit{minimal supersymmetry}, which is called a theory with $\mathcal{N}=1$ supersymmetry (being $\mathcal{N}$ the number of supersymmetry charges), would be invariant under the transformations generated by just the four independent components of a single spinor operator $Q_\alpha$, with $\alpha=1, \ldots, 4$, and will thus give rise to a single superpartner for each particle state. On the other hand, if there is more supersymmetry, there will be several spinor generators with four components each, namely $Q_{\alpha A}$, $A=1,\ldots, \mathcal{N}$; in this case, we talk about a theory with $\mathcal{N}$-extended supersymmetry (giving rise to $\mathcal{N}$ superpartners for each particle state). 

In supersymmetric theories, the superpartners carry a new
quantum number, which goes under the name of \textit{$R$-charge}.
Then, the so called \textit{$R$-symmetry} is a global symmetry that transforms (rotates) the supercharges into each other (these rotations form an internal symmetry group, in a certain sense like isospin).\footnote{Typically, it is $U(1)$ for $\mathcal{N}=1$ supersymmetric theories, while it becomes non-abelian in $\mathcal{N}$-extended supersymmetry.} Most models with extended supersymmetry are naturally invariant under $R$-symmetry. Let us mention that, in the case of
supergravity, this invariance can be ``gauged'' (made local), and one arrives at a natural link between space-time symmetries (general coordinate invariance and supersymmetry) and gauge interactions.
This speaks very much in favor of extended supergravities.

On the other hand, an important
argument against extended supersymmetry is that it does not allow for chiral fermions as they are observed in Nature (neutrinos) (see Ref. \cite{sohnius} for details on this topic). 
This and other arguments of this type hold strictly only in the absence of gravity. In the context
of supergravity, it is possible to overcome such difficulties. For example, in Kaluza-Klein supergravities \cite{Duff:1986hr}, which are characterized by additional spatial dimensions in
which the space is very highly curved (radii in the region of the Planck length), deviations from the
phenomenology of flat space are particularly large, and many ``no-go theorems'' can be overcome.

Furthermore, from the experimental point of view, referring, in particular, to experimental set up for detecting elementary particles such as the LHC (\textit{Large Hadron Collider}) at CERN, due to the fact that at the energy level currently reached there has been no evidence for supersymmetry, it seems that proper supersymmetric models allowing to describe Nature should be $\mathcal{N}$-extended ones, that means more complicated theories with respect to the $\mathcal{N}=1$ case.

Any multiplet of $\mathcal{N}$-extended supersymmetry contains particles with spins (helicities) at least as large as $\frac{1}{4}\mathcal{N}$ (see \cite{sohnius} for details).

In particular, the following limits arise (in four dimensions):
\begin{itemize}
\item $\mathcal{N}_{\text{max}}=4$ for flat-space renormalizable field theories (super-Yang-Mills);
\item $\mathcal{N}_{\text{max}}=8$ for supergravity.
\end{itemize}
More precisely, if the maximum helicity $\lambda_{\text{MAX}}$ of a \textit{massless} multiplet is 
\begin{itemize}
\item $\vert \lambda_{\text{MAX}}\vert \leq 2$, then $\mathcal{N}\leq 8$;
\item $\vert \lambda_{\text{MAX}}\vert \leq 1$, then $\mathcal{N}\leq 4$;
\item $\vert \lambda_{\text{MAX}}\vert \leq 1/2$, then $\mathcal{N}\leq 2$.
\end{itemize}
As we are going to show in some detail, considering \textit{local} supersymmetry implies to include, together with what we will call the \textit{gravitino} (with helicity $\lambda = 3/2$), also the so called \textit{graviton} (with helicity $\lambda =2$) (plus, for extended supergravity, lower helicity states). Then, in order to have a supermultiplet with maximal helicity $\lambda=2$, the maximally extended theory in four dimensions has $\mathcal{N} = 8$ supersymmetries ($32$ supercharges). 

Actually, one could in principle try to couple supergravity with higher helicity states, by considering $\mathcal{N} > 8$ supergravity; however, no consistent interacting field theory can be constructed for spins higher than two, unless they appear in an infinite number (as it happens for the complete spectrum of superstring theory).

Let us mention here (without going deep in details) that, concerning supergravity, a peculiar feature which distinguishes extended supergravities from the minimal $\mathcal{N}=1$ theory is the fact that for $\mathcal{N}\geq 2$ the vector multiplets include scalars, which can be interpreted, at least locally, as the coordinates of an appropriate Riemannian
manifold (called the \textit{scalar-manifold}). 
Let $U$ be the group of isometries (if any) of the scalar metric defined on the scalar-manifold. The elements
of $U$ correspond to global symmetries of the $\sigma$-model Lagrangian describing the scalar kinetic term.
In $1981$, Gaillard and Zumino discovered that the scalar-manifold isometries $U$ act as duality rotations, interchanging electric with magnetic field-strengths \cite{gaillardzumino}. This fact gives a strong constraint on the geometry of the scalar-manifolds. In particular, the isometry group has to be a subgroup, for all $\mathcal{N}$-extended theories in $D=4$, of the symplectic group $Sp(2n)$ (\textit{symplectic embedding}), where $n$ is the number of vectors in the theory. 

Due to this fact, in $\mathcal{N}$-extended (supergravity) theories, the existence of so called \textit{`t Hooft-Polyakov monopoles}\footnote{A `t Hooft-Polyakov monopole is a \textit{topological soliton} similar to the Dirac monopole, but without any singularities; `t Hooft-Polyakov monopoles are non-singular, solitonic monopole-like solutions appearing in non-abelian gauge theories with the key request that the gauge fields are interacting with scalar fields in the adjoint representation of the gauge group.} \cite{tHooft:1974kcl, Polyakov:1974ek} is a concept implemented in a natural way, and monopoles of such type are always present: Indeed, a crucial point for the existence of `t Hooft-Polyakov monopole like solutions in non-abelian gauge theories (and therefore for having electric-magnetic duality) is the presence in the theory of Higgs fields (scalars) transforming in the adjoint representation of the gauge group $U$.

One can then \textit{switch on charges} (``do the gauging'') with respect to a gauge group. A global symmetry of the action is promoted to be a gauge symmetry gauged by (some of) the vectors of the theory.
In a supersymmetric theory, the global symmetries which are present are the isometries of the scalar-manifold. In doing
the gauging, the interplay between fields of different spin (in particular, vectors and
scalars) is always at work.
Strictly speaking, what happens is that one chooses a subgroup of the isometries of the scalar-manifold that wishes to treat as gauge symmetry, and requires that (some of) the vector fields present in the spectrum of the theory are considered as gauge fields, in the
adjoint representation of the selected group of isometries; then, the interactions with the corresponding gauge fields are turned on.
When this is performed, the theory results to be modified. In particular, it is no more supersymmetric invariant, and the composite connections and vielbein on the scalar-manifold get modified, so that the theory needs further modifications in order to recover supersymmetry invariance.

For the case of $\mathcal{N} = 2$ supergravity, for example, the fermions transformation laws get modified and, in order to restore the invariance under local supersymmetry, the supersymmetry variations of the spin-$3/2$ and $1/2$ fermions acquire a \textit{shift term} (the so called \textit{fermionic shift}). Also the Lagrangian acquires extra terms. In particular, it gets a \textit{scalar potential}, which appears as a \textit{scalar-dependent cosmological constant}. Then,
a \textit{gauged supergravity} with general background configurations for the scalar fields has a vacuum with non-zero cosmological constant. We are not going to explain these aspects in details, since it would require a rather long discussion and it would risky to go astray from the guidelines of the thesis.
The interested reader can find more details on these topics in Ref. \cite{Trigiante2}, where dual gauged supergravities are formulated in a particular fruitful framework which goes under the name of the \textit{embedding tensor formalism}.

Let us now move to the algebraic structure of supersymmetric theories, recalling some technical aspects of \textit{Lie superalgebras}. 

\subsection{Lie superalgebras}

A \textit{Lie superalgebra} (also called \textit{graded Lie algebra}) $\mathfrak{g}$ presents both commutation and anticommutation relations and can be decomposed in subspaces as
\begin{equation}
\mathfrak{g} = \mathfrak{g}_0 \oplus \mathfrak{g}_1,
\end{equation}
where we have denoted by $\mathfrak{g}_0$ the subspace generated by the bosonic generators and by $\mathfrak{g}_1$ the subspace generated by the fermionic ones (associated to Grassmann variables). 

Then, the product $\circ$ defined by
\begin{equation}
\circ : \; \mathfrak{g}\times \mathfrak{g} \rightarrow \mathfrak{g}
\end{equation} 
satisfies the following properties \cite{MullerKirsten:1986cw}:
\begin{itemize}
\item Grading: $\forall \; x_i \in \mathfrak{g}_i$, $i=0,1$,
\begin{equation}
x_i \circ x_j \in \mathfrak{g}_{i+j \; \text{mod(2)}} ,
\end{equation}
namely $\mathfrak{g}$ is a graded Lie algebra.
\item (Anti)commutation properties: $\forall \; x_i \in \mathfrak{g}_i$, $\forall \; x_j \in \mathfrak{g}_j$, $i,j=0,1$,
\begin{equation}
x_i \circ x_j = -(-1)^{ij}x_j \circ x_i = (-1)^{1+ij} x_j \circ x_i .
\end{equation}
\item Generalized Jacobi identities: $\forall \; x_k \in \mathfrak{g}_k$, $\forall \; x_m \in \mathfrak{g}_m $, $\forall \; x_l \in \mathfrak{g}_l$, $k,l,m \in \lbrace{ 0,1\rbrace}$, 
\begin{equation}
x_k \circ (x_l \circ x_m)(-1)^{km} + x_l \circ (x_m \circ x_k)(-1)^{lk}+ x_m \circ (x_k \circ x_l)(-1)^{ml}=0.
\end{equation}
\end{itemize}

Thus, the generators of a Lie superalgebra are closed under (anti)commutation relations of the (schematic) type
\begin{equation}\label{commanticommrel}
[B,B]=B , \;\;\; [B,F]=F, \;\;\; \lbrace{ F,F\rbrace}=B,
\end{equation}
where with $B$ we have denoted the bosonic generators, while $F$ denotes the fermionic ones.

\subsubsection{Super-Poincar\'{e} algebra}

One of the simplest supersymmetry algebras corresponds to the \textit{Poincar\'{e} superalgebra} (or \textit{super-Poincar\'{e} algebra}). In particular, the four-dimensional Poincar\'{e} superalgebra is given by the Lorentz transformations $J_{\mu \nu}=-J_{\nu \mu}$, the space-time translations $P_{\mu}$, with $\mu,\nu, \ldots = 0, 1, 2, 3$ ($J_{\mu \nu}$ and $P_{\mu}$ are the generators of the Poincar\'{e} algebra), and the $4$-component Majorana spinor charge $Q _\alpha$ (in the following, we neglect the spinor index $\alpha=1,2,3,4$, for simplicity) satisfying
\begin{equation}
\bar{Q} \equiv Q^\dagger \gamma_0 = Q^T C.
\end{equation}
The super-Poincar\'{e} (anti)commutation relations read as follows:
\begin{align}
& [J_{\mu \nu}, J_{\rho \sigma}] = 2 \eta_{\rho [\nu}\delta^{\tau \lambda}_{\mu]\sigma}J_{\tau \lambda} , \label{bossuperalge} \\ 
& [J_{\mu \nu}, P_\rho] = \eta_{\rho[\nu}P_{\mu]}, \\
& [J_{\mu \nu},Q]= \frac{1}{2}\gamma_{\mu \nu}Q, \\
& [P_\mu, P_\nu]= 0, \\
& [P_\mu, Q] = 0, \\
& \lbrace Q, \bar{Q} \rbrace = \ii C \gamma^\mu P_\mu ,  \label{ferm}
\end{align} 
where $\eta_{\mu \nu}$ is the Minkowski space-time metric in $D=4$, $\gamma_\mu$ are Dirac gamma matrices in $D=4$ satisfying the Clifford algebra
\begin{equation}
\lbrace \gamma_\mu , \gamma_\nu \rbrace = 2 \eta_{\mu \nu}, \;\;\;\;\;
\gamma_{\mu \nu} \equiv \frac{1}{2}[\gamma_\mu, \gamma_\nu],
\end{equation}
and $C$ is the charge conjugation matrix satisfying
\begin{equation}
C \gamma_\mu C^{-1} = - \gamma_\mu ^T .
\end{equation}
As we can see, the structure of the super-Poincar\'{e} algebra implies that the combination of two supersymmetry transformations gives the generator of a space-time translation, namely $P_\mu$.
On the other hand, the commutativity of
the fermionic generator $Q$ with the bosonic $P_\mu$'s implies that the supermultiplets contain one-particle states with the same mass but different spins.

\subsection{Local supersymmetry and supergravity}

Now, considering supersymmetry, carried by the supercharge $Q$, as a \textit{local} symmetry, implies considering also the translations $P_\mu$ as generators of local transformations, which can be strictly related to a general coordinate transformation.\footnote{When the torsion is zero. In the following we will call these local transformations ``gauge'' transformations.}
In this sense, one can say that local supersymmetry somehow involves gravity. 

We are thus facing \textit{supergravity}, which conciliates supersymmetry with General Relativity, being the supersymmetric extension of the latter. The first publications on supergravity date back to $1976$ and correspond to \cite{Freedman:1976xh, Deser:1976eh, Freedman:1976py}.

Actually, \textit{local supersymmetry needs gravity}, and we can also say that ``local supersymmetry and gravity imply each other'' \cite{VanNieuwenhuizen:1981ae}.
The aforementioned interplay can be explicitly seen by looking at an example given in \cite{VanNieuwenhuizen:1981ae}, in which the author considered the simplest model of \textit{global}
supersymmetry, namely the \textit{Wess-Zumino model} \cite{Wess:1973kz, Wess:1974tw}, describing the propagation of a massless supermultiplet, consisting of a scalar, a pseudo-scalar, and a spin-$1/2$ field, in four-dimensional space-time. The action is left invariant (up to total derivatives) by \textit{global} supersymmetry transformations; then, if one considers \textit{local} supersymmetry transformations (that is the spinor parameter involved in the supersymmetry transformations is considered as a space-time dependent parameter), one can show that, in order to recover the invariance of the action, we have to introduce the interaction with the corresponding ``gauge'' field (that is a vectorial spinor, or, if preferred, spinorial vector), the so called \textit{gravitino} $\psi_\mu$, which carries spin $3/2$; but its effect is consistently included only by introducing the interaction with \textit{gravity} as well, through a new extra tensor field $g_{\mu \nu}$, which can be then identified with the metric tensor of space-time. One then finds that the spin-$2$ field $h_{\mu \nu} \propto g_{\mu \nu}-\eta_{\mu \nu}$ (where $\eta_{\mu \nu}$ is the Minkowski metric) is the quantum gravitational field, called the \textit{graviton}. 

In this sense, supergravity is the ``gauge'' theory of supersymmetry: It describes systems which are left invariant by the action on space-time of local supersymmetry transformations.
The Lagrangian one ends up with is precisely the contribution of a complex scalar and a Majorana spinor to the Lagrangian of General Relativity (actually, \textit{plus extra terms}, but no new fields have to be introduced).

The simplest supergravity action consists of the coupling of a field with spin (helicity) $3/2$ (called the \textit{gravitino field}) to gravity. This can be done by considering the so called Einstein-Hilbert term plus a further term, named the Rarita-Schwinger term \cite{Freedman:1976xh, Deser:1976eh, Freedman:1976py}.

Let us mention here that the fields of a supersymmetric theory form a representation of the Poincaré superalgebra given in (\ref{bossuperalge})-(\ref{ferm}). When this representation is restricted to a specific value of the mass operator $P^\mu P_\mu=m^2$, the representation is called an \textit{on-shell representation multiplet}. On-shell representations are characterized by the equality of the number of bosonic and fermionic states. When trying to construct a supersymmetric Lagrangian based on the fields from the on-shell representation multiplets, one observes that the algebra of the super-Poincaré Noether charges closes only for field configurations satisfying the \textit{equations of motion}.\footnote{The field equations constrain the fields of different spins in different ways, and the pairing of bosonic and fermionic degrees of freedom is therefore no more realized in the off-shell theory.}
For this reason, such actions are called \textit{on-shell actions} and we say that the \textit{supersymmetry algebra} is an \textit{on-shell symmetry}; then, the supersymmetry transformations close on-shell, on the equations of motion.  
The consequence is that the supersymmetry algebra is an ``\textit{open algebra}'': When it is realized as an algebra of transformations on the fields, the ``structure constants'' are not, in fact, constant, but functions of the point, and the superalgebra closes only when the equations of motion are satisfied; then, the ``Jacobi identities'' are not identities anymore, but they are, instead, equations containing the information about the field equations and becoming identically zero on-shell. 

However, with the inclusion of \textit{extra auxiliary fields}, that is to say, by introducing in the Lagrangian \textit{non-dynamical
degrees of freedom} (whose equations of motion do not describe
propagation in space-time) which are then fixed, by their field equations, as functions of the physical fields \cite{Nieuwenhuizen:1978, Stelle:1978}, one can then write a theory which is \textit{off-shell invariant} under local supersymmetry and where supersymmetry is linearly realized.
In other words, the auxiliary fields can be eliminated from the Lagrangian and from the equations of motion by use of their own field equations. The result of their elimination gives the on-shell Lagrangian.

\section{The group-manifold approach}\label{gma}

Let us now move to the theoretical formulation of \textit{(super)gravity theories}.

One would need a framework for formulating (super)gravity theories in a general and basis-independent way, exploiting in some way the power of the symmetries involved in these theories. 

This is the case of the so called \textit{(super)group-manifold approach to (super)gravity theories} \cite{VanNieuwenhuizen:1981ae, Libro1, Libro2, Neeman:1978zvv, DAdda:1980axn}, where the theory is formulated only in terms of external derivatives among differential forms and wedge products among them, in a frame that is completely coordinate-independent.  

Before moving to the case of supergravity theories in the aforementioned geometric approach, it is better to first review the basic features of the \textit{group-manifold approach}, and, in particular, the \textit{geometric, (soft) group-manifold formulation of General Relativity}, fixing conventions and definitions.

Previous knowledge of a bit of group theory and of (Euclidean and) Riemannian geometry in the \textit{vielbein basis} is required.\footnote{Let us mention that the main geometric difference between the linear spaces (Euclidean geometry) and the Riemannian manifolds (Riemannian geometry) is that for linear spaces we have the vanishing of the torsion and curvature $2$-forms, while in Riemannian geometry the torsion and the curvature $2$-forms, in general, do not vanish (even if one can consistently set the torsion to zero, in which case the Christoffel symbol of the natural frame $\lbrace \partial_\mu \rbrace$ results to be symmetric in its lower indexes).} 
The reader can find a review of the \textit{geometry of linear spaces and Riemannian manifolds in the vielbein basis} in Appendix \ref{rgv}, on the same the lines of \cite{Libro1}.

\subsection{Group-manifolds and Maurer-Cartan equations}

We will now start by showing how the concept of group-manifold leads to discuss the Lie algebras associated to Lie groups and to the dual concept of \textit{Maurer-Cartan equations} (the presentation we give strictly follows the lines of Ref. \cite{Libro1}).

Lie groups have a natural manifold structure associated with them, and one can describe Lie groups under a differential geometric point of view. In this sense, the terms \textit{Lie group} and \textit{group-manifold} are kind of synonyms, and the left and right translations of a fixed element $a$ of a Lie group $G$ are \textit{diffeomorphisms} (strictly speaking, general coordinate transformations on Riemannian manifolds).

A peculiar property of group-manifolds is the existence of \textit{left- and right-invariant vector fields} or, in the dual vector space language, \textit{left- and right-invariant $1$-forms}.

Since the left and right translations are diffeomorphisms, by taking into account the fact that the Lie bracket operation is invariant under diffeomorphisms (see Ref. \cite{Libro1}), the subset of left- (right-) invariant vector fields results to be closed under the Lie bracket operation. Hence, the left- (right-) invariant vector fields on $G$ form the \textit{Lie algebra $\mathfrak{g}$ of the group $G$}. According with the convention of \cite{Libro1}, in the following we refer to the left-invariant vector fields.

Since any left-invariant vector field is uniquely determined by its value at $e$ (the identity element of $G$), $\mathfrak{g}$ can be identified with the tangent space at the identity, $T_e(G)$.

Let us now introduce a basis $\lbrace{T_A\rbrace}$ ($A=1, \ldots, n \;=\; \dim (G)$) on $T_e(G)$. The generators $\lbrace{ T_A \rbrace}$ close the Lie algebra
\begin{equation}\label{strutturaaaaaaaaa}
[T_A,T_B]= C^C_{\;\; AB}T_C,
\end{equation}
where $C^C_{\;\;AB}$ are constants called the \textit{structure constants} of the Lie algebra $\mathfrak{g}$ of the group $G$. 
The closure of the algebra is encoded in the Jacobi identities
\begin{equation}\label{CCcond}
C^C_{\;\; A[B}C^A_{\;\; LM]}=0.
\end{equation}

The Lie algebra of $G$ can also be expressed in the \textit{dual vector space of left-invariant $1$-forms}.

In particular, considering the basis $\lbrace{ \sigma^A\rbrace}$ ($A=1, \ldots, n \;=\; \dim(G)$) of left-invariant $1$-forms at $T_e ^{\;\;\star}(G)$ (cotangent space at the identity $e$), we can expand $d\sigma^A$ in the complete basis of $2$-forms at $e$, obtaining the so called \textit{Maurer-Cartan equations} for the left-invariant $1$-forms $\sigma^A$:
\begin{equation}\label{mmc}
d \sigma^A + \frac{1}{2}C^A_{\;\; BC}\sigma^B \wedge \sigma^C =0,
\end{equation}
where ``$\wedge$'' is the wedge product between differential forms and where the $C^A_{\;\; BC}$ functions, being left-invariant, are actually constants.

The content of the Maurer-Cartan equations (\ref{mmc}) is completely equivalent to that of equations (\ref{strutturaaaaaaaaa}). We say that equations (\ref{mmc}) give the \textit{dual formulation of the Lie algebra of $G$}.
This can be shown by introducing the basis of left-invariant vectors $T_A^{(R)}$ dual to the cotangent basis $\lbrace{ \sigma^A \rbrace}$ of the left-invariant $1$-forms:
\begin{equation}\label{dualellammmm}
\sigma^A(T^{(R)}_B)= \delta^A_{\;\;B}.
\end{equation}
The label $R$ is a reminder that the vectors $T_A^{(R)}$ generate right translations on $G$; for notational simplicity, in the sequel we will omit the label $R$.
Now, evaluating both sides of (\ref{mmc}) on the vectors $T_M$ and $T_N$, we get
\begin{equation}
d \sigma^A (T_M,T_N)= - \frac{1}{2}C^A_{\;\;BC} \sigma^B \wedge \sigma^C (T_M,T_N) .
\end{equation}
Then, using the following identity (which gives the link between the exterior derivative on forms and the bracket operation on vector fields):
\begin{equation}
d \omega \left( \overrightarrow{X}, \overrightarrow{Y} \right)=\frac{1}{2} \Big \lbrace \overrightarrow{X} \left(\omega \left( \overrightarrow{Y}\right) \right) - \overrightarrow{Y} \left(\omega \left( \overrightarrow{X}\right) \right)- \omega \left( \left[ \overrightarrow{X},\overrightarrow{Y} \right] \right) \Big \rbrace ,
\end{equation}
we can write
\begin{equation}
d \sigma^A (T_M,T_N)= \frac{1}{2}\left[T_M \sigma^A \left( T_N\right)  - T_N \sigma^A \left( T_M \right) - \sigma^A \left( [T_M,T_N ] \right) \right] = - \frac{1}{2}C^A_{\;\;BC} \sigma^B \wedge \sigma^C (T_M,T_N) .
\end{equation}
Then, since $T_M \sigma^A (T_N)=T_N \sigma^A (T_M)=0$ because of (\ref{dualellammmm}), we have
\begin{equation}
\sigma^A \left( [T_M ,T_N ] \right)= C^A_{\;\;MN},
\end{equation}
and, therefore,
\begin{equation}
[T_M,T_N]= C^A_{\;\;MN}T_A.
\end{equation}
Note that the constants entering the Maurer-Cartan equations are the structure constants defined by the Lie algebra.

In this formulation, the closure of the algebra is encoded into the following identity (that is the intergrability condition $d^2=0$ of the Maurer-Cartan equations):
\begin{equation}
d^2 \sigma^A =0,
\end{equation}
since it gives
\begin{equation}
C^C_{\;\;AB}C^A_{\;\;LM}\sigma^L \wedge \sigma^M \wedge \sigma^B =0,
\end{equation}
which is satisfied when (\ref{CCcond}) holds.

Now, a set of independent $1$-forms (namely a cotangent basis on $G$) can be obtained in terms of the group element $g$. Let us consider the $1$-form
\begin{equation}\label{astero}
\sigma = g^{-1} d g.
\end{equation}
One can show that 
\begin{equation}\label{pallona}
d \sigma + \sigma \wedge \sigma =0,
\end{equation}
that is the $1$-form $\sigma$ is left-invariant. Since (\ref{astero}) is a Lie algebra valued matrix of $1$-forms, it can be expanded along the set of generators $T_A$ (in their matrix representation):
\begin{equation}\label{stellona}
\sigma = \sigma^A T_A .
\end{equation}
Introducing (\ref{stellona}) in (\ref{pallona}), and using (\ref{strutturaaaaaaaaa}), one obtains again the Maurer-Cartan equations (\ref{mmc}). In a matrix representation of $G$, equation (\ref{pallona}) is a matrix equation for a set of $	\dim(G)$ linearly independent $1$-forms, and it can be used to explicitly compute the structure constants of $G$ (see \cite{Libro1} for an example in which the Maurer-Cartan equations and the commutation relations for the Poincaré group in $D$ dimensions, that is the group of rigid motion in $D$ dimensions, are derived).

Let us mention that one can introduce a metric on $G$ which is biinvariant, namely is both left- and right-invariant. This is the so called \textit{Killing metric} (actually, \textit{Killing form},\footnote{The Killing form is bilinear and symmetric, and therefore defines a metric on $T_e(G)$; moreover, one can prove that it is also biinvariant.} if one refers to the Lie algebra $\mathfrak{g}$ of $G$), which we denote by $h_{AB}$.
One can then show that (see Ref. \cite{Libro1} for details):
\begin{equation}
h_{AB} = C^L_{\;\;BM}C^M_{\;\;AL}.
\end{equation}
If the Killing metric (Killing form) is non-degenerate, the Lie group (Lie algebra) is said to be semisimple. For compact groups, one can prove that the Killing metric $h_{AB}$ is negative definite.
One can also show that the biinvariance of $h_{AB}$ implies 
\begin{equation}
C^L_{\;\; AB}h_{LC} + C^L_{\;\;AC}h_{BL}=0.
\end{equation}
Therefore, defining
\begin{equation}\label{killbassi}
C_{ABC}=h_{AL}C^L_{\;\; BC},
\end{equation}
one obtains
\begin{equation}\label{bassi}
C_{ABC}+C_{ACB}=0.
\end{equation}
Taking into account the antisymmetry of $C^C_{\;\;AB}$ in the indexes $A$ and $B$, equation (\ref{bassi}) implies complete antisymmetry of the lowered structure constants (\ref{killbassi}). For semisimple groups, the Killing metric can be used to lower or raise the indexes of the Lie algebra.\footnote{In particular, the adjoint and coadjoint representations of the algebra are equivalent, as shown in Ref. \cite{Libro1}.}

\subsubsection{Soft group-manifolds}

Since the left- (right-) invariant vector fields and $1$-forms have, in a given chart, a fixed coordinate dependence and, moreover, one can show that the Riemannian geometry of a group-manifold $G$ is (locally) fixed in terms of its structure constants (see Ref. \cite{Libro1} for details), we say that group-manifolds $G$ have a ``\textit{rigid}'' structure.
As such, group-manifolds cannot be used as domains of definition of fields describing in a dynamical way the structure of space-time.

Nevertheless, a group-manifold $G$ can be identified with the vacuum configuration of a gravitational theory. We are thus led to consider \textit{soft group-manifolds} $\tilde{G}$, according with the notation of \cite{Libro1}, in which the rigid metric structure of $G$ has been ``\textit{softened}'' in order to describe non-trivial physical configurations. Soft group-manifolds $\tilde{G}$ are \textit{locally diffeomorphic} to group-manifolds $G$.

An example of soft group-manifold is the non-rigid four-dimensional space-time itself (namely, the space-time considered as a Riemannian manifold or, in other words, the space-time of General Relativity), which, being diffeomorphic to $\mathbb{R}^4$, can be thought of as the soft group-manifold of the local four-dimensional translations.

A further example is given by the \textit{soft Poincaré group-manifold}. 
Let us first consider a \textit{flat Minkoskian space-time} in four-dimensions, $\mathcal{M}_4$, whose geometry can described in terms of the \textit{vielbein}\footnote{In German, the term ``vielbein'' literally means ``many legs'' (and covers all dimensions), referring to its property of connecting the natural frame and the moving frame, having indexes (``legs'') of both types. Quite commonly in the literature, in four dimensions the more specific term ``vierbein'' (``four legs'') is adopted. The vierbeins are sometimes also called the \textit{tetrads}.} $V^a$ and a \textit{spin connection} $\omega^{ab}$ fulfilling the following equations:
\begin{align}
& R^a \equiv d V^a - \omega^ a_{\;\; b} \wedge V^b =0, \label{eqzerellator} \\
& R^{a}_{\;\; b}\equiv d\omega^a_{\;\;b}- \omega^{a}_{\;\;c}\wedge \omega^c_{\;\; b} =0 , \label{eqzerellacurv}
\end{align}
where $R^a$ is called the \textit{torsion} (sometimes also denoted by $T^a$) and $R^{ab}$ is called the \textit{curvature}.\footnote{From now on, we use Greek indexes to denote the so called coordinate indexes, while the Latin indexes $a,b,c,\ldots$ will label the vielbein basis of $1$-forms $\lbrace V^a \rbrace$ (see Appendix \ref{rgv}).} They are $2$-forms and we will also refer to both of them together as the curvatures.
In a particular Lorentz gauge the solution to the above equations is
\begin{align}
& V^a(x)=dx^a ,\\
& \omega^{ab}(x)\equiv 0,
\end{align}
while in a general Lorentz gauge the solution reads
\begin{align}
& V^a(x,\eta)= \left( \Lambda^{-1}(\eta)dx \right)^a , \label{generella1} \\
& \omega^{ab}(x,\eta) = \left( \Lambda^{-1}(\eta) d\Lambda (\eta)\right)^{ab}, \label{generella2}
\end{align}
being $\eta^{ab}$ the Lorentz parameters. One can prove that (\ref{generella1}) and (\ref{generella2}) correspond to the left-invariant $1$-forms of the \textit{Poincaré group} in four dimensions, $ISO(1,3)$ (indeed, we can identify the $x^a$'s and the $\eta^{ab}$'s with the parameters associated to translations and Lorentz rotations, respectively) and, therefore, (\ref{generella1}) and (\ref{generella2}) satisfy the Maurer-Cartan equations associated. 
Moreover, since $ISO(1,3)$ is locally isomorphic to $\mathcal{M}_4 \times SO(1,3)$, it can also be considered as a (trivial) \textit{principal bundle}, $P(\mathcal{M}_4,\; SO(1,3))$, with \textit{base space} given by
\begin{equation}
\mathcal{M}_4 \equiv ISO(1,3)/SO(1,3)
\end{equation}
and $SO(1,3)$ as \textit{fiber}.

Let us now suppose that the space-time $\mathcal{M}_4$ is \textit{not flat}. In this case, the fields $V^a$ and $\omega^{ab}$, subject to the gauge transformation laws
\begin{align}
& V'^a = \left( \Lambda^{-1} \right)^a_{\;\; b} V^b, \\
& \omega ^a_{\;\;b}=\left( \Lambda^{-1}\right)^a_{\;\;c}\omega^c_{\;\;d} \Lambda^d_{\;\; b} - \left( \Lambda^{-1} \right)^a_{\;\;c}(d\Lambda)^c_{\;\;b},
\end{align}
respectively (see Appendix \ref{rgv}), are defined on a fiber bundle $P(\tilde{\mathcal{M}}_4, \;SO(1,3))$ that is \textit{not isomorphic}, but just \textit{locally diffeomorphic} to $G/H=ISO(1,3)/SO(1,3)$, due to the diffeomorphism $\tilde{\mathcal{M}}_4 \sim \mathcal{M}_4$. We can say that we have ``softened'' the rigid structure of the base space,
\begin{equation}
\mathcal{M}_4 \rightarrow \tilde{\mathcal{M}}_4 ,
\end{equation}
maintaining the structural group $SO(1,3)$, which guarantees Lorentz covariance.

Observe that the $2$-form curvatures $R^a$ and $R^{ab}$ associated to the $1$-forms $V^a$ and $\omega^{ab}$ are defined on the bundle through the gauge transformations
\begin{align}
& R'^a = \left( \Lambda^{-1} \right)^a_{\;\;b}R^b , \\
& R'^a_{\;\;b}= \left( \Lambda^{-1}\right)^a_{\;\;c} R^c_{\;\;d}\Lambda^d_{\;\; b}
\end{align}
(they transform in the vector and in the adjoint representation of $SO(1,3)$, respectively). These, in turn, imply ``\textit{horizontality}'': The $2$-forms $R^a$ and $R^{ab}$ do \textit{not} contain the differential $d\eta^{ab}$. This is expressed by the following equations:
\begin{equation}
\imath_{J_{ab}} R^{ab} =\imath_ {J_{ab}} R^{a} =0,
\end{equation}
where $J_{ab}$ is the left-invariant vector field associated to the fiber $SO(1,3)$ and where we have denoted by $\imath_{J_{ab}} R^{ab}$ and $\imath_ {J_{ab}} R^{a}$ the contraction of the vector $J_{ab}$ on the curvatures $R^{ab}$ and $R^a$, respectively.

A simpler way to obtain this is to start directly with $V^a$ and $\omega^{ab}$ defined on the principal bundle $P(\tilde{\mathcal{M}}_4,\;SO(1,3))$. In this thesis, in particular, we will adopt this point of view. Let us mention that in Ref. \cite{Libro1} the interested reader can also find a description of the way in which the fiber bundle structure can also be obtained from the variational principle, starting with an action defined on the soft group-manifold.

In Section \ref{sugrasuperspacerheo} we will see that in \textit{supergravity theories} one does not factorize all the coordinates which are not associated with the translations: Starting from the \textit{super-Poincaré group}, \textit{only the Lorentz gauge transformations will be factorized; the gauge transformation of supersymmetry will not}. The resulting theory will be described on a principal fiber bundle $P(\mathcal{M}_{4\vert4}, \; SO(1,3))$ (in four dimensions), whose base space is called the \textit{superspace} $\mathcal{M}_{4 \vert 4}$, where the first ``$4$'' in $\mathcal{M}_{4\vert4}$ refers to the bosonic dimensions, while the second ``$4$'' refers to the Grassmannian dimensions (as we will specify in Section \ref{sugrasuperspacerheo}).

With this in mind, let us now turn to a formal description of soft group-manifolds and, in particular, of the Cartan geometric formulation of General Relativity (where the group-manifold $G$ is the Poincaré group $ISO(1,3)$ in four dimensions), reaching the geometric Einstein Lagrangian for General Relativity. Again, we will strictly follow the lines of \cite{Libro1}, where the interested reader can find more details on this formulation.

\subsection{Cartan geometric formulation of General Relativity}

We start with a rigid group $G$, that will soon be identified with the Poincaré group. As we have already seen, the group-manifold structure can be described in terms of the set of left-invariant $1$-forms $\sigma^A$ ($A=1,\ldots, \dim(G)$) satisfying the Maurer-Cartan equations (\ref{mmc}).

Then, let us ``soften'' $G$ to the locally diffeomorphic soft group-manifold $\tilde{G}$, by introducing new Lie algebra valued $1$-forms
\begin{equation}
\mu=\mu^A T_A.
\end{equation}
The ``\textit{soft}'' (non left-invariant) $1$-forms $\mu^A$ ($A=1,\ldots, \dim(\tilde{G})$, $\dim(\tilde{G}) = \dim(G)$) do \textit{not} satisfy the Maurer-Cartan equations, while developing a non-vanishing right-hand side. We can thus write the geometry of $\tilde{G}$ in terms of a \textit{curvature}:
\begin{equation}\label{curvatureGtilde}
R^A(x)\equiv d\mu^A + \frac{1}{2}C^A_{\;\; BC}\mu^B \wedge \mu^C .
\end{equation}

The $\mu^A$'s span a basis of the cotangent plane of $\tilde{G}$ and they are, in fact, vielbeins on $\tilde{G}$ (see Figure \ref{figuresoft}, reproduced from \cite{Libro1}, for a graphic representation). We can then define covariant derivatives for a general covariant $p$-form $\eta^A$ by
\begin{equation}\label{covderGtilde}
\nabla  \eta^A \equiv d\eta^A + C^A_{\;\;BC}\mu^B \wedge \eta^C ,
\end{equation}
and for a general contravariant $p$-form $\eta_A$ by
\begin{equation}\label{contravderGtilde}
\nabla \eta_A \equiv d \eta_A - C^C_{\;\;AB}\mu^B \wedge \eta_C ,
\end{equation}
where we have introduced a \textit{covariant derivative operator} $\nabla$. 

\begin{figure}
\centering
\pgfdeclareimage[height=6cm]{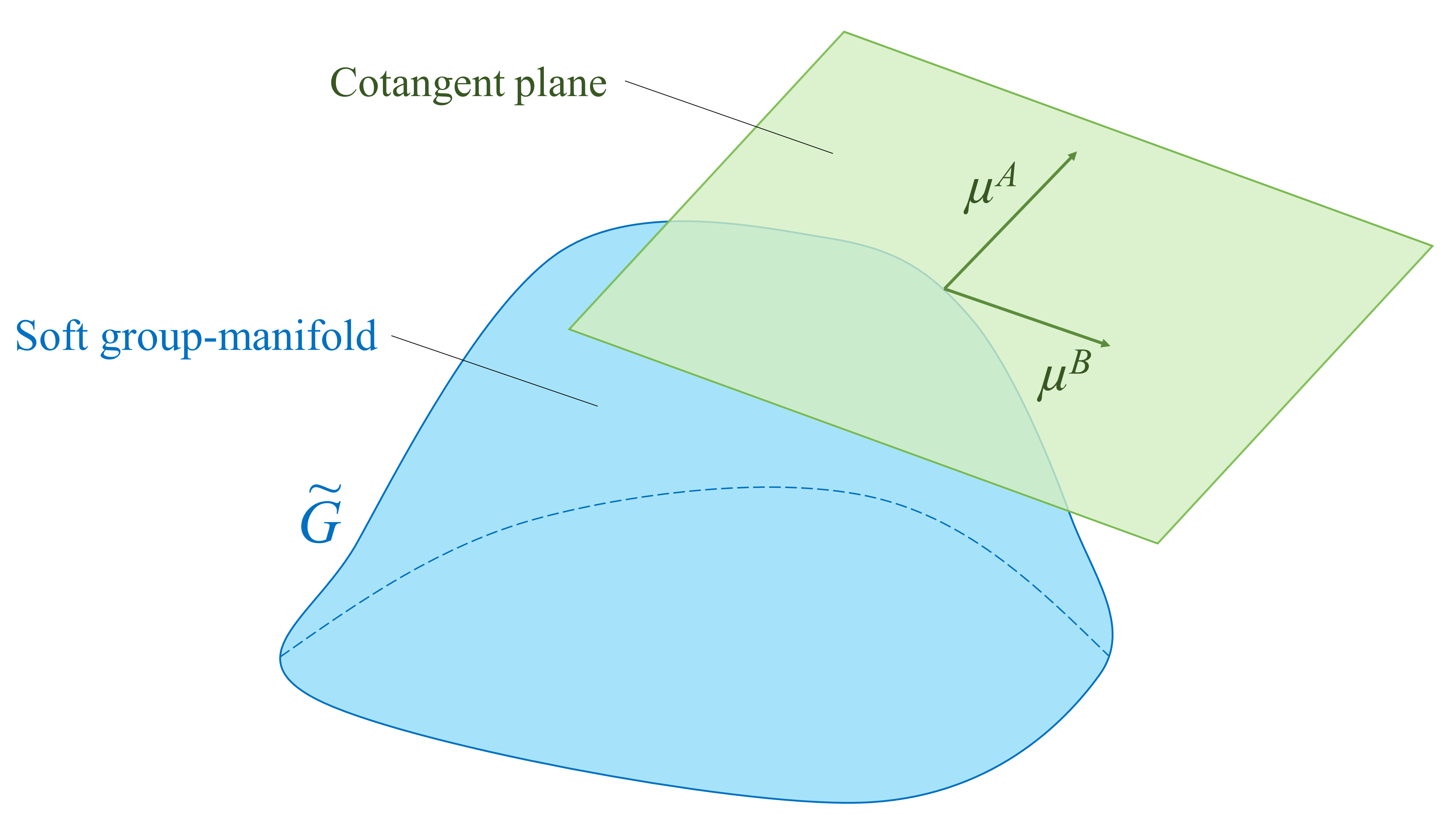}{soft}
\pgfuseimage{soft}
\caption[Soft group-manifold and cotangent space]{\textit{Soft group-manifold and cotangent space.} The soft $1$-forms $\mu^A$'s span a basis of the cotangent plane of $\tilde{G}$ and are, in fact, vielbeins on $\tilde{G}$.} \label{figuresoft}
\end{figure}

Now, taking the exterior derivative of both sides of equation (\ref{curvatureGtilde}), from the request of closure of the algebra ($d^2=0$) we get the so called \textit{Bianchi identity}
\begin{equation}
dR^A + C^A_{\;\;BC} \mu^B \wedge R^C =0,
\end{equation}
which can also be rewritten as
\begin{equation}
\nabla R^A =0 .
\end{equation}

As we have already said, the set of $1$-forms $\lbrace{\mu^A \rbrace}$ forms a basis for the cotangent space to $\tilde{G}$. Thus, the $2$-form $R^A$ can be expanded along the intrinsic basis $\mu^A \wedge \mu^B$: 
\begin{equation}
R^A = R^A_{\;\; BC} \mu^B \wedge \mu^C.
\end{equation}
Then, equation (\ref{curvatureGtilde}) can be rewritten as
\begin{equation}
d\mu^A + \frac{1}{2}\left( C^A_{\;\;BC}-2R^{A}_{\;\;BC}\right)\mu^B \wedge \mu^C =0.
\end{equation}
Therefore, one can derive the commutation relations between the vector fields $\tilde{T}_A$ dual to the $\mu^A$'s 
\begin{equation}
\mu^A (\tilde{T}_{B})=\delta^A_{\;\;B},
\end{equation}
obtaining
\begin{equation}\label{diffalgebrella}
\left[ \tilde{T}_A , \tilde{T}_B \right] = \left( C^C_{\;\;AB} -2 R^C_{\;\;AB} \right) \tilde{T}_C .
\end{equation}
Observe that here the \textit{structure functions} (that are \textit{not} constant) are given in terms of the curvature intrinsic components $R^C_{\;\;AB}$.

For any transformation $\mu^A \rightarrow \mu^A + \delta \mu^A$, the curvature transforms as
\begin{equation}
\delta R^A = \nabla (\delta \mu^A) .
\end{equation}

Let us now consider a \textit{gauge transformation} on $G$. It acts on $\mu^A T_A$ as (in a matrix notation):
\begin{equation}
\mu' = U^{-1}dU + U^{-1}\mu U, \;\;\; U \in G ,
\end{equation}
or, for an infinitesimal gauge transformation generated by $U=\mathbf{1} + \epsilon^A T_A$ (where $\epsilon^A$ is the infinitesimal parameter associated to the gauge transformation), as:
\begin{equation}\label{gaugeonmu}
\delta_\epsilon \mu^A = \left(\nabla \epsilon \right) ^A.
\end{equation}
On the other hand, under a \textit{general coordinate transformation} $x^M \rightarrow x^M + \xi^M(x)$ on the manifold $\tilde{G}$ (with $M=1, \ldots, \dim (\tilde{G})$), that is, if we prefer, under a \textit{generic infinitesimal diffeomorphism} on $\mu^A$ generated by
\begin{equation}
t = \xi^A \tilde{T}_A,
\end{equation}
where $\xi^A = \delta x^A$ is the infinitesimal parameter associated to the shift $x^A \rightarrow x^A + \delta x^A$, the $\mu^A$'s transform with the \textit{Lie derivative} $\ell_\xi \mu^A$, since
\begin{equation}\label{liederiv}
\delta_\xi \mu^A = \delta_\xi (\mu^A_{\;\;N} dx^N)= (\partial_P \mu^A_{\;\;N})\xi^P dx^N + \mu^A_{\;\;N} \partial_P \xi^N dx^P,
\end{equation}
which can then be rewritten as
\begin{equation}\label{veralie}
\delta_\xi \mu^A = \ell _\xi \mu^A \equiv \imath_\xi d\mu^A + d \left( \imath_\xi \mu^A \right) = \left(\nabla \xi \right) ^A + \imath_\xi R^A.
\end{equation}

Observe that the first term $\left(\nabla \xi \right) ^A$ in (\ref{veralie}) corresponds to an \textit{infinitesimal gauge transformation} on $G$.
Hence, we can say that ``\textit{an infinitesimal diffeomorphism on the soft manifold $\tilde{G}$ is a $G$-gauge transformation plus curvature correction terms}'' \cite{Libro1}.

In (\ref{veralie}) we have introduced the contraction $\imath$ of the vector field $\xi = \xi^M \partial_M$ on the curvature $R^A$.
It is defined by $\imath_\xi dx^M = \xi^M$, so that $\imath_\xi \mu^A = \xi^M \mu^A_{\;\;M} \equiv \xi^A$, and it gives
\begin{equation}\label{novella}
\imath_\xi R^A = 2 \xi^B R^A_{\;\; BC}\mu ^c.
\end{equation}
In particular, if the curvature $R^A$ has a vanishing projection along the tangent vector, that is to say if
\begin{equation}
\xi^B R^A_{\;\; BC}=0 ,
\end{equation}
then \textit{the action of the Lie derivative $\ell_\xi$ coincides with a gauge transformation}.

Let us observe that the general coordinate transformation on the $\mu^A$'s can still be written as some sort of ``covariant derivative''
\begin{equation}\label{covderiv}
\delta_\epsilon \mu^A = d\xi^A + (C^A_{\;\;BC}-2 R^A_{\;\; BC})\mu^B \xi^C ,
\end{equation}
in terms, however, of \textit{structure functions} $C^A_{\;\; BC}$ and $R^A_{\;\;BC}$.

The algebra generated by general coordinate transformations on $\tilde{G}$ (diffeomorphisms) is closed:
\begin{equation}
[\delta_{\xi_1},\delta_{\xi_2}] = \delta_{[\xi_1,\xi_2]}
\end{equation}
(which, indeed, is also one of the properties of the Lie derivatives), with the closure condition on the exterior derivative $d^2=0$ provided that the curvatures satisfy the Bianchi identities $\nabla R^A =0$.\footnote{The Lie derivatives close an algebra $[\ell_{\epsilon_1}, \ell_{\epsilon_2}]=\ell_{[\epsilon_1,\epsilon_2]}$ provided that they are consistently defined, namely provided that the operator used in their definitions is a true exterior derivative satisfying the integrability condition $d^2=0$. Then, the same is inherited by diffeomorphisms.}

\subsubsection{Case of the Poincaré group}

Let us carry on our discussion considering the case in which the group $G$ is a semidirect product:
\begin{equation}
G=H \ltimes K ,
\end{equation}
with $H\subset G$ a subgroup. In particular, we consider the case of the \textit{Poincar\'{e} group} $ISO(1,3)$, where $H=SO(1,3)$ and where $K=G/H$ is generated by the translations $P_a$ ($a=0,1,2,3$ in four dimensions). Then, for the Poincar\'{e} algebra we have (schematically):
\begin{align}
& [T_H, T_H]= C^H_{\;\;HH}T_H, \\
& [T_H,T_K]=C^K_{\;\;HK}T_K,
\end{align}
otherwise zero, where $H=1,\ldots, \dim(H)$ and $K=1,\ldots, \dim(K)$.

It is then possible to perform the following decomposition on the soft group-manifold $\tilde{G}$:
\begin{equation}
\mu^A \rightarrow (\mu^{(H)}, \mu^{(K)}),
\end{equation}
such that
\begin{equation}
\mu^{(H)}(T_K)=0, \;\;\;\;\; \mu^{(K)}(T_H)=0.
\end{equation}
We call $\mu^{(H)}=\omega^{ab}=-\omega^{ba}$, and $\mu^{(K)}=V^a$.

We can now ask for \textit{factorization}, that is to say we can ask the theory to be invariant under the Lorentz group $H=SO(1,3)$. As we have already mentioned, factorization means that we are considering $\tilde{G}$ as a \textit{principal bundle} with \textit{base space} $\tilde{G}/H$ and \textit{fiber} $H$. 

The $\mu^A$'s on the principal bundle become the spin connection $\omega^{ab}$ and the vielbein $V^a$, whose curvatures are given by:
\begin{align}
& R^{ab} = d \omega^{ab}-\omega^a_{\;\;c}\wedge \omega^{cb}, \label{c1miau} \\
& R^a = dV^a - \omega^a_{\;\;b}\wedge V^b \equiv DV^a . \label{c2miau}
\end{align}
The associated Bianchi identities are
\begin{align}
& DR^{ab}=0 , \label{b1miau} \\
& DR^a + R^{ab}\wedge V_b =0  . \label{b2miau}
\end{align}

Factorization implies that the general coordinate transformations with parameter $\xi = \xi^{ab}\partial_{ab}$, are, indeed, gauge
transformations, and, by comparison of (\ref{veralie}) with (\ref{covderiv}), this implies the following
(gauge) constraint on the curvature components:
\begin{equation}\label{constrcurv}
R^A_{\;\; (H)\; B}=R^A_{\;\; ab\; B}=0. 
\end{equation}
This means that, as a consequence of the gauge invariance (namely, as a consequence of the constraint (\ref{constrcurv})), the curvature $2$-forms can be expanded on the basis $\lbrace{ V^a\rbrace}$ of the vielbeins, without the inclusion of the spin connection directions.

Let us now consider the \textit{Einstein-Cartan action} (in $D=4$ space-time dimensions) written in the vielbein frame, which reads:
\begin{equation}\label{eccciiuact}
\mathcal{S}= \int_{\mathcal{M}_4} \mathcal{L} = \int_{\mathcal{M}_4} R^{ab}\wedge V^c \wedge V^d \; \epsilon_{abcd} ,
\end{equation}
where $\mathcal{M}_4 = \tilde{G} /H $.

The variation of the action (\ref{eccciiuact}) with respect to the fields $\omega^{ab}$ and $V^a$ gives, respectively, the following $3$-form equations of motion:\footnote{Here we can also use the formula $\delta R^A = \nabla (\delta \mu^A)$ for computing the variations.}
\begin{align}
& \frac{\delta \mathcal{L}}{\delta \omega^{ab}}=0 \;\; \Rightarrow \;\; R^c \wedge V^d =0 \;\; \Rightarrow \;\; \text{Zero torsion}, \label{zerotpuregr} \\
& \frac{\delta \mathcal{L}}{\delta V^{a}}=0 \;\; \Rightarrow \;\; R^{ab} \wedge V^c \; \epsilon_{abcd} =0 \;\; \Rightarrow \;\; R^\mu_{\;\;\nu}-\frac{1}{2}\delta^\mu _{\;\;\nu}R=0,
\end{align}
where the last implications can be proven by expanding along the vielbeins, with a few calculations.
The above equations are the usual Einstein's equations of gravity in the \textit{first order formalism} (in which the vierbein and the spin connection are treated as independent fields in the Lagrangian). 

The Lagrangian 
\begin{equation}\label{lagrangellabellabella}
\mathcal{L}=R^{ab}\wedge V^c \wedge V^d \; \epsilon_{abcd}
\end{equation}
appearing in (\ref{eccciiuact}) can be uniquely determined by using a set of ``\textit{building rules}'' (different from the ones used in the derivation of the Einstein action in the theory of gravitation). The formal nature of this principles, which can be found in Ref. \cite{Libro1}, is useful for finding generalizations of gravity Lagrangians to supergravity ones. We will explore the aforementioned rules in some detail when moving to the geometric approach to supergravity.

The Lagrangian (\ref{lagrangellabellabella}) is exactly the Einstein's Lagrangian for General Relativity. Indeed, expanding $R^{ab}$ on the complete $2$-form basis $V^i \wedge V^j$, we get:
\begin{equation}
\mathcal{L}= R^{ab}_{\;\;\;\;\;\;ij}  V^i_\mu V^j_\nu V^c_\rho V^d_\sigma \; \epsilon_{abcd} dx^\mu \wedge dx^\nu \wedge dx^\rho \wedge dx^\sigma = - 4 R \sqrt{-g}d^4 x,
\end{equation}
where we have used
\begin{equation}
dx^\mu \wedge dx^\nu \wedge dx^\rho \wedge dx^\sigma =  \epsilon ^{\mu \nu \rho \sigma} d^4x , 
\end{equation}
\begin{equation}
V^i_\mu V^j_\nu V^c_\rho V^d_\sigma \epsilon^{\mu \nu \rho \sigma} = \epsilon^{ijcd} \text{det} (V)  ,
\end{equation}
and the definition $R^{ij}_{\;\;\;\;\;\; ij}\equiv R=R^{\mu \nu}_{\;\;\;\;\;\;\mu \nu}$ of the scalar curvature; $\text{det} (V) = \sqrt{-g}$ is the square root of the metric determinant ($g= \text{det} (g_{\mu \nu})$).

Thus, the Einstein's Lagrangian for General Relativity immediately appears to be \textit{geometrical}, since it can be put in the form (\ref{lagrangellabellabella}), that is the most general (and simplest) $4$-form written by only using the differential operator ``$d$'' and the wedge product ``$\wedge$'' between differential forms. Then, being $\mathcal{L}$ a $4$-form, it must be integrated on a $4$-dimensional submanifold of the soft group-manifold $\title{G}$ or, if horizontality has been assumed (which will always be our case), to its restriction to $\mathcal{M}_4 \equiv \tilde{G}/SO(1,3)$. We are thus left with the Einstein-Cartan action as written in (\ref{eccciiuact}).

The Lagrangian of gravity is constructed using the fields of the Poincaré group, but it is invariant only under $SO(1,3)$. This is the reason why the action from which we deduce the gravitational field equations is essentially different from the Yang-Mills action utilized in ordinary gauge theories, that are, instead, invariant under the whole symmetry group.

One can also extend the Einstein-Cartan Lagrangian to the case where the $\mu^A$'s are defined on a de Sitter or anti-de Sitter soft group-manifold. The new Lagrangian corresponds, in tensor calculus formalism, to ordinary gravity plus a cosmological term (see \cite{Libro1} for details). 

We now move to the description of the geometric approach to \textit{supergravity} theories.  

\section{Supergravity in superspace and rheonomy}\label{sugrasuperspacerheo}

The construction of supergravity theories from the technical point of view is a non-trivial task. 

In particular, technical complications arise from the fact that this construction involves fermionic representations. Then, in order to show, for example, that the Lagrangian is supersymmetric, one has often to face with \textit{Fierz identities} (which give the
decomposition of products of spinor representations into irreducible factors). This may involve long and cumbersome calculations.
It is therefore particularly useful to find an efficient method to deal with the technical labor in constructing supergravity theories.
 
In this section, we will describe the so called \textit{rheonomic (geometric) approach to supergravity theories in superspace}.
Before moving to superspace, let us quickly recall the $D=4$, $\mathcal{N}=1$ pure supergravity theory in \textit{space-time}.

\subsection{Review of $\mathcal{N}=1$, $D=4$ supergravity in space-time}

We have previously seen that local supersymmetry requires the introduction of a spin-$3/2$ field $\psi_\mu$ dual to the supersymmetry charge $Q$. Hence, the problem of constructing $\mathcal{N}=1$ supergravity (the ``gauge'' action of the $\mathcal{N}=1$ supersymmetry algebra) turns into the problem of coupling the Rarita-Schwinger field to Einstein gravity. The space-time action of the $\mathcal{N}=1$, $D=4$ supergravity theory, describing the coupling of the spin-$2$ (graviton) and spin-$3/2$ (gravitino) fields, can be written (in the vielbein basis) as:
\begin{equation}\label{muuuuuuuu}
\mathcal{S}= \int_{\mathcal{M}_4} R^{ab}\wedge V^c \wedge V^d \epsilon_{abcd} + \alpha \bar{\psi} \wedge \gamma_5 \gamma_a D\psi \wedge V^a,
\end{equation}
where $D$ is the exterior covariant derivative and where we have defined
\begin{align}
& R^{ab}\equiv d\omega ^{ab}- \omega^{ac}\wedge \omega_c^{\;\;b} , \\
& D\psi \equiv d\psi - \frac{1}{4}\omega^{ab}\wedge \gamma_{ab}\psi .
\end{align}
This action contains one more local invariance besides general coordinate and Lorentz gauge invariance, namely \textit{local supersymmetry invariance}. Indeed, one can prove that (see \cite{Libro2} for details), if the coefficient $\alpha$ in (\ref{muuuuuuuu}) is $\alpha=4$, the action (\ref{muuuuuuuu}) is invariant under the \textit{local supersymmetry transformations}\footnote{We have written the local supersymmetry transformations in the \textit{second order formalism}, in which the vielbein and the spin connection are considered as a single entity.}
\begin{align}
& \delta_\epsilon V^a = \ii \bar{\epsilon}\gamma^a \psi, \label{susyv} \\
& \delta_\epsilon \psi = D \epsilon, \label{susyps} \\
& \delta_\epsilon \omega^{ab}_\mu = -\ii V^{a\vert \rho}V^{b\vert \nu} (\bar{\epsilon} \gamma_\mu D_{[\nu} \psi_{\rho]} + \bar{\epsilon} \gamma_\nu D_{[\mu} \psi_{\rho]} - \bar{\epsilon}\gamma_\rho D_{[\mu}\psi_{\nu]} ), \label{susyom}
\end{align}
being $\epsilon = \epsilon(x)$ the local spinorial parameter of the local supersymmetry transformations. The terms $R^{ab}\wedge V^c \wedge V^d \epsilon_{abcd}$ and $4 \bar{\psi} \wedge \gamma_5 \gamma_a D\psi \wedge V^a$ appearing in the pure $D=4$ supergravity action are called the Einstein-Hilbert and the Rarita-Schwinger terms, respectively.

One can then write the equations of motion of $\mathcal{N}=1$, $D=4$ supergravity. Varying the action with respect to the spin connection $\omega^{ab}$, one obtains:
\begin{equation}\label{varomeghella}
2 R^c \wedge V^d \epsilon_{abcd} =0,
\end{equation}
where we have introduced the \textit{supertorsion} $2$-form
\begin{equation}
R^a \equiv DV^a - \frac{\ii}{2}\bar{\psi}\wedge \gamma^a \psi .
\end{equation}
Equation (\ref{varomeghella}) can be manipulated exactly in the same way as equation (\ref{zerotpuregr}) of pure gravity, the only difference relying in the different definition of $R^a$. Thus, similarly to the case of pure gravity, one obtains:
\begin{equation}
R^a  \equiv DV^a - \frac{\ii}{2}\bar{\psi}\wedge \gamma^a \psi =0 \quad \Rightarrow \quad R^a_{mn}=0.
\end{equation}

Let us observe that $\omega^{ab}$ is a non-Riemannian connection, being $DV^a$ different from zero.
After a suitable decomposition of the spin connection, one can show that the (non-Riemannian) spin connection $\omega^{ab}$ is completely determined in terms of the other two fields $V^a_\mu$ and $\psi^\alpha_\mu$ and, consequently, it does not carry any further physical degree of freedom (see \cite{Libro2} for details).

The condition $R^a = 0$ is called the \textit{on-shell condition for the connection} (it arises where the equations of motion hold, namely on-shell).
When we keep $R^a=0$, we are working in the so called \textit{second order formalism}, where the spin connection is torsionless and given in terms of the vielbein of space-time.
When we do not require $R^a_{mn}=0$, the spin connection is an independent field and we are working in the \textit{first order formalism}; in this case, the field equations of $\omega^{ab}_\mu$ fixes it as a function of both the vielbein and the gravitino, $\omega=\omega (V,\psi)$.
In the sequel of this short review, we will adopt the second order formalism.

Varying the vielbein and the $\psi$-field, after some calculations one ends up with
\begin{align}
& 2 R^{ab} \wedge V^c \epsilon_{abcd} =0 \label{primella}, \\
& 8 \gamma_5 \gamma_a D\psi \wedge V^a =0 \label{newella} ,
\end{align}
respectively.
Notice that equation (\ref{primella}) looks formally the same as in pure gravity; however, the connection, in the present case, is different, and one can show that equation (\ref{primella}) produces the expected interaction between the vielbein field and $\psi$. The same remark applies to (\ref{newella}), in which case one also finds a self-interaction of the gravitino field. 

Thus, the Lagrangian in (\ref{muuuuuuuu}) describes a consistent coupling of the Rarita-Schwinger field $\psi$ to gravity. This suggests the existence of an extra symmetry, extending the gauge invariance of the free field spin Lagrangian (Rarita-Schwinger Lagrangian) to the interacting case. This symmetry results to be, indeed, \textit{supersymmetry}.

Now, the Lagrangian in (\ref{muuuuuuuu}) is invariant under local Lorentz transformations and diffeomorphisms.
The next step is to investigate whether one can define suitable supersymmetry transformations leaving (\ref{muuuuuuuu}) invariant and representing the supersymmetry algebra on the \textit{on-shell} states.

Let us recall that $SO(1,3)$ is an off-shell symmetry of the theory, while supersymmetry is an on-shell one (closing only on the equation of motions).\footnote{This not only holds for the Maurer-Cartan equations, but also when one considers the Free Differential Algebras framework (which will be recalled in the sequel).}

Let us now compare the supersymmetry transformations (\ref{susyv})-(\ref{susyom}) with the gauge transformations of the supersymmetry derived from the super-Poincaré algebra, that are given by (see \cite{Libro2} for details):
\begin{align}
& \delta^{\text{(gauge)}}_\epsilon V^a = \ii \bar{\epsilon}\gamma^a \psi , \label{g1} \\
& \delta^{\text{(gauge)}}_\epsilon \psi = D\epsilon, \label{g2}\\
& \delta^{\text{(gauge)}}_\epsilon \omega^{ab}=0. \label{g3}
\end{align}
Comparing (\ref{susyv})-(\ref{susyom}) with (\ref{g1})-(\ref{g3}), one can see that (\ref{susyv}) and (\ref{g1}) coincides. The same holds for (\ref{susyps}) and (\ref{g2}). On the other hand, the gauge and local supersymmetry transformations for $\omega^{ab}$ are different, since (\ref{susyom}) is different from zero. Furthermore, let us mention that equation (\ref{g2}) resembles the gauge transformation of a gauge field, but this is only due to the fact that we are now considering the simplest, minimal $\mathcal{N}=1$, $D=4$, pure supergravity theory (without matter). In more complicated cases, other terms would appear in (\ref{g2}), making the difference between (\ref{g2}) and the transformation of a true gauge field manifest.

Thus, \textit{the local supersymmetry transformations leaving the supergravity action invariant are not gauge supersymmetry transformations}.\footnote{This is strictly analogous to what happens in the case of pure gravity, where one finds that the action is not invariant under gauge translations, while it is invariant against diffeomorphisms (under which the connection transforms with the Lie derivative), the two transformation laws differing on the spin connection.}

Moreover, one can prove that the local supersymmetry transformations close \textit{on-shell} with \textit{structure functions}, rather than with structure constants, as it would be the case for a genuine gauge transformation, and that a gauge translation leaves the field $\psi$ inert.

The supersymmetry algebra (which closes on-shell) can be interpreted, on space-time, in terms of the general algebra of space-time diffeomorphisms supplemented by super-Poincaré gauge transformations with field-dependent parameter \cite{Libro2}.

As we will discuss in a while, we can say that \textit{the on-shell supersymmetry algebra is the algebra of diffeomorphisms in superspace}.

\subsection{The concept of superspace}

Let us summarize what we have learned till now: If our aim is \textit{local}, rather than global, supersymmetry invariance, this requires the introduction of the spin-$3/2$ field $\psi_\mu$ dual to the supersymmetry charge $Q$. In this set up, the $\mathcal{N}=1$, $D=4$ supergravity action describing the coupling of the Rarita-Schwinger field to Einstein's gravity is given by (\ref{muuuuuuuu}) (with $\alpha=4$). Studying the local supersymmetry transformations of the theory, one obtains that these transformations are \textit{not} gauge transformations of the super-Poincaré algebra (except in the case of the linearized theory).

Now, a key point in the formulation of supergravity theories is a more satisfactory understanding of the (local) supersymmetry transformations rule.

To this aim, a formulation of supergravity which appears natural and particularly useful is based on the concept of \textit{superspace} (I have adopted this formulation in the research carried on during the PhD).
Superspace has as coordinates not only the ordinary ones, but, in addition, $4 \mathcal{N}$ spinorial anticommuting coordinates $\theta^{\alpha A}$ ($\alpha=1,\ldots,4$ and $A=1,\ldots,\mathcal{N}$; if $\mathcal{N}=1$, we do not write the index $A$).

There are various approaches to superspace, based on different geometrical ideas, but they all have in common the fact that the notion of Grassmann variables (anticommuting $c$-numbers) as coordinates is essential.
In \textit{rigid} supersymmetry, we have
\begin{align}
& V^a = dx^a - \frac{\ii}{2} \bar{\theta}_A \gamma^a d \theta^A , \\
& \psi = d\theta^A ;
\end{align}
in the case of \textit{supergravity}, these same degrees of freedom are \textit{dynamical}.

The approaches on ordinary space-time are equivalent to the approaches in superspace, but the superspace framework gives a better geometrical insight (see, for example, Refs. \cite{VanNieuwenhuizen:1981ae, Libro1, Libro2} for details on the geometry of superspace). In particular, on superspace we may have an understanding of supergravity analogous to that of General Relativity
on space-time. Indeed, at each point $(x^\mu, \theta^{\alpha A})$ on superspace, we can erect a local tangent frame and consider general coordinate transformations on the base manifold (the superspace), with parameter $\xi^\Lambda$, where $\Lambda = (\mu , \alpha A)$. Then, the $\xi^\mu$'s generate ordinary general coordinate transformations on space-time, while the $\xi^{\alpha A}$'s generate local supersymmetry transformations.

One can extend in an appropriate way the space-time fields $V^a_\mu$, $\psi_\mu$, and $\omega^{ab}_\mu$ to $1$-form fields defined over \textit{superspace}. These $1$-form fields are called \textit{superfields}. In this way, one can reinterpret the supersymmetry transformations as \textit{superspace Lie derivatives}.

All the approaches to supergravity in superspace involve a large symmetry group and a large number of fields, so that one eventually has to impose constraints in order to recover ordinary supergravity on space-time. On the other hand, one can exploit the power of symmetry to construct general theories in a systematic and straightforward way. 

In this scenario, the so called ``\textit{rheonomy principle}'' (see Ref. \cite{Libro2}) makes the extension from space-time to superspace uniquely defined, allowing for a \textit{geometric} interpretation of the supersymmetry rules. The rheonomy principle can be summarized in one sentence as follows: ``We demand the $\theta$-dependence of every superfield to be determined by the $x$-dependence of all the superfields in our stock'' \cite{Libro2}.

Exploiting the principle of rheonomy to get rid of the unwanted degrees of freedom, we can identify \textit{supergravity} with the \textit{geometric theory of superspace} in the same way as \textit{Einstein's gravity} is the \textit{geometric theory of space-time}. Indeed, the pair of $1$-forms $\lbrace V^a ,\psi \rbrace$, once extended to superspace, can be viewed as a single object, called the \textit{supervielbein}: A local contangent frame on the superspace $\mathcal{M}_{4 \vert 4}$. More generally, the $1$-forms $\mu^A \equiv (\omega^{ab}, V^a ,\psi)$ constitute an intrinsic reference frame in the cotangent plane to the soft super-Poincaré group. 

In the following, we review the rheonomy principle, on the same lines of Ref. \cite{Libro2}.

\subsection{Superspace geometry and the rheonomy principle}

As suggested by the name, the key point of the geometric approach to supergravity in superspace is ``geometricity'': The idea is to formulate an extension of General Relativity which is generally covariant over superspace, so that \textit{the diffeomorphisms in the fermionic directions of superspace correspond to supersymmetry transformations on space-time}.

In order to do this, one has to introduce a set of differential forms on space-time, $\mu^A (x)$,
and lift them to forms on superspace, namely to $\mu^A(x, \theta)$. The introduction of the \textit{superfields} $\mu^A(x, \theta)$ leads to extra degrees of freedom (corresponding to the components in the $\theta$-expansion of the superfields) which are spurious. Then, in order to have the same physical content for the theory extended to superspace as for the theory on space-time, one has to impose some \textit{constraints} on the \textit{supercurvatures} (that is on the field-strengths). As we will see, these constraints turn out to be physically equivalent to the \textit{on-shell constraints}, that is to say, to the \textit{equations of motion}.

This is the way in which the on-shell closure of the supersymmetry algebra is implemented within this approach, and we will see that it allows to find field equations,
supersymmetry transformations and, eventually, also the Lagrangian for completely general supergravity theories, by the request of closure of the superalgebra.

Now we have all the ingredients for generalizing the discussion of Section \ref{gma} to the case of a geometrical formulation of supergravity.\footnote{An extended study of different supergravity theories in this geometrical formulation can be found in Ref. \cite{Libro2}.}
The presentation we give strictly follows the lines of Ref. \cite{Libro2}.

In order to introduce the technical aspects of the rheonomic framework, let us start with a \textit{supergroup-manifold}, instead of a group-manifold, whose corresponding superalgebra is given by
\begin{equation}
[T_A, T_B \rbrace = C^C_{\;\;AB}T_C.
\end{equation}

We start from \textit{rigid} superspace and, in particular, we specialize to the case in which $G = \overline{OSp(1 \vert 4)}$, that is to say, to the $\mathcal{N} = 1$ super-Poincar\'{e} group (in $D=4$), whose superalgebra is given in equations (\ref{bossuperalge})-(\ref{ferm}). The aforementioned superalgebra is naturally factorized in $G=H+K$. In particular, $H = SO(1, 3)$ is the subalgebra spanned by the Lorentz generators $J_{\mu \nu}$,\footnote{For $\mathcal{N}$-extended supergravity theories, the $H$ subalgebra is $SO(1, 3) \times H'$, for some $H'$.} and $K = G/H = I + O$ is split into a bosonic (inner) subspace $I$, spanned by the translations $P_\mu$, and a fermionic (outer) subspace $O$, spanned by the supercharge $Q_\alpha$. Then, the superalgebra can be schematically written as follows:
\begin{equation}\label{superalgHIO}
[H,H]\subset H, \;\;\; [H,I]\subset I, \;\;\; [H,O]\subset O, \;\;\; \lbrace O,O \rbrace \subset I, \;\;\; [I,I]=[I,O]=0.
\end{equation}
The structure constants in equations (\ref{bossuperalge})-(\ref{ferm}) obey \textit{graded Jacobi identities}:
\begin{equation}
[T_A , [T_B , T_C \rbrace \rbrace + (-1)^{A(B+C)}[T_B,[T_C, T_A \rbrace \rbrace + (-1)^{B(C+A)}[T_C, [T_A, T_B \rbrace \rbrace =0.
\end{equation}

All the construction described in Section \ref{gma} can now be repeated (see \cite{Libro1, Libro2} for more details):
\begin{itemize}
\item We consider a basis of bosonic and fermionic $1$-forms $\mu^A$ on the deformed, soft supergroup-manifold $\tilde{G}$, with curvatures defined as in (\ref{curvatureGtilde}), and define covariant and contravariant derivatives as in (\ref{covderGtilde}) and (\ref{contravderGtilde}).
We call the $\mu^A$'s as follows:
\begin{equation}
\mu^A \equiv (\omega^{ab},V^a, \psi^\alpha),
\end{equation}
with corresponding \textit{supercurvatures}
\begin{equation}
R^A = (R^{ab},R^a,\rho^\alpha).
\end{equation}
In particular, the gravitino $\psi^\alpha$ is a spinor $1$-form and, correspondingly, its supercurvature $\rho^\alpha$ is a spinorial $2$-form (in the following, for simplifying the notation, we will neglect the spinor index $\alpha$).
The supercurvatures obey the \textit{Bianchi identities}
\begin{equation}
\nabla R^A = dR^A +C^A_{\;\; BC}\mu^B \wedge R^C =0.
\end{equation}
\item Then, since the Lorentz group is a gauge symmetry of the theory, we require \textit{factorization}, which allows to work on the \textit{fiber bundle} with the \textit{Lorentz group} as \textit{fiber} and the \textit{superspace} as \textit{base space}, namely we require the constraint
\begin{equation}
R^A_{\;\;ab \; B}=0
\end{equation} 
on the components of the curvatures as $2$-forms on the supergroup-manifold. 
From now on, fields and curvatures will be functions of the superspace coordinates $(x^\mu, \theta^\alpha)$. In particular, we find:
\begin{align}
& R^{ab}\equiv d\omega^{ab}-\omega^a_{\; c}\wedge \omega^{cb} , \\
& R^a \equiv dV^a - \omega^{ab}\wedge V_b - \frac{\ii}{2}\bar{\psi}\wedge \gamma^a \psi = DV^a - \frac{\ii}{2}\bar{\psi}\wedge \gamma^a \psi, \\
& \rho \equiv d\psi - \frac{1}{4}\omega^{ab}\wedge \gamma_{ab}\psi = D \psi,
\end{align}
where we have introduced the \textit{Lorentz covariant derivative} $D$.
These supercurvatures subject to the following \textit{Bianchi identities}:
\begin{align}
& DR^{ab}=0, \label{bau1}\\
& DR^a + R^{ab}\wedge V_b - \ii \bar{\psi}\wedge \gamma^a \rho =0, \label{bau2} \\
& D \rho + \frac{1}{4}R^{ab}\wedge \gamma_{ab}\psi =0. \label{bau3}
\end{align}
\item Finally, the \textit{peculiar feature of the rheonomic approach} is the following one: The $1$-forms $\mu^A$ described so far are defined on superspace; however, in order to \textit{reproduce the physical content of supergravity on space-time}, we have to \textit{relate the field-strengths along the fermionic vielbein to the curvatures along the bosonic vielbein $V^a$}, getting rid of the extra degrees of freedom arising in the extension to superspace. Therefore, we have to introduce some sort of ``factorization'', as we have done for the $\tilde{G}$ coordinates in the Lorentz directions. The constraints we will introduce relate the components of the supercurvatures along the basis $\psi^\alpha \wedge V^a$ or $\psi^\alpha \wedge \psi^\beta$ to their components along the $V^a \wedge V^b$ basis in algebraic way, actually linearly.

Will see in a while that the so called \textit{rheonomy principle} is equivalent to \textit{supersymmetry on space-time}. As we have already mentioned, the \textit{local supersymmetry transformations} are not gauge transformations of the super-Poincar\'{e} algebra,
but have to be thought, instead, as \textit{diffeomorphisms in the fermionic directions of superspace}.

Let us explicitly see what we mean, by first considering the $G$-gauge transformations of $\mu^A \equiv (\omega^{ab},V^a,\psi)$, where $G=\overline{OSp(1 \vert 4)}$ (rigid superspace).

Recall that a $G$-gauge transformation of the fields $\mu^A$ is given by the $G$-covariant derivative of $\epsilon^A$, where $\epsilon^A \equiv (\epsilon^{ab},\epsilon^a,\epsilon^\alpha)$ is a parameter in the adjoint representation of $G$:
\begin{equation}\label{ggaugenewwwwwww}
\delta^{\text{(gauge)}}_\epsilon \mu^A = \left( \nabla \epsilon \right)^A.
\end{equation}
The Lorentz content of the $\nabla$ derivative, when acting on the adjoint multiplet, can be directly read off from the explicit form of the Bianchi identities (\ref{bau1})-(\ref{bau3}). We obtain:
\begin{align}
& \delta^{\text{(gauge)}}_\epsilon \omega^{ab} = (\nabla \epsilon)^{ab}\equiv D \epsilon^{ab}, \label{gggggom} \\
& \delta^{\text{(gauge)}}_\epsilon V^a = (\nabla \epsilon)^a \equiv D\epsilon ^a + \epsilon^{ab}V_b - \ii \bar{\psi}\gamma^a \epsilon , \label{gggggviel} \\
& \delta^{\text{(gauge)}}_\epsilon \psi = \nabla \epsilon \equiv D \epsilon + \frac{1}{4}\epsilon^{ab}\gamma_{ab}\psi , \label{gggggpsi}
\end{align}
where $\nabla$ and $D$ represents the $\overline{OSp(1\vert 4)}$ and $SO(1,3)$ covariant derivatives, respectively.

In particular, if $\epsilon^A \equiv (0,0,\epsilon^\alpha)$ we get the explicit form of a \textit{gauge supersymmetry transformation}:
\begin{align}
& \delta_\epsilon \omega^{ab}=0 , \label{susygom} \\
& \delta_\epsilon V^a = - \ii \bar{\psi}\gamma^a \epsilon , \label{susygviel} \\
& \delta_\epsilon \psi = D \epsilon . \label{susygpsi}
\end{align}
Setting instead $\epsilon^A = (\epsilon^{ab},0,0)$ yields the form of a \textit{Lorentz gauge transformation}:
\begin{align}
& \delta \omega^{ab} = D \epsilon^{ab}, \label{lorom} \\
& \delta V^a = \epsilon^{ab}V_b , \label{lorviel} \\
& \delta \psi = \frac{1}{4}\epsilon^{ab} \gamma_{ab}\psi , \label{lorpsi}
\end{align}
while setting $\epsilon^A=(0,\epsilon^a,0)$ yields the form of a \textit{translation gauge transformation}:
\begin{align}
& \delta \omega^{ab} = 0, \label{trom} \\
& \delta V^a = D \epsilon^a , \label{trviel} \\
& \delta \psi =0 . \label{trpsi}
\end{align}
Observe that local supersymmetry transformations are \textit{not} gauge transformations of the super-Poincaré algebra.

Let us also write the transformation law of $\mu^A$ under (infinitesimal) \textit{diffeomorphisms}. Indeed, as we have already mentioned, this will be very important in the sequel for the interpretation of supersymmetry.

Let 
\begin{equation}
\epsilon = \frac{1}{2} \epsilon^{ab} \tilde{D}_{ab} + \epsilon^a \tilde{D}_a + \bar{\epsilon}\tilde{D} \equiv \epsilon^A \tilde{D}_A
\end{equation}
be a general tangent vector on $\tilde{G}$, with $\tilde{D}_A$ dual to $\mu^B$:
\begin{equation}
\mu^B (\tilde{D}_A)= \delta^B_{\;\;A}.
\end{equation}
Here and in the following, according with the same notation of \cite{Libro2}, we denote by ($\tilde{D}_A$) $D_A$ the tangent vector on the (soft) group-manifold dual to the (non) left-invariant $1$-forms ($\mu^A$) $\sigma^A$. We reserve the symbol $T_A \equiv (J_{ab},P_a,Q)$ to the abstract Lie algebra generators (when thought as vector fields, they are left-invariant and $D_A\equiv T_A$). 

As we have already seen in Section \ref{gma}, an \textit{infinitesimal diffeomorphism} on $\mu^A$ is given by the \textit{Lie derivative}:
\begin{equation}
\delta^{\text{(diff.)}} _\epsilon \mu^A \equiv \ell _\epsilon \mu^A = ( \imath_\epsilon d + d \imath_\epsilon ) \mu^A .
\end{equation}
Alternatively (see Section \ref{gma}), we may rewrite
\begin{equation}\label{diffnewwwwwwwwww}
\delta^{\text{(diff.)}} _\epsilon \mu^A \equiv \ell_\epsilon \mu^A = (\nabla \epsilon)^A + \imath_\epsilon R^A = (\nabla \epsilon)^A + 2 \epsilon^B R^A_{\;\; BC}\mu^C 
\end{equation}
and, making the Lorentz content explicit, we find:
\begin{align}
& \ell_\epsilon \omega^{ab} = (\nabla \epsilon)^{ab} + \imath_\epsilon R^{ab} , \label{difflieom} \\
& \ell_\epsilon V^a = (\nabla \epsilon)^{a} + \imath_\epsilon R^a , \label{difflieviel} \\
& \ell_\epsilon \psi = \nabla \epsilon + \imath_\epsilon \rho . \label{diffliepsi}
\end{align}
Thus, if we know the on-shell parametrization of the supercurvatures $R^{ab}$, $R^a$, and $\rho$, then we also know the supersymmetry transformations leaving the theory invariant.

Now, if $\epsilon = \epsilon^a \tilde{D}_a + \bar{\epsilon} \tilde{D}$, equations (\ref{difflieom})-(\ref{diffliepsi}) describe a diffeomorphism in superspace $\mathcal{M}_{4 \vert 4}$, which cannot be interpreted as a pure gauge transformation of the super-Poincaré algebra, unless we also impose the further horizontality constraints $\imath _{\tilde{D}_a}R^A = \imath_{\tilde{D}_\alpha} R^A =0$. 

However, if these conditions were to be imposed, the fields $\mu^A$ would have a trivial (factorized) dependence on the superspace coordinates $(x^\mu, \theta^\alpha)$, and the soft (super)-coset $\tilde{G}/SO(1,3)=\mathcal{M}_{4 \vert 4}$ would reduce to the rigid superspace $G/SO(1,3)\equiv \mathbb{R}^{4 \vert 4}$.

Therefore, in the construction of a physical theory, we need non-vanishing curvature-terms in (\ref{difflieom})-(\ref{diffliepsi}). In this way, the fields $\mu^A$ can exhibit a non-trivial (that is, dynamical) dependence on their argument.

Thus, we have seen that we cannot impose factorization and ``gauge away'' the fermionic degrees of freedom, since the fundamental forms $\mu^A$ have a non-trivial, physical dependence on all the coordinates of superspace, which explicitly reads:\footnote{We adopt the same convention of \cite{Libro2}, in which the indexes labeling the spinorial coordinates are denoted by a lower case Greek letter with a bar, while the unbarred Greek indexes will be reserved to describe intrinsic fermionic indexes.}

\begin{align}
& V^a(x,\theta)= V^a_\mu (x,\theta)dx^\mu + V^a _{\bar{\alpha}} (x,\theta)d\theta^{\bar{\alpha}} , \\
& \psi (x,\theta) = \psi_\mu (x,\theta)dx^\mu + \psi_{\bar{\alpha}} (x,\theta)d\theta^{\bar{\alpha}} , \\
& \omega^{ab}(x,\theta)= \omega^{ab}_\mu (x, \theta)dx^\mu + \omega^{ab}_{\bar{\alpha}} (x,\theta)d\theta^{\bar{\alpha}} .
\end{align} 

On the other hand, in order to have a consistent theory on space-time, with fields having the same number of physical degrees of freedom of the space-time fields
\begin{align}
& V^a (x)= V^a_\mu (x)dx^\mu , \\
& \psi (x)= \psi_\mu (x)dx^\mu, \\
& \omega^{ab}(x)= \omega^{ab}_\mu (x)dx^\mu ,
\end{align} 
all the space-time fields in the $\theta$-expansion of the superfield $\mu^A(x,\theta)$, and all its $d\theta$-components, have to be expressed in terms of the space-time restriction $\mu^A(x)=\mu^A_\mu (x,0)dx^\mu$.

Recalling (\ref{diffnewwwwwwwwww}), for an \textit{infinitesimal diffeomorphism in a fermionic direction} with parameter $\epsilon^A=(0,0,\epsilon^{\bar{\alpha}})$ we have:
\begin{equation}\label{fermdiffnewwwwww}
\delta_\epsilon \mu^A (x,\theta) = ( \nabla \epsilon)^A + 2 \bar{\epsilon}^{\bar{\alpha}} R^A_{\;\; \bar{\alpha} L}dZ^L,
\end{equation}
with $dZ^L \equiv (d\theta^{\bar{\alpha}}, dx^\mu)$. 
 
The constraint that we have to impose, from the request that in the projection to space-time we do not loose physical degrees of freedom, is the following one:
\begin{equation}\label{rheonomicconstraint}
R^A_{\;\;\bar{\alpha} L}= C^{A\vert \mu \nu}_{\;\;\bar{\alpha} L\vert B}R^B_{\;\;\mu \nu}, \;\;\;\;\; C^{A\vert \mu \nu}_{\;\;\bar{\alpha} L\vert B} \; \; \text{constant tensors},
\end{equation} 
where, according to our convention, $\mu$ and $\nu$ are indexes labeling the space-time (bosonic) coordinates, $\bar{\alpha}$ is a spinorial index associated to the $\theta^{\bar{\alpha}}$ coordinates, $L\equiv (\bar{\alpha},\mu)$, and $A$ and $B$ are super-Lie algebra indexes. 

The constraints in (\ref{rheonomicconstraint}) which relate the inner $R^A_{\mu \nu}$ and the outer $R^A_{\bar{\alpha}L}$ components of the curvatures $R^A$ are named \textit{rheonomic constraints} (the property expressed by (\ref{rheonomicconstraint}) is referred to as ``\textit{rheonomy}'' and a theory admitting a set of rheonomic constraints is likewise named a \textit{rheonomic} theory).
As we have previously anticipated, through the rheonomic constraints the components along at least one fermionic vielbein $\psi$ are linearly expressed in terms of the components along the bosonic vielbeins. Indeed, these constraints state that the fermionic components of the curvatures (in their decomposition on a basis of $1$-forms $\mu^A$) can be expressed algebraically in terms of the space-time components $R^A_{\;\;\mu \nu}= \partial_{[\mu}\mu^A_{\nu]}+\frac{1}{2}C^A_{\;\;BC}\mu^B_{[\mu}\mu^C_{\nu]}$. In other words, this means that the derivatives of the fields in the $\theta$-directions are expressed, through (\ref{rheonomicconstraint}), as linear combinations of their derivatives in the space-time directions. 

Thus, one can see that, when the constraint (\ref{rheonomicconstraint}) hold, the knowledge of a purely space-time configuration $\lbrace \mu^A_\mu (x,0);\; \partial_{\mu} \mu^A_\mu (x,0) \rbrace$ determines in a complete way the so called \textit{rheonomic extension mapping}: 
\begin{equation}
\text{Rheonomic mapping:} \left\{
\begin{aligned}
& V^a (x) \rightarrow V^a (x,\theta), \\
& \psi (x) \rightarrow \psi (x,\theta), \\
& \omega^{ab}(x) \rightarrow \omega^{ab}(x,\theta).
\end{aligned} 
\right.
\end{equation}
Indeed, inserting (\ref{rheonomicconstraint}) into (\ref{fermdiffnewwwwww}), we find
\begin{equation}\label{rhmappinggggg}
\delta \mu^A (x,\theta) = (\nabla \epsilon)^A + 2 \epsilon^{\bar{\alpha}} C^{A\vert \mu \nu}_{\;\;\bar{\alpha} L\vert B} R^B _{\mu \nu} (x,0)dZ^L .
\end{equation}
Thus, if rheonomy holds, (\ref{fermdiffnewwwwww}) is equivalent to the \textit{passive} point of view for the Lie derivative (a \textit{flow}, through fermionic diffeomporphisms, from an hypersurface to another which is translated by $\delta \theta$).\footnote{In fact, the term ``\textit{rheonomy}'' takes its origin from the Greek words ``$\rho \epsilon \hat{\imath} \nu$'' $\rightarrow$ ``flow'' and ``$\nu $\'{o}$ \mu o \varsigma$'' $\rightarrow$ ``law'', referring to the flow law for moving from one hypersurface to another (through fermionic diffeomorphisms).} 

Therefore, the complete $\theta$-dependence of the superfield $\mu^A (x,\theta)$ can be recovered starting from the initial purely space-time ($\theta=0$) configuration. In other words, $\mu^A_\mu (x,0)$ and the space-time tangent derivatives $\partial_\mu \mu^A_\nu (x,0)$ (or, equivalently, $\mu^A_\mu(x,0)$ and $R^A_{\mu \nu}(x,0)$) constitute a complete set of Cauchy data on $\mathcal{M}_4$ once (\ref{rheonomicconstraint}) is satisfied. Indeed, one can show that, when the constraints (\ref{rheonomicconstraint}) hold, the space-time normal derivatives $\frac{\partial}{\partial \theta} \mu^A(x,0)$ are expressible in terms of $\mu^A(x,0)$ and $\partial_{[\mu} \mu^A_{\nu]}(x,0)$. The rheonomic constraints (\ref{rheonomicconstraint}) are constraints between inner $\left( \frac{\partial}{\partial x^\mu}\right)$ and outer $\left( \frac{\partial}{\partial \theta^\alpha}\right)$, and this is analogous to the Cauchy-Riemann equations for an analytic function:

\begin{align}
& f(x,y)=u(x,y) + \ii v(x,y), \\
& \frac{\partial}{\partial x}u(x,y)= \frac{\partial}{\partial y}v(x,y), \\
& \frac{\partial}{\partial y}u(x,y)= - \frac{\partial}{\partial x}v(x,y).
\end{align}

According to this analogy, we have

\begin{align}
& x \rightarrow x^\mu, \\
& y \rightarrow \theta^\alpha, \\
& f(x,y) \rightarrow \mu^A (x^\mu, \theta^\alpha).
\end{align}
Moreover, just as the analycity of a function allows for its determination in the whole complex plane once its boundary value on any line (say $y=0$) is given, in the same way rheonomy allows to reconstruct the superfield potential $\mu^A (x,\theta)$ from its boundary value (say $\theta=0$) (see Refs. \cite{Libro2, Fre:2013ika} for details on this analogy).

The idea of rheonomy, together with a ``visualization'' of superspace, is graphically summarized in Figure \ref{figurerheonsuper}, previously proposed in Ref. \cite{Fre:2013ika}.


\begin{figure}
\centering
\pgfdeclareimage[height=8cm]{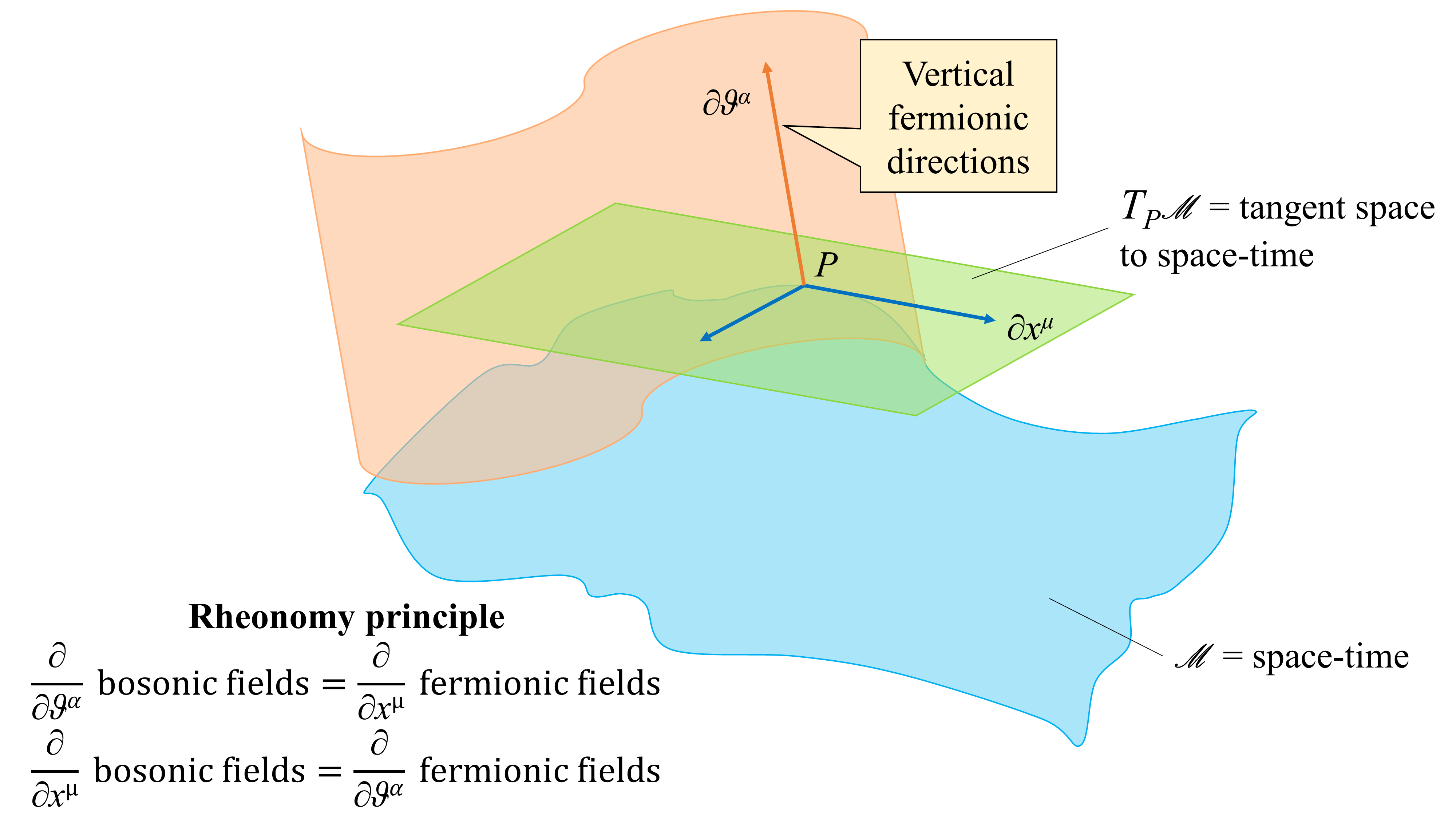}{rheonsuper}
\pgfuseimage{rheonsuper}
\caption[Rheonomy of superspace]{\textit{Rheonomy of superspace.} The principle of rheonomy is reminiscent of the Cauchy-Riemann equations satisfied by the real and the imaginary parts of analytic functions, encoding a sort of analyticity condition for the superconnections that constitute the field content of supergravity theories (see Refs. \cite{Libro2, Fre:2013ika}).} \label{figurerheonsuper}
\end{figure}

What we have shown till now is that \textit{the physical content of a rheonomic theory in superspace is completely determined by means of a purely space-time description}, through the ``flow laws'' that connect the two spaces.

Alternatively, if we regard the Lie derivative as the generator of the functional change of $\mu^A$ at the same coordinate point:
\begin{equation}
\ell_\epsilon \mu^A = \mu^{A'} (x,0)-\mu^A (x,0),
\end{equation}
that is if we consider an \textit{active} point of view for the Lie derivative, sticking to the four-dimensional space-time ($\theta= d\theta=0$), then the rheonomic mapping (\ref{rhmappinggggg}) can be rewritten as follows:
\begin{equation}\label{rhmappello}
\delta \mu^A (x,0)=(\nabla \epsilon)^A + 2 \bar{\epsilon}C^{A \vert \mu \nu}_{\bar{\alpha}L \vert B} R^B_{\mu \nu}(x,0)dZ^L .
\end{equation}
We can thus say that, written in this form, \textit{the rheonomic mapping maps a space-time configuration into a new space-time configuration}. In particular, if the theory described by the fields $\mu^A$ is \textit{invariant under superspace diffeomorphism}, then it can be \textit{restricted to space-time}, and (\ref{rhmappello}) will appear as a \textit{symmetry transformation of the space-time theory}.

Now, since $\bar{\epsilon}^{\bar{\alpha}}$ is a spinorial parameter, \textit{the rheonomic mapping realized on space-time field configurations} will be identified as a \textit{supersymmetry transformation}.

Note that rheonomy does \textit{not} depend on the particular basis chosen for the $1$-forms: The coordinate basis used above, $\lbrace d \theta^{\bar{\alpha}}, d x^\mu \rbrace$, and the anholonomic supervielbein basis, $\lbrace V^a , \psi ^\alpha \rbrace$, are equally viable. In the following, we shall need the expression of the rheonomic constraints using \textit{intrinsic} components of the curvatures.

We can now rewrite the \textit{supersymmetry transformations} as follows:
\begin{equation}\label{susytrasfrheonomy}
\begin{aligned}
\ell_\epsilon \mu^A (x,0) & = (\nabla \epsilon)^A + \imath_\epsilon R^A(x,0) = \\
& = (\nabla \epsilon)^A + 2 \bar{\epsilon}^\alpha R^A_{\alpha C} (x,0) \mu^C = \\
& = (\nabla \epsilon)^A + 2 \bar{\epsilon}^\alpha C^{A\vert mn}_{\alpha C \vert B}R^B_{mn}(x,0)\mu^C , 
\end{aligned}
\end{equation}
where we have used $\epsilon = \bar{\epsilon}\tilde{D}$, $\mu^A (\tilde{D}_\alpha) = \delta^A_{\;\;\alpha}$, and $R^A \equiv R^A_{\;\;BC} \mu^B \wedge \mu^C$.
It is in this \textit{intrinsic form} that the supersymmetry transformations appear in supergravity theories.
\end{itemize}

Summarizing, we have seen that a geometric formulation of supegravity as a theory on the super-Poincar\'{e} group can be done when one imposes factorization in the Lorentz directions and rheonomy (precisely, the rheonomic constraints (\ref{rheonomicconstraint})) on the odd directions. 

Let us now look better inside (\ref{rheonomicconstraint}): In the general discussion of
the group-manifold approach, we have seen that general coordinate transformations on $\tilde{G}$ (diffeomorphisms) close an algebra
\begin{equation}
[\delta_{\epsilon_1}, \delta_{\epsilon_2} ] = \delta_{[\epsilon_1,\epsilon_2]} 
\end{equation}
with the closure condition on the exterior derivative $d^2 = 0$ if the curvatures satisfy the Bianchi identities $\nabla R^A = 0$.

However, the rheonomic constraints (\ref{rheonomicconstraint}) among the holonomic outer and inner components $R^A_{\bar{\alpha}L}$ and $R^A_{\mu \nu}$ imply an analogous relation among the intrinsic components $R^A_{\alpha C}$ and $R^A_{mn}$:
\begin{equation}\label{fuckkkkk}
R^A_{\alpha C} = C'^{A\vert mn }_{\alpha C \vert B}R^B_{mn},
\end{equation}
where $C'$ are constant anholonomic tensors (precisely, the $C$'s appearing in (\ref{rheonomicconstraint}), evaluated in their intrinsic basis), and, in the presence of (\ref{fuckkkkk}), the \textit{Bianchi identities loose their character of identities}, becoming \textit{integrability equations} for the constraints.
Since the rheonomic constraints express each outer component $R^A_{\alpha C}$ in terms of the inner ones $R^A_{mn}$, then the Bianchi integrability equations are (differential) equations among the space-time components of the curvatures which must be valid everywhere in superspace and, in particular, on the restriction to the space-time hypersurface. 

Hence, we reach the conclusion that the supersymmetry transformations (\ref{susytrasfrheonomy}) close an algebra only if the space-time
curvatures $R^A_{\;\;\mu \nu}$ satisfy certain integrability equations encoded in the Bianchi
identities. These equations are the \textit{space-time equations of motion} of the theory,\footnote{In the absence of auxiliary fields, which are indeed introduced to obtain an off-shell closure of the supersymmetry algebra.} and any different equation of motion would be inconsistent with the Bianchi identities.

Summarizing, in a rheonomic theory we expect that the supersymmetry
transformations (\ref{susytrasfrheonomy}) close an algebra only when the field-strengths satisfy the \textit{on-shell} constraints (on-shell configurations of the fields $\mu^A(x,0)$). Therefore, \textit{the requirement of the rheonomy projection
has the same physical content as the request that the equations of motions are satisfied}, in order to end up with a consistently defined supergravity theory.

Thus, we can now say that the constraints (\ref{rheonomicconstraint}) implement the requirement of (on-shell) matching of the bosonic and fermionic degrees of freedom, allowing to produce a supersymmetric theory in a geometric setting. When this is
imposed, we can think of supergravity as a direct extension of General Relativity,
where the manifold which must be described has both bosonic and fermionic coordinates.
For a graphic representation summarizing the scenario we have just described, see Figure \ref{figurerheon} (reproduced from Ref. \cite{Libro2}).

\begin{figure}
\centering
\pgfdeclareimage[height=7cm]{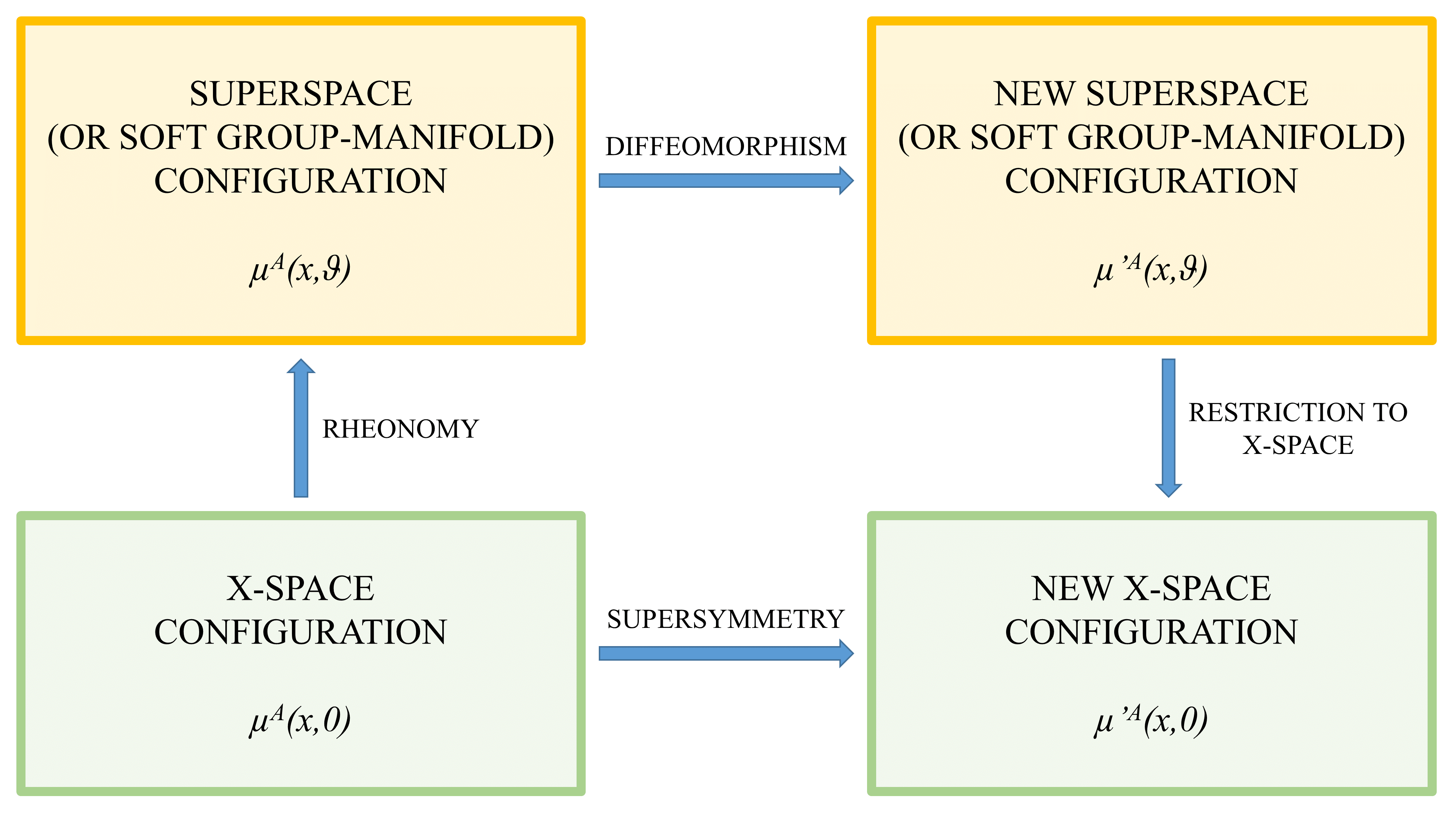}{rheonomy}
\pgfuseimage{rheonomy}
\caption[Relations among $x$-space and superspace configurations]{\textit{Relations among $x$-space and superspace configurations.} In this figure we give a graphic representation of the relations linking $x$-space and superspace configurations.} \label{figurerheon}
\end{figure}

The \textit{Lie derivative} formula
\begin{equation}
\ell_\epsilon \mu^A = (\nabla \epsilon)^A + \imath_\epsilon R^A ,
\end{equation}
with $\epsilon = \epsilon^{ab} \tilde{D}_{ab} + \epsilon^a \tilde{D}_a + \bar{\epsilon}\tilde{D}$, supplemented with the \textit{horizontality} and the \textit{rheonomic constraints}, gives:
\begin{itemize}
\item The \textit{Lorentz gauge transformations} ($\epsilon^a = \epsilon^\alpha =0$):
\begin{equation}\label{mimimi}
\ell_{(\epsilon^{ab} \tilde{D}_{ab})} \mu^A = (\nabla \epsilon)^A .
\end{equation} 
\item The \textit{general diffeomorphisms on space-time}:
\begin{equation}\label{mamama}
\ell_{(\epsilon^a \tilde{D}_a)} \mu^A = (\nabla \epsilon)^A + \epsilon^{a}R^A_{aC}\mu^C .
\end{equation}
\item The \textit{supersymmetry transformations} ($\epsilon^{ab}=\epsilon^a=0$):
\begin{equation}
\ell_{(\bar{\epsilon}\tilde{D})} \mu^A = (\nabla \epsilon)^A + 2 \bar{\epsilon}^\alpha C^{A \vert mn}_{\alpha C \vert B} R^B_{mn}\mu^C .
\end{equation}
\end{itemize}

As we have already pointed out, the closure of the \textit{supersymmetry algebra} requires, in general, further constraints on the space-time components $R^A_{mn}$, and these constraints are identified with the \textit{space-time equations of motion}; on the other hand, the closure of the gauge transformations (\ref{mimimi}) and of the space-time diffeomorphisms (\ref{mamama}) does not give further constraints.
This is the main difference between supersymmetry and all the other symmetries of a physical theory.

In the above discussion, we have considered the $D=4$ super-Poincaré group, which is the basis of the simplest supergravity theory.
The whole procedure can then be generalized to any supergroup in any space-time dimension $D$ (see, for example, \cite{Libro2} and also the work I have done during my first PhD year, \cite{Minimal}, in which minimal $\mathcal{N}=2$, $D=7$ supergravity is constructed within this framework). In $D$ dimensions, the superspace will be given by $\tilde{G}/H$, $\tilde{G}$ being the supergroup and $H$ the factorized subgroup which should always contain the Lorentz group $SO(1,D-1)$. This formulation can thus be used for constructing any supergravity theory. To this aim, we have to find the expansion of the curvatures $R^A$ in terms of the vielbein $\lbrace V^a, \psi \rbrace$ of superspace (supervielbein). This is done by imposing that the Bianchi equations are satisfied. After that, we have at once the supersymmetry transformation laws of the fields, encoded in (\ref{susytrasfrheonomy}), and the field equations.

\subsubsection{Lagrangian formulation in the rheonomic framework}

When working in the rheonomic framework, we have a \textit{geometrical} understanding of the theory, where all the ingredients have a clear (physical and geometrical) meaning. This is the main peculiarity of the rheonomic approach to supergravity theories.
In line of principle, there is no need of a Lagrangian formulation for describing the theory. However, even if the Lagrangian in superspace is not necessary, it is extremely useful in order to determine the rheonomic constraints and the space-time Lagrangian.

In fact, a Lagrangian formulation can however be given to supergravity, and, in particular, it is often
very useful for writing the equations of motion of the fields, since finding them from
the solutions of the Bianchi identities is usually very complicated.

The idea is to find an \textit{extended action principle}, that is a variational principle giving
as variational equations both the space-time equations of motion of the fields and the rheonomic constraints (and the supersymmetry transformation laws).
\textit{The geometric approach discussed so far allows a flexible
action principle, which cannot be given in other approaches}.

Suppose to construct a Lagrangian in terms of differential forms (which are invariant under diffeomorphisms) and by only using diffeomorphisms invariant operators among
them. Then, the four-dimensional space-time surface can be considered as a hypersurface embedded in the appropriate superspace $\mathcal{M}_{4 \vert 4}$, and we can construct an action by integrating the Lagrangian density which, in $D=4$, is a $4$-form, on the $4$-dimensional submanifold $\mathcal{M}_4 \subset \mathcal{M}_{4 \vert 4}$:
\begin{equation}
\mathcal{S}(\mu^A, \mathcal{M}_4)= \int_{\mathcal{M}_4}\mathcal{L}_4(\mu^A).
\end{equation}
When studying the variation of the action, in line of principle one should also vary $\mathcal{M}_4$, but, since $\mathcal{L}$ is \textit{geometrical} (it is constructed only with forms and using the differential operator ``$d$'' and the wedge product ``$\wedge$'', without using the Hodge duality operator), any deformation of $\mathcal{M}_4$ can be compensated by a diffeomorphism on the fields (the whole discussion can be extended to $\mathcal{M}_D$ and $\mathcal{M}_{D,\alpha}$ where $D>4$ and $\alpha>4$). Then, the equations of motion
\begin{equation}\label{eom}
\frac{\partial \mathcal{L}_4}{\partial \mu^A}=0
\end{equation}
can be found at fixed submanifold $\mathcal{M}_4$, but recalling that, since they are equations
on exterior forms, they actually hold over all the manifold $\mathcal{M}_{4\vert 4}$. 

For this reason, when constructing a geometric Lagrangian for supergravity, we have to avoid the use of the Hodge duality operator, since it involves the notion of a metric, and does not allow the smooth variation of integration manifold (besides the
fact that it is not clear how to extend its notion to a supermanifold).

In order to construct a \textit{geometric Lagrangian}, one can follow a set of ``\textit{building rules}'' allowing to write the most general Lagrangian with the expected good properties.
Let us list the general building rules for a supergravity Lagrangian in $D=4$ (see Ref. \cite{Libro2} for more details on these rules):
\begin{enumerate}
\item \textit{Geometricity}: The Lagrangian should be a $4$-form, constructed with the soft $1$-forms $\mu^A$ (and, when scalars and spin-$1/2$ fields are present, also with the corresponding $0$-forms) using only the diffeomorphic invariant operators $d$ and $\wedge$ (excluding the Hodge duality operator), that is \textit{coordinate invariance} is required. 

Explicitly, the Lagrangian will be written as a polynomial in the curvatures $R^A$, namely (in $D=4$) as
\begin{equation}
\mathcal{L}= \Lambda^{(4)}+R^A \wedge \Lambda^{(2)}_A + R^A \wedge R^B \Lambda^{(0)}_{AB},
\end{equation}
where the $n$-forms $\Lambda^{(n)}$ have the general expression $\Lambda^{(n)}_{AB \cdots} = C_{A_1 \cdots A_n A B \cdots} \; \mu^{A_1} \wedge \cdots \wedge \mu^{A_n}$, with the quantities $ C_{A_1 \cdots A_n A B \cdots} $ possibly depending on the scalars and spin-$1/2$ fields, when they are present. The degree in $R^A$ is at most $2$ for supergravity theories, also when dealing with higher dimensions. This comes from the request of having a Lagrangian with at most $2$ derivatives (in order to have field equations up to order $2$).
\item \textit{$H$-gauge invariance}: The Lagrangian must be $H$-invariant, with $H$ given by $H = SO(1, 3) \times H'$ (where, in $\mathcal{N}=1$ supergravities, $H'= \mathbf{1}$).\footnote{In order to implement this principle, each term in the Lagrangian must clearly be an \textit{$H$-scalar}.}
\item \textit{Homogeneous scaling law}: All fields must scale in such a way to leave invariant the curvatures and the Bianchi identities.
Precisely, the equations defining the curvatures $R^A$ are left invariant when the $1$-forms $\mu^A$ are rescaled according to
\begin{equation}\label{scaling1}
\omega^{ab} \rightarrow \omega^{ab}, \;\;\; V^a \rightarrow \omega V^a, \;\;\; \psi \rightarrow \omega^{1/2}\psi ,
\end{equation}
and the corresponding (super)curvatures as
\begin{equation}\label{scaling2curv}
R^{ab} \rightarrow R^{ab}, \;\;\; R^a \rightarrow \omega R^a, \;\;\; \rho \rightarrow \omega^{1/2}\rho.
\end{equation}
Then, the Bianchi identities are independent on $\omega$, and, as a consequence, also the field equations have to be independent on $\omega$. Therefore, each term in the Lagrangian must scale homogeneously under the above scaling law. 

In particular, in $D$ dimensions, each term must scale as $[\omega^{D-2}]$, which is the scale weight of the Einstein term.
\item \textit{Existence of the vacuum}: The defining equations for the curvatures $R^A$ always admit the solution $R^A = 0$
(\textit{vacuum solution}), in which case they reduce to the Maurer-Cartan equations of the super group-manifold $G$ for the left-invariant forms $\sigma^A$.
We then ask that also the field equations should admit the vacuum solution $R^A = 0$, where we recover a \textit{flat superspace}. Therefore, the field equations have to be at least linear in $R^A$.
\item \textit{Rheonomy}: We assume that in the field equations the parametrizations of the curvatures obey the constraints
\begin{equation}
R^A_{(O)B}= K^{A\vert (I)(I')}_{\;\; (O)B\vert C}R^C_{(I)(I')} \;\;\;\;\; \text{Rheonomy on outer components}.
\end{equation}
\end{enumerate}
Note that all these axioms are required for finding a \textit{locally supersymmetric Lagrangian}.

In order to obtain the space-time Lagrangian from the rheonomic one defined on superspace, one has to \textit{restrict} all the terms to the $\theta=0$, $d\theta=0$ hypersurface $\mathcal{M}_4$. In practice, we restrict all the superfields to their lowest ($\theta^\alpha=0$) component and to the space-time bosonic vielbein or differentials. This gives the Lagrangian $4$-form \textit{on space-time} (that is, the Lagrangian restricted from superspace to space-time).

Let us finally recall that, in some supergravity theories\footnote{In particular, auxiliary fields can be introduced in $\mathcal{N}=1$ and $\mathcal{N}=2$, $D\leq 5$ supergravity theories.}, one can also add \textit{auxiliary fields}, which allow the matching of the number of bosonic and fermionic degrees of freedom \textit{off-shell}. When this happens, the Bianchi identities are really identities, they do not imply the equations of motion, and one can construct a Lagrangian in which supersymmetry transformations close off-shell.

\subsection{Geometrical approach for the description of $D=4$ pure supergravity theories on a manifold with boundary}

Let us now introduce $D=4$ supergravity theories in the presence of a (non-trivial) space-time boundary in the geometric approach discussed above. We will strictly follow the lines of \cite{bdy}, and this short review will be useful for a clear understanding of the new (original) results we will present in Chapter \ref{chapter 4} of this thesis.

Before moving to the technical aspects of this formulation, let us introduce the scenario and give some motivations to the study of supergravity theories in the presence of a space-time boundary.

The presence of a boundary in (super)gravity theories has been studied with great interest from the `70s. In particular, in Refs. \cite{York:1972sj, Gibbons:1976ue, Brown:1992br} the authors pointed out the necessity of adding a boundary term to the gravity
action in such a way to implement Dirichlet boundary conditions for the metric field, in attempts to study the quantization of gravity with a path integral approach, in order to have an action depending only on the first derivatives of the metric.
Subsequently, the addition of boundary terms was considered in \cite{Horava:1996ma} by Horava and Witten, to cancel gauge and gravitational anomalies in eleven-dimensional supergravity. 

The inclusion of boundary terms has proved to be fundamental for the study of the so called $AdS$/CFT duality, a duality between string theory on asymptotically $AdS$ space-time (times a compact manifold) and a (conformal) quantum field theory living on the
boundary (see, for example, \cite{Maldacena:1997re, Gubser:1998bc, Witten:1998qj, Aharony:1999ti, DHoker:2002nbb} and references therein).
In the supergravity limit of string theory (that is, in the low-energy limit of the latter), the aforementioned duality implies a one-to-one correspondence between quantum operators in the conformal field theory (CFT) living on the boundary and the fields of the supergravity theory living in the bulk. In this scenario, the duality requires to supplement the supergravity action functional with appropriate boundary conditions for the supergravity fields, the latter acting as sources for the CFT operators.
In particular, the divergences presented by the bulk metric near the boundary can be eliminated through the so called ``holographic renormalization'' (see, for example, Ref. \cite{Skenderis:2002wp} for a review on this topic), with the inclusion of appropriate counterterms at the boundary.

The inclusion of boundary terms and counterterms to the \textit{bosonic} sector of $AdS$ supergravity has been studied in many different contexts. Of particular relevance are the works \cite{Aros:1999id, Aros:1999kt, Mora:2004kb, Olea:2005gb, Jatkar:2014npa}, in which it was shown that the addition of the topological Euler-Gauss-Bonnet term to the Einstein action of $D=4$ $AdS$ gravity leads to a background-independent definition of Noether charges, without the necessity of imposing Dirichlet boundary conditions on the fields. The Euler-Gauss-Bonnet boundary term regularizes the action and the related (background-independent) conserved charges.

In the context of full supergravity, boundary contributions were considered from several authors, adopting different approaches. In particular, in Refs. \cite{vanNieuwenhuizen:2005kg, Belyaev:2005rt, Belyaev:2007bg, Belyaev:2008ex, Grumiller:2009dx, Belyaev:2010as} it was pointed out that the supergravity action should
be invariant under local supersymmetry without imposing Dirichlet boundary conditions on the fields, in contrast to the Gibbons-Hawking prescription \cite{Gibbons:1976ue}.

From the above results, one can conclude that, in order to restore all the invariances of a (super)gravity Lagrangian
with cosmological constant in the presence of a non-trivial space-time boundary, one needs to add topological (boundary) contributions, also providing the counterterms necessary for regularizing the action
and the conserved charges.

More recently, in \cite{bdy} the authors worked out the construction of the $\mathcal{N} = 1$ and $\mathcal{N} = 2$, $D = 4$ supergravity theories with negative cosmological constant in the presence a non-trivial boundary (generalizing, in this way, to $D=4$ extended supergravity the results of \cite{Aros:1999id, Aros:1999kt, Mora:2004kb, Olea:2005gb, Jatkar:2014npa} and \cite{vanNieuwenhuizen:2005kg, Belyaev:2005rt, Belyaev:2007bg, Belyaev:2008ex, Grumiller:2009dx, Belyaev:2010as}), using a different approach with respect to that of \cite{vanNieuwenhuizen:2005kg, Belyaev:2005rt, Belyaev:2007bg, Belyaev:2008ex, Grumiller:2009dx, Belyaev:2010as} and extending to superspace the geometric approach of \cite{Aros:1999id, Aros:1999kt, Mora:2004kb, Olea:2005gb, Jatkar:2014npa}: Precisely, they introduced in a geometric way (generalizing the rheonomic approach to
supergravity we have introduced so far to the case in which a non-trivial space-time boundary is present) appropriate boundary terms to the Lagrangian in such a way to end up with an action (including the boundary contributions) invariant under supersymmetry transformations. 

We now recall, on the same lines of Ref. \cite{bdy}, what happens in the geometric approach when considering $D=4$ simple supergravity theories in the presence of a (non-trivial) space-time boundary, in view of a clearer understanding of the analysis we will perform in Chapter \ref{chapter 4}.

Let $V^a$ ($a=0,1,2,3$) and $\psi_A^\alpha$ ($A=1,\ldots,N$, $\alpha=1,\ldots,4$) be the bosonic and fermionic vielbein $1$-forms \textit{in superspace}, respectively. The index $A$ is the $U(N)$ $R$-symmetry index, while $\alpha$ is a four-dimensional spinor index.
In the $\mathcal{N}=1$ case (which is the one we will consider in Chapter \ref{chapter 4}), we have just $\psi^\alpha$, being $A=1$.

In any supergravity theory, the Lagrangian $\mathcal{L}$ must be invariant under supersymmetry transformations. As we have previously discussed, in the rheonomic (geometric) set up, supersymmetry transformations in space-time are interpreted as diffeomorphisms in the fermionic directions of superspace; they are generated by Lie derivatives with fermionic parameter $\epsilon^\alpha_A$. In other words, the rheonomy principle is equivalent to the requirement of space-time supersymmetry.
It follows that \textit{the supersymmetry invariance of the Lagrangian is accounted for by requiring that the Lie derivative $\ell_\epsilon$ of the Lagrangian vanishes for infinitesimal diffeomorphisms in the fermionic directions of superspace}:
\begin{equation}\label{susy}
\delta_\epsilon \mathcal{L}\equiv \ell_\epsilon \mathcal{L}= \imath_\epsilon d\mathcal{L} + d(\imath_\epsilon \mathcal{L})=0,
\end{equation}  
where $\epsilon_A (x,\theta)$ is the fermionic parameter along the tangent vector $D^A$ dual to the gravitino $\psi_A$, $\bar{\psi}_A^\alpha (D^B_\beta)=\delta^\alpha_\beta \delta^B_A$. In particular, we have $\imath_\epsilon (\psi_A)=\epsilon_A$ and $\imath_\epsilon (V^a)=0$, where $\imath$ denotes, as usual, the contraction operator.

Now, since $d\mathcal{L}$ is a $5$-form \textit{in superspace}, the first contribution, that is $\imath_\epsilon d\mathcal{L}$, which would be identically zero in space-time, is not trivial here. The second contribution, namely $d(\imath_\epsilon \mathcal{L})$, is a boundary term and does not affect the bulk result. Then, a \textit{necessary condition} for a supergravity Lagrangian is
\begin{equation}\label{bulkinv}
\imath_\epsilon d \mathcal{L}=0,
\end{equation}
which corresponds to require supersymmetry invariance in the bulk.
We will assume in the sequel that the condition (\ref{bulkinv}) always holds. Under the condition (\ref{bulkinv}), the supersymmetry transformation of the action reduces to
\begin{equation}
\delta_\epsilon \mathcal{S} = \int_{\mathcal{M}_4} d (\imath _\epsilon \mathcal{L}) = \int_{\partial \mathcal{M}_4} \imath_\epsilon \mathcal{L}. 
\end{equation}
When considering a supergravity theory on Minkowski background or, generally, on a space-time with boundary at infinity, the fields asymptotically vanish, so that

\begin{equation}\label{contrbdy}
\imath_\epsilon \mathcal{L}|_{\partial \mathcal{M}}=0,
\end{equation}
and then
\begin{equation}
\delta _\epsilon \mathcal{S}=0.
\end{equation}
In this case, equation (\ref{bulkinv}) is also a \textit{sufficient condition} for the supersymmetry invariance of the Lagrangian.

On the other hand, when the background space-time has a \textit{non-trivial boundary}, the condition (\ref{contrbdy}), modulo an exact differential, becomes non-trivial, and it is necessary to check it in an explicit way in order to get supersymmetry invariance of the action.

Let us mention that in the cases considered by the authors of \cite{bdy} (that is to say, the $\mathcal{N}=1$ and the $\mathcal{N}=2$ pure supergravity theories in $D=4$ with negative cosmological constant),
the bulk Lagrangian $\mathcal{L}_{bulk}$ is not supersymmetric when a non-trivial boundary of space-time is present. The authors of \cite{bdy} showed that, in this case, supersymmetry invariance is recovered by adding topological (boundary) contributions $\mathcal{L}_{bdy}$ to the bulk Lagrangian: Even if these contributions do not affect the bulk, they restore the supersymmetry invariance of the total Lagrangian (bulk and boundary), besides modifying the boundary dynamics. They found that the boundary values of the superspace curvatures are dynamically fixed by the field equations of the full Lagrangian, and that the introduction of a supersymmetric extension of the Gauss-Bonnet term allows to recover supersymmetry invariance. 

As the Gauss-Bonnet term in pure gravity allows to
recover invariance of the theory under all the bosonic symmetries (lost in the presence of a boundary), and further regularizes the action \cite{Aros:1999id, Aros:1999kt, Mora:2004kb, Olea:2005gb, Jatkar:2014npa}, the authors of \cite{bdy} argued that the same mechanism should also take place in the $D=4$ supersymmetric case.

The authors of \cite{bdy} also showed that the total Lagrangian $\mathcal{L}_{full}=\mathcal{L}_{bulk} + \mathcal{L}_{bdy}$ they obtained can be rewritten in a suggestive way as a sum of quadratic terms in $OSp(\mathcal{N}\vert 4)$-covariant super field-strengths (the same structure should appear also for higher $\mathcal{N}$ theories). In particular, for the $\mathcal{N}=1$ case the result presented in \cite{bdy} reproduce the MacDowell-Mansouri action \cite{MM}.

\section{Free Differential Algebras and Lie algebras cohomology}\label{revfda1}

Let us briefly introduce, in this section, the concept of \textit{Free Differential Algebra} (FDA in the following), since it will be a key concept in this thesis (mainly in Chapter \ref{chapter 5}). The presentation we give here strictly follows the lines of \cite{Hidden}.

The concept of FDA was introduced by Sullivan in \cite{Sullivan}.
Subsequently, the FDA framework was applied to the study of supergravity theories by R. D'Auria and P. Fr\'{e}, in particular in \cite{D'AuriaFre}, in which the FDA was referred to as \textit{Cartan Integrable System} (CIS), since the authors were unaware of the previous work by Sullivan \cite{Sullivan}. Actually, FDA and CIS are equivalent concepts \cite{DAuria:1982ada}.
The latter is also known as the \textit{Chevalley-Eilenberg Lie algebras cohomology framework in supergravity} (CE-cohomolgy in the following).

FDAs, which accomodate forms of degree higher than two, extending the concept of Lie algebras, emerged as underlying symmetries of field
theories containing antisymmetric tensors, that is to say, theories such as supergravity and superstring.\footnote{In particular, antisymmetric tensors are naturally contained in supergravity theories in $4 \leq D \leq 11$ space-time dimensions.} Indeed, FDAs extend the Maurer-Cartan equations of ordinary Lie (super)algebras by incorporating $p$-form potentials, with $p > 1$, that are associated to $p$-index antisymmetric tensors.

We now shortly recall the standard procedure for the construction of a \textit{minimal} FDA (a minimal FDA is one where the differential of any $p$-form does not contain forms of degree greater than $p$), starting from an ordinary Lie algebra (see, for example, Ref. \cite{Libro2} for more details on FDAs). 

Let us thus start by considering the Maurer-Cartan $1$-forms $\sigma^A$ of a Lie algebra, and let us construct the so called \textit{$(p+1)$-cochains} (\textit{Chevalley cochains}) $\Omega^{i\vert(p+1)}$ in some representation $D^i_j$ of the Lie group, that is to say, $(p+1)$-forms of the type
\begin{equation}\label{coch}
   \Omega^{i|(p+1)}=\Omega^i_{A_1\dots A_{p+1}}\sigma^{A_1}\wedge \dots \wedge \sigma^{A_{p+1}},
\end{equation}
where $ \Omega^i_{A_1\dots A_{p+1}}$ is a constant tensor. 

If the above cochains are closed:
\begin{equation}
d\Omega^{i|(p+1)}=0 ,
\end{equation}
they are called \textit{cocycles}.
If a cochain is exact, it is called a \textit{coboundary}.

Of particular interest are those \textit{cocycles that are not coboundaries}, which are elements of the CE-cohomology.\footnote{If the closed cocycles are also coboundaries (exact cochains), then the cohomology class is trivial.}
In the case in which this happens, we can introduce a $p$-form $A^{i\vert (p)}$ and write the following closed equation:
\begin{equation}\label{enlarge}
 d\,A^{i\vert(p)}+\Omega^{i\vert ({p+1})}=0 ,
\end{equation}
which, together with the Maurer-Cartan equations of the Lie algebra, is the first germ of a FDA, containing, besides the Maurer-Cartan $1$-forms $\sigma^A$, also the new $p$-form $A^{i\vert(p)}$.

This procedure can be now \textit{iterated} taking as basis of new cochains $\Omega^{j\vert (p'+1)}$ the full set of forms, namely $\sigma^A$ and $\,A^{i \vert (p)}$, and looking again for cocycles.  
If a new cocycle $\Omega^{j\vert (p'+1)}$ exists, then we can add again to the FDA a new equation
\begin{equation}\label{enlarge1}
 d\,A^{j \vert (p')}+\Omega^{j\vert (p'+1)}=0\,.
\end{equation}

The procedure can be iterated again and again, till
no more cocycles can be found. In this way, we obtain the largest FDA associated with the initial Lie algebra.

Of particular relevance (at least for a clearer understanding of this thesis) is the following Chevalley-Eilenberg theorem (see \cite{Libro2} for more details on the Chevalley-Eilenberg theorems):
\begin{theorem}\label{CEteobello}
If a Lie algebra $\mathfrak{g}$ is \textit{semisimple} and $D$ is the (trivial) identity representation, then there are no non-trivial $1$-form and $2$-form cohomology classes. 

There is, however, always a non-trivial $3$-form cohomology class, namely:
\begin{equation}
\Omega^{(3)}= C_{ABC}\sigma^A \wedge \sigma^B \wedge \sigma^C ,
\end{equation}
where $C_{ABC}$ are the structure constants with all the indexes lowered.
\end{theorem}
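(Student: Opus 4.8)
The plan is to prove the three assertions of Theorem~\ref{CEteobello} by a direct application of Lie algebra cohomology together with the basic algebraic facts about semisimple Lie algebras: the non-degeneracy and invariance of the Killing form, and the complete antisymmetry of its structure constants established in equations~(\ref{bassi})--(\ref{killbassi}). I would first recall that in the Chevalley--Eilenberg complex with trivial coefficients, the coboundary operator is precisely the exterior differential $d$ acting on polynomials in the left-invariant $1$-forms $\sigma^A$, so that a $p$-cochain $\Omega^{(p)} = \frac{1}{p!} c_{A_1 \cdots A_p}\,\sigma^{A_1}\wedge \cdots \wedge \sigma^{A_p}$ is closed iff a certain antisymmetrised contraction of $c$ with the structure constants vanishes, and exact iff $c$ comes from a $(p-1)$-cochain in the analogous way. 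The key technical input is Whitehead's first and second lemmas, which say that for a semisimple Lie algebra $\mathfrak{g}$ one has $H^1(\mathfrak{g}) = H^2(\mathfrak{g}) = 0$ for any finite-dimensional representation, in particular the trivial one; I would prove these using the Casimir/averaging argument.

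First I would treat the $1$-form and $2$-form statements. For $H^1$: a closed $1$-cochain is $\Omega^{(1)} = c_A \sigma^A$ with $d\Omega^{(1)} = -\frac{1}{2} c_A C^A_{\;\;BC}\sigma^B\wedge\sigma^C = 0$, i.e. $c_A C^A_{\;\;BC} = 0$; since a $0$-cochain in the trivial representation is just a constant with $d(\text{const}) = 0$, there are no non-trivial coboundaries, so I must show $c_A C^A_{\;\;BC} = 0$ forces $c_A = 0$. This follows because for semisimple $\mathfrak{g}$ one has $\mathfrak{g} = [\mathfrak{g},\mathfrak{g}]$, so every element is a sum of brackets, and a linear functional annihilating all brackets annihilates everything. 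For $H^2$: a closed $2$-cochain $c_{[AB]}$ satisfies the cocycle condition $c_{[A|D}C^D_{\;\;|BC]} = 0$, and I would invoke Whitehead's lemma (proved via the Casimir element: average the cochain against the Casimir to exhibit it as a coboundary) to conclude every such cocycle is exact, hence $H^2 = 0$. Here I would either give the short Casimir argument or simply cite the Chevalley--Eilenberg theorems referenced in~\cite{Libro2}, as the excerpt permits.

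Next I would establish the existence of the non-trivial $3$-class. Define $\Omega^{(3)} = C_{ABC}\,\sigma^A\wedge\sigma^B\wedge\sigma^C$ with $C_{ABC} = h_{AL}C^L_{\;\;BC}$ totally antisymmetric by~(\ref{bassi}). I would check closedness by computing $d\Omega^{(3)}$ using $d\sigma^A = -\frac{1}{2}C^A_{\;\;BC}\sigma^B\wedge\sigma^C$; the result is proportional to $C_{ABC}C^A_{\;\;DE}\,\sigma^D\wedge\sigma^E\wedge\sigma^B\wedge\sigma^C$, and antisymmetrising over the four $1$-form indices one recovers (up to sign and combinatorial factors) the Jacobi identity~(\ref{CCcond}) contracted with the totally antisymmetric $C_{ABC}$, which vanishes. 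Then I would show $\Omega^{(3)}$ is not exact: a $3$-coboundary would be $d\Omega^{(2)}$ for some $2$-cochain $\Omega^{(2)} = b_{AB}\sigma^A\wedge\sigma^B$, giving something proportional to $b_{AB}C^A_{\;\;DE}$; to rule this out I would pair $\Omega^{(3)}$ with a suitable $3$-cycle (equivalently, use the invariant bilinear form to set up the pairing between $H^3$ and $H_3$) and show the value on $\Omega^{(3)}$ is $\propto C_{ABC}C^{ABC} \neq 0$ by non-degeneracy of the Killing form and semisimplicity, whereas any coboundary pairs to zero with a cycle. Concretely, $C_{ABC}C^{ABC} = h^{AL}h^{BM}h^{CN}C_{ABC}C_{LMN}$ is a nonzero multiple of $\dim\mathfrak{g}$ after using $h_{AB} = C^L_{\;\;BM}C^M_{\;\;AL}$, which is the cleanest way to see non-triviality.

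The main obstacle is the non-triviality (non-exactness) part for $H^3$: closedness is a routine Jacobi-identity computation and the vanishing of $H^1,H^2$ is standard Whitehead, but showing $\Omega^{(3)}$ represents a genuinely non-zero class requires either constructing the dual $3$-cycle explicitly or, more elegantly, invoking Poincar\'{e} duality on the compact group manifold together with the fact that the bi-invariant $3$-form $C_{ABC}\sigma^A\wedge\sigma^B\wedge\sigma^C$ is harmonic and hence represents a non-zero de Rham class (which coincides with the CE class for a compact group). I would present the algebraic pairing argument via the Killing form as the primary route, since it stays within the tools already introduced in the excerpt, and remark that for compact $\mathfrak{g}$ this is exactly the statement that $H^3$ of a simple compact Lie group is one-dimensional.
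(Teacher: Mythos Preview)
The paper does not actually prove Theorem~\ref{CEteobello}; it is stated as a quotation of a standard result with a pointer to \cite{Libro2}, followed only by the one-line remark that ``for $\mathfrak{g}$ semisimple every closed $1$-form or $2$-form is also exact.'' The only fragment of argument the paper itself supplies appears much later, around equations~(\ref{3cocycle})--(\ref{ji}), where closedness of the $3$-cocycle $H^{(3)}=C_{ABC}\,\sigma^A\wedge\sigma^B\wedge\sigma^C$ is checked via the Maurer--Cartan equations and the Jacobi identity --- exactly the computation you outline for $d\Omega^{(3)}=0$.

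Your proposal is correct and provides the content the paper omits entirely: the $H^1=0$ argument via $\mathfrak{g}=[\mathfrak{g},\mathfrak{g}]$, the $H^2=0$ statement via Whitehead's lemma and the Casimir averaging, and the non-exactness of $\Omega^{(3)}$ via the Killing-form pairing $C_{ABC}C^{ABC}\neq 0$. There is nothing substantive to compare your approach against; you have written a proof where the paper gives only a citation.
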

This means that for $\mathfrak{g}$ semisimple every closed $1$-form or $2$-form is also exact.

\subsection{Extension to supersymmetric theories}

The extension of this method to Lie superalgebras is  straightforward. Actually, in the supersymmetric case a set of non-trivial cocycles is generally present in superspace due to the existence of Fierz identities obeyed by the wedge products of gravitino $1$-forms.

In the case of supersymmetric theories, the $1$-form fields of the superalgebra we start from are the vielbein $V^a$, the gravitino $\Psi$, the spin connection $\omega^{ab}$, and, possibly, a set of gauge fields.
However, we should further impose the physical request that the FDA should be described in terms of fields living in \textit{ordinary superspace}, whose cotangent space is spanned by the supervielbein $\{V^a,\Psi\}$, dual to supertranslations. 

This corresponds to the physical request that the Lie superalgebra has a \textit{fiber bundle structure}, whose base space is spanned by the supervielbein, the rest of the fields spanning a fiber $H$. 
This fact implies an \textit{horizontality condition} on the FDA, corresponding to \textit{gauge invariance}: The gauge fields and the Lorentz spin connection belonging to $H$ must be excluded from the construction of the cochains. 

Under a geometrical point of view, this corresponds to require the CE-cohomology to be restricted to the so called \textit{$H$-relative CE-cohomology}.

\section{$D=11$ supergravity and its hidden superalgebra}\label{11D}

We now have the ingredients for moving to a review (on the same lines of Section $2$ of Ref. \cite{Hidden}) of the work \cite{D'AuriaFre} concerning the hidden superalgebra underlying $D=11$ supergravity in the FDAs framework. This is necessary for a clearer understanding of some of the original results we will present in this thesis (see Chapter \ref{chapter 5}). To this aim, let us first introduce the physical context. Then we will recall the FDA construction of $D=11$ supergravity. The presentation given here strictly follows the lines of \cite{Hidden}.

In supergravity theories in $4\leq D\leq 11$ space-time dimensions, the bosonic field content is given by the metric, a set of $1$-form gauge potentials, and $(p+1)$-form gauge potentials of various $p\leq 9$. Therefore, these theories are appropriately discussed in the context of FDAs. 

The action of $D=11$ supergravity was first constructed in \cite{Cremmer}. The theory has a bosonic field content given by the metric $g_{\mu\nu}$ and a $3$-index antisymmetric tensor $A_{\mu\nu\rho}$ (where $\mu,\nu,\rho,\ldots =0,1,\ldots ,D-1$); the theory is also endowed with a single Majorana gravitino $\Psi_\mu$ in the fermionic sector. By dimensional reduction, the $D=11$ theory yields $\mathcal{N}=8$ supergravity in four dimensions, which is considered as a possible unifying theory of all interactions. 

An important task to accomplish in the context of $D=11$ supergravity was the identification of the supergroup underlying the theory.
The authors of \cite{Cremmer} proposed $\mathfrak{osp}(1 \vert 32)$ as the most likely candidate. However, the field $A_{\mu \nu \rho}$ ($3$-index photon) of the Cremmer-Julia-Scherk theory is a $3$-form rather than a $1$-form, and therefore it cannot be interpreted as the potential of a generator in a supergroup.

The structure of the $D=11$ Cremmer-Julia-Scherk theory was then reconsidered in \cite{D'AuriaFre}, in the (supersymmetric) FDAs framework, using the superspace geometric approach (namely, in its dual Maurer-Cartan
formulation, introducing the notion of Cartan Integrable Systems).\footnote{As we have already said, in the original paper \cite{D'AuriaFre} the FDA was referred to as Cartan Integrable System, since the authors were unaware of the previous work by Sullivan \cite{Sullivan}, who introduced the mathematical concept of FDAs to which the CIS are equivalent.}

In this scenario, its bosonic sector includes, besides the supervielbein $\lbrace{ V^a,\Psi \rbrace}$, a $3$-form potential $A^{(3)}$ (whose pull-back on space-time is $A_{\mu \nu \rho}$), with field-strength $F^{(4)}= dA^{(3)}$ (modulo fermionic bilinears in terms of the gravitino $1$-form), together with its Hodge dual $F^{(7)}$, defined in such a way that its space-time components are related to the ones of the $4$-form by $F_{\mu_1    \ldots       \mu_7}= \frac 1{84} \epsilon_{\mu_1   \ldots       \mu_7\nu_1   \ldots       \nu_4} F^{\nu_1   \ldots       \nu_4}$. This amounts to say that it is associated with a $6$-form potential $B^{(6)}$ in superspace. The on-shell closure of the supersymmetric theory relies on Fierz identities involving three gravitinos, and requires $F^{(7)}= dB^{(6)}-15 A^{(3)}\wedge F^{(4)}$ (modulo fermionic currents).

In \cite{D'AuriaFre}, the supersymmetric $D=11$ FDA was introduced and investigated in order to see whether the FDA formulation could be interpreted in terms of an ordinary Lie superalgebra (in its dual Maurer-Cartan formulation).
Interestingly, this was proven to be true: The existence of a superalgebra underlying the $D=11$ supergravity theory was presented for the first time
(the authors of \cite{D'AuriaFre} got a dichotomic solution, consisting in two different supergroups, whose $1$-form potentials can be alternatively used to parametrize the $3$-form).

The superalgebra found in \cite{D'AuriaFre} includes, as a subalgebra, the super-Poincar\'e algebra of the eleven-dimensional theory, but it also contains two extra bosonic generators, called $Z_{ab}$ and $Z_{a_1    \ldots       a_5}$ (with $a,b,   \ldots       = 0,1,   \ldots       10$), which commute with the $4$-momentum $P_a$, while having appropriate commutators with the eleven-dimensional Lorentz generators $J_{ab}$. Generators that commute with all the superalgebra but the Lorentz generators will be named ``almost-central''.
Moreover, in \cite{D'AuriaFre} the authors showed that, in order to have a superalgebra that reproduce the FDA, an extra, nilpotent, fermionic generator, named $Q'$, must be included.

Indeed, besides the standard Poincar\'{e} Lie algebra, the superalgebra of \cite{D'AuriaFre} presents the following structure of (anti)commutators:
\begin{align}
	\{Q,Q\}&= -\ii C\Gamma^a P_a - \frac{ 1}{2} C\Gamma^{ab} Z_{ab} - \frac{ \ii}{5!} C\Gamma^{a_1    \ldots       a_5} Z_{a_1    \ldots       a_5}\,,\label{democracy}\\
    \left[Q,P_a \right]&\propto  \Gamma_a Q' \label{newdemo1}\,,\\
   \left[Q,Z_{ab}\right]&\propto    \Gamma_{ab} Q' \label{newdemo2}\,,\\
   \left[Q,Z_{a_1    \ldots       a_5}\right]&\propto  \Gamma_{a_1    \ldots       a_5} Q' \label{newdemo3}\,,\\
   \{Q',Q'\}&=0\,,\label{nilpot}
\end{align}
together with the (Lorentz) commutation relations involving $J_{ab}$, the other (anti)commutation relations being zero. The structure of the full superalgebra hidden in the superymmetric $D=11$ FDA also requires, for being equivalent to the FDA in superspace, the presence of a \textit{nilpotent fermionic charge}, which has been named $Q'$ in Ref. \cite{D'AuriaFre} and is dual to a spinor $1$-form $\eta$.\footnote{Actually, as we will explicitly show in Chapter \ref{chapter 5}, the extra spinor $1$-form $\eta$ (dual to the nilpotent fermionic generator $Q'$) can be parted into two different spinors, whose integrability conditions ($d^2=0$) close separately.}

The consistency of the $D=11$ theory fully relies on $3$-fermions Fierz identities obeyed by the gravitino $1$-forms.

The anticommutation relation (\ref{democracy}) generalizes to almost-central charges the central extension of the supersymmetry algebra \cite{HLS}, which, in \cite{Witten:1978mh}, was shown to be associated with topologically non-trivial configurations of the bosonic fields. The possible extension (\ref{democracy}) of the supersymmetry algebra for supergravity theories in $D\geq4$ dimensions was later widely considered (see, in particular, Refs. \cite{vanHolten:1982mx, Achucarro:1987nc, deAzcarraga:1989mza, Abraham:1990nz, Bandos:2004xw, Bandos:2005}). 
After the discovery of D$p$-branes as sources for the Ramond-Ramond gauge potentials \cite{Polchinski:1995mt} and the subsequent understanding of the duality relation occurring between $D=11$ supergravity and the Type IIA theory in $D=10$, the (extra) bosonic generators $Z_{ab}$ and $Z_{a_1    \ldots       a_5}$ were understood as $p$-brane charges, sources of the dual potentials $A^{(3)}$ and $B^{(6)}$, respectively \cite{Hull:1994ys, Townsend:1995gp}. Equation (\ref{democracy}) was then interpreted as the natural generalization of the supersymmetry algebra in higher dimensions, in the presence of non-trivial topological extended sources (black $p$-branes).

The role played by the nilpotent fermionic generator $Q'$ and its group-theoretical and physical meaning was much less investigated with respect to that of the almost-central bosonic charges. The most relevant contributions were given in \cite{vanHolten:1982mx, Bandos:2004xw, Bandos:2005}, where the results obtained in \cite{D'AuriaFre} were further analyzed and generalized. However, the physical meaning of $Q'$ remained obscure. In Chapter \ref{chapter 5} of this thesis, following the discussion we have presented in the work \cite{Hidden}, we will shed some light on this topic. 

Let us mention that the Lie superalgebra (\ref{democracy}) was rediscovered some years after the publication of \cite{D'AuriaFre} and named $M$-algebra \cite{deAzcarraga:1989mza, Sezgin:1996cj, Townsend:1997wg, Hassaine:2003vq, Hassaine:2004pp}. It is commonly considered as the Lie superalgebra underlying $M$-theory \cite{Schwarz:1995jq, Duff:1996aw, Townsend:1996xj} in its low-energy limit, corresponding to supergravity in eleven dimensions in the presence of non-trivial $M$-brane sources \cite{Achucarro:1987nc, Townsend:1995gp, Bergshoeff:1987cm, Duff:1987bx, Bergshoeff:1987qx, Townsend:1995kk}. The superalgebra disclosed in \cite{D'AuriaFre} can thus be viewed as a (Lorentz-valued) central extension of the $M$-algebra including a nilpotent fermionic generator, $Q'$. 

Here and in the following, we refer to a superalgebra descending from a given FDA as a \textit{hidden superalgebra}. The set of generators $\{Z_{ab}, \; Z_{a_1 \ldots a_5}, \; Q'\}$ span an abelian ideal of the hidden superalgebra written above (that is, the hidden superalgebra is \textit{non-(semi)simple}). The generators $\{Z_{ab}, \; Z_{a_1 \ldots a_5}, \; Q'\}$ will also be referred to as \textit{hidden generators}.

\subsection{Review of the hidden superalgebra in $D=11$}

We will now review in detail (on the same lines of \cite{Hidden}) the complete disclosure of the hidden superalgebra found in \cite{D'AuriaFre} (namely, the hidden superalgebra underlying $D=11$ supergravity).

In the approach adopted in \cite{D'AuriaFre}, the vielbein $V^a$ ($a=0,1,   \ldots       , 10$) and the gravitino $\Psi$ span a basis of the cotangent superspace $K\equiv\{V^a,\Psi\}$, where also the superspace $3$-form $A^{(3)}$ is defined. Actually, as stressed in \cite{D'AuriaFre}, one can fully extend the FDA to include also a (magnetic) $6$-form potential $B^{(6)}$, related to $A^{(3)}$ by Hodge duality of the corresponding field-strengths.
Then, the supersymmetric FDA defining the ground state of the $D=11$ theory is given by the vanishing of the following supercurvatures:

\begin{eqnarray}
R^{ab}&\equiv& d\omega^{ab} -  \omega^{ac}\wedge \omega_c^{\; b}=0\,,\label{FDA11omega}\\
T^a&\equiv& D V^a - \frac{\ii}{2}\bar{\Psi}\wedge \Gamma^a \Psi =0\,,\label{FDA11v} \\
\rho&\equiv&D \Psi=0\,,\label{FDA11psi}\\
F^{(4)} &\equiv&dA^{(3)} - \frac{1}{2}\bar{\Psi}\wedge \Gamma_{ab}\Psi \wedge V^a \wedge V^b =0\,,\label{FDA11a3} \\
F^{(7)}&\equiv&dB^{(6)} - 15 A^{(3)}\wedge  dA^{(3)} -\frac{\ii}{2}\bar{\Psi}\wedge \Gamma_{a_1   \ldots       a_5}\Psi \wedge V^{a_1} \wedge    \ldots    \wedge   V^{a_5}=0\,,\label{FDA11b6}
\end{eqnarray}
where $D$ ($D=d-\omega$, according with the convention of \cite{Hidden, D'AuriaFre}) denotes the Lorentz-covariant derivative in eleven dimensions. The closure ($d^2=0$) of this FDA is a consequence of $3$-gravitinos Fierz identities in $D=11$ (see Section \ref{fierz} of Appendix \ref{apphidden}).

As mentioned above, the authors of \cite{D'AuriaFre} found that one can trade the FDA structure on which the theory is based with an ordinary Lie superalgebra, written in its dual Maurer-Cartan formulation, namely in terms of $1$-form gauge fields valued in non-trivial tensor representations of Lorentz group $SO(1,10)$, allowing the disclosure of the fully extended superalgebra hidden in the supersymmetric FDA.

In particular, the authors of \cite{D'AuriaFre} reached this result in the following way: First of all, they associated to the forms $A^{(3)}$ and $B^{(6)}$ the bosonic $1$-forms $B^{ab}$ and $B^{a_1    \ldots       a_5}$ (in the antisymmetric representations of $SO(1,10)$), respectively. Their corresponding Maurer-Cartan equations read
\begin{equation}\label{also}
\begin{aligned}
& D B^{a_1a_2} = \frac{1}{2}\bar{\Psi}\wedge \Gamma^{a_1a_2}\Psi , \\
& D B^{a_1   \ldots       a_5} = \frac{\ii}{2} \bar{\Psi}\wedge \Gamma^{a_1    \ldots       a5}\Psi\,,
\end{aligned}
\end{equation}
where $D$ is the Lorentz-covariant derivative.
Then, they presented a general decomposition of the $3$-form $A^{(3)}$ in terms of the $1$-forms $B^{ab}$ and $B^{a_1    \ldots       a_5}$ (and of the supervielbein), by requiring the Bianchi identities of the $3$-form ($d^2 A^{(3)}=0$) to be satisfied also when $A^{(3)}$ is decomposed in terms of $1$-forms. They showed that this can be accomplished if and only if one also introduces an extra spinor $1$-form $\eta$ satisfying
\begin{equation}
 D \eta = \ii E_1 \Gamma_a \Psi \wedge V^a + E_2 \Gamma_{ab}\Psi \wedge B^{ab}+ \ii E_3 \Gamma_{a_1    \ldots       a_5}\Psi \wedge B^{a_1    \ldots       a_5}\,.\label{Deta}
\end{equation}

The consistency of the theory requires the $d^2$-closure of $B^{ab}$, $B^{a_1    \ldots         a_5}$, and $\eta$. For the two bosonic $1$-form fields, the $d^2$-closure is trivial in the ground state, due to the vanishing of the curvatures $R^{ab}$ and $\rho$, while on $\eta$ it requires the following condition:

\begin{equation}\label{integrability11}
E_1+10E_2-720E_3=0\,.
\end{equation}

Then, the authors of \cite{D'AuriaFre} found that the most general ansatz for the $3$-form $A^{(3)}$ (written in terms of $1$-forms) satisfying all the above requirements is the following one:\footnote{Here and in the following, with $B_{a_1    \ldots       a_{p-1}}^{\;\;\;\;\;\;\;\;\;\;\;\;\;\;\;\;\;b}$ we mean $B_{a_1    \ldots       a_p}\eta^{b a_p}$, where $\eta_{ab}=(+,-,   \cdots      ,-)$ denotes the Minkowski metric.}
\begin{eqnarray}
A^{(3)} & = & T_0 B_{ab} \wedge V^a \wedge V^b + T_1 B_{a b}\wedge B^{b} _{\;c}\wedge B^{c a}+  T_2 B_{b_1 a_1    \ldots       a_4}\wedge B^{b_1}_{\; b_2}\wedge B^{b_2 a_1    \ldots       a_4}+ \nonumber \\
& + & T_3 \epsilon_{a_1    \ldots       a_5 b_1    \ldots       b_5 m}B^{a_1   \ldots       a_5}\wedge B^{b_1    \ldots       b_5}\wedge V^m + \nonumber \\
& + & T_4 \epsilon_{m_1   \ldots      m_6 n_1    \ldots        n_5}B^{m_1m_2m_3p_1p_2}\wedge B^{m_4m_5m_6p_1p_2}\wedge B^{n_1    \ldots        n_5} + \nonumber\\
& + & \ii S_1 \bar{\Psi}\wedge \Gamma_a \eta \wedge V^a + S_2 \bar{\Psi}\wedge \Gamma_{ab}\eta \wedge B^{ab}+  \ii S_3 \bar{\Psi}\wedge \Gamma_{a_1    \ldots        a_5}\eta \wedge B^{a_1    \ldots        a_5}\,.\label{a3par}
\end{eqnarray}
The requirement that $A^{(3)}$ in (\ref{a3par}) satisfies equation (\ref{FDA11a3}) fixes the constants $T_i$ and $S_j$ in terms of the structure constants $E_1$, $E_2$, and $E_3$.

The final result, obtained in \cite{D'AuriaFre} by also taking into account condition (\ref{integrability11}), reads as follows:
\begin{eqnarray}
T_0 &=& \frac{120 {E_3}^2}{({E_2}-60{E_3})^2}+\frac{1}{6} , \;\;\; T_1 \; = \; -\frac{{E_2} ({E_2}-120 {E_3})}{90 ({E_2}-60 {E_3})^2} , \;\;\; T_2 \;=\; -\frac{5 {E_3}^2}{({E_2}-60 {E_3})^2},  \nonumber\\
T_3 &=& \frac{{E_3}^2}{120 ({E_2}-60 {E_3})^2}, \;\;\; T_4 \;=\; -\frac{{E_3}^2}{216 ({E_2}-60 {E_3})^2} , \nonumber \\
S_1 &=& \frac{{E_2}-48 {E_3}}{24({E_2}-60 {E_3})^2}, \;\;\;
S_2 \; = \; -\frac{{E_2}-120 {E_3}}{240 ({E_2}-60 {E_3})^2}, \;\;\; S_3\;=\; \frac{{E_3}}{240 ({E_2}-60{E_3})^2}, \nonumber\\
E_1 &=& -10 ({E_2}-72{E_3}) ,\label{11dsol}
\end{eqnarray}
where the constants $E_1$, $E_2$, and $E_3$ now define the new structure constants of the hidden superalgebra. The reader can find some details concerning this calculation in Section \ref{coeff11D} of Appendix \ref{apphidden}.

Let us mention that in \cite{D'AuriaFre} the first coefficient $T_0$ was arbitrarily fixed to $T_0=1$, leading, in this way, only to two possible solutions for the set of parameters $\{T_i,S_j,E_k\}$. As it was later pointed out in \cite{Bandos:2004xw}, this restriction (due to the particular choice $T_0=1$ on the coefficient $T_0$) can be relaxed, thus giving the general solution (\ref{11dsol}). 
Indeed, one of the $E_i$'s can be reabsorbed in the normalization of $\eta$, so that, owing to the relation (\ref{Deta}), we are left with \textit{one} free parameter, which can be written, for example, as $E_3/E_2$.\footnote{In Ref. \cite{Bandos:2004xw}, the free parameter $s$ is related to $E_3/E_2=\rho$ by $\frac{120\rho -1}{90\left(60\rho-1\right)^2} =\frac{2(3+s)}{15s^2}$.} 

The full Maurer-Cartan equations of the hidden superalgebra (in its dual formulation) are then:
\begin{eqnarray}
R^{ab }&=&  d\omega^{ab} - \frac 12 \omega^{ac}\wedge \omega _c^{\;b}\; =\; 0 , \\
 D V^a &=& \frac{\ii}{2}\bar{\Psi}\wedge \Gamma^a \Psi, \\
D \Psi&=&0, \\
 D B^{a_1a_2} & = & \frac{1}{2}\bar{\Psi}\wedge \Gamma^{a_1a_2}\Psi , \\
D B^{a_1    \ldots        a_5}& = & \frac{\ii}{2} \bar{\Psi}\wedge \Gamma^{a_1    \ldots        a_5}\Psi,\\
 D \eta & = & \ii E_1 \Gamma_a \Psi \wedge V^a + E_2 \Gamma_{ab}\Psi \wedge B^{ab}+ \ii E_3 \Gamma_{a_1    \ldots        a_5}\Psi \wedge B^{a_1    \ldots        a_5}\,. \label{deta}
\end{eqnarray}

We can finally write the hidden superalgebra in terms of generators closing a set of commutation (and anticommutation) relations. For a generic set of $1$-forms $\sigma^\Lambda$ satisfying the Maurer-Cartan equations
\begin{equation}
d\sigma^\Lambda = -\frac 12 C^\Lambda_{\ \ \Sigma\Gamma}\sigma^\Sigma \wedge\sigma^\Gamma \,,
\end{equation}
in terms of structure constants $C^\Lambda_{\ \ \Sigma\Gamma}$, this is performed by introducing a set of dual generators $T_\Lambda$ satisfying
\begin{equation}
\sigma^\Lambda(T_\Sigma) = \delta^\Lambda_\Sigma\,, \qquad
d\sigma^\Lambda (T_\Sigma,T_\Gamma)=  C^\Lambda_{\ \ \Sigma\Gamma} ,
\end{equation}
so that the $T_\Lambda$'s close the algebra $[T_\Sigma, T_\Gamma]=  C^\Lambda_{\ \ \Sigma\Gamma} T_\Lambda$.
In the case under analysis, the $1$-forms $\sigma^\Lambda$ are
\begin{equation}
\sigma^\Lambda\equiv\{V^a, \Psi, \omega^{ab}, B^{ab}, B^{a_1    \ldots        a_5}, \eta\}\,.
 \label{sigma11d}
\end{equation}
In order to recover the superalgebra in terms of (anti)commutators of the dual Lie superalgebra generators
\begin{equation}T_\Lambda \equiv\{P_a, Q, J_{ab},  Z_{ab}, Z_{a_1    \ldots        a_5}, Q'\}\,,\label{t11d}
\end{equation}
we use the duality between $1$-forms and generators, which is defined by the conditions:
\begin{eqnarray}
&V^a(P_b)= \delta^a_b\,,\quad \Psi(Q) =\mathbf{1} \,,\quad \omega^{ab}(J_{cd})=  {2}\delta^{ab}_{cd}\,,&\nonumber\\
& B^{ab}(Z_{cd})= {2}\delta^{ab}_{cd}\,,\quad B^{a_1    \ldots        a_5}(Z_{b_1    \ldots        b_5})=  {5!}\delta^{a_1    \ldots        a_5}_{b_1    \ldots        b_5}\,,\quad \eta(Q') =\mathbf{1} , &
\end{eqnarray}
where $\mathbf{1}$ denotes the unity in the spinor representation.

Then, the $D=11$ FDA corresponds to the following hidden contributions to the superalgebra (besides the Poincar\'e algebra):

\begin{eqnarray}
\lbrace Q,\bar Q \rbrace &=& -\left(\ii \Gamma^a P_a + \frac 12 \Gamma^{ab}Z_{ab}+ \frac {\ii}{5!} \Gamma^{a_1    \ldots        a_5}Z_{a_1    \ldots        a_5}\right)\,, \label{qq11new}\\ \nonumber
\lbrace Q',\bar Q' \rbrace &=& 0\,,\\ \nonumber
[Q, P_a] &=& -2 \ii E_1 \Gamma_a Q'\,,\\ \nonumber
[Q, Z_{ab}] &=&-4 E_2 \Gamma_{ab}Q' \,, \\ \nonumber
[Q, Z_{a_1    \ldots        a_5}] &=&- 2 \,(5!) \ii E_3 \Gamma_{a_1    \ldots        a_5}Q'\,, \\ \nonumber
[J_{ab}, Z_{cd}]&=&-8 \delta^{[c}_{[a}Z_{b]}^{\ d]}\,,\\ \nonumber
[J_{ab}, Z_{c_1\dots c_5}]&=&- 20 \delta^{[c_1}_{[a}Z^{c_2\dots c_5]}_{b]}\,,\\ \nonumber
[J_{ab}, Q]&=&- \Gamma_{ab} Q\,,\\ \nonumber
[J_{ab}, Q']&=&- \Gamma_{ab} Q'\,.
\end{eqnarray}
All the other (anti)commutators (beyond the Poincar\'e part) vanish.
As said before, the $E_i$'s satisfy equation (\ref{integrability11}) and one of them can be reabsorbed in the normalization of the spinor $1$-form $\eta$. The closure of the superalgebra under super-Jacobi identities is a consequence of the $d^2$-closure of the Maurer-Cartan $1$-forms equations.

In the following, we will refer to the hidden $D=11$ superalgebra disclosed in \cite{D'AuriaFre} as the ``DF-algebra'' (the acronym ``DF'' stands for ``D'Auria-Fré'').
Let us mention that the DF-algebra has recently raised a certain interest in the Mathematical-Physicists community, due to the fact that it can be reformulated in terms of $\mathcal{L}_n \subset \mathcal{L}_\infty$ algebras, or ``\textit{strong homotopy Lie algebras}'' (a comprehensive reference to this approach can be found in Refs. \cite{Sati:2015yda, stronghom}).

Note that the procedure introduced in \cite{D'AuriaFre} can be thought of as the reverse of the construction of a FDA from a given Lie superalgebra: Indeed, in the set up of \cite{D'AuriaFre}, one starts from the physical FDA as it was given \textit{a priori}, and then tries to reconstruct the hidden Lie superalgebra that could have originated it, using the algorithm of the CE-cohomology we have previously recalled.

\chapter{Algebraic background on $S$-expansion} \label{chapter 3}

\ifpdf
    \graphicspath{{Chapter3/Figs/}{Chapter3/Figs/PDF/}{Chapter3/Figs/}}
\else
    \graphicspath{{Chapter3/Figs/Vector/}{Chapter3/Figs/}}
\fi

\nomenclature[Z]{IW contraction}{In\"{o}n\"{u}-Wigner contraction}

In Mathematics as well as in Physics, there is a great interest in studying the relations among different Lie (super)algebras related to the symmetries of different physical theories, since this can disclose connections among these theories. Furthermore, finding a new Lie (super)algebra from an already known one also means that a new physical theory could emerge.
There are many different methods for obtaining new Lie (super)algebras from given ones, for example deformations, extensions, expansions, and contractions (for short reviews on these topics see, for example, \cite{Azca1, Azca2, Azca3}).

Referring to the latter, of particular relevance is the so called \textit{In\"on\"u-Wigner contraction} \cite{IW1} (for short, IW contraction).
It has a lot of applications in Mathematics and in Physics, among which, for example, the well known case of the Poincar\'e algebra as an In\"on\"u-Wigner contraction the Anti-de Sitter algebra. 

On the other hand, in $2006$, a new expansion approach, which goes under the name of \textit{semigroup expansion} (\textit{$S$-expansion}, for short), was developed \cite{Iza1} and subsequently further enhanced, for example in \cite{Iza2, Iza3, Caroca:2010ax}. The $S$-expansion method is based on combining the structure constants of an initial Lie (super)algebra $\mathfrak{g}$ with the inner multiplication law of a discrete set $S$, endowed with the structure of a semigroup, in such a way to define the Lie bracket of a new, larger, expanded (super)algebra; the new Lie algebra obtained through this procedure is called \textit{$S$-expanded (super)algebra}, and it is commonly written as $\mathfrak{g}_S= S \times \mathfrak{g}$. In other words, the $S$-expansion method replicates through the elements of a semigroup the structure of the original Lie (super)algebra into a new one. 

From the physical point of view, several (super)gravity theories have been extensively studied and analyzed in the context of expansions and contractions, enabling numerous results over recent years (among which, for example, those presented in Refs. \cite{Azca1, Azca2, Azca3, Iza4, Fierro2, Salgado, Fierro1, Concha:2013uhq, Concha1, Concha2, CR2, Concha:2014zsa, CRSnew, Concha:2016hbt, Concha:2016tms}).

The $S$-expansion procedure turns out to be especially suitable for the construction of Chern-Simons Lagrangians for the expanded (super)algebras. The reason is that, for Chern-Simons forms, the key ingredient in the construction is the invariant tensor, and in the $S$-expansion set up general theorems have been developed, allowing for non-trivial invariant tensors for the $S$-expanded (super)algebras to be systematically constructed (see \cite{Iza1} for details).

In this chapter, we first give a brief review of In\"on\"u-Wigner contractions of Lie (super)algebras; then, we furnish the group theoretical background on $S$-expansion, since it will be useful in the last part of this thesis, where we will present some new (original) results regarding analytic formulations of this expansion method.

\section{In\"{o}n\"{u}-Wigner contractions of Lie (super)algebras}

The In\"on\"u-Wigner contraction \cite{IW1} of a Lie (super)algebra $\mathfrak{g}$ with respect to a subalgebra $\mathfrak{h}_0 \subset \mathfrak{g}$ is performed by rescaling the generators of the coset $\mathfrak{g}/\mathfrak{h}_0$, and by subsequently taking a singular limit for the rescaling parameter. 
The generators in $\mathfrak{g}/\mathfrak{h}_0$ become
abelian in the contracted algebra; the contracted algebra has a semidirect structure and the abelian generators determine an ideal of it. The contracted algebra has the same dimension as $\mathfrak{g}$.
This procedure is also referred to as \textit{standard In\"on\"u-Wigner contraction}.

The concept of standard IW contraction can then be extended to the so called \textit{generalized In\"on\"u-Wigner contraction} (\textit{i.e.} a contraction that rescales the algebra generators through different powers of the contraction parameter), in the sense intended in \cite{WW1, WW2}, by Evelyn Weimar-Woods.\footnote{Any contraction is equivalent to a generalized In\"on\"u-Wigner contraction with integer exponents \cite{WW1, WW2}.} 

More technically, the generalized IW contractions are defined when the Lie (super)algebra $\mathfrak{g}$ can be decomposed in a direct sum of $n+1$ vector subspaces
\begin{equation}
\mathfrak{g}= V_0 \oplus V_1 \oplus \cdots \oplus V_n = \bigoplus_{p=0}^n V_p ,
\end{equation}
being $V_0$ the vector space of the subalgebra $\mathfrak{h}_0$ of $\mathfrak{g}$ and $p=0,1, \ldots, n$, such that the following (Weimar-Woods) conditions are satisfied:
\begin{equation}\label{conditionsWW}
[V_p , V_q]\subset \bigoplus_{s\leq p + q} V_s,
\end{equation}
$p,q=0,1,\ldots,n$, or, in other words,
\begin{equation}
c^{k_s}_{i_p j_q}=0 \quad \text{if} \quad s > p + q,
\end{equation}
where $i_p$ labels the generators $T_{i_p}$ of $\mathfrak{g}$ in $V_p$ and $c^i_{jk}$ are the structure constants of $\mathfrak{g}$. Then, the Weimar-Woods contracted algebra \cite{WW1, WW2} is obtained by rescaling the generators of $\mathfrak{g}$ as 
\begin{equation}
T_{i_p} \rightarrow \epsilon^p T_{i_p} , \; p=0,1,\ldots, n,
\end{equation}
and by subsequently taking a singular limit for $\epsilon$. The case $n=1$ corresponds to the standard IW contraction.

\section{$S$-expansion for an arbitrary semigroup $S$}\label{sss}

As we have already mentioned, the $S$-expansion consists in combining the structure constants of a Lie (super)algebra $\mathfrak{g}$ with the inner multiplication law of an abelian semigroup $S$, in such a way to define the Lie bracket of a new, $S$-expanded (super)algebra $\mathfrak{g}_S= S \times \mathfrak{g}$. Let us now reformulate this statement more technically through the following definition (form Ref. \cite{Iza1}):
 
\begin{definition}\label{defSexp} 
Let $S= \lbrace \lambda_\alpha \rbrace$, with $\alpha=1,...,N$, be a finite, abelian semigroup with $2$-selector $K_{\alpha \beta}^{\;\;\;\; \gamma}$ defined by
\begin{equation}\label{kseldef}
K_{\alpha \beta}^{\;\;\;\; \gamma} = \left\{ \begin{aligned} &
1 , \;\;\;\;\; \text{when} \; \lambda_\alpha \lambda_\beta = \lambda_\gamma,
\\ & 0 , \;\;\;\;\; \text{otherwise}. \end{aligned} 
\right.
\end{equation}
Let $\mathfrak{g}$ be a Lie (super)algebra with basis $\lbrace T_A \rbrace$ and structure constants $C_{AB}^{\;\;\;\;C}$, defined by the commutation relations 
\begin{equation}
\left[T_A, T_B \right]= C_{AB}^{\;\;\;\;C}\; T_C .
\end{equation}
Denote a basis element of the direct product $S\times \mathfrak{g}$ by $T_{(A,\alpha)}= \lambda_\alpha T_A$, and consider the induced commutation relations
\begin{equation}
\left[ T_{(A,\alpha)},T_{(B,\beta)}\right] \equiv \lambda_\alpha \lambda_\beta \left[T_A,T_B \right].
\end{equation}
Then, the direct product
\begin{equation}\label{prodsexp}
\mathfrak{g}_S= S \times \mathfrak{g}
\end{equation}
corresponds to the Lie (super)algebra given by
\begin{equation}\label{expandedone}
\left[T_{(A,\alpha)},T_{(B,\beta)}\right]= K_{\alpha \beta}^{\;\;\;\; \gamma} C_{AB}^{\;\;\;\;C}\; T_{(C,\gamma)},
\end{equation}
whose structure constants can be written as
\begin{equation}\label{strconstant}
C_{(A,\alpha)(B,\beta)}^{\;\;\;\;\;\;\;\;\;\;\;\;\;\;\;\;(C,\gamma)}= K_{\alpha \beta}^{\;\;\;\gamma}C_{AB}^{\;\;\;\;C}.
\end{equation}
Thus, for every abelian semigroup $S$ and Lie (super)algebra $\mathfrak{g}$, the algebra $\mathfrak{g}_S$ obtained through the product (\ref{prodsexp}) is also a Lie (super)algebra, with a Lie bracket given by (\ref{expandedone}).
The new, larger Lie (super)algebra obtained in this way is called \textit{$S$-expanded (super)algebra}, and it is commonly written as $\mathfrak{g}_S= S \times \mathfrak{g}$. 
\end{definition}

Imposing extra conditions, relevant (sub)algebras can be systematically extracted from $S\times\mathfrak{g}$. In particular, let us describe in some detail the cases of reduced algebras and resonant subalgebras, because of their particular relevance for the research on this topic presented in this thesis.

\section{Reduced algebras}

We recall the following definition from Ref. \cite{Iza1}:

\begin{definition}\label{defreduced}
Let us consider a Lie (super)algebra $\mathfrak{g}$ of the form $\mathfrak{g}=V_0 \oplus V_1$, where $V_0$ and $V_1$ are two subspaces given by  $V_0=\lbrace T_{a_0} \rbrace$ and $V_1=\lbrace T_{a_1} \rbrace$, respectively. When $\left[V_0,V_1 \right]\subset V_1$, namely when the commutation relations between generators present the following form:
\begin{align}
&\left[T_{a_0},T_{b_0}\right]= C_{a_0 b_0}^{\;\;\;\;\;c_0} T_{c_0} + C_{a_0b_0}^{\;\;\;\;\;c_1}T_{c_1}, \label{commreduced1}\\ 
&\left[T_{a_0},T_{b_1}\right]= C_{a_0 b_1}^{\;\;\;\;\;c_1} T_{c_1}, \label{commreduced2}\\ 
&\left[T_{a_1},T_{b_1}\right]= C_{a_1b_1}^{\;\;\;\;\;c_0}T_{c_0}+C_{a_1b_1}^{\;\;\;\;\;c_1}T_{c_1}, \label{commreduced3}
\end{align}
the structure constants $C_{a_0b_0}^{\;\;\;\;\;c_0}$ satisfy the Jacobi identities. Therefore,
\begin{equation}
\left[T_{a_0},T_{b_0}\right]= C_{a_0b_0}^{\;\;\;\;\;c_0}T_{c_0} 
\end{equation}
itself corresponds to a Lie (super)algebra, which is called a \textit{reduced algebra} of $\mathfrak{g}$ and is commonly symbolized as $\vert V_0 \vert$.
\end{definition}

Let us observe that, in general, a reduced algebra does \textit{not} correspond to a subalgebra of $\mathfrak{g}$.

\subsection{$0_S$-reduction of $S$-expanded algebras}

The so called \textit{$0_S$-reduction} \cite{Iza1} consists in the extraction of a smaller (super)algebra from an $S$-expanded Lie (super)algebra $\mathfrak{g}_S$, when certain conditions are met. 

Let us consider a Lie (super)algebra $\mathfrak{g}$, an abelian semigroup $S$, and the $S$-expanded (super)algebra $\mathfrak{g}_S= S \times \mathfrak{g}$. The abelian semigroup $S$ can also be provided with a \textit{unique zero element} $\lambda_{0_S} \in S$ (also indicated with the symbol $0_S$ in the literature), defined as one for which
\begin{equation}
\lambda_{0_S} \lambda_\alpha = \lambda_\alpha \lambda_{0_S} = \lambda_{0_S},
\end{equation}
for each $\lambda_\alpha \in S$.

If the semigroup $S$ has a zero element $\lambda_{0_S} \in S$, then this element plays a peculiar role in the $S$-expanded (super)algebra.
Let us see what we mean, following Ref. \cite{Iza1}.

We can split the semigroup $S$ into non-zero elements $\lambda_{i}$, $i=0,...,N$, and a zero element $\lambda_{N+1}= \lambda_{0_S}$. Correspondingly, we can write 
\begin{equation}
S = \lbrace \lambda_{i}\rbrace \cup \lbrace\lambda_{N+1}=\lambda_{0_S}\rbrace ,
\end{equation}
with $i = 1, ... ,N$. Then, the $2$-selector of $S$ satisfies the relations
\begin{equation}
\left\{
\begin{aligned}
& K_{i,N+1}^{\;\;\;\;\;\;\;\;\;\; j} = K_{N+1,i}^{\;\;\;\;\;\;\;\;\;\; j}=0, \\
& K_{i,N+1}^{\;\;\;\;\;\;\;\;\;\; N+1} = K_{N+1,i}^{\;\;\;\;\;\;\;\;\;\; N+1}=1, \\
& K_{N+1,N+1}^{\;\;\;\;\;\;\;\;\;\;\;\;\;\;\;\;j} = 0, \\
& K_{N+1,N+1}^{\;\;\;\;\;\;\;\;\;\;\;\;\;\;\;\;N+1} = 1,
\end{aligned}
\right.
\end{equation}
which mean, when written in terms of multiplication rules,
\begin{align}
& \lambda_{N+1}\lambda_i = \lambda_{N+1}, \\
& \lambda_{N+1}\lambda_{N+1} = \lambda_{N+1}.
\end{align}
Therefore, for the (super)algebra $\mathfrak{g}_S=S \times \mathfrak{g}$ we can write the following commutation relations:
\begin{align}
&\left[T_{(A,i)},T_{(B,j)}\right]=K_{ij}^{\;\;\;k}C_{AB}^{\;\;\;\;C}T_{(C,k)} + K_{ij}^{\;\;\; N+1}C_{AB}^{\;\;\;\;C}T_{(C,N+1)}, \\
&\left[T_{(A,N+1)},T_{(B,j)}\right]=C_{AB}^{\;\;\;\;C}T_{(C,N+1)}, \\
&\left[T_{(A,N+1)},T_{(B,N+1)}\right]=C_{AB}^{\;\;\;\;C}T_{(C,N+1)}.
\end{align}
If we now compare these commutation relations with (\ref{commreduced1}), (\ref{commreduced2}), and (\ref{commreduced3}), we can see that
\begin{equation}\label{eqredalg}
\left[T_{(A,i)},T_{(B,j)}\right]= K_{ij}^{\;\;\;k}C_{AB}^{\;\;\;\;C}T_{(C,k)}
\end{equation}
are those of a \textit{reduced} Lie algebra of $\mathfrak{g}_S$ generated by $\lbrace T_{(A,i)} \rbrace$, whose structure constants are given by $K_{ij}^{\;\;\;\;k}C_{AB}^{\;\;\;\;C}$. 

Now, let us observe that the reduction procedure, in this particular case, results to be tantamount to impose the condition
\begin{equation}
T_{(A,N+1)}=\lambda_{0_S} T_A  = \mathbf{0}, \quad \forall T_A \in \mathfrak{g}.
\end{equation}
Note that, in this case, the reduction abelianizes large sectors of the (super)algebra, and, for each $i, \,j, \, k$ satisfying $K_{ij}^{\;\;\;k}=0$, we have 
\begin{equation}
\left[ T_{(A,i)},T_{(B,j)}\right]=\mathbf{0}
\end{equation}
in the reduced algebra of $\mathfrak{g}_S$.

The above considerations led the authors of Ref. \cite{Iza1} to the formulation of the following definition:

\begin{definition}\label{def0Sreduced}
Let $S$ be an abelian semigroup with a zero element $\lambda_{0_S} \in S$ and $\mathfrak{g}_S = S \times \mathfrak{g}$ be an $S$-expanded algebra. Then, the algebra obtained by imposing the condition
\begin{equation}\label{zero}
\lambda_{0_S} T_A =\mathbf{0}
\end{equation}
on $\mathfrak{g}_S$ (or on a subalgebra of it) is called the \textit{$0_S$-reduced algebra} of $\mathfrak{g}_S$ (or of the subalgebra).
\end{definition}

When a $0_S$-reduced (super)algebra presents a decomposition into subspaces which is \textit{resonant} with respect to the partition of the semigroup involved in the $S$-expansion process (we will define the concept of resonant subalgebra in a while), the whole procedure goes under the name of \textit{$0_S$-resonant-reduction}.

\section{Resonant subalgebras}

Another way for obtaining smaller algebras (in this case, subalgebras) from $S$-expanded ones, is described in the definitions below (again from Ref. \cite{Iza1}).

\begin{definition}\label{defresonant}
Let $\mathfrak{g}=\bigoplus_{p\in I}V_p$ be a decomposition of $\mathfrak{g}$ into subspaces $V_p$, where $I$ is a set of indexes. For each $p, q \in I$, it is always possible to define the subsets $i_{(p,q)} \subset I$ such that 
\begin{equation}\label{decomposition}
\left[V_p,V_q\right]\subset \bigoplus_{r\in i_{(p,q)}} V_r,
\end{equation}
where the subsets $i_{(p,q)}$ store the information on the subspace structure of $\mathfrak{g}$.

Now, let $S=\bigcup_{p\in I} S_p$ be a subset decomposition of the abelian semigroup $S$, such that
\begin{equation}\label{groupdecomposition}
S_p\cdot S_q \subset \bigcap_{r \in i_{(p,q)}} S_r,
\end{equation}
where the product $S_p \cdot S_q$ is defined as
\begin{equation}\label{prod}
S_p \cdot S_q = \lbrace \lambda_\gamma \mid \lambda_\gamma= \lambda_{\alpha_p}\lambda_{\alpha_q}, \; \text{with} \; \lambda_{\alpha_p}\in S_p, \lambda_{\alpha_q}\in S_q \rbrace \subset S.
\end{equation} 
When such a subset decomposition $S =\bigcup_{p\in I} S_p$ exists, with the same $p, \, q, \, r$ of (\ref{decomposition}), it is said to be \textit{in resonance} with the decomposition of $\mathfrak{g}$ into subspaces, that is with $\mathfrak{g}= \bigoplus_{p\in I}V_p$. 
\end{definition}

The resonant subset decomposition is essential in order to systematically extract subalgebras from $S$-expanded algebras, as it was enunciated and proven in Ref. \cite{Iza1} with the following theorem (which corresponds to Theorem IV.2 of \cite{Iza1}):
\begin{theorem}\label{Tres} 
Let $\mathfrak{g} = \bigcup_{p\in I} V_p$ be a subspace decomposition of $\mathfrak{g}$, with a structure as the one described by equation (\ref{decomposition}). Let $S=\bigcup_{p\in I} S_p$ be a resonant subset decomposition of the abelian semigroup $S$, with the structure given in equation (\ref{groupdecomposition}). Define the subspaces of the $S$-expanded algebra $\mathfrak{g}_S = S \times \mathfrak{g}$ as
\begin{equation}
W_p=S_p \times V_p, \;\;\; p\in I.
\end{equation}
Then, 
\begin{equation}
\mathfrak{g}_R=\bigoplus_{p\in I}W_p
\end{equation}
is a subalgebra of $\mathfrak{g}_S=S \times \mathfrak{g}$, called resonant subalgebra of $\mathfrak{g}_S$.
\end{theorem}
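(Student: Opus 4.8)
The plan is to verify directly that $\mathfrak{g}_R=\bigoplus_{p\in I}W_p$ is closed under the Lie bracket \eqref{expandedone} of $\mathfrak{g}_S$; since $\mathfrak{g}_R$ is by construction a vector subspace of $\mathfrak{g}_S$, closure under the bracket is all that remains. First I would fix notation: write $T_{A_p}$ for a basis generator of the subspace $V_p$, $\lambda_{\alpha_p}$ for a generic element of $S_p$, and $T_{(A_p,\alpha_p)}=\lambda_{\alpha_p}T_{A_p}$ for the corresponding generator of $W_p=S_p\times V_p$. A general element of $\mathfrak{g}_R$ is then a linear combination of such $T_{(A_p,\alpha_p)}$, so by bilinearity of the bracket it suffices to show that $\big[T_{(A_p,\alpha_p)},T_{(B_q,\beta_q)}\big]\in\mathfrak{g}_R$ for all $p,q\in I$.

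The core step is to compute this bracket using \eqref{expandedone} and then unpack the two compatibility hypotheses. On the algebra side, the decomposition property \eqref{decomposition} says that $[V_p,V_q]\subset\bigoplus_{r\in i_{(p,q)}}V_r$, so the structure constants $C_{A_pB_q}^{\ \ \ C}$ vanish unless the index $C$ labels a generator $T_{C_r}$ belonging to some $V_r$ with $r\in i_{(p,q)}$. On the semigroup side, the resonance condition \eqref{groupdecomposition} says that $\lambda_{\alpha_p}\lambda_{\beta_q}\in S_p\cdot S_q\subset\bigcap_{r\in i_{(p,q)}}S_r$; writing $\lambda_{\alpha_p}\lambda_{\beta_q}=\lambda_\gamma$ (the semigroup is closed, so such a $\gamma$ exists and is unique), this means $\lambda_\gamma\in S_r$ for \emph{every} $r\in i_{(p,q)}$. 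Plugging both facts into
\begin{equation}
\big[T_{(A_p,\alpha_p)},T_{(B_q,\beta_q)}\big]=K_{\alpha_p\beta_q}^{\ \ \ \gamma}\,C_{A_pB_q}^{\ \ \ C}\,T_{(C,\gamma)},
\end{equation}
only terms with $C=C_r$ for some $r\in i_{(p,q)}$ survive, and for each such term $T_{(C_r,\gamma)}=\lambda_\gamma T_{C_r}$ with $\lambda_\gamma\in S_r$ and $T_{C_r}\in V_r$, hence $T_{(C_r,\gamma)}\in S_r\times V_r=W_r\subset\mathfrak{g}_R$. Therefore the bracket lies in $\mathfrak{g}_R$, which establishes closure and hence the theorem.

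I do not expect any serious obstacle here: the result is essentially a bookkeeping statement, and the only subtlety is to keep track carefully of the two index sets ($i_{(p,q)}$ on the algebra side versus on the semigroup side) and to note that the resonance condition uses an \emph{intersection} over $r\in i_{(p,q)}$ precisely so that a single product $\lambda_{\alpha_p}\lambda_{\beta_q}$ is simultaneously admissible for every subspace $V_r$ that can appear in $[V_p,V_q]$. The one point worth stating explicitly is why $\gamma$ is well defined: because $S$ is a semigroup the product $\lambda_{\alpha_p}\lambda_{\beta_q}$ is again an element of $S$, so there is exactly one value of $\gamma$ with $K_{\alpha_p\beta_q}^{\ \ \ \gamma}=1$ and all others zero; this is what makes the $2$-selector manipulation legitimate. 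If desired, one can close the argument by remarking that $\mathfrak{g}_R$, being a subalgebra of the Lie (super)algebra $\mathfrak{g}_S$, automatically inherits the (graded) Jacobi identity, so no separate check is needed.
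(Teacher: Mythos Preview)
Your proof is correct and is the standard direct closure argument. Note, however, that the paper does not actually give its own proof of this theorem: it states the result and refers the reader to the original source (Theorem IV.2 of \cite{Iza1}), so there is no in-paper proof to compare against. Your verification is precisely the argument one would expect in that reference, and your remark about why the resonance condition uses an intersection is the key conceptual point.
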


\section{Reduction of resonant subalgebras}

$S$-expanded (super)algebras have, in general, larger dimensions than the original ones. However, the $S$-expansion method can reproduce the In\"on\"u-Wigner contractions when certain conditions are met.
In particular, the standard IW contraction can be reproduced by performing a (finite) $S$-expansion involving resonance and $0_S$-reduction; on the other hand, the generalized IW contraction (in the sense intended in \cite{WW1, WW2}) fits within the scheme described in \cite{Iza1} when it is possible to extract reduced algebras from resonant subalgebras of (finite) $S$-expanded algebras. Then, the generalized In\"{o}n\"{u}-Wigner contraction does \textit{not} correspond to a resonant subalgebra, but to its reduction \cite{Iza1}. 

Let us report in the following Theorem VII.1 of Ref. \cite{Iza1}, which provides necessary conditions under which a reduced algebra can be extracted from a resonant subalgebra:
\begin{theorem}\label{teoiza}
Let $\mathfrak{g}_R = \bigoplus_{p \in I} S_p \times V_p$ be a resonant subalgebra of $\mathfrak{g}_S=S \times \mathfrak{g}$. Let $S_p = \hat{S}_p \cup \check{S}_p$ be a partition of the subset $S_p \subset S$ such that
\begin{equation}\label{intzero}
\hat{S}_p \cap \check{S}_p = \emptyset ,
\end{equation}
\begin{equation}\label{ressemigroup}
\check{S}_p \cdot \hat{S}_q \subset \bigcap_{r \in i_{(p,q)}}\hat{S}_r.
\end{equation}
Conditions (\ref{intzero}) and (\ref{ressemigroup}) induce the decomposition $\mathfrak{g}_R = \check{\mathfrak{g}}_R \oplus \hat{\mathfrak{g}}_R$ on the resonant subalgebra, where
\begin{equation}
\check{\mathfrak{g}}_R = \bigoplus_{p \in I} \check{S}_p \times V_p ,
\end{equation}
\begin{equation}
\hat{\mathfrak{g}}_R = \bigoplus_{p \in I}\hat{S}_p \times V_p .
\end{equation}
When the conditions (\ref{intzero}) and (\ref{ressemigroup}) hold, then
\begin{equation}
\left[ \check{\mathfrak{g}}_R, \hat{\mathfrak{g}}_R \right] \subset \hat{\mathfrak{g}}_R,
\end{equation}
and therefore $\vert \check{\mathfrak{g}}_R \vert$ corresponds to a reduced algebra of $\mathfrak{g}_R$.
\end{theorem}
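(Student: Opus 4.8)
The plan is to prove the bracket containment $[\check{\mathfrak{g}}_R, \hat{\mathfrak{g}}_R] \subset \hat{\mathfrak{g}}_R$ by a direct computation on a basis, using the explicit form of the $S$-expanded Lie bracket from Definition~\ref{defSexp}, and then to read off from Definition~\ref{defreduced} that $\vert \check{\mathfrak{g}}_R \vert$ is a reduced algebra of $\mathfrak{g}_R$. First I would observe that the splitting of $\mathfrak{g}_R$ into $\check{\mathfrak{g}}_R$ and $\hat{\mathfrak{g}}_R$ is indeed a direct-sum decomposition of vector spaces: since $\hat{S}_p \cap \check{S}_p = \emptyset$ by (\ref{intzero}), each resonant subspace $W_p = S_p \times V_p$ splits as $(\check{S}_p \times V_p) \oplus (\hat{S}_p \times V_p)$, whence $\mathfrak{g}_R = \check{\mathfrak{g}}_R \oplus \hat{\mathfrak{g}}_R$.

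Next I would fix notation for generators. A generic basis element of $\check{\mathfrak{g}}_R$ has the form $T_{(A,\alpha)}$ with $T_A \in V_p$ and $\lambda_\alpha \in \check{S}_p$ for some $p \in I$, and a generic basis element of $\hat{\mathfrak{g}}_R$ has the form $T_{(B,\beta)}$ with $T_B \in V_q$ and $\lambda_\beta \in \hat{S}_q$ for some $q \in I$. Using (\ref{expandedone}) their bracket is
\begin{equation}
[T_{(A,\alpha)}, T_{(B,\beta)}] = K_{\alpha\beta}^{\;\;\;\gamma} C_{AB}^{\;\;\;\;C}\, T_{(C,\gamma)} .
\end{equation}
Now I would decompose the right-hand side along the two structures simultaneously. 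On the algebra side, $C_{AB}^{\;\;\;\;C} T_C \in [V_p, V_q] \subset \bigoplus_{r \in i_{(p,q)}} V_r$ by (\ref{decomposition}), so the only nonvanishing contributions have $T_C \in V_r$ with $r \in i_{(p,q)}$. On the semigroup side, $\lambda_\alpha \lambda_\beta = \lambda_\gamma \in \check{S}_p \cdot \hat{S}_q$, and hypothesis (\ref{ressemigroup}) forces $\lambda_\gamma \in \bigcap_{r \in i_{(p,q)}} \hat{S}_r$, i.e. $\lambda_\gamma \in \hat{S}_r$ for \emph{every} index $r$ that can occur. Therefore each surviving term $T_{(C,\gamma)}$ lies in $\hat{S}_r \times V_r \subset \hat{\mathfrak{g}}_R$, which establishes $[\check{\mathfrak{g}}_R, \hat{\mathfrak{g}}_R] \subset \hat{\mathfrak{g}}_R$.

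To finish, I would invoke Definition~\ref{defreduced} with the identifications $V_0 \equiv \check{\mathfrak{g}}_R$ and $V_1 \equiv \hat{\mathfrak{g}}_R$: the bracket $[V_0,V_1] \subset V_1$ is exactly what was just proven, while $[V_0,V_0]$ and $[V_1,V_1]$ land in $\mathfrak{g}_R = V_0 \oplus V_1$ automatically because $\mathfrak{g}_R$ is a subalgebra (Theorem~\ref{Tres}). Hence the structure constants of the $\check{\mathfrak{g}}_R$-sector close among themselves and satisfy the Jacobi identities, so $\vert \check{\mathfrak{g}}_R \vert$ is a reduced algebra of $\mathfrak{g}_R$. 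I expect no real obstacle here; the one delicate point is that hypothesis (\ref{ressemigroup}) is phrased with an \emph{intersection} $\bigcap_{r \in i_{(p,q)}}$ rather than a union — this is precisely the condition ensuring that the single product element $\lambda_\gamma$ is ``hatted'' with respect to every subspace $V_r$ into which $[V_p,V_q]$ may spill, and matching the semigroup partition to the subspace structure in this way is the heart of the argument. The degenerate cases ($i_{(p,q)} = \emptyset$, or $C_{AB}^{\;\;\;\;C} = 0$) are trivial, and a zero element $\lambda_{0_S}$, if present, can be accommodated without altering the reasoning.
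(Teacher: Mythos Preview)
Your argument is correct and is exactly the natural direct proof: decompose on a basis, use (\ref{decomposition}) to control where $[V_p,V_q]$ lands, use (\ref{ressemigroup}) to control where $\check{S}_p \cdot \hat{S}_q$ lands, and observe that the intersection in (\ref{ressemigroup}) is precisely what guarantees that the single product $\lambda_\gamma$ is hatted with respect to \emph{every} $r \in i_{(p,q)}$ that can appear. Your identification of this as the heart of the argument is spot on.

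However, note that the paper does \emph{not} give its own proof of this theorem. It is stated in Chapter~\ref{chapter 3} explicitly as a quotation of Theorem~VII.1 of Ref.~\cite{Iza1}, introduced by ``Let us report in the following Theorem VII.1 of Ref.~\cite{Iza1}\ldots'' and followed only by the remark ``As shown in~\cite{Iza1}, from the structure constants for the resonant subalgebra it is then possible to write the structure constants for the reduced algebra $\vert \check{\mathfrak{g}}_R \vert$.'' So there is nothing in the paper to compare your proof against; you have supplied what the paper delegates to the literature, and your proof is essentially the one found in~\cite{Iza1}.
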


As shown in \cite{Iza1}, from the structure constants for the resonant subalgebra it is then possible to write the structure constants for the reduced algebra $\vert \check{\mathfrak{g}}_R \vert$.

Observe that, when every $S_p \subset S$ of a resonant subalgebra includes the zero element $\lambda_{0_S}$, the choice $\hat{S}_p = \lbrace \lambda_{0_S} \rbrace$ automatically satisfies the conditions (\ref{intzero}) and (\ref{ressemigroup}). As a consequence of this, the $0_S$-reduction can be regarded as a particular case of Theorem \ref{teoiza}.

Theorem \ref{teoiza} will be useful in the last part of this thesis, in particular when we will present a new prescription for $S$-expansion, involving an \textit{infinite} abelian semigroup and the subtraction of an infinite ideal subalgebra from an infinite resonant subalgebra of the infinitely $S$-expanded (super)algebra.

\chapter{$AdS$-Lorentz supergravity in the presence of a non-trivial boundary} \label{chapter 4}

\ifpdf
    \graphicspath{{Chapter4/Figs/}{Chapter4/Figs/PDF/}{Chapter4/Figs/}}
\else
    \graphicspath{{Chapter4/Figs/Vector/}{Chapter4/Figs/}}
\fi

In this chapter, our aim is to explore the supersymmetry invariance of a particular supergravity theory in the presence of a \textit{non-trivial boundary} (namely, when the boundary is not thought of as to be set at infinity). The discussion will be based on the work \cite{Gauss} that I have done in collaboration with M. C. Ipinza, P. K. Concha, and E. K. Rodríguez. 
Some motivations to our study can be found in Chapter \ref{chapter 2}, where he have spent a few words on the context in which such an analysis can be located. We have also recalled the geometrical approach for the description of $D=4$ pure supergravity on a manifold with boundary, on the same lines of \cite{bdy}.

In particular, in \cite{Gauss}, using the rheonomic (geometric) approach reviewed in Chapter \ref{chapter 2} of this thesis, we have
explored the boundary terms needed in order to restore, in the presence of a non-trivial boundary, a particular enlarged supersymmetry, known as ``$AdS$-Lorentz''.
We have first of all performed the explicit geometric construction of a bulk Lagrangian based on this enlarged superalgebra ($AdS$-Lorentz superalgebra) and then we have shown that the supersymmetric extension of a Gauss-Bonnet like term is required in order to restore the supersymmetry invariance of the theory in the presence of a non-trivial boundary.

The $AdS$-Lorentz (super)algebra was obtained as a tensorial semisimple extension of the (super)Poincaré algebra \cite{Sorokas}, and it can be alternatively derived through an $S$-expansion (see Chapter \ref{chapter 3} for a review of $S$-expansion) of the $AdS$ (super)algebra \cite{Fierro2, Salgado, Fierro1} (see also \cite{CRSnew, Concha:2016hbt} and references therein). 
The super $AdS$-Lorentz algebra can also be viewed as a deformation of the Maxwell (super)symmetries \cite{Durka:2011nf}.

Here, let us just open a small parenthesis, spending a few words on the Maxwell (super)algebras and on their interest in Physics, before proceeding with our discussion on the $AdS$-Lorentz (super)algebra. 

The Maxwell algebra is a non-central extension of Poincaré algebra. In particular, it is obtained by replacing the
commutator $[P_a, P_b] = 0$ of the Poincaré algebra with $[P_a,P_b]=Z_{ab}$, where $Z_{ab}=-Z_{ba}$ are \textit{abelian} generators commuting with translations and behaving like a tensor
with respect to Lorentz transformations. This extension of the Poincaré algebra arises when considering symmetries of systems evolving in \textit{flat Minkowski space} filled in by \textit{constant electromagnetic background} \cite{bacry, schrader}. 
Indeed, in order to interpret the Maxwell algebra and the corresponding Maxwell group, a Maxwell group-invariant particle model was studied on an extended space-time with coordinates $(x^\mu, \phi^{\mu \nu})$, where the translations of $\phi^{\mu \nu}$ are generated by $Z_{\mu \nu}$ \cite{galalg, gomis0, gomis1, Bonanos:2008ez, gibbons}. The interaction term described by
a Maxwell-invariant $1$-form introduces new tensor degrees of freedom, momenta conjugate to $\phi^{\mu \nu}$, and, in the equations of motion, they play the role of a background electromagnetic field which is constant on-shell and leads to a closed, Maxwell-invariant $2$-form.

Subsequently, the Maxwell algebra attracted some attention due to the fact that its supersymmetrization leads to a new form of $\mathcal{N} = 1$, $D = 4$ superalgebra, containing the super-Poincaré algebra \cite{gomis2}. The so called \textit{super-Maxwell algebra} introduced in \cite{gomis2} (and, subsequently, further discussed and deformed in \cite{gomis3}) is a minimal super-extension of the Maxwell algebra and can be considered as an enlargement of the so called Green algebra \cite{green}. In particular, the $\mathcal{N}=1$, $D=4$ super-Maxwell algebra describes the supersymmetries of a generalized $\mathcal{N}=1$, $D=4$ superspace in the presence of a constant, abelian, supersymmetric field-strength background.
Further generalizations of Maxwell (super)algebras where then derived and studied in the context of expansion of Lie (super)algebras \cite{deAzcarraga:2012zv}.
Lately, in \cite{CR2} the authors presented the construction of the $D = 4$ pure supergravity action (plus boundary terms) starting from a minimal Maxwell superalgebra (which can be derived from $\mathfrak{osp}(1 \vert 4)$ by applying the $S$-expansion procedure), showing, in particular, that the $\mathcal{N} = 1$, $D = 4$ pure supergravity theory can be alternatively obtained as the MacDowell-Mansouri like action built from the curvatures of this minimal Maxwell superalgebra.
Remarkably, in this context the Maxwell-like fields do \textit{not} contribute to the \textit{dynamics} of the theory, appearing only in the \textit{boundary terms}. 

Coming back to the non-supersymmetric case, in \cite{deAzcarraga:2010sw}, driven by the fact that it is often thought that the cosmological constant problem
may require an alternative approach to gravity, the authors presented a geometric framework based on the $D=4$ gauged Maxwell algebra, involving six new gauge fields associated with their abelian generators, and described its application as source of an additional contribution to the cosmological term in Einstein gravity, namely as a generalization of the cosmological term.
Subsequently, in \cite{Durka:2011nf} the authors deformed the $AdS$ algebra by adding extra \textit{non-abelian} $Z_{ab}$ generators, forming, in this way, the negative cosmological constant counterpart of the Maxwell algebra. Then, they gauged this algebra and constructed a dynamical model; in the resulting theory, the gauge fields associated with the Maxwell-like generators $Z_{ab}$ appear only in \textit{topological terms} that \textit{do not influence dynamical field equations}.

Let us stress that a good candidate for describing the dark energy corresponds to the cosmological constant \cite{Frieman:2008sn, padme}. It is well known that we can introduce a cosmological term in a gravitational theory in four space-time dimensions by considering the $AdS$ algebra. In particular, one can obtain the supersymmetric extension of gravity including a cosmological term in a geometric formulation. In this framework, the supergravity theory is built from the curvatures of the $\mathfrak{osp}(1\vert 4)$ superalgebra, and the resulting action is known as the \textit{MacDowell-Mansouri action} \cite{MM}.

In Refs. \cite{Salgado, Concha:2016tms}, the authors showed that it is possible to introduce a generalized cosmological constant term in a Born-Infeld like gravity action when the so called \textit{$AdS$-Lorentz} algebra is considered.
Lately, in \cite{CRSnew} the authors showed that, analogously, the supersymmetric extension of the $AdS$-Lorentz algebra (namely, the \textit{$AdS$-Lorentz superalgebra} we are going to consider) allows to introduce a generalized supersymmetric cosmological constant term in a geometric four-dimensional supergravity theory. In particular, in \cite{CRSnew} the $\mathcal{N}=1$, $D=4$ supergravity action is built only from the curvatures of the $AdS$-Lorentz superalgebra, and it corresponds to a MacDowell-Mansouri like action \cite{MM}.

In the following, recalling what we have done in \cite{Gauss}, we will first introduce the so called \textit{$AdS$-Lorentz superalgebra}.
Then, we shall present the explicit construction of the \textit{bulk} Lagrangian in the \textit{rheonomic
framework} (see Chapter \ref{chapter 2} for a review on this geometric approach).
The rheonomic approach to the construction of $D=4$ supergravity theories was generalized to the case of theories with (non-trivial) boundaries in \cite{bdy}. 

Subsequently, we will study the \textit{supersymmetry invariance} of the Lagrangian in the presence of a \textit{non-trivial boundary}. In particular, we will show that the supersymmetric
extension of a Gauss-Bonnet like term is required in order to restore the supersymmetry invariance
of the full Lagrangian (bulk plus boundary). The supergravity action finally obtained can be written as a MacDowell-Mansouri like action \cite{MM}.

\section{The $AdS$-Lorentz superalgebra}

The $D=4$ $AdS$-Lorentz superalgebra is generated by $\left\{
J_{ab},P_{a},Z_{ab},Q_{\alpha }\right\} $, it is semisimple, and its
(anti)commutation relations read as follows:
\begin{align}
\left[ J_{ab},J_{cd}\right] & =\eta _{bc}J_{ad}-\eta _{ac}J_{bd}-\eta
_{bd}J_{ac}+\eta _{ad}J_{bc}\,,  \label{ADSL401} \\
\left[ J_{ab},P_{c}\right] & =\eta _{bc}P_{a}-\eta _{ac}P_{b}\,, \\
\left[ J_{ab},Z_{cd}\right] & =\eta _{bc}Z_{ad}-\eta _{ac}Z_{bd}-\eta
_{bd}Z_{ac}+\eta _{ad}Z_{bc}\,, \\
\left[ J_{ab},Q_{\alpha }\right] & =-\frac{1}{2}\left( \gamma _{ab}Q\right)
_{\alpha }\,, \\
\left[ P_{a},P_{b}\right] &= Z_{ab}\,, \\
\left[ Z_{ab},P_{c}\right] & =\eta _{bc}P_{a}-\eta _{ac}P_{b}\,, \\
\left[ P_{a},Q_{\alpha }\right] & =-\frac{1}{2}%
\left( \gamma _{a}Q\right) _{\alpha }\,, \\
\left[ Z_{ab},Z_{cd}\right] & =\eta _{bc}Z_{ad}-\eta _{ac}Z_{bd}-\eta
_{bd}Z_{ac}+\eta _{ad}Z_{bc}\,,  \label{ADSL403} \\
\left[ Z_{ab},Q_{\alpha }\right] & =-\frac{1}{2}\left( \gamma _{ab}Q\right)
_{\alpha }\,, \\
\left\{ Q_{\alpha },Q_{\beta }\right\} & =-\frac{1}{2}\left[ \left( \gamma
^{ab}C\right) _{\alpha \beta }Z_{ab}-2\left( \gamma ^{a}C\right) _{\alpha
\beta }P_{a}\right] \,,  \label{ADSL408}
\end{align}%
where $C$ is the charge conjugation matrix and $\gamma _{a}$, $\gamma_{ab}$ are Dirac gamma matrices in four dimensions; $J_{ab}$ and $P_a$ are the Lorentz and translations generators, respectively, $Q_\alpha$ ($\alpha=1,\ldots, 4$) is the supersymmetry charge, and $Z_{ab}$ are non-abelian Lorentz-like generators. 

The presence of $Z_{ab}$ implies the introduction of its dual, new, bosonic $1$-form field $k^{ab}$ (which modifies the definition of the curvatures) in the supergravity theory based on the $AdS$-Lorentz superalgebra.

Notice that the Lorentz-type algebra $\mathcal{L}=\left\{ J_{ab},Z_{ab}\right\} $ is a subalgebra of the above superalgebra and that the generators $\lbrace P_a, Z_{ab}, Q_\alpha \rbrace$ span a \textit{non-abelian} ideal of the $AdS$-Lorentz superalgebra.

The $AdS$-Lorentz superalgebra written above and its extensions to higher dimensions have been useful to derive General
Relativity from Born-Infeld gravity theories \cite{Concha:2013uhq, Concha1, Concha:2014zsa}. Further generalizations of the $AdS$-Lorentz superalgebra containing more than one spinor charge can be found in Ref. \cite{CRSnew} and they can be seen as deformations of the minimal Maxwell superalgebras \cite{Concha2, gomis2, deAzcarraga:2012zv}. Let us finally mention that the following redefinition of the generators: $J_{ab} \rightarrow J_{ab}$, $Z_{ab} \rightarrow \frac{1}{\sigma^2}Z_{ab}$, $P_a \rightarrow \frac{1}{\sigma}P_a$, $Q_\alpha \rightarrow \frac{1}{\sigma}Q_\alpha$ provides the so called non-standard Maxwell superalgebra (see, for example, \cite{GenIW, Lukierski:2010dy}) in the limit $\sigma\rightarrow 0$.

\section{$AdS$-Lorentz supergravity and rheonomy}

In $\mathcal{N}=1$, $D=4$ supergravity in superspace, the bosonic $1$-forms $V^ a$ ($a = 0, 1, 2, 3$) and the fermionic ones $\psi^\alpha$ ($\alpha = 1, \ldots, 4$) define the supervielbein basis in superspace.

As we have seen in Chapter \ref{chapter 2}, in the rheonomic (geometric) approach to supergravity in superspace the supersymmetry transformations in space-time are interpreted as diffeomorphisms in the fermionic directions of superspace, and they are generated by Lie derivatives with fermionic parameter $\epsilon^\alpha$ (in the sequel, we will neglect the spinor index $\alpha $, for simplicity). In this framework, the supersymmetry invariance of the theory is satisfied requiring that the Lie derivative of the Lagrangian vanishes for diffeomorphisms
in the fermionic directions of superspace, that is equation (\ref{susy}) of Chapter \ref{chapter 2}, which we also report 
here, for the sake of completeness:
\begin{equation}\label{susymiaunewbella}
\delta_\epsilon \mathcal{L}\equiv \ell_\epsilon \mathcal{L}= \imath_\epsilon d\mathcal{L} + d(\imath_\epsilon \mathcal{L})=0.
\end{equation}

As we have already discussed in Chapter \ref{chapter 2}, the second contribution $d(\imath_\epsilon \mathcal{L})$ in (\ref{susymiaunewbella}) is a boundary term and does not affect the bulk result, and a necessary condition for a supergravity Lagrangian turns out to be the following one:
\begin{equation}\label{blublublu}
\imath_\epsilon d \mathcal{L}=0,
\end{equation}
which corresponds to require supersymmetry invariance \textit{in the bulk}. Under the condition (\ref{blublublu}), the supersymmetry transformation of the action reduces to $\delta_\epsilon \mathcal{S} = \int_{\mathcal{M}_4} d (\imath _\epsilon \mathcal{L})= \int_{\partial \mathcal{M}_4} \imath_\epsilon \mathcal{L}$ and, when we consider a supergravity theory on a space-time with boundary at infinity, the fields are asymptotically vanishing, so that we have $\imath_\epsilon \mathcal{L} |_{\partial \mathcal{M}_4}=0$ and, then, $\delta _\epsilon \mathcal{S}=0$.

When the background space-time has, instead, a \textit{non-trivial boundary}, in order to get the supersymmetry invariance of the action we need to check explicitly the condition
\begin{equation}\label{cheduepalle}
\imath_\epsilon \mathcal{L} |_{\partial \mathcal{M}_4}=0
\end{equation}
(modulo an exact differential).  

Before analyzing the $\mathcal{N} = 1$, $D = 4$ $AdS$-Lorentz supergravity theory in the presence of a non-trivial
boundary, we will study the geometric construction of the bulk Lagrangian and the corresponding supersymmetry transformation laws. In the sequel, we first apply the rheonomic approach to derive the parametrization of Lorentz-like curvatures involving the extra bosonic $1$-form $k^{ab}$ by studying the
different sectors of the \textit{on-shell} Bianchi identities.
Then, we will show that by constructing in a geometric way the $\mathcal{N} = 1$, $D = 4$ $AdS$-Lorentz supergravity bulk Lagrangian, we end up with a Lagrangian written in terms of the aforementioned Lorentz-like curvatures: This is an alternative way to introduce a generalized supersymmetric cosmological
term to supergravity. 
After that, we will write the supersymmetry transformation laws and subsequently move to the analysis of the theory in the presence of a non-trivial boundary.

\subsection{Curvatures parametrization}

Let us consider the following Lorentz-type curvatures in superspace:\footnote{Observe that the Lorentz-type curvatures (\ref{curlor1})-(\ref{curlor4}) can also be viewed as a ``torsion-deformed'' version of the super-Poincar\'{e} curvatures in $D=4$, in the sense intended in Chapter \ref{chapter 5} of this thesis, where we will analyze some superalgebras related to $D=11$ supergravity.}
\begin{align}
& R^{ab} \equiv d\omega ^{ab}+\omega ^{ab} \wedge \omega _{c}^{\; b}\,,
\label{curlor1} \\
& R^{a} \equiv D_\omega V^{a}+k_{\;b}^{a} \wedge V^{b}-\frac{1}{2}\bar{\psi }\wedge \gamma
^{a}\psi \,, \label{curlor2} \\
& F^{ab} \equiv D_\omega k^{ab}+k_{\;c}^{a} \wedge k^{cb}\,,  \label{curlor3} \\
& \rho  \equiv D_\omega \psi +\frac{1}{4}k^{ab} \wedge \gamma _{ab}\psi \,,
\label{curlor4}
\end{align}
where, according with the convention we have adopted in \cite{Gauss}, $D_\omega = d + \omega$ is the Lorentz covariant exterior derivative.
Let us observe that they can be viewed as an extension involving the $1$-form $k^{ab}$ (and the corresponding super field-strength $F^{ab}$) of the Lorentz-covariant field-strengths in superspace considered, for example, in \cite{bdy}.

The supercurvatures (\ref{curlor1})-(\ref{curlor4}) satisfy the Bianchi identities ($\nabla R^A=0$, where $\nabla$ is the gauge covariant derivative):
\begin{align}
& D_\omega R^{ab} =0\,,  \label{eqn:drab} \\
& D_\omega R^{a} =R_{\;b}^{a}\wedge V^{b}+F_{\;b}^{a}\wedge V^{b}+R^{b}\wedge k_{b}^{\;a}+\bar{\psi }\wedge \gamma ^{a}\rho \,,
\label{eqn:dra} \\
& D_\omega F^{ab} =R_{\;c}^{a}\wedge k^{cb}-R_{\;c}^{b}\wedge k^{ca}+F_{\;c}^{a}\wedge k^{cb}-F_{\;c}^{b}\wedge k^{ca}\,,
\label{eqn:dfab} \\
& D_\omega \rho =\frac{1}{4}R_{ab}\wedge \gamma ^{ab}\psi +\frac{1}{4}F_{ab}\wedge \gamma ^{ab}\psi -\frac{1}{4}k_{ab}\wedge \gamma ^{ab}\rho \,.
\label{eqn:drho}
\end{align}

The most general ansatz for the Lorentz-type curvatures in the
supervielbein basis $\lbrace V^a, \psi \rbrace$ of superspace is given by:
\begin{align}
& R^{ab} =\mathcal{R}_{\;\;cd}^{ab}V^{c}\wedge V^{d}+\bar{\Theta }
_{\;\;c}^{ab}\psi \wedge V^{c}+\alpha e\; \bar{\psi }\wedge \gamma ^{ab}\psi \,,
\label{eqn:rab} \\
& R^{a} =R_{\;cd}^{a}V^{c}\wedge V^{d}+\bar{\Theta }_{\;b}^{a} \psi \wedge V^{b}+\xi  \; \bar{\psi }\wedge \gamma ^{a}\psi \,,  \label{eqn:ra} \\
& F^{ab} =\mathcal{F}_{\;\;cd}^{ab}V^{c}\wedge V^{d}+\bar{\Lambda }%
_{\;\;c}^{ab}\psi \wedge V^{c}+\beta e \; \bar{\psi } \wedge \gamma ^{ab}\psi \,,
\label{eqn:fab} \\
& \rho =\rho _{ab}V^{a}\wedge V^{b}+\delta e\; \gamma _{a}\psi \wedge V^{a}+\Omega
_{\alpha \beta }e^{1/2}\psi ^{\alpha } \wedge \psi ^{\beta }\, , \label{eqn:rho}
\end{align}%
where $e$ (that has the dimension of a mass) is the rescaling parameter. Setting $R^{a}=0$ (\textit{on-shell} condition), we can
withdraw some terms appearing in the above ansatz for the curvatures through the study of the
scaling constraints. On the other hand, the coefficients $\alpha$, $\beta$, $\xi $, and $\delta $ appearing in the ansatz can be determined from the study of the various sectors of the Bianchi identities in superspace (\ref{eqn:drab})-(\ref{eqn:drho}).

One can show that the Bianchi identities in superspace (\ref{eqn:drab})-(\ref{eqn:drho}) are solved by parametrizing (on-shell) the full set of field-strengths in the following way:
\begin{align}
& R^{ab} =\mathcal{R}_{\;\;cd}^{ab}V^{c} \wedge V^{d}+\bar{\Theta }%
_{\;\;c}^{ab}\psi \wedge V^{c}\,,  \label{eqn:parRab} \\
& R^{a} =0\,,  \label{eqn:parra} \\
& F^{ab} =\mathcal{F}_{\;\;cd}^{ab}V^{c}\wedge V^{d}+\bar{\Lambda }%
_{\;\;c}^{ab}\psi \wedge V^{c}+e \; \bar{\psi }\wedge \gamma ^{ab}\psi \,,
\label{eqn:parFab} \\
& \rho =\rho _{ab}V^{a}\wedge V^{b}-e \; \gamma _{a}\psi \wedge V^{a}\,,
\label{eqn:parrho}
\end{align}
where $\bar{\Theta }_{\;\;c}^{ab}=\bar{\Lambda }%
_{\;\;c}^{ab}=\epsilon ^{abde}\left( \bar{\rho}_{cd}\gamma _{e}\gamma
_{5}+\bar{\rho} _{ec}\gamma _{d}\gamma _{5}-\bar{\rho} _{de}\gamma _{c}\gamma
_{5}\right) $. 

In this way, we have found the
parametrization of the Lorentz-type curvatures (\ref{curlor1})-(\ref{curlor4}) on a basis of superspace. We can now move to the geometric construction of
the bulk $AdS$-Lorentz Lagrangian.

\subsection{Rheonomic construction of the Lagrangian}

We can write the most general ansatz for the $AdS$-Lorentz Lagrangian
as follows:
\begin{equation}
\mathcal{L}=\nu ^{(4)}+F^{A}\wedge \nu _{A}^{(2)}+F^{A} \wedge F^{B}\nu _{AB}^{(0)}\,,
\label{eqn:L}
\end{equation}
where the upper index $(p)$ denotes $p$-forms; the $F^{A}$'s are the super $AdS$-Lorentz Lie algebra valued curvatures defined by:\footnote{The super $AdS$-Lorentz Lie algebra valued curvatures can also be viewed as a ``torsion-deformed'' version of the $\mathfrak{osp}(1\vert 4)$ curvatures, again in the sense intended in Chapter \ref{chapter 5} of this thesis.}
\begin{align}
& R^{ab} \equiv d\omega ^{ab}+\omega ^{ac} \wedge \omega _c^{\;b}, \label{curva1} \\
& R^{a} \equiv D_\omega V^{a}+k _{\;b}^{a} \wedge V^{b}-\frac{1}{2}\bar{\psi} \wedge \gamma^{a} \psi , \\
& \mathcal{F}^{ab} \equiv D_\omega k^{ab}+k _{\;c}^{a}\wedge k^{cb}+4 e^{2}\; V^{a} \wedge V^{b}+ e \; \bar{\psi} \wedge \gamma ^{ab} \psi , \\
& \Psi  \equiv D_\omega \psi +\frac{1}{4} k^{ab} \wedge \gamma _{ab}\psi -e\; \gamma_{a} \psi \wedge V^{a},  \label{curva4}
\end{align}
where $k^{ab}$ is the bosonic $1$-form dual to the Lorentz-like generator $Z_{ab}$ appearing in the $AdS$-superalgebra, while $\omega^{ab}$, $V^a$, $\psi$ are, as usual, the $1$-forms dual to the generators $J_{ab}$, $P_a$, and $Q$, respectively (see Chapter \ref{chapter 2} for details on this dual formulation involving differential forms).
Here we have used the symbols $\mathcal{F}^{ab}$ and $\Psi$ to denote the $AdS$-Lorentz super field-strengths, in order to avoid confusion with the Lorentz-type ones (\ref{curlor3}) and (\ref{curlor4}), denoted by $F^{ab}$ and $\rho$, respectively. 

The terms appearing in (\ref{eqn:L}) explicitly read
\begin{align}
& \nu ^{(4)}=\alpha _{1}\epsilon_{abcd}V^{a}\wedge V^{b}\wedge V^{c}\wedge V^{d}+\alpha _{2}%
\bar{\psi}\wedge \gamma ^{ab}\psi \wedge V^{c}\wedge V^{d}\epsilon _{abcd}+ \notag \\
& +\alpha _{3}\bar{\psi}\wedge \gamma _{ab}\psi \wedge V^{a}\wedge V^{b}\,,  \label{eqn:Lambda} \\
&  \notag \\
& F^{A} \wedge \nu _{A}^{(2)}=\gamma _{1}\epsilon _{abcd}R^{ab}\wedge V^{c}\wedge V^{d}+\gamma _{2}\epsilon _{abcd}\mathcal{F}^{ab}\wedge V^{c}\wedge V^{d}+ \notag \\
& +\gamma _{3}\bar{\Psi }\wedge \gamma _{5}\gamma _{a}\psi \wedge V^{a}+\gamma _{4}\bar{\Psi } \wedge \gamma _{a}\psi \wedge V^{a}+  \notag \\
& + \gamma _{5}R^{a}\wedge \bar{\psi }\wedge \gamma _{a}\psi +\gamma _{6}R^{ab}\wedge \bar{\psi }\wedge \gamma _{ab}\psi +\gamma _{7}R^{ab}\wedge V_{a}\wedge V_{b}+\gamma _{8}\epsilon _{abcd}R^{ab}\wedge \bar{\psi }\wedge \gamma ^{cd}\psi +  \notag \\
& +\gamma _{9}\mathcal{F}^{ab}\wedge V_{a}\wedge V_{b}+\gamma _{10}\epsilon _{abcd}\mathcal{F}^{ab}\wedge \bar{\psi }\wedge \gamma ^{cd}\psi +\gamma _{11}\mathcal{F}^{ab}\wedge \bar{\psi }\wedge \gamma _{ab}\psi
\,,  \label{eqn:RAnuA} \\
&  \notag \\
& F^{A}\wedge F^{B}\nu _{AB}^{(0)}=\beta _{1}R^{ab} \wedge R_{ab}+\beta
_{2}\mathcal{F}^{ab}\wedge \mathcal{F}_{ab}+ \notag \\
& + \beta _{3}\epsilon _{abcd}R^{ab} \wedge R^{cd}+\beta _{4}\epsilon _{abcd}R^{ab}\wedge \mathcal{F}^{cd}+  \notag \\
& +\beta _{5}\epsilon _{abcd}\mathcal{F}^{ab}\wedge \mathcal{F}^{cd}+\beta _{6}\bar{\Psi }\wedge \Psi
+\beta _{7}\bar{\Psi }\wedge \gamma _{5}\Psi +\beta _{8}R^{a}\wedge R_{a}\,,
\label{eqn:RARBnuAB}
\end{align}
the $\alpha _{i}$'s, $\beta _{j}$'s, and $\gamma _{k}$'s being constants. 

The curvatures (\ref{curva1})-(\ref{curva4}) are invariant under the rescaling 
\begin{equation}
\omega ^{ab}\rightarrow \omega ^{ab}, \;\;\; k^{ab}\rightarrow k^{ab}, \;\;\; V^{a}\rightarrow \omega V^{a}, \;\;\; \psi \rightarrow \omega^{1/2}\psi , \;\;\; e\rightarrow \omega^{-1}e.
\end{equation}
Additionally, the Lagrangian must scale with $\omega^{2}$, being $\omega^{2}$ the scale-weight of the Einstein term. One can prove
that the term $R^{a}\wedge R_{a}$ in (\ref{eqn:RARBnuAB}) is linear in the curvature. Furthermore, due to the scaling constraints reasons recalled in Chapter \ref{chapter 2}, some
of the terms in (\ref{eqn:RARBnuAB}) disappear. 

Let us remind that
a theory in $AdS$ space includes a cosmological constant and, since the coefficients appearing in the Lagrangian can be dimensional objects and scale with negative powers of $e$, some of the terms in $F^{A}\wedge F^{B}\nu_{AB}^{(0)}$ can survive the scaling and contribute to the Lagrangian as total derivatives. However, since we are now just constructing the \textit{bulk} Lagrangian, we can neglect them and simply set $F^{A}\wedge F^{B}\nu _{AB}^{(0)}=0$. These terms, however, will be fundamental for the construction of the \textit{boundary contributions} needed in order to restore supersymmetry invariance of the full Lagrangian (bulk plus boundary) in the presence of a non-trivial space-time boundary.

Let us now analyze the scaling of the terms in (\ref{eqn:Lambda}), whose coefficients
must be redefined as follows in order to give non-vanishing
contributions to the Lagrangian:
\begin{equation}
\alpha _{1}\equiv e^{2}\alpha _{1}^{\prime }\,,\text{ \ \ }\alpha
_{2}\equiv e\alpha _{2}^{\prime },\text{ \ \ }\alpha _{3}\equiv e%
\alpha _{3}^{\prime }\,.
\end{equation}%
In this way, all the terms in $\nu^{(4)} $ result to scale as $\omega^{2}$. Then, applying the
scaling and the parity conservation law to (\ref{eqn:Lambda}) and (\ref{eqn:RAnuA}), we obtain
\begin{equation}
\alpha _{3}=0\,,\ \text{\ \ \ \ }\gamma _{4}=\gamma _{5}=\gamma _{6}=\gamma
_{7}=\gamma _{8}=\gamma _{9}=\gamma _{10}=\gamma _{11}=0\,.
\end{equation}
Therefore, we are left with
\begin{align}
\mathcal{L}& =\epsilon _{abcd}R^{ab}\wedge V^{c}\wedge V^{d}+\gamma _{3}\bar{\psi }\wedge \gamma _{a}\gamma _{5}\Psi \wedge V^{a}+\gamma _{2}\epsilon_{abcd}\mathcal{F}^{ab}\wedge V^{c}\wedge V^{d} + \notag \\
& +\alpha _{1}^{\prime }e^{2} \; \epsilon _{abcd}V^{a}\wedge V^{b}\wedge V^{c}\wedge V^{d}+\alpha_{2}^{\prime }e\; \epsilon _{abcd}\bar{\psi }\wedge \gamma ^{ab}\psi \wedge V^{c}\wedge V^{d}\,,
\end{align}
where we have also consistently set $\gamma _{1}=1$. Using the definition of the $AdS$-Lorentz curvatures (\ref{curva1})-(\ref{curva4})
and the gamma matrices identities
\begin{align}
& \gamma_{ab}\gamma_5 = \gamma_5 \gamma_{ab} = -\frac{1}{2}\epsilon_{abcd}\gamma^{cd}, \label{gmid1} \\
& \gamma^c \gamma^{ab} = -2 \gamma^{[a}\delta^{b]}_{\; c} - \epsilon^{abcd}\gamma_5 \gamma_d , \label{gmid2}
\end{align}
we can then write
\begin{align}
\mathcal{L}& =\epsilon _{abcd}R^{ab}\wedge V^{c}\wedge V^{d}+\gamma _{3}\bar{\psi }\wedge \gamma _{a}\gamma _{5}D\psi \wedge V^{a}+\frac{\gamma _{3}}{4}\epsilon _{abcd}k^{ab}\wedge \bar{\psi}\wedge \gamma ^{c}\psi \wedge V^{d} + \notag \\
& +\gamma _{2}\epsilon _{abcd}\left( D k^{ab}+k_{\;c}^{a}\wedge k^{cb}\right)\wedge V^{c}\wedge V^{d}+\left( \alpha _{1}^{\prime
}+4\gamma _{2}\right) e^{2}\; \epsilon _{abcd}V^{a}\wedge V^{b}\wedge V^{c}\wedge V^{d} + \notag \\
& +\left( \alpha _{2}^{\prime }+\gamma _{2}+\frac{\gamma _{3}}{2}\right)
e\; \epsilon _{abcd}\bar{\psi }\wedge \gamma ^{ab}\psi \wedge V^{c}\wedge V^{d}\,,
\end{align}
where $D\equiv D_\omega$ (here and in the following, for simplicity, we omit the lower index $\omega$ denoting the Lorentz covariant exterior derivative).

We can now determine the coefficients $\alpha _{1}^{\prime }$, $\alpha
_{2}^{\prime }$, $\gamma _{2}$, and $\gamma _{3}$ through the study of the field equations. In order to find them out, let us compute the variation of the Lagrangian with respect to the different fields. The variation of the
Lagrangian with respect to the spin connection $\omega ^{ab} $ reads
\begin{equation}
\delta _{\omega }\mathcal{L}=2\epsilon _{abcd}\delta \omega ^{ab}\left(
D V^{c}+\gamma _{2}\; k_{\;f}^{c}\wedge V^{f}-\frac{1}{8}\gamma _{3}\bar{\psi }\wedge \gamma ^{c}\psi \right) \wedge V^{d}\,.
\end{equation}%
Here we see that, if $\gamma _{2}=1$ and $\gamma _{3}=4$, then $\delta _{\omega } \mathcal{L}=0$ leads to the field equation for the $AdS$-Lorentz supertorsion:
\begin{equation}
\epsilon _{abcd}R^{c} \wedge V^{d}=0\,.
\end{equation}

The variation of the Lagrangian with respect to $k^{ab}$ gives the same result. This means that the $1$-form field $k^{ab}$ does not add any constraints to the on-shell theory. 

On the other hand, by varying the Lagrangian with respect to the vielbein $V^{a}$ we get
\begin{equation}
2\epsilon _{abcd}(R^{ab}\wedge V^{c}+\mathcal{F}^{ab}\wedge V^{c})+4\bar{\psi }\wedge \gamma
_{d}\gamma _{5}\Psi =0\,,
\end{equation}
where we have exploited
\begin{equation}
\epsilon _{abcd}k^{ab}\wedge \bar{\psi }\wedge \gamma ^{c}\psi =\bar{\psi }\wedge \gamma _{d}\gamma _{5}k^{ab} \wedge \gamma _{ab}\psi \,,
\end{equation}
and where we have set $\alpha _{1}^{\prime }=-2$ and $\alpha _{2}^{\prime}=-1$, in order to recover the $AdS$-Lorentz curvatures. 

Finally, from the variation with respect to the gravitino field $\psi $, we find the field equation
\begin{equation}
8V^{a}\wedge \gamma _{a}\gamma _{5}\Psi +4\gamma _{a}\gamma _{5}\psi \wedge R^{a}=0\,.
\end{equation}

Summarizing, we have found the following values for the coefficients:
\begin{equation}
\alpha _{1}^{\prime }=-2,\text{ \ \ }\alpha _{2}^{\prime }=-1,\text{ \ \ }%
\gamma _{2}=1,\text{ \ \ }\gamma _{3}=4\,.
\end{equation}

We have thus completely determined the bulk Lagrangian $\mathcal{L}_{bulk}$
of the theory, which can
be written in terms of the Lorentz-type curvatures (\ref{curlor1})-(\ref{curlor4}) as follows:
\begin{align}
\mathcal{L}_{bulk}& =\epsilon _{abcd}R^{ab}\wedge V^{c}\wedge V^{d}+\epsilon
_{abcd}F^{ab}\wedge V^{c}\wedge V^{d}+4\bar{\psi }\wedge \gamma _{a}\gamma
_{5}\rho \wedge V^{a}  + \notag \\
& +2e^{2} \; \epsilon _{abcd}V^{a}\wedge V^{b}\wedge V^{c}\wedge V^{d}+2e\; \epsilon _{abcd}
\bar{\psi }\wedge \gamma ^{ab}\psi \wedge V^{c}\wedge V^{d}\,. \label{bulklagrangian}
\end{align}

Note the presence in (\ref{bulklagrangian}) of $e=\frac{1}{2l}$, being $l$ the radius of the asymptotic $AdS$ geometry; the equations of motion of the Lagrangian admit an $AdS$ vacuum solution with cosmological constant (proportional to $e^2$).

Notice that (\ref{bulklagrangian}) has been written as a first-order Lagrangian, and the field equations for the spin connection $\omega^{ab}$ implies (up to boundary terms, which will be considered in a while) the vanishing on-shell of the supertorsion $R^a$ defined in equation (\ref{curlor2}). This is in agreement with the condition $R^a=0$ that we have previously imposed\footnote{When this on-shell condition is imposed, we say that we are working in the second order formalism, where the spin connection $\omega^{ab}$ is torsionless and given in terms of the vielbein of space-time.} in order to find the on-shell curvature parametrizations (\ref{eqn:parRab})-(\ref{eqn:parrho}) by studying the (on-shell) Bianchi identities.

In this way, we have introduced a generalized supersymmetric cosmological constant term to a supergravity theory in an alternative way.

Let us also observe that the bosonic $1$-form field $k^{ab}$ appears, through the Lorentz-type curvatures, in the bulk Lagrangian (\ref{bulklagrangian}) and, as we have previously said, the field equation we obtain from the variation of the bulk Lagrangian with respect to $k^{ab}$ is just a double of the one obtained by varying the Lagrangian with respect to $\omega^{ab}$, namely the field equation for the $AdS$-Lorentz supertorsion: $\epsilon_{abcd}R^c \wedge V^d=0$.

\subsection{Supersymmetry transformation laws}

The parametrizations (\ref{eqn:parRab})-(\ref{eqn:parrho}) that we have previously obtained allow to write the supersymmetry transformation laws in a direct way. Indeed, in the rheonomic formalism, the transformations on space-time
are given by (see Chapter \ref{chapter 2}):
\begin{equation}
\delta \mu ^{A}=\left( \nabla \epsilon \right) ^{A}+ \imath _{\epsilon }F^{A}\,,
\end{equation}
where $\epsilon ^{A}\equiv \left( \epsilon ^{ab},\epsilon ^{a},\varepsilon
^{ab},\epsilon^\alpha \right) $; then, restricting to supersymmetry transformations, we have $\epsilon ^{ab}=\epsilon ^{a}=\varepsilon ^{ab}=0$, and
\begin{align}
& \imath_{\epsilon }R^{ab}= \bar{\Theta }_{\;\;c}^{ab}\epsilon
V^{c}\,,  \notag \\
& \imath_{\epsilon }R^{a}=0\,, \notag \\
& \imath_{\epsilon }F^{ab}= \bar{\Lambda }_{\;\;c}^{ab}\epsilon
V^{c}+2 e \; \bar{\epsilon }\gamma ^{ab}\psi \,,  \notag\\
& \imath_{\epsilon }\rho =-e \; \gamma _{a}\epsilon V^{a}\,, \label{eqn:susy}
\end{align}
which provide the following supersymmetry transformation laws:
\begin{align}
& \delta _{\epsilon }\omega ^{ab} = \bar{\Theta }_{\;\;c}^{ab}\epsilon
V^{c}\,, \\
& \delta _{\epsilon }V^{a} =\bar{\epsilon}\gamma ^{a}\psi \text{\thinspace },
\\
& \delta _{\epsilon }k^{ab} =\bar{\Lambda }_{\;\;c}^{ab}\epsilon V^{c} + 2 e\; \bar{\epsilon }\gamma ^{ab}\psi \,, \\
& \delta _{\epsilon }\psi  = d\epsilon +\frac{1}{4}\omega ^{ab}\gamma
_{ab}\epsilon +\frac{1}{4}k^{ab}\gamma _{ab}\epsilon - e\; \gamma
_{a}\epsilon V^{a}\,.
\end{align}

Under these supersymmetry transformations of the fields on space-time, the space-time Lagrangian previously introduced results to be invariant up to boundary terms. 
As we have already mentioned, in the case in which the space-time background has a \textit{non-trivial boundary} we have to check explicitly the condition (\ref{cheduepalle}).

\section{Including boundary (topological) terms}

In the following, on the same lines of \cite{bdy}, we analyze the supersymmetry invariance of the Lagrangian in the presence of a non-trivial space-time boundary and, in particular, we present the explicit boundary
terms required to recover the supersymmetry invariance of the full Lagrangian (bulk plus boundary).

Let us consider the bulk Lagrangian (\ref{bulklagrangian}), which we report here for completeness:
\begin{align}
\mathcal{L}_{bulk}& =\epsilon _{abcd}R^{ab}\wedge V^{c}\wedge V^{d} + 4\bar{\psi }\wedge V^a \wedge \gamma _{a}\gamma
_{5} \rho + \notag \\
& +\epsilon_{abcd}\left( F^{ab}\wedge V^{c}\wedge V^{d} + 2 e \;V^a \wedge V^b \wedge \bar{\psi}\wedge \gamma^{cd}\psi + 2e^{2} \;V^{a}\wedge V^{b}\wedge V^{c}\wedge V^{d} \right)\,. \label{bulklagrangiannew}
\end{align}
The supersymmetry invariance in the bulk is satisfied \textit{on-shell} (where $R^a=0$).
Nevertheless, for this theory the boundary invariance of the Lagrangian under supersymmetry is \textit{not} trivially satisfied, and the condition (\ref{cheduepalle}) has to be checked in an explicit way (see Chapter \ref{chapter 2} for details). In fact, we find that, if the fields do \textit{not} asymptotically vanish at the boundary, we have
\begin{equation}
\imath _{\epsilon }\mathcal{L}_{bulk} |_{\partial \mathcal{M}_{4}}\neq
0\,.
\end{equation}

In order to restore the supersymmetry invariance of the theory, we must provide a more subtle approach. In particular, it is possible to modify the bulk Lagrangian by \textit{adding boundary (topological) terms}, which do not alter the bulk Lagrangian and only affect the boundary Lagrangian, so that the condition (\ref{susymiaunewbella}), which is the condition for the supersymmetry invariance of the theory in our geometric framework, is still satisfied.

The only possible boundary contributions (topological $4$-forms) that are compatible with parity, Lorentz-like
invariance, and $\mathcal{N}=1$ supersymmetry are the following ones:
\begin{align}
& d\left( \varpi ^{ab}\wedge N^{cd}+\varpi _{f}^{a}\wedge \varpi ^{fb} \wedge \varpi ^{cd}\right) \epsilon _{abcd} =\epsilon _{abcd}N^{ab}\wedge N^{cd}\,, \label{boundarino1} \\
& d\left( \bar{\rho}\wedge \gamma _{5}\psi \right)  =\bar{\rho}\wedge \gamma _{5}\rho +\frac{1}{8}\epsilon _{abcd}N^{ab} \wedge \bar{\psi}\wedge \gamma ^{cd}\psi \,, \label{boundarino2}
\end{align}
where we have defined $\varpi ^{ab}=\omega ^{ab}+k^{ab}$ and $N^{ab}=R^{ab}+F^{ab}$, with $R^{ab}$ and $F^{ab}$ given by equations (\ref{curlor1}) and (\ref{curlor3}),
respectively. Note that $\varpi ^{ab}$ and $N^{ab}$ can be thought as related to a Lorentz-like generator $M_{ab}\equiv J_{ab}+Z_{ab}$ (see equations (\ref{ADSL401})-(\ref{ADSL408})). 

The boundary terms written above correspond to the following boundary Lagrangian:
\begin{align}
\mathcal{L}_{bdy}& =\alpha \epsilon _{abcd}\left( R^{ab}\wedge R^{cd}+2 R^{ab}\wedge F^{cd}+ F^{ab}\wedge F^{cd}\right) +  \notag \\
& +\beta \left( \bar{\rho}\wedge \gamma _{5}\rho +\frac{1}{8}\epsilon _{abcd}R^{ab}\wedge \bar{\psi}\wedge \gamma ^{cd}\psi +\frac{1}{8}\epsilon _{abcd}F^{ab}\wedge \bar{\psi}\wedge \gamma ^{cd}\psi \right) \,.
\end{align}
Observe that the structure of a supersymmetric Gauss-Bonnet like term appears. 

Let us then consider the following ``full'' Lagrangian (bulk plus boundary):
\begin{align}\label{tuttalal}
\mathcal{L}_{full}& =\mathcal{L}_{bulk}+\mathcal{L}_{bdy} = \notag \\
& =\epsilon _{abcd}R^{ab}\wedge V^{c}\wedge V^{d}+4\bar{\psi}\wedge V^{a}\wedge \gamma_{a}\gamma _{5}\rho + \notag \\
& + \epsilon _{abcd}\left( F^{ab}\wedge V^{c}\wedge V^{d}+2e\; V^{a}\wedge V^{b}\wedge \bar{\psi}\wedge \gamma ^{cd}\psi +2e^{2}\; V^{a}\wedge V^{b}\wedge V^{c}\wedge V^{d}
\right) + \notag \\
& +\alpha \epsilon _{abcd}\left(R^{ab}\wedge R^{cd}+2\epsilon
_{abcd}R^{ab}\wedge F^{cd}+\epsilon _{abcd}F^{ab}\wedge F^{cd}\right)   + \notag \\
& +\beta \left( \frac{1}{8}\epsilon _{abcd}R^{ab}\wedge \bar{\psi}\wedge \gamma^{cd}\psi +\frac{1}{8}\epsilon _{abcd}F^{ab}\wedge \bar{\psi}\wedge \gamma^{cd}\psi +\bar{\rho}\wedge \gamma _{5}\rho \right) \,.
\end{align}%
Due to the $e^{\; -2}$-homogeneous scaling of the Lagrangian, we have that the coefficients $\alpha $ and $\beta $ must be proportional to $e^{\; -2}$ and $e^{\; -1}$, respectively.

Let us now study the conditions under which  the full Lagrangian (\ref{tuttalal}) is invariant under supersymmetry.

As we have previously pointed out, the supersymmetry invariance of the full Lagrangian $\mathcal{L}_{full}$ requires
\begin{equation}
\delta _{\epsilon }\mathcal{L}_{full}=\ell _{\epsilon }\mathcal{L}_{full}=\imath
_{\epsilon }d\mathcal{L}_{full}+d\left( \imath _{\epsilon }\mathcal{L}
_{full}\right) =0\,.
\end{equation}
Now, since the boundary terms that we have introduced, namely (\ref{boundarino1}) and (\ref{boundarino2}), are total differentials, the condition for supersymmetry in the bulk, that is $\imath _{\epsilon }d\mathcal{L}_{full}=0$, is trivially satisfied.
Thus, the supersymmetry invariance of the full Lagrangian $\mathcal{L}_{full}$ just requires to verify that, for a suitable choice of $\alpha$ and $\beta$, the condition $\imath _{\epsilon }\left( \mathcal{L}_{full}\right)=0$ (modulo an exact differential) holds \textit{on the boundary}, namely $\imath _{\epsilon }\left( \mathcal{L}_{full}\right)  |_{\partial \mathcal{M}}=0$. We have:
\begin{align}
\imath _{\epsilon }\left( \mathcal{L}_{full} \right) & =\epsilon
_{abcd}\imath _{\epsilon }\left( R^{ab}+F^{ab}\right)\wedge
V^{c}\wedge V^{d}+4\bar{\epsilon}V^{a}\wedge \gamma _{a}\gamma _{5}\rho +4\bar{\psi}\wedge V^{a}\gamma _{a}\gamma _{5}\imath _{\epsilon } \left(\rho \right)  + \notag \\
& +\epsilon _{abcd}4e\; V^{a}\wedge V^{b}\wedge \bar{\epsilon}\gamma ^{cd}\psi + \notag \\
& + 2\imath
_{\epsilon }\left( R^{ab}+ F^{ab}\right) \left\{ \alpha
R^{cd}+\frac{\beta }{16}\bar{\psi}\wedge \gamma ^{cd}\psi +\alpha
F^{cd}\right\} \epsilon _{abcd} + \notag \\
& +\frac{\beta }{4}\epsilon _{abcd}\left( R^{ab}+F^{ab}\right)\wedge \bar{\epsilon}\gamma ^{cd}\psi +2\beta  \imath _{\epsilon
}\left( \bar{\rho} \right) \gamma _{5}\rho \,.
\end{align}
In general, this is not zero, but \textit{its projection on the boundary should be zero}. Indeed,
in the presence of a boundary, the field equations in superspace for the Lagrangian (\ref{tuttalal}) acquire non-trivial boundary contributions, that result in the following constraints (which hold on the boundary):
\begin{equation}\label{constbdy}
\left. \frac{\delta \mathcal{L}_{full}}{\delta \mu ^{A}}\right\vert
_{\partial \mathcal{M}}=0 \;\; \Rightarrow \;\;
\left\{
\begin{aligned}
& \left( R^{ab}+F^{ab}\right) |_{\partial \mathcal{M}} =-\frac{1}{2\alpha }V^{a}\wedge V^{b}-\frac{\beta }{16\alpha }\bar{\psi}\wedge \gamma
^{ab}\psi \,, \\
& \rho |_{\partial \mathcal{M}} =\frac{2}{\beta }V^{a}\wedge \gamma _{a}\psi \, .
\end{aligned}
\right.
\end{equation}
We can thus see that the supercurvatures on the boundary are not dynamical, but fixed to constant values in the anholonomic basis of the bosonic and fermionic vielbein (analogously to what was found in Ref. \cite{bdy}).
Then, upon use of (\ref{constbdy}), on the boundary we have
\begin{align}
\imath _{\epsilon }\left( \mathcal{L}_{full}\right) |_{\partial \mathcal{M}}& =-\frac{\beta }{8\alpha }\epsilon _{abcd}\bar{\epsilon}\gamma ^{ab}\psi \wedge V^{c}\wedge V^{d}+4\bar{\epsilon}V^{a}\wedge \gamma _{a}\gamma _{5}\rho + \notag \\
& + \frac{8}{\beta }\bar{\psi}\wedge V^{a} \wedge \gamma _{a}\gamma _{5}V^{b}\gamma _{b}\epsilon + 4e\; \epsilon _{abcd}V^{a}\wedge V^{b}\wedge \bar{\epsilon}\gamma ^{cd}\psi + \notag \\
& -\left(\frac{\beta }{4\alpha }\bar{\epsilon}\gamma ^{ab}\psi \right)\wedge \left\{ \alpha
R^{cd}+\frac{\beta }{16}\bar{\psi}\wedge \gamma ^{cd}\psi +\alpha
F^{cd}\right\} \epsilon _{abcd} + \notag \\
& +\frac{\beta }{4}\epsilon _{abcd}\left\{ R^{ab}\wedge \bar{\epsilon}
\gamma ^{cd}\psi +F^{ab}\wedge \bar{\epsilon}\gamma ^{cd}\psi \right\} -4\bar{\epsilon}\gamma _{a}V^{a}\wedge \gamma _{5}\rho \,.
\end{align}
Subsequently, using the Fierz identity $\gamma _{ab}\psi \wedge  \bar{\psi} \wedge \gamma ^{ab}\psi =0$,\footnote{The other useful Fierz identity for the study of the $\mathcal{N}=1$, $D=4$ theory is $\gamma _{a}\psi \wedge \bar{\psi} \wedge \gamma ^{a}\psi =0$.} we can write
\begin{equation}
\imath _{\epsilon }\left( \mathcal{L}_{full}\right) |_{\partial \mathcal{M}
}=\left( 4e-\frac{\beta }{8\alpha }\right) \epsilon _{abcd}\bar{%
\epsilon}\gamma ^{ab}\psi \wedge V^{c}\wedge V^{d}+\frac{8}{\beta }\bar{\psi}\wedge V^{a}\wedge \gamma
_{a}\gamma _{5}V^{b}\gamma _{b}\epsilon \,.
\end{equation}
After that, using the gamma matrices identity (\ref{gmid1}), we find that $\imath _{\epsilon }\left( \mathcal{L}_{full}\right) |_{\partial \mathcal{M}}=0$ if the following relation between the coefficients $\alpha $ and $\beta $ holds:
\begin{equation}
\frac{\beta }{4\alpha }+\frac{8}{\beta }=8e\,.
\end{equation}

Solving the above equation for $\beta $, we obtain
\begin{equation}
\beta =16e\alpha \left( 1\pm \sqrt{1-\frac{1}{8e^{2}\alpha }}\right)
\,.
\end{equation}
Interestingly, by setting the square root to zero, which implies
\begin{equation}
\alpha =\frac{1}{8e^{2}} \; \; \Rightarrow \; \; \beta =\frac{2}{e}\,,
\end{equation}
we recover the following $2$-form supercurvatures:
\begin{align}
& \mathcal{N}^{ab} = R^{ab}+F^{ab}+4e^{2}\; V^{a}\wedge V^{b}+e\; \bar{\psi}\wedge \gamma ^{ab}\psi \,,  \label{2f1} \\
& \Psi  = \rho -e\;  \gamma _{a}\psi \wedge V^{a} \,\,,  \label{2f2} \\
& R^{a} = D V^{a}+k_{\;b}^{a}\wedge V^{b}-\frac{1}{2}\bar{\psi}\wedge \gamma
^{a}\psi \,, \label{2f3}
\end{align}
which reproduce the $AdS$-Lorentz curvatures with
\begin{equation}
\mathcal{N}^{ab} =R^{ab}+\mathcal{F}^{ab}\,,
\end{equation}
where
\begin{align}
& R^{ab} =d\omega ^{ab}- \omega^{ac}\wedge \omega_c^{\; b} \,, \\
& \mathcal{F}^{ab}=F^{ab}+4e^{2}\; V^{a}\wedge V^{b}+e\; \bar{\psi}\wedge \gamma
^{ab}\psi \,.
\end{align}

Finally, the full Lagrangian (\ref{tuttalal}), written in terms of the $2$-form supercurvatures (\ref{2f1}) and (\ref{2f2}), can be recast as a MacDowell-Mansouri like form \cite{MM}, that is:
\begin{equation}
\mathcal{L}_{full}=\frac{1}{8e^{2}}\epsilon _{abcd}\mathcal{N}^{ab}\wedge \mathcal{N}^{cd}+\frac{2}{e}\bar{\Psi} \wedge \gamma _{5}\Psi \,,  \label{LFULL}
\end{equation}
whose boundary term corresponds to a supersymmetric Gauss-Bonnet like term:
\begin{align}
\mathcal{L}_{bdy}& =\frac{1}{8 e^2} \epsilon _{abcd}\left( R^{ab}\wedge R^{cd}+2 R^{ab}\wedge F^{cd}+F^{ab}\wedge F^{cd}\right) +  \notag \\
& +\frac{2}{e} \left( \frac{1}{8}\epsilon _{abcd}R^{ab}\wedge \bar{\psi}\wedge \gamma ^{cd}\psi +\frac{1}{8}\epsilon _{abcd}F^{ab}\wedge \bar{\psi}\wedge \gamma ^{cd}\psi + \bar{\rho}\wedge \gamma _{5}\rho \right) \,.
\end{align}

We have thus shown that this topological Gauss-Bonnet like term allows to recover the supersymmetry invariance of the (on-shell) theory in the presence of a non-trivial boundary. 
As we have already mentioned, the same phenomenon occurs in pure gravity, where the Gauss-Bonnet term ensures the invariance of the Lagrangian in the presence of a non-trivial space-time boundary. 

The supersymmetric extension of the Gauss-Bonnet term was also introduced in \cite{bdy} in order to restore the supersymmetry invariance of $\mathcal{N}=1$ and $\mathcal{N}=2$, $OSp(\mathcal{N},4)$ supergravities in the presence of a non-trivial boundary.

Notice that, in terms of the supercurvatures (\ref{2f1}) and (\ref{2f2}), the boundary conditions on the field-strengths (\ref{constbdy}) take the following simple form: $\mathcal{N}^{ab} |_{\partial \mathcal{M}}=0$ and $\Psi  |_{\partial \mathcal{M}}=0$. This means that the linear combination $R^{ab} + \mathcal{F}^{ab}$ of the $AdS$-Lorentz supercurvatures and the $AdS$-Lorentz supercurvature $\Psi$ vanish at the boundary (analogously to what was found in Ref. \cite{bdy} in the case of $\mathcal{N}=1$ and $\mathcal{N}=2$, $OSp(\mathcal{N},4)$ supergravities).

Concerning the bulk Lagrangian, let us observe that it reproduces the generalized supersymmetric cosmological terms presented in \cite{CRSnew} and that it corresponds to a supersymmetric extension of the results found in \cite{Salgado, deAzcarraga:2010sw}.


Summarizing our results, in this chapter (following \cite{Gauss}) we have presented the explicit construction of the $\mathcal{N} = 1$, $D = 4$ $AdS$-Lorentz supergravity bulk Lagragian in the rheonomic framework. In particular, we have shown an alternative way to introduce a generalized supersymmetric cosmological term to supergravity. Subsequently, we have studied the supersymmetry invariance of the Lagrangian in the presence of a non-trivial boundary, and we have found that the supersymmetric
extension of a Gauss-Bonnet like term is required in order to restore the supersymmetry invariance of the full Lagrangian (bulk plus boundary). 

As we have already said in Chapter \ref{chapter 2}, the inclusion of boundary terms has proved to be fundamental for the study of the so called $AdS$/CFT duality. In this context, as far as the metric field is concerned, the bulk metric is divergent near the boundary; however, these divergences can be successfully eliminated through a procedure called ``holographic renormalization'' (see, for example, Ref. \cite{Skenderis:2002wp}), with the inclusion of appropriate counterterms at the boundary. 
In this scenario, we argue that the presence in the boundary of the $k^{ab}$ fields appearing in our model could somehow play a role in allowing the regularization of the action in the holographic renormalization language.

\chapter{Hidden gauge structure of Free Differential Algebras} \label{chapter 5}

\ifpdf
    \graphicspath{{Chapter5/Figs/}{Chapter5/Figs/PDF/}{Chapter5/Figs/}}
\else
    \graphicspath{{Chapter5/Figs/Vector/}{Chapter5/Figs/}}
\fi

In this chapter, I will discuss the core of my PhD research, basing my discussion on the works \cite{Hidden} and \cite{Malg} that I have done in collaboration with L. Andrianopoli and R. D'Auria. 
An introduction to the physical context and a detailed review of the hidden superalgebra underlying $D=11$ supergravity can be found in Section \ref{11D} of Chapter \ref{chapter 2}.

As shown in Ref. \cite{D'AuriaFre} and recalled in Chapter \ref{chapter 2}, the structure of the full superalgebra hidden in the superymmetric FDA describing $D=11$ supergravity also requires, for being equivalent to the FDA in superspace, the presence of a \textit{fermionic nilpotent charge}, that has been named $Q'$.
This fact is fully general, and a hidden superalgebra underlying the supersymmetric FDA (containing, at least, one nilpotent fermionic generator) can be constructed for any supergravity theory in the presence of antisymmetric tensor fields.

As we have already mentioned in Chapter \ref{chapter 2}, the role of the extra, nilpotent, fermionic generator $Q'$ and
its physical meaning was much less investigated with respect to that of the bosonic almost-central charges.

In the present chapter, following the work \cite{Hidden}, we further analyze the superalgebra hidden in all the supersymmetric FDAs and clarify the role played by its generators (mainly, the extra, nilpotent, fermionic ones).
In particular, we will discuss in detail the gauge structure of the supersymmetric $D=11$ FDA in relation to its hidden gauge superalgebra, focusing on the role played by the nilpotent generator $Q'$. Then, we will consider minimal $\mathcal{N}=2$ supergravity in $D=7$ in order to test the universality of the construction and to investigate possible extensions of the underlying superalgebra of \cite{D'AuriaFre} (as we have already said in Chapter \ref{chapter 2}, we refer to this latter superalgebra as ``DF-algebra'').

The FDA of the minimal $\mathcal{N}=2$, $D=7$ supergravity theory is particularly rich, since it includes, besides a triplet of gauge vectors $A^x$, also a $2$-form $B^{(2)}$, a $3$-form $B^{(3)}$ (related to $B^{(2)}$ by Hodge duality of the corresponding field-strengths), and a triplet of $4$-forms $A^{x|(4)}$ (related to $A^{x}$ by Hodge duality of the corresponding field-strengths). This theory can be obtained by dimensional reduction, on a four-dimensional compact manifold, preserving only half of the supersymmetries, from $D=11$ supergravity.
We will give the parametrization in terms of $1$-forms of the mutually non-local fields $B^{(2)}$ and $B^{(3)}$, obtaining, in this way, the corresponding superalgebra hidden in the supersymmetric $D=7$ FDA. 

We will also consider the dimensional reduction of the $D=11$ FDA to the $D=7$ one on an orbifold $T^4/Z_2$, showing the conditions under which the $D=7$ theory can be obtained by dimensional reduction of the eleven-dimensional one of Section \ref{11D} of Chapter \ref{chapter 2}.

We will see that, in general, one needs more than one nilpotent fermionic generator to construct the fully extended superalgebra hidden in the supersymmetric FDA (this will be the case, for example, of the minimal supersymmetric $D=7$ FDA, where we will find that \textit{two} nilpotent fermionic generators are required for the equivalence of the hidden superalgebra to the FDA).
Actually, as we have subsequently shown in \cite{Malg}, also the $D=11$ case admits the presence of (at least) two extra, nilpotent, fermionic generators, in the sense that the $1$-form $\eta$ dual to the nilpotent fermionic generator $Q'$ can be parted into two contributions, which close separately (that is, which have two different integrability conditions). We will recall in some detail this aspect in Section \ref{furthanal} of the current chapter. 

The main result of \cite{Hidden} has been to disclose the physical interpretation of the fermionic hidden generators: As we will recall and review in detail in the following, they have a cohomological meaning. 
In particular, to clarify the crucial role played by the nilpotent hidden fermionic generator(s), we will consider a singular limit, where the associated spinor $1$-form(s) goes to zero. In this limit, the unphysical degrees of freedom get mixed with the physical directions of the superspace, and all the generators of the hidden superalgebra act as generators of external diffeomorphisms.
On the contrary, in the presence of the spinor $1$-form(s), the hidden supergroup acquires a \textit{principal fiber bundle structure}, allowing to separate in a dynamical way the physical directions of superspace, generated by the supervielbein $\lbrace V^a,\Psi \rbrace$, from the non-physical ones, such that one recovers the gauge invariance of the FDA.

On the other hand, considering the bosonic hidden generators of the hidden algebra (we will call $H_b$ the corresponding tangent space directions of the hidden group manifold), we will show that they are associated with internal diffeomorphisms of the supersymmetric FDA in $D$ dimensions. More precisely, once a $p$-form $A^{(p)}$ of the FDA is parametrized in terms of the hidden $1$-forms, the contraction of $A^{(p)}$ along a generic tangent vector $\vec z\in H_b$ gives a $(p-1)$-form gauge parameter, and the Lie derivative of the FDA along a tangent vector $\vec z$ gives a gauge transformation leaving the FDA invariant.

This construction is not limited to the supergravity theories we have considered in \cite{Hidden}. In particular, the FDA of $D=10$ Type IIA supergravity, which naturally descends from the $D=11$ theory, includes a $2$-form field $B^{(2)}$ (also appearing in all superstring-related supergravities), which has a natural understanding in terms of the antisymmetric $3$-form $A^{(3)}$ of $D=11$ supergravity. The corresponding hidden bosonic $1$-form field, $B^a$, is associated with a charge $Z_a$ which carries a Lorentz index. In the fully extended hidden superalgebra in any space-time dimension $D\leq 10$, $P_a$ and $Z_a$ appear on the same footing, and the action of the hidden superalgebra includes automorphisms interchanging them. When some of the space-time directions are compactified on circles, these automorphisms are associated with T-duality transformations interchanging momentum with winding in the compact directions.

The structure described above appears to be strongly related to the one of the generalized geometry framework \cite{Hitchin:2004ut, Gualtieri:2003dx, Hull:2005hk, Dabholkar:2005ve, Grana:2008yw} and to its extensions to \textit{$M$-theory} \cite{Hull:2007zu, Pacheco:2008ps, Coimbra:2012af}, \textit{Double Field Theory} (DFT) \cite{Hull:2009mi, lust, Hull:2009zb, Hohm:2010jy, Hohm:2010pp, Hull:2014mxa}, and \textit{Exceptional Field Theory} (EFT) \cite{Hohm:2013pua, Hohm:2013uia, Hohm:2014qga}.
Thus, we expect that our approach and formalism could be useful in these contexts.

In Section \ref{furthanal} of this chapter, we will clarify the relations occurring among the $\osp(1|32)$ superalgebra, the $M$-algebra, and the DF-algebra.
In this context, we will also further discuss on the crucial role played by the $1$-form spinor $\eta$ for the $4$-form cohomology of the $D=11$ theory on superspace (basing our discussion on the work \cite{Malg}).

We will limit ourselves to consider FDAs and underlying superalgebras corresponding to the ground state of supergravity theories, that is the ``vacuum'' (defined by the condition that all the supercurvatures vanish). 
We will not consider the full dynamical content of the theories (out of the vacuum); some progress on this topic were obtained in \cite{Bandos:2005}.

Our notations, conventions, and some technical details, can be found in Appendix \ref{apphidden} (as well as in Ref. \cite{Hidden}).

\section{FDA gauge structure and $D=11$ supergravity}\label{CE}

As we have already said in Chapter \ref{chapter 2}, the action of $D=11$ supergravity was first constructed in \cite{Cremmer}. The theory has a field content given by the metric $g_{\mu\nu}$, a $3$-index antisymmetric tensor $A_{\mu\nu\rho}$ ($\mu,\nu,\rho,\ldots =0,1,\ldots ,D-1$), and a single Majorana gravitino $\Psi_\mu$.\footnote{We denote by capital case $\Psi$ the gravitino in eleven dimensions.}

An important task to accomplish in the context of $D=11$ supergravity was the identification of the supergroup underlying the theory.
The authors of \cite{Cremmer} proposed $\mathfrak{osp}(1 \vert 32)$ as the most likely candidate. However, the field $A_{\mu \nu \rho}$ cannot be interpreted as the potential of a generator in a supergroup.

The structure of this same theory was reconsidered in \cite{D'AuriaFre}, in the (supersymmetric) FDAs framework, using the superspace geometric approach (namely, in its dual Maurer-Cartan
formulation), and the existence of a superalgebra underlying the $D=11$ supergravity theory was presented for the first time (for details, see the review in Chapter \ref{chapter 2}).

The aim of this section is to analyze in detail the hidden gauge structure of the FDA describing $D=11$ supergravity, in the case in which the exterior $p$-forms are parametrized in terms of the hidden $1$-forms $B^{ab}$, $B^{a_1 \ldots  a_5}$, and $\eta$ (plus, as obvious, the supervielbein).
In particular, we will investigate the conditions under which the gauge invariance of the FDA is realized once $A^{(3)}$ is expressed in terms of hidden $1$-forms (see the theoretical background in Chapter \ref{chapter 2}).

Let us start by recalling and discussing some aspects of the construction of $D=11$ supergravity in the FDA framework, playing particular attention to its structure once the parametrization in terms of $1$-forms has been implemented.

In the case of $D=11$ supergravity, the first step of the construction outlined in Section \ref{revfda1} of Chapter \ref{chapter 2} is the introduction of the $H$-relative $4$-cocycle
\begin{equation}
\frac{1}{2}\bar{\Psi}\wedge \Gamma_{ab}\Psi \wedge V^a \wedge V^b,
\end{equation}
which allows to define the 3-form $A^{(3)}$ that appears in the FDA satisfying
\begin{equation}
 dA^{(3)}=\frac 12 \bar{\Psi}\wedge \Gamma_{ab}\Psi \wedge V^a \wedge V^b\,\label{coc1}
\end{equation}
(that is equation (\ref{FDA11a3}) of Chapter \ref{chapter 2}).
Including this $3$-form  $A^{(3)}$ in the basis of the relative cohomology of the supersymmetric FDA, we can perform a second step and construct a new cocycle of order seven (allowing the introduction of the $6$-form $B^{(6)}$), namely 
\begin{equation}
15 A^{(3)}\wedge  dA^{(3)} +\frac{\ii}{2}\bar{\Psi}\wedge \Gamma_{a_1    \ldots         a_5}\Psi \wedge V^{a_1} \wedge     \ldots         V^{a_5}, 
\end{equation}
satisfying
\begin{equation}
 dB^{(6)}=15 A^{(3)}\wedge  dA^{(3)} +\frac{\ii}{2}\bar{\Psi}\wedge \Gamma_{a_1    \ldots         a_5}\Psi \wedge V^{a_1} \wedge     \ldots         V^{a_5}\, \label{coc2}
\end{equation}
(that is equation (\ref{FDA11b6}) of Chapter \ref{chapter 2}).
The fact that the two cochains (\ref{coc1}) and (\ref{coc2}) are indeed cocycles is due to Fierz identities in $D=11$ (see Section \ref{fierz} of Appendix \ref{apphidden}). As we can see, the second step defined above requires to enlarge the CE-relative cohomolgy to include the $3$-form $A^{(3)}$.

We wish to remark that the inclusion of a new $p$-form, which is a gauge potential enjoying a gauge freedom, in the basis of the $H$-relative CE-cohomology of the FDA, is physically meaningful only if the whole of the FDA is \textit{gauge invariant}; this, in particular, requires that the non-physical degrees of freedom in $A^{(3)}$ and $B^{(6)}$ are projected out from the FDA.

Let us now consider the supersymmetric FDA describing $D=11$ supergravity parametrized in terms of $1$-forms (see Section \ref{11D} of Chapter \ref{chapter 2} for details). In this set up, the symmetry structure is based on the hidden supergroup manifold, that we call $G$, which extends the super-Poincar\'e Lie group to include the extra hidden directions associated with the higher $p$-forms.

The hidden supergroup $G$ has the structure of a \textit{principal fiber bundle} $(G/H,H)$, where $G/H$ corresponds to superspace; in this case, the fiber $H$ includes, besides the Lorentz transformations, also the hidden generators. More explicitly, let us now rewrite the hidden Lie superalgebra $\mathfrak{g}$ associated with $G$ as $\mathfrak{g}=H+K$, and decompose $H=H_0 +H_b + H_f$, so that the generators $T_\Lambda \in \mathfrak{g}$ are grouped in the following way: $\{J_{ab}\}\in H_0$, $\{Z_{ab}, \; Z_{a_1    \ldots         a_5}\}\in H_b$, $\{Q'\} \in H_f$, and $\{P_{a}, \; Q \}\in K$.\footnote{Here and in the following, with an abuse of notation, we will use for the cotangent space of the group manifold $G$, spanned by the $1$-forms $\sigma^\Lambda$, the same symbols defined above   for the tangent space of $G$, spanned by the vector fields $T_\Lambda$.} The subalgebra $H_b+H_f$ defines an abelian ideal of $\mathfrak{g}$.

The physical condition under which the CE-cohomology is restricted to the $H$-relative CE-cohomology corresponds now to request the FDA to be described in terms of $1$-form fields living on $G/H$. This implies that the hidden $1$-forms in $H_b$ and $H_f$ (related to the tangent space sectors $H_b$ and $H_f$, respectively), necessary for the parametrization of $A^{(3)}$ in terms of $1$-forms, do \textit{not} appear in $dA^{(3)}$ (see equation (\ref{coc1})). As we will see in a while, the presence of the spinor $1$-form $\eta$ is exactly what makes it possible to express $dA^{(3)}$ in terms of the relative cohomology only (that is to say, just in terms of the supervielbein).

\subsection{Analysis of the gauge transformations} \label{BRS}

Taking into account the hidden $D=11$ superalgebra reviewed in Chapter \ref{chapter 2},
we now consider in detail the relation between the FDA gauge transformations and those of its hidden supergroup $G$.

The $D=11$ supersymmetric FDA given in equations (\ref{FDA11omega})-(\ref{FDA11b6}) of Chapter \ref{chapter 2} is invariant under the gauge transformations:
\begin{equation}\label{gauge11}
\left\{
\begin{array}{l}
\delta A^{(3)}=d\Lambda^{(2)} \,, \\
\delta B^{(6)}=d\Lambda^{(5)} + \frac{15}2 \, \Lambda^{(2)}\wedge  \bar\Psi \wedge \Gamma_{ab}\Psi \wedge V^a\wedge  V^b ,
\end{array}\right.
\end{equation}
generated by the arbitrary forms $\Lambda^{(2)}$ and $\Lambda^{(5)}$ ($2$- and $5$-form, respectively).

On the other hand, the bosonic hidden $1$-forms in $H_b$ are abelian gauge fields, whose gauge transformations read
\begin{equation}\label{gauge}
\left\{
\begin{array}{l}
\delta_{b} B^{ab} = d\Lambda^{ab}\, ,\\
\delta_b B^{a_1    \ldots         a_5} = d \Lambda^{a_1    \ldots         a_5} \,,
\end{array} \right.
\end{equation}
being $\Lambda^{ab}$ and $\Lambda^{a_1 \ldots a_5}$ arbitrary Lorentz-valued scalar functions ($0$-forms).

Now, the requirement that $A^{(3)}$, parametrized in terms of $1$-forms, transforms as written in (\ref{gauge11}) under the gauge transformations (\ref{gauge}) of the hidden $1$-forms, implies the (bosonic) gauge transformation of $\eta$ to be
\begin{equation}\label{deltaeta}
\delta_b \eta = -E_2 \Lambda^{ab} \Gamma_{ab}\psi  - \ii E_3 \Lambda^{a_1    \ldots         a_5} \Gamma_{a_1    \ldots         a_5}\psi\,,
\end{equation}
consistently with the condition $D \delta \eta = \delta D \eta$.

In this case, it turns out that the corresponding $2$-form gauge parameter of $A^{(3)}$ is
\begin{eqnarray}
\Lambda^{(2)} & = & T_0 \Lambda_{ab}   V^a \wedge V^b + 3 T_1 \Lambda_{a b}  B^{b} _{\;c}\wedge B^{c a}+ \nonumber\\ & +& T_2( 2 \Lambda_{b_1 a_1    \ldots        a_4}  B^{b_1}_{\; b_2}\wedge B^{b_2 a_1    \ldots        a_4} -B_{b_1 a_1    \ldots        a_4} \Lambda^{b_1}_{\; b_2}\wedge B^{b_2 a_1    \ldots        a_4})+ \nonumber\\ & +& 2T_3 \epsilon_{a_1    \ldots        a_5 b_1    \ldots        b_5 m}\Lambda^{a_1    \ldots        a_5}\wedge B^{b_1    \ldots        b_5}\wedge V^m +\nonumber \\
& +& 3 T_4 \epsilon_{m_1    \ldots        m_6 n_1    \ldots        n_5}\Lambda^{m_1m_2m_3p_1p_2}\wedge B^{m_4m_5m_6p_1p_2}\wedge B^{n_1    \ldots        n_5} + \nonumber \\
& + &  S_2 \bar{\Psi}\wedge \Gamma_{ab}\eta   \Lambda^{ab}+ \ii S_3 \bar{\Psi}\wedge \Gamma_{a_1    \ldots        a_5}\eta  \Lambda^{a_1    \ldots        a_5}\,.\label{a3contr}
\end{eqnarray}

Then, considering also the gauge transformation of the spinor $1$-form $\eta$ generated by the tangent vector in $H_f$, overall we have
\begin{equation}
\delta \eta = D\epsilon ' +\delta_b \eta ,
\end{equation}
where we have introduced the infinitesimal spinor parameter $\epsilon'$. 

We can then write the $2$-form gauge parameter $\tilde\Lambda^{(2)}$ corresponding to the transformation in $H_f$ as follows:
\begin{equation}\label{a3contr'}
\tilde\Lambda^{(2)} =
-\ii S_1 \bar{\Psi}\wedge \Gamma_{a} \epsilon '  V^a - S_2 \bar{\Psi}\wedge \Gamma_{ab} \epsilon '  B^{ab}- \ii S_3 \bar{\Psi}\wedge \Gamma_{a_1    \ldots        a_5} \epsilon '  B^{a_1    \ldots        a_5}\,.
\end{equation}

In the sequel, we will show that all the diffeomorphisms in the hidden supergroup $G$, generated by Lie derivatives, are invariances of the FDA, the ones in the fiber $H$ directions being associated with a particular form of the gauge parameters of the FDA gauge transformations (\ref{gauge11}).

To this aim, let us first show that equation (\ref{a3contr}) can be rewritten in a rather simple way, using the contraction operator in the hidden Lie superalgebra $\mathfrak{g}$ associated with $G$.
Defining the tangent vector
\begin{equation}
{\vec z} \equiv \Lambda^{ab}Z_{ab}+\Lambda^{a_1    \ldots         a_5} Z_{a_1    \ldots         a_5}\in H_b\,,
\end{equation}
a gauge transformation leaving the $D=11$ FDA invariant is recovered, once $A^{(3)}$ is parametrized in terms of $1$-forms, if
\begin{equation}
\Lambda^{(2)}=  \imath_{\vec z}(A^{(3)})\,,\label{assumpta2}
\end{equation}
where with $\imath$ we have denoted the contraction operator.\footnote{This result is true as a consequence of the set of relations in (\ref{cond11}) of Appendix \ref{apphidden} obeyed by the coefficients of the $1$-forms parametrization of $A^{(3)}$ given in (\ref{a3par}) of Chapter \ref{chapter 2}, that is to say, under the same conditions required by supersymmetry for the consistency of the parametrization (\ref{a3par}).}
Introducing the Lie derivative $\ell_{\vec z}\equiv d \imath_{\vec z} + \imath_{\vec z} d $,
we find that the corresponding gauge transformation of $A^{(3)}$ is given by:
\begin{equation}\label{gauge2}
\delta A^{(3)}= d \left(\imath_{\vec z}(A^{(3)})\right) + \imath_{\vec z}\left(d A^{(3)}\right)= \ell_{\vec z} A^{(3)} = d\Lambda^{(2)}\, 
\end{equation}
(gauge invariance and diffeomorphisms coincides in the group Lie algebra). Then, the equality $d \left(\imath_{\vec z}(A^{(3)})\right) = \ell_{\vec z} A^{(3)}$ follows from the fact that $dA^{(3)}$, as given in equation (\ref{FDA11a3}) of Chapter \ref{chapter 2}, is invariant under transformations generated by $\vec z$ corresponding to the gauge invariance of the supervielbein.\footnote{This is in agreement with the fact that the right-hand side of $dA^{(3)}$ is in the $H$-relative CE-cohomology.}

Concerning the $6$-form $B^{(6)}$, to recover the general gauge transformation of $B^{(6)}$ in terms of the hidden algebra would require the knowledge of its explicit parametrization in terms of $1$-forms, which, at the moment, we ignore (some work is in progress on this topic). 
However, if we assume its behavior under gauge transformations to be analogous to the one of the $3$-form $A^{(3)}$, namely if we require
\begin{equation}
\Lambda^{(5)}=  \imath_{\vec z}(B^{(6)})\, \label{assumpta}
\end{equation}
(where $B^{(6)}$ is intended as parametrized in terms of $1$-forms), then, a simple computation gives
\begin{align}
\delta B^{(6)} & =  \ell_{\vec z} B^{(6)}= d \left(\imath_{\vec z}(B^{(6)})\right)+ \imath_{\vec z}\left(d B^{(6)}\right)=d \Lambda^{(5)}+\imath_{\vec z}\left(15 A^{(3)} \wedge dA^{(3)}\right) = \nonumber\\
& = d \Lambda^{(5)}+\ 15 \Lambda^{(2)} \wedge dA^{(3)}\,,\label{gauge5}
\end{align}
which indeed reproduces equation (\ref{gauge11}). The assumption (\ref{assumpta}) will be corroborated by the analogous computation we will do for the seven-dimensional model we will consider later. Indeed, in that case we can use, together with the explicit parametrization of $B^{(3)}$, also the one of its Hodge dual-related $B^{(2)}$ (appearing in the dimensional reduction of the eleven-dimensional $6$-form $B^{(6)}$), and, as we will see, the assumption (\ref{assumpta}) can be fully justified if we think of $B^{(2)}$ as a remnant of $B^{(6)}$ in the dimensional reduction. \footnote{Notice that the gauge transformations (\ref{gauge2}) and (\ref{gauge5}) are not fully general, since the corresponding gauge parameters are not (they are indeed restricted to the ones satisfying (\ref{assumpta2}) and (\ref{assumpta})).}

We still have to consider the gauge transformations generated by the other elements of $H$. Due to the fact that the Lorentz transformations, belonging to $H_0 \subset H$, are not effective on the FDA (all the higher $p$-forms being Lorentz invariant), this analysis reduces to consider the transformations induced by the tangent vector 
\begin{equation}
 \vec q \equiv \bar \epsilon' Q' \in H_f\, \subset H \,. \label{qvec}
\end{equation}
Thus, for the spinor $1$-form $\eta$ dual to $Q'$, we find
\begin{equation}
\delta_{\vec q}\eta=D\epsilon '=\ell_{\vec q}\eta ,
\end{equation}
and
\begin{align}\label{bellissimaalmondo}
\delta_{\vec q}A^{(3)}
&=  \ii S_1 \bar{\Psi}\wedge \Gamma_{a}D\epsilon' \wedge V^a + S_2\bar{\Psi}\wedge \Gamma_{ab}D\epsilon' \wedge  B^{ab} + \ii S_3\bar{\Psi}\wedge \Gamma_{a_1    \ldots         a_5}D\epsilon'  \wedge B^{a_1    \ldots         a_5} \, =  \nonumber \\
&=  d \left( \imath_{\vec q} ( A^{(3)} ) \right)=\ell_{\vec q}A^{(3)} ,
\end{align}
where in the second line, after integration by parts, we have used the following relation on the parameters $S_i$:
\begin{equation}
 S_1  +10 S_2-720S_3=0,
\end{equation}
which derives from $3$-gravitinos Fierz identities (see Appendix \ref{apphidden} for details). Note that, indeed, equation (\ref{bellissimaalmondo})
reproduces $D \tilde{\Lambda}$, in terms of $\tilde{\Lambda}$ defined in equation (\ref{a3contr'}).

\subsection{Role of the nilpotent fermionic generator $Q'$} \label{q'}

In deriving the gauge transformations leaving invariant the supersymmetric FDA, in terms of hidden $1$-forms, a crucial role is played by the spinor $1$-form $\eta$, dual to the nilpotent generator $Q'\in H_f$. Indeed, besides it is required for the equivalence of the hidden superalgebra to the FDA, it also
guarantees the gauge invariance of the supersymmetric FDA, due to its non-trivial gauge transformation (\ref{deltaeta}).

Actually, we can think of $\eta$ as a spinor $1$-form playing the role of an intertwining field between the base superspace and the fiber $H$ of the principal fiber bundle corresponding to the hidden supergroup manifold $G=(G/H,\, H)$.\footnote{This can also be understood from the covariant differential $D\eta$, given in equation (\ref{Deta}) of Chapter \ref{chapter 2}, which is parametrized not only in terms of the supervielbein (as it happens for all the fields of the FDA and for $DB^{ab}$ and $DB^{a_1 \ldots a_5}$, see the equations in (\ref{also}) of Chapter \ref{chapter 2}), but also in terms of the gauge fields in $H_b$, as we can see in equation (\ref{deta}).} The spinor $1$-form $\eta$ has a \textit{cohomological meaning}.

A clarifying example corresponds to considering the singular limit $\eta \rightarrow 0$, so that its dual generator $Q'$ can be dropped out from the hidden superalgebra. This limit can be obtained, in its simplest form, by redefining the coefficients appearing in the parametrization of $A^{(3)}$ (see Chapter \ref{chapter 2}) as
\begin{equation}
E_2 \to E'_2=\epsilon E_2\,,\quad  E_3 \to E'_3=\epsilon^2 E_3\,,
\end{equation}
and then taking the limit $\epsilon \to 0$. In this way, we find:
\begin{equation}
T_0\to \tilde T_0=\frac 16\,,\quad T_1 \to \tilde T_1= -\frac 1{90}\,,\quad T_2=T_3=T_4\to 0\,,\quad E_1=E_2=E_3\to 0\,,
\end{equation}
while $S_1,\; S_2, \; S_3 \to \infty$ in this limit. 

Now, recalling the parametrization (\ref{a3par}) of $A^{(3)}$, we can see that, setting $\eta=0$, the following finite limit can be obtained for $A^{(3)}$:
\begin{equation}\label{a3lim}
A^{(3)}  \to  A^{(3)}_{lim} = \tilde T_0 B_{ab} \wedge V^a \wedge V^b +\tilde T_1 B_{a b}\wedge B^{b}_{\;c}\wedge B^{c a}\,,
\end{equation}
so that its differential gives
\begin{equation}
dA^{(3)}_{lim} =\tilde T_0\left( \frac 1{2} \bar\Psi \wedge \Gamma_{ab}\Psi \wedge V^a \wedge V^b -   \ii   B_{ab} \wedge \bar\Psi \wedge \Gamma^{a}\Psi \wedge V^b\right) + \frac{3}{2} \tilde T_1 \bar\Psi \wedge \Gamma_{ab} \Psi\wedge B^{b}_{\;c}\wedge B^{c a}\,.
\end{equation}

We can now observe that the parametrization (\ref{a3lim}) does \textit{not} reproduce the sector of the FDA given in (\ref{FDA11a3}) of Chapter \ref{chapter 2}, being in fact obtained by a singular limit; however, this different FDA is based on the same hidden superalgebra $\mathfrak{g}$, where now the cocycles are in the $H_0$-relative CE-cohomology. Indeed, $dA^{(3)}_{lim}$ is now expanded on a basis of the \textit{enlarged superspace} $K_{enlarged}= K+H_b$, which includes, besides the supervielbein, also the bosonic hidden $1$-form $B^{ab}$. 

The case where all the $E_i$'s are proportional to the same power of $\epsilon$ can be done in an analogous way: It again requires $\eta =0$ and leads to an $A^{(3)}_{lim}$ with all $\tilde T_i\neq 0$, $i=0,1,    \ldots 4$. In this case, $dA^{(3)}_{lim}$ is expanded on a basis of $K_{enlarged}$ that also includes the $1$-form $B^{a_1 \ldots  a_5}$.

Let us mention that a singular limit of the parametrization of $A^{(3)}$ was already considered by the authors of \cite{Bandos:2004xw}.\footnote{More precisely, the singular limit considered in Ref. \cite{Bandos:2004xw} is given in terms of a parameter $s\to 0$ and the relation between their parameter and our ones is $\frac{120\rho -1}{90\left(60\rho-1\right)^2} =\frac{2(3+s)}{15s^2}$, being $E_3/E_2=\rho$.} In particular, in \cite{Bandos:2004xw} the authors were studying the description of the hidden superalgebra as an expansion of $OSp(1|32)$. They observed that a singular limit exists (which includes our limit as a special case), such that the authomorphism group of the FDA is enlarged from what we have called $H$ to $Sp(32)$, but where the trivialization of the FDA in terms of an explicit $A^{(3)}$, written in terms of $1$-forms, breaks down. From the analysis we have carried on above, we can see that, at least for the restriction of the limit considered here, what actually breaks down is the trivialization of the FDA on \textit{ordinary superspace}, while a trivialization on the enlarged superspace $K_{enlarged}$ is still possible.

Notice that, however, in this latter case, the gauge invariance of the new FDA requires that $B^{ab}$ (and, analogously, $B^{a_1 \ldots a_5}$) is \textit{not} a gauge field anymore. Correspondingly, $A^{(3)}_{lim}$ does \textit{not} enjoy gauge freedom, all of its degrees of freedom propagating in $K_{enlarged}$. $A^{(3)}_{lim}$ may then be interpreted as a gauge-fixed form of $A^{(3)}$. Indeed, it is precisely the gauge transformation of $\eta$, given in equation (\ref{deltaeta}), that guarantees the gauge transformation of $A^{(3)}$ to be the one written in (\ref{gauge11}); actually, this relies on the fact that $D\eta \in K_{enlarged}$.

Note that the transformation (\ref{gauge}), even if it is not a gauge transformation in this limit case, still generates a diffeomorphism leaving invariant the new FDA (that is indeed based on the same supergroup $G$), since
\begin{equation}
\delta_{\vec z} A^{(3)}_{lim}= \ell_{\vec z} A^{(3)}_{lim}\,. \end{equation}

One can now associate with the transformations generated by the tangent vector $\vec q$ (introduced in (\ref{qvec})), in the particular case $\delta_{\vec q}\eta = D\epsilon '= -\eta$, a gauge transformation bringing $A^{(3)}$ to $A^{(3)}_{lim}$ and, more generally, a gauge transformation such that $\eta' = \eta +\delta \eta =0$.

We can then conclude that the role of the extra, nilpotent, fermionic generator amounts to require the hidden $1$-forms of the Lie superalgebra to be true gauge fields living on the fiber $H$ of the associated principal fiber bundle $(G/H,\, H)$.\footnote{Let us observe that this is equivalent to require the construction of the FDA starting from the Lie algebra associated with the supergroup $G$ to be done using the $H$-relative CE-cohomology of the hidden superalgebra $\mathfrak{g}$.} In this sense, we can say that the behavior of the extra fermionic generator has some analogy with the one of a BRST ghost\footnote{BRST symmetry, from Becchi-Rouet-Stora-Tyutin, may be taken to be the basic invariance of quantum mechanics of a geometrical system. It contains (and extends) the concept of gauge invariance and, in its set up, the ``original fields'' get mixed with (non-physical) fields called ``ghosts''. For an interesting and physically clear introduction to BRST symmetry see, for example, Ref. \cite{BRSTlect}.}, since it guarantees that only the \textit{physical degrees of freedom} of the exterior forms appear in the supersymmetric FDA in a ``dynamical'' way. This amounts to say that, once the superspace is enlarged to $K_{enlarged}$, in the presence of $\eta$ (and, more generally, of a non empty $H_f$) no explicit constraint has to be imposed on the fields, since the non-physical degrees of freedom of the fields in $H_b$ and in $H_f$ transform into each other and do not contribute to the FDA.

Figure \ref{figurefib} offers a figurative tangent space description of the hidden gauge structure of the FDA (in the $D=11$ case), in which we can ``visualize'' what has been stated above about the role of the extra, nilpotent, fermionic generator $Q'$.

\begin{figure}
\centering
\pgfdeclareimage[height=8cm]{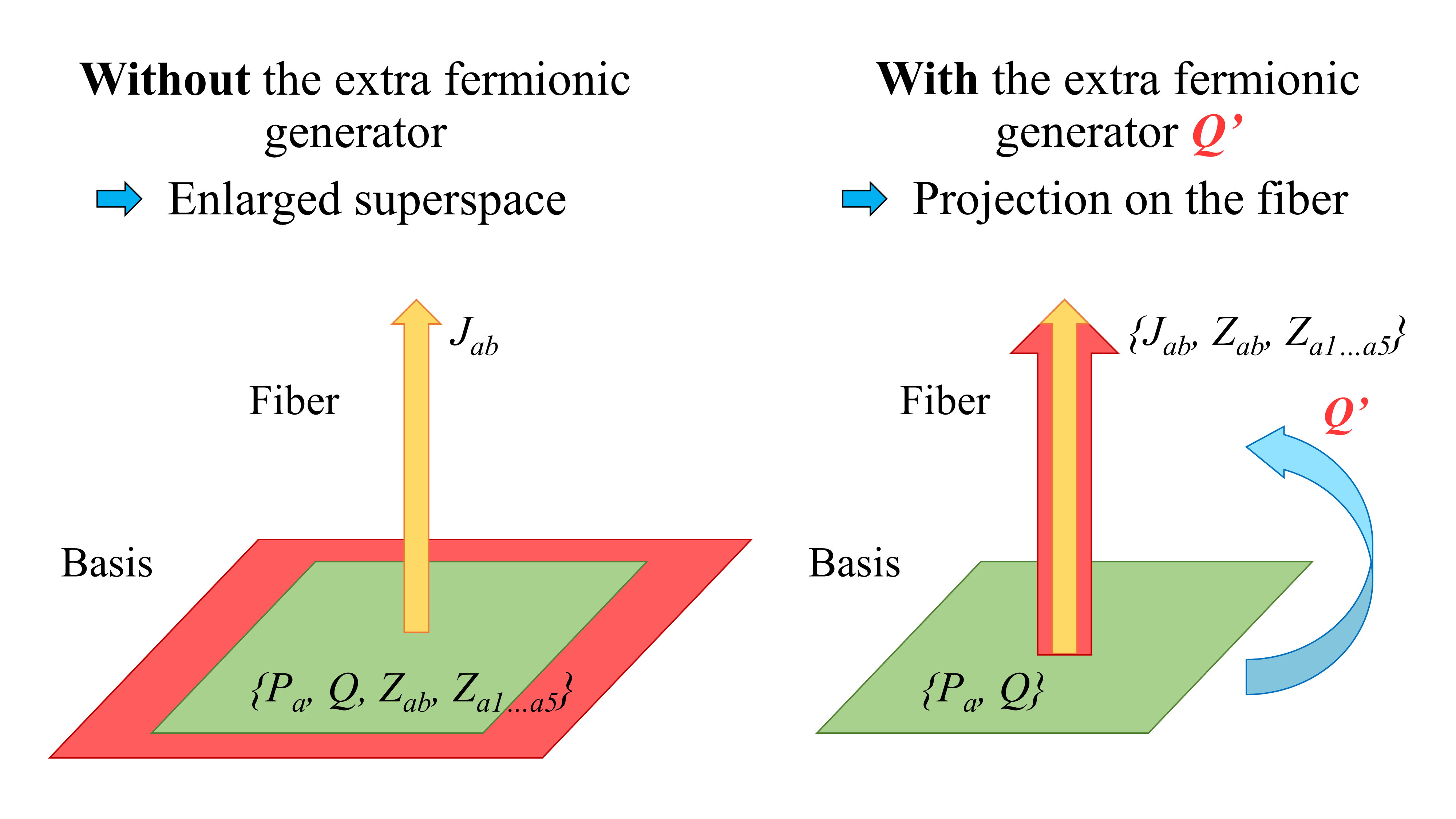}{fiber}
\pgfuseimage{fiber}
\caption[Tangent space description of the $D = 11$ case and role of $Q'$]{\textit{Tangent space description of the $D = 11$ case and role of $Q'$.} The presence of the nilpotent fermionic generator in the hidden superalgebra gives rise to a principal fiber bundle structure, and the physical directions of superspace are separated from the non-physical ones in a dynamical way, allowing to recover the gauge invariance of the FDA on ordinary superspace.}
\label{figurefib}
\end{figure}

\section{Hidden gauge algebra of the $\mathcal{N}=2$, $D=7$ FDA}\label{7D}

The same procedure explained in the eleven-dimensional case can be applied to lower-dimensional supergravity theories containing $p$-forms ($p>1$), in order to associate to any such theory a hidden Lie superalgebra containing, as a subalgebra, the super-Poincar\'e algebra.
In particular, since in the theory in eleven dimensions the closure of the supersymmetric FDA and of the corresponding hidden superalgebra are strictly related to $3$-gravitinos Fierz identities, the same must happen in any lower dimensions.

As an interesting example, in \cite{Hidden} we have considered the minimal $\mathcal{N}=2$, $D=7$ theory (not coupled to matter), in which the hidden structure turned out to be particularly rich since, in its most general form, it includes \textit{two} nilpotent fermionic generators.
In this section, we recall this $D=7$ example in details. 

Let us start by giving the physical content of the $D=7$ supergravity theory: It is given by the vielbein $1$-form $V^a$, a triplet of vectors $1$-forms $A^x$ ($x=1,2,3$), a $2$-form $B^{(2)}$, and a gravitino $1$-form Dirac spinor, which we describe as a couple of $8$-components spin-$3/2$ pseudo-Majorana fields $\psi_{A\mu}$ ($A=1,2$) satisfying the reality condition $\bar\psi^A = \epsilon^{AB} (\psi_B)^T$.\footnote{The charge conjugation matrix in $D=7$ can always be chosen to be $C=\mathbf{1}$.}
Here and in the following, we will denote by lower case $\psi$ the gravitino in seven dimensions, in order to avoid confusion with the capital case $\Psi$ denoting the gravitino field in eleven dimensions.

The interacting minimal $D=7$ theory was studied, at the Lagrangian level, in many works \cite{Minimal, bershoffo1, Townsend:1983kk, SalamSezgin}. In particular, in \cite{Townsend:1983kk} the authors observed that one can trade the $2$-form formulation of the theory by a formulation in terms of a $3$-form, $B^{(3)}$, the two being related by Hodge duality of the corresponding field-strengths on space-time and giving rise to different Lagrangians.
From our point of view, where we consider the FDA rather than a Lagrangian description, both the $2$ and the $3$-form are required for a fully general formulation, together with a triplet of $4$-forms, $A^{x\vert(4)}$, whose field-strengths are Hodge dual to the gauge vectors $A^x$.

One of the main reasons for choosing the minimal $D=7$ model is related to the fact that, in this case, we will be able to find an explicit parametrization in terms of $1$-forms of both $B^{(2)}$ and $B^{(3)}$. In this case, as we will see, a general parametrization requires the presence of \textit{two independent hidden spinor $1$-forms}. Since $B^{(2)}$ in $D=7$ can be obtained through dimensional reduction of $B^{(6)}$ in the $D=11$ FDA, this investigation also sheds some light on the extension of the hidden superalgebra of $D=11$ supergravity when also the parametrization of $B^{(6)}$, still unknown, would be considered (see Section \ref{11d7d} for further details on this aspect).

\subsection{$D=7$ FDA in terms of $1$-forms}

We now introduce the supersymmetric FDA on which the minimal $\mathcal{N}=2$, $D=7$ supergravity theory is based and we then write the parametrization of its $2$ and $3$-form in terms of $1$-forms, with the aim of finding the hidden superalgebra underlying this $D=7$ model \cite{Hidden}.

The $D=7$ FDA (in the vacuum) is the following:
\begin{align}
& R^{ab}\equiv d\omega^{ab}-\omega^{ac}\wedge \omega_{c}^{\; b}=0 \,,\label{fdaomega}\\
& T^a\equiv D V^a -\frac{\ii}{2}\bar{\psi}^A \wedge \Gamma^a \psi_A =0\,,\label{fdav} \\
& \rho _A \equiv D \psi_A =0\,,\label{fdapsi}\\
& F^x \equiv d {A}^x - \frac{\ii}{2} \sigma^{x \vert B}_{\;\;\;\;\;A} \bar{\psi}^A\wedge \psi_B=0\,,\label{fdaa1} \\
& F^{(3)}\equiv  d {B}^{(2)}+d {A}^x\wedge {A}^x - \frac{\ii}{2} \bar{\psi}^A \wedge \Gamma_a \psi_A \wedge V^a =0\, ,\label{fdab2} \\
& G^{(4)}\equiv d {B}^{(3)}-\frac{1}{2} \bar{\psi}^A \wedge \Gamma_{ab}\psi_A \wedge V^a \wedge V^b=0\, , \label{fdab3}\\
& F^{x(4)}\equiv d{A}^{x\vert (4)}  + \frac{1}{2}\left(d{A}^x \wedge  {B}^{(3)} - {A}^x \wedge d {B}^{(3)}\right) + \nonumber \\
& -\frac{1}{6} \sigma^{x \vert B}_{\;\;\;\;\;A} \bar{\psi}^A \wedge \Gamma_{abc}\psi_B \wedge V^a \wedge V^b \wedge V^c =0\,, \label{fdaa4}
\end{align}
where now $D$ denotes the Lorentz-covariant differential in seven dimensions ($D=d-\omega$, according with the convention adopted in \cite{Hidden,Malg,D'AuriaFre}) and $\sigma^{x \vert B}_{\;\;\;\;\;A}$ are the usual Pauli matrices. The $d^2$-closure of this FDA relies on the Fierz identities relating $3$- and $4$-gravitinos currents in $D=7$ (see Appendix \ref{apphidden}).

In order to find the hidden superalgebra underlying the theory, let us introduce the following set of bosonic Lorentz-indexed $1$-forms: $B^a$, which is associated with $B^{(2)}$, $B^{ab}$, associated with $B^{(3)}$, and $A^{x \vert abc}$, that is associated with $A^{x \vert (4)}$; we require their Maurer-Cartan equations to be the following ones:
\begin{align} \label{d1form}
& D B^{ab} = \alpha \bar{\psi}^A \wedge \Gamma^{ab}\psi_A ,  \nonumber\\
& D B^a = \beta \bar{\psi}^A \wedge \Gamma^a \psi_A , \nonumber\\
& D {A}^{x\vert abc}= \gamma \sigma^{x \vert B}_{\;\;\;\;\;A} \bar{\psi}^A \wedge \Gamma^{abc}\psi_B \,,
\end{align}
whose integrability conditions are automatically satisfied since $R^{ab}=0$.
The arbitrary choice of the coefficients in the right-hand side fixes the normalization of the bosonic $1$-forms $B^a$, $B^{ab}$, and ${A}^{x\vert abc}$. We choose $\alpha=\frac 12$, $\beta=\frac{\ii}2$, and $\gamma=\frac{1}{6}$.

The bosonic $2$- and $3$-form ($B^{(2)}$ and $B^{(3)}$, respectively) will be parametrized in terms of the $1$-forms $V^a$, $A^x$ (both of them already present in the FDA), $B^a$, $B^{ab}$, $A^{x\vert abc}$ (new, hidden, bosonic $1$-forms). Furthermore, as we are going to show, the consistency of their parametrizations also requires the presence of \textit{two nilpotent fermionic $1$-forms}, $\eta_A$ in the parametrization of $B^{(2)}$ and $\xi_A$ in that of $B^{(3)}$, whose covariant derivatives respectively satisfy:
\begin{align}
& D \eta_A = l_1 \Gamma_a \psi_A \wedge V^a + l_2  \Gamma_a \psi_A \wedge B^a + l_3 \Gamma_{ab}\psi_A \wedge B^{ab} + \nonumber \\
& + l_4 \psi_B \sigma^{x \vert B}_{\;\;\;\;\;A} \wedge {A}^x + l_5 \Gamma_{abc}\psi_B \sigma^{x \vert B}_{\;\;\;\;\;A} \wedge {A}^{x\vert abc},\label{deta7}\\
& D \xi_A = e_1 \Gamma_a \psi_A \wedge V^a + e_2 \Gamma_a \psi_A \wedge B^a + e_3 \Gamma_{ab}\psi_A \wedge B^{ab} + \nonumber \\
& + e_4 \psi_B \sigma^{x \vert B}_{\;\;\;\;\;A} \wedge {A}^x + e_5 \Gamma_{abc}\psi_B \sigma^{x \vert B}_{\;\;\;\;\;A} \wedge {A}^{x\vert abc}\,,\label{dxi}
\end{align}
where the $l_i$'s and $e_i$'s will be structure constants of the hidden superalgebra and they are constrained to satisfy (from the integrability of $D\eta_A$ and $D\xi_A$ and the use of the Fierz identities in $D=7$) the following equations:
\begin{align}
& - \ii l_1 -  \ii l_2  + 6 l_3  - \ii l_4 -10 l_5  = 0, \label{inteta} \\
& - \ii e_1-\ii e_2 +6 e_3  - \ii e_4 -10 e_5  =0 .\label{intxi}
\end{align}

The consistency of the parametrizations of $B^{(2)}$ and $B^{(3)}$ amounts to require that their differential, as given in equations (\ref{fdab2}) and (\ref{fdab3}), respectively, must be reproduced by the differential of their parametrizations.
This is analogous to what happens in $D=11$; in that case, however, only the parametrization of the $3$-form was considered, and its closure required (besides precise values of the coefficients) the presence of just \textit{one} spinor $1$-form.

Let us now explicitly write the general ansatz for the parametrization of $B^{(2)}$ and $B^{(3)}$ in terms of the set of $1$-forms $\{V^a,\psi_A,B^a,B^{ab},A^x,{A}^{x \vert abc},\xi_A,\eta_A\} $.\footnote{We should, in principle, also consider the parametrization of the $4$-form $A^{x \vert (4)}$. However, this deserves further investigation (work in progress).} It reads:
\begin{align}
 {B}^{(2)} =& \sigma B_a \wedge V^a + \tau \bar{\psi}^A \wedge \eta_A ,\label{B2par}
\\
 {B}^{(3)} = & \; \tau_0 B_{ab}\wedge V^a \wedge V^b + \tau_1  B_{ab}\wedge B^a \wedge V^b + \tau_2  B_{ab}\wedge B^a \wedge B^b +  \tau_3 B_{ab}\wedge B^{bc}\wedge B_c^{\; a}+ \nonumber \\
& +   \epsilon_{ab_1    \ldots        b_3c_1    \ldots        c_3}( \tau_4\,V^a + \tau_5 \,B^a)\wedge {A}^{x \vert b_1    \ldots        b_3} \wedge {A}^x_{\; c_1    \ldots        c_3}+ \nonumber \\
& + \tau_6  B_{ab}\wedge {A}^{x\vert a} _{\;\;\;\;\; cd}\wedge {A}^{x\vert bcd}+ \tau_7  \epsilon_{xyz} {A}^x \wedge A^y_{\; abc}\wedge {A}^{z\vert abc} + \nonumber \\
& + \tau_8 \epsilon_{xyz} {A}^x \wedge A^y \wedge A^z +  \tau_9 \epsilon_{xyz} \epsilon_{abcdlmn}A^{x\vert abc}\wedge A^{y \vert dlp}\wedge A^{z \vert mn}_{\;\;\;\;\;\;\;\;p}+ \nonumber \\
& + \sigma_1 \bar{\psi}^A \wedge \Gamma_a \xi_A \wedge V^a + \sigma_2  \bar{\psi}^A \wedge \Gamma_a \xi_A \wedge B^a +  \sigma_3 \bar{\psi}^A  \wedge \Gamma_{ab}\xi_A \wedge B^{ab}+ \nonumber \\
& +\sigma_4 \bar{\psi}^A \wedge \xi_B \sigma^{x \vert B}_{\;\;\;\;\;A} \wedge {A}^x +  \sigma_5 \bar{\psi}^A \wedge \Gamma_{abc}\xi_B \sigma^{x \vert B}_{\;\;\;\;\;A} \wedge {A}^{x \vert  abc}\,.\label{B3par}
\end{align}

The sets of coefficients $\{\tau_j\}$ and $\{\sigma_i\}$ are determined by requiring that the parametrizations (\ref{B2par}) and (\ref{B3par}) satisfy the equations (\ref{fdab2}) and (\ref{fdab3}) of the FDA. The reader can find their explicit (and rather long) expression in the appendices of Ref. \cite{Hidden}. 
In \cite{Hidden} we have fixed the normalization of the spinor $1$-forms $\xi_A$ and $\eta_A$ in order to obtain a simple expression. In particular, we have chosen the normalization of $\eta_A$ by imposing, in the parametrization of $B^{(2)}$, $\tau=1$.
In this way, we have obtained $\frac{e_2}{\sigma_2}=\frac{e_5}{\sigma_5}\equiv H$, where, with the normalization chosen for the bosonic $1$-forms, we have set
\begin{equation}
H=-2\left(e_1+e_2 -2\ii e_3\right)\left(e_1+e_2 -2\ii e_5\right)\,.
\end{equation}
After that, we have chosen $H=1$, which is a valid normalization in all cases where $H\neq 0$, that is to say, for $e_1+e_2 \neq 2\ii e_3$ or $e_1+e_2 \neq 2\ii e_5$. 
Actually, by looking at the general solution for the parameters given in the appendices of \cite{Hidden}, one can see that to choose $\tau \neq 0$ and $H \neq 0$ are not restrictive assumptions, since the cases $\tau = 0$ and/or $H=0$ would correspond to singular limits where the gauge structure of the supersymmetric FDA breaks down (this is strictly analogous to what we have discussed for the $D=11$ case, as far as the gauge structure of the theory is concerned). With the above normalizations, in \cite{Hidden} we have obtained:
\begin{eqnarray}
&&\sigma = 2 \ii l_2, \;\;\; l_1 = \frac{\ii}{2}\left(-1+2 \ii  l_2 \right), \;\;\; l_4 = \frac{\ii}{2}, \;\;\; l_3=l_5=0, 
\end{eqnarray}
\begin{eqnarray}
&&\tau_0= 2 \left[\ii e_1 (e_3-e_5)+\left(\frac{\ii}{2}e_2+e_3\right)\left(\frac{\ii}{2}e_2+e_5\right)\right], \nonumber\\
&&\tau_1= -4\ii e_2(\ii e_2 + 2 e_5)\,,\quad \tau_2= -2 e_2^2\,,\quad \tau_3= -\frac 83 e_3(e_3-2e_5), \nonumber\\
&& \tau_4 = e_5(\ii e_2 +2 e_5)\,,\quad \tau_5 =-\ii e_2e_5\,,\quad \tau_6= 36 e_5^2, \nonumber\\
&&\tau_7 = -12 e_5^2 \,,\quad \tau_8= \frac 23 e_4[e_1+e_2-6\ii(e_3+e_5)]\,,\quad \tau_9= -3e_5^2, \nonumber\\
&&\sigma_1= -e_1-2e_2+4\ii e_5\,,\quad \sigma_2= e_2\,,\quad \sigma_3= -e_3+2e_5\,, \nonumber \\
&& \sigma_4= -e_4\,, \quad \sigma_5=e_5\label{taus}\,,
\end{eqnarray}
where the $e_i$'s are constrained by equation (\ref{intxi}).

Let us observe that the combination $\tau_4\,V^a + \tau_5 \,B^a\equiv \tilde B^a$ could be used, instead of $B^a$, in the parametrization of $B^{(3)}$; this redefinition  simplifies the expression of $B^{(3)}$ and, in particular, the term $B_{ab}\wedge \tilde B^a\wedge V^b $ vanishes.

\subsection{The hidden superalgebra in $D=7$}\label{hidden}

We now write, analogously to what was done in \cite{D'AuriaFre} in the $D=11$ case, the $D=7$ hidden superalgebra in terms of the generators $T_\Lambda$ dual to the set of $1$-forms $\sigma^\Lambda$ of the $D=7$ theory. In the present case, the set of $1$-forms is given by
\begin{equation}
\sigma^\Lambda=\{V^a,\psi_A,\omega^{ab},A^x, B^a,B^{ab},{A}^{x \vert abc},\xi_A,\eta_A\}, \label{sigma7d}
\end{equation}
and the set of (dual) generators
\begin{equation}
T_\Lambda= \{P_a,Q_A,J_{ab},T^x , Z_a, Z_{ab},T^{x\vert}_{\;\;\;abc},Q'_A,Q''_A\}\,.\label{t7d}
\end{equation}
The mappings between the $1$-forms and the generators can be found in \cite{Hidden}.
The (anti)commutators of the superalgebra (besides those of the Poincar\'e Lie algebra) can then be written as:

\begin{align}
& \lbrace Q^A, \bar{Q}_B \rbrace = - \ii \Gamma^a \left( P_a+Z_a \right) \delta^A_{\; B} - \frac 12 \Gamma^{ab}Z_{ab}\delta^A_{\; B}-\sigma^{x \vert A}_{\;\;\;\;\; B}\left(\ii T^x +\frac 1{18} \Gamma^{abc}T^{x\vert}_{\;\;\; abc}\right)\,, \\
& [Q_A, P_a]= -2 \Gamma_a (e_1   Q'_A+l_1Q''_A)\,, \\
& [Q_A, Z_a]= -2 \Gamma_a (e_2Q'_A+l_2Q''_A)\,, \\
& [Q_A, Z_{ab}]= -4  e_3 \Gamma_{ab} Q'_A\,,\\
& [Q_A, T^x]= - 2 \sigma^{x \vert B}_{\;\;\;\;\;A}(e_4Q'_B+ l_4 Q''_B) \,, \\
& [Q_A, T^{x \vert}_{\;\;\; abc}]=-12 e_5 \Gamma_{abc}\sigma^{x \vert B}_{\;\;\;\;\;A}Q'_B\,,\\
& [J_{ab}, Z_{c}]=-2 \delta^{c}_{[a}Z_{b]}\,,\\
& [J_{ab}, Z_{cd}]=-4\delta^{[c}_{[a}Z^{d]}_{b]}\,,\\
&[J_{ab}, T^{x | c_1c_2 c_3}]=-12 \delta^{[c_1}_{[a}T^{x \vert \;c_2  c_3]}_{\;\;\;\;b]}\,,\\
& [J_{ab}, Q_A]=- \Gamma_{ab} Q_A\,,\\
& [J_{ab}, Q'_A]=- \Gamma_{ab} Q'_A\,,\\
& [J_{ab}, Q''_A]=- \Gamma_{ab} Q''_A\,.
\end{align}
All the other possible (anti)commutators (except, obviously, the Poincar\'e part) vanish.
This hidden superalgebra includes all the $1$-forms associated with the $D=7$ FDA once the latter is extended to include all the couples of Hodge dual field-strengths. 
 
We can see that two independent fermionic generators are necessary if we want to include in the hidden superalgebra description of the FDA involving both $B^{(2)}$ and $B^{(3)}$ also the $1$-forms $B^{ab}$ and $A^{x\vert abc}$. We did not consider in the above description the $1$-form $B^{a_1 \ldots a_5}$ (associated with the non-dynamical volume form $F^{(7)}= dB^{(6)}+  \ldots  $). 

\subsubsection{Lagrangian subalgebras}

Let us now consider and discuss \textit{two relevant subalgebras} of the general hidden superalgebra presented above, where only one nilpotent fermionic generator appears. In \cite{Hidden} we called them ``electric hidden subalgebras'' or ``Lagrangian subalgebras'', because of their role in the construction of the Lagrangian, as we will clarify in the following.

The first subalgebra is the one where $Q'_A=Q''_A \equiv \frac 12 \hat Q_A$. This corresponds to consider a FDA including both $B^{(2)}$ and  $B^{(3)}$. However, in this case, the corresponding spinor $1$-form appears in the parametrizations (\ref{B2par}) and (\ref{B3par}), that is to say $\eta_A=\xi_A$, and the Maurer-Cartan equations (\ref{deta7}) and (\ref{dxi}) coincide, implying $\{e_i\}= \{l_i\}$. In particular, we have $e_3=e_5=0$, since $l_3=l_5=0$. This implies, on the set of $\{\tau_j\}$ given in (\ref{taus}), that all the contributions in $B^{ab}$ and $A^{x \vert abc}$ in the parametrization of $B^{(3)}$ disappear; then, the corresponding generators $Z_{ab}$ and $T^x_{abc}$ decouple and can be set to zero. The resulting subalgebra is given by
 \begin{align}
& \lbrace Q^A, \bar{Q}_B \rbrace = - \ii \Gamma^a \left( P_a+Z_a \right) \delta^A_{\; B} -\ii \sigma^{x \vert A}_{\;\;\;\;\; B}  T^x , \\
& [Q_A, P_a]= -2 e_1\Gamma_a    \hat Q_A, \\
& [Q_A, Z_a]= -2 e_2\Gamma_a \hat Q_A, \\
& [Q_A, T^x]= - 2 e_4\sigma^{x \vert B}_{\;\;\;\;\;A}\hat Q_B \,.
\end{align}
Let us observe that the same subalgebra can be obtained by truncating the hidden superalgebra to the subalgebra where $Q'_A \to 0$ or, equivalently, $\xi_A\to 0$. However, recalling the discussion about the role of the nilpotent fermionic generators for the consistency of the gauge structure of the FDA (referring, in particular, to the singular limit $\eta\to 0$)\footnote{Actually, the discussion we have done concerned the $D=11$ theory. Analogous considerations can be worked out for the $D=7$ case, as we will show in a while.}, from the point of view of the FDA this  corresponds to consider, instead, the ``sub-FDA'' where only $A^x$ and $B^{(2)}$ appear, but not their mutually non-local forms $B^{(3)}$ and $A^{x|(4)}$, respectively. This is the appropriate framework for a Lagrangian description in terms of $B^{(2)}$ (considered, for example, in Refs. \cite{bershoffo1} and \cite{SalamSezgin}).

The other Lagrangian subalgebra is found by setting, instead, $Q''_A\to 0$, that implies the vanishing of the coefficients $\{l_i\}$'s. In this case, the whole parametrization of $B^{(2)}$ drops out. This subalgebra thus corresponds to consider the restricted FDA where $B^{(2)}$ is excluded, together with $A^{x|(4)}$. This is the appropriate framework for the construction of the Lagrangian in terms of $B^{(3)}$ only (see Ref. \cite{Townsend:1983kk}). The $1$-forms $B^a$ and $A^{x \vert abc}$ could still be included in the parametrization of $B^{(3)}$ as kind of trivial deformations, and they can be consistently decoupled by setting $e_2=e_5=0$.

Observe that both Lagrangian subalgebras require the truncation of the superalgebra to only \textit{one} (out of the two) nilpotent fermionic generator.

The complete hidden superalgebra is larger than the one just involving the fields appearing in the Lagrangian in terms of either $B^{(2)}$ or $B^{(3)}$ only. This is reminiscent of an aspect of $D=4$ extended supersymmetric theories, namely of the fact that a central extension of the supersymmetry algebra is associated with electric and magnetic charges \cite{Witten:1978mh}, while the electric subalgebra only involves electric charges whose gauge potentials appear in the Lagrangian description.

\subsection{Including $B^{a_1 \ldots a_5}$ in the $D=7$ theory}

One could ask whether the inclusion of the extra contributions involving the $1$-form $B^{a_1 \ldots a_5}$ in the parametrization of $B^{(2)}$ and $B^{(3)}$ could significantly alter the results we have previously obtained, and if this would require the presence of extra spinorial charges. We discuss this issue in the following, on the same lines of what we have done in \cite{Hidden}.
This analysis will also turn out to be useful once we will relate, in the next section, the $D=7$ theory to the $D=11$ one.

Thus, for completing the analysis of the minimal theory in $D=7$, let us include the (non-dynamical) form $B^{(6)}$ associated with the volume form in $D=7$ in the FDA, and investigate the superalgebra hidden in this extension of the FDA.
It contributes to the FDA as
\begin{equation}
 dB^{(6)} - 15 B^{(3)}\wedge dB^{(3)} = \frac \ii 2 \bar\psi^A \wedge \Gamma_{a_1    \ldots         a_5} \psi_A \wedge V^{a_1}     \ldots         \wedge V^{a_5}\,
\end{equation}
(we will treat the dimensional reduction of the $D=11$ $6$-form in Section \ref{11d7d} and we will see that this is evident).

We require the covariant derivatives of the spinor $1$-forms to be now:
\begin{align}
& D \xi_A = e_1 \Gamma_a \psi_A \wedge V^a + e_2 \Gamma_a \psi_A \wedge B^a + e_3 \Gamma_{ab}\psi_A \wedge B^{ab} + \nonumber \\
& + e_4 \psi_B \sigma^{x \vert B}_{\;\;\;\;\;A} \wedge {A}^x + e_5 \Gamma_{abc}\psi_B \sigma^{x \vert B}_{\;\;\;\;\;A} \wedge {A}^{x\vert abc} + e_6 \Gamma_{a_1    \ldots        a_5}\psi_A B^{a_1    \ldots        a_5}\,,\label{include1}\\
& D \eta_A = l_1 \Gamma_a \psi_A \wedge V^a + l_2 \Gamma_a \psi_A \wedge B^a + l_3 \Gamma_{ab}\psi_A \wedge B^{ab} + \nonumber \\
& +l_4 \psi_B \sigma^{x \vert B}_{\;\;\;\;\;A} \wedge {A}^x + l_5 \Gamma_{abc}\psi_B \sigma^{x \vert B}_{\;\;\;\;\;A} \wedge {A}^{x\vert abc} + l_6 \Gamma_{a_1    \ldots        a_5}\psi_A B^{a_1    \ldots        a_5}\label{include2}\, ,
\end{align}
and, besides the equations in (\ref{d1form}), we also define (in an analogous way):
\begin{align}
& D B^{a_1    \ldots        a_5} = \frac{\ii}{2} \bar{\psi}^A \wedge \Gamma^{a_1    \ldots        a_5} \psi_A, \,.
\end{align}
The integrability conditions of (\ref{include1}) and (\ref{include2}) give the following equations:
\begin{align}
& - \ii l_1 -  \ii l_2  + 6 l_3  - \ii l_4 -10 l_5 -\ii 360 l_6 = 0, \label{inteta1} \\
& - \ii e_1-\ii e_2 +6 e_3  - \ii e_4 -10 e_5 -\ii 360 e_6  =0 .\label{intxi1}
\end{align}
The new parametrizations for the $2$- and $3$-form $B^{(2)}$ and $B^{(3)}$ are
\begin{align}
&{B}^{(2)} = B^{(2)}_{old} + \chi \epsilon_{a_1    \ldots        a_5 a b} B^{a_1    \ldots        a_5}\wedge B^{ab} ,\\
&{B}^{(3)} = \; B^{(3)}_{old} + \tau_{10}B_{a a_1    \ldots        a_4}\wedge B^a_{\; b}\wedge B^{b a_1    \ldots        a_4}  + \tau_{11} \epsilon_{a_1    \ldots        a_5 a b}B^{a_1    \ldots        a_5}\wedge V^a \wedge V^b + \nonumber \\
& + \tau_{12} \epsilon_{a_1    \ldots        a_5 a b}B^{a_1    \ldots        a_5}\wedge B^a \wedge V^b  + \tau_{13} \epsilon_{a_1    \ldots        a_5 a b}B^{a_1    \ldots        a_5}\wedge B^a \wedge B^b + \nonumber \\
& + \tau_{14} \epsilon_{a_1    \ldots        a_5 a b}B^{a_1    \ldots        a_5}\wedge {A}^{x \vert a}_{\;\;\;\;\; cd}\wedge {A}^{x\vert bcd} +
  + \sigma_6 \bar{\psi}^A \wedge \Gamma_{a_1    \ldots        a_5} \xi_A \wedge B^{a_1    \ldots        a_5}\, ,
\end{align}
where $ B^{(2)}_{old}$ and $ B^{(3)}_{old}$ are given by equations (\ref{B2par}) and (\ref{B3par}), respectively.
The values of the new set of coefficients can be found in the appendices of Ref. \cite{Hidden}. In the following, we directly move to the result: The parametrization of the extended forms in terms of $1$-forms is more complicated in this case, but the closure of the hidden superalgebra does \textit{not} require any new spinor $1$-form besides $\xi_A$ and $\eta_A$.

In order to express the superalgebra in the dual form, it is now sufficient to introduce the bosonic generator $Z_{a_1    \ldots         a_5}$, dual to $B^{a_1 \ldots a_5}$ (that is, satisfying $B^{a_1 \ldots a_5} (Z_{b_1 \ldots b_5})= 5! \delta^{a_1 \ldots a_5}_{b_1 \ldots b_5} $). The interested reader can find the complete form of the hidden superalgebra including $Z_{a_1    \ldots         a_5}$ in \cite{Hidden}.

We now move to the analysis of the hidden gauge structure of the $D=7$ theory.

\subsection{Gauge structure of the minimal $D=7$ FDA}\label{BRS7}

Analogously to what we have previously done in the case of the $D=11$ theory, we now analyze the gauge structure of the $D=7$ FDA.
We limit ourselves just to a short discussion of it, since the main results concerning the role of the nilpotent charges (dual to the spinor $1$-forms $\eta_A$ and $\xi_A$) is completely analogous to the one discussed for $\eta$ in the eleven-dimensional case.

The supersymmetric $D=7$ FDA is invariant under the following gauge transformations:
\begin{eqnarray} \label{gaugex}
\left\{ \begin{array}{l}
\delta A^x = d \Lambda^x \,, \\
\delta B^{(2)}= d\Lambda^{(1)}-\Lambda^x dA^x\, , \\
\delta B^{(3)} =d \Lambda^{(2)} \,, \\
\delta A^{x \vert (4)} =d \Lambda^{x \vert (3)} +\frac{1}{2}(\Lambda^x dB^{(3)}+\Lambda^{(2)}\wedge dA^x)\,,\\
\delta  B^{(6)} =d \Lambda^{(5)} +15 \Lambda^{(2)} \wedge dB^{(3)}\,.
\end{array} \right.
 \label{gauge7d}
\end{eqnarray}

Let us stress that, as for the $D=11$ case, the gauge transformations (\ref{gauge7d}) leaving invariant the $D=7$ FDA can be obtained, for particular $(p-1)$-form parameters, through Lie derivatives acting on the hidden symmetry supergroup $G$ underlying the theory.
$G$ has again the fiber bundle structure $G= H + K$, where $K=G/H$ is spanned by the supervielbein $\lbrace V^a,\psi_A \rbrace$. The (tangent space description of the) fiber $H= H_0+ H_b+H_f$ is again generated by the Lorentz generators in $H_0$ and by the gauge and hidden generators in $H_b$ and $H_f$, where now we have that the set $\{T^x,Z_a,Z_{ab},T^{x\vert}_{\;\;\;abc},Z_{a_1    \ldots a_5}\}$ spans $H_b$, while the set $\{\xi_A,\eta_A\}$ spans $H_f$.

Explicitly, we define the tangent vector in $H_b$ as follows:
\begin{equation}
\vec z \equiv \Lambda^x T^x + \Lambda^a Z_a + \Lambda^{ab} Z_{ab} +\Lambda^{x \vert abc} T^{x|}_{\;\;\; abc} + \Lambda^{a_1    \ldots         a_5} Z_{a_1    \ldots         a_5}\in H_b.
\end{equation}

Now, by a straightforward calculation, one gets that the gauge transformations of $A^x$, $B^{(2)}$, and $B^{(3)}$ in (\ref{gauge7d}) can be obtained by requiring
\begin{align}
& \delta A^x = \ell_{\vec z}A^x\,,\\
& \delta B^{(2)}= \ell_{\vec z} B^{(2)}\,,\\
& \delta B^{(3)}= \ell_{\vec z} B^{(3)} ,
\end{align}
for the following choice of the $(p-1)$-form gauge parameters:
\begin{align}
& \Lambda^x = \imath_{\vec z} A^x\,,\\
& \Lambda^{(1)}=\imath_{\vec z} B^{(2)}\,,\\
& \Lambda^{(2)}= \imath_{\vec z} B^{(3)}\,.
\end{align}

We expect that also for the forms $A^{x \vert (4)}$ and $B^{(6)}$, whose parametrizations in terms of $1$-forms remain still unknown (work in progress on this topic), the rest of the gauge transformations in (\ref{gauge7d}) leaving invariant the supersymmetric FDA should be
\begin{align}
& \delta A^{x|(4)}= \ell_{\vec z}A^{x|(4)}\,,\\
& \delta B^{(6)}= \ell_{\vec z} B^{(6)}\,,
\end{align}
for the following choice of  the $(p-1)$-form gauge parameters:
\begin{align}
& \Lambda^{x|(3)}= \imath_{\vec z} A^{x|(4)}\,,\\
& \Lambda^{(5)}= \imath_{\vec z} B^{(6)}\,.
\end{align}
The corresponding gauge transformations of the $1$-forms in $H_b$ read:
\begin{equation}\label{gauge7hb}
\left\{ \begin{array}{l}
\delta A^x = d \Lambda^x \,, \\
\delta B^a= d\Lambda^a\, , \\
\delta B^{ab} =d \Lambda^{ab} \,, \\
\delta A^{x \vert abc} =d \Lambda^{x \vert abc}\,,\\
\delta B^{a_1    \ldots         a_5} =d \Lambda^{a_1    \ldots         a_5}\,,
\end{array} \right.
\end{equation}
and the corresponding gauge transformations of the $1$-forms in $H_f$ are
\begin{equation}\label{gauge7hf}
\left\{ \begin{array}{l}
\delta \xi_A = D\epsilon'_A   - e_2 \Gamma_a \psi_A  \Lambda^a - e_3 \Gamma_{ab}\psi_A \Lambda^{ab} +  \\
 - e_4 \psi_B \sigma^{x \vert B}_{\;\;\;\;\;A} \Lambda^x - e_5 \Gamma_{abc}\psi_B \sigma^{x \vert B}_{\;\;\;\;\;A} \Lambda^{x\vert abc} - e_6 \Gamma_{a_1    \ldots        a_5}\psi_A \Lambda^{a_1    \ldots        a_5}\,,\\
\delta \eta_A = D\epsilon''_A - l_2 \Gamma_a \psi_A \Lambda^a - l_3 \Gamma_{ab}\psi_A \Lambda^{ab} +  \\
 -l_4 \psi_B \sigma^{x \vert B}_{\;\;\;\;\;A} \Lambda^x - l_5 \Gamma_{abc}\psi_B \sigma^{x \vert B}_{\;\;\;\;\;A} \Lambda^{x\vert abc} - l_6 \Gamma_{a_1    \ldots        a_5}\psi_A \Lambda^{a_1    \ldots        a_5} \,.
\end{array} \right.
\end{equation}
The parameters $\Lambda$'s appearing in (\ref{gauge7hb}) are arbitrary Lorentz (and/or $SU(2)$) valued $0$-forms, while $\epsilon'_A$ and $\epsilon''_A$ in (\ref{gauge7hf}) are arbitrary spinor parameters.

\section{Relation between $D=7$ and $D=11$ supergravities}\label{11d7d}

The hidden Lie superalgebra we have presented and discussed in Section \ref{7D} is the most general one for the minimal $\mathcal{N}=2$, $D=7$ supergravity theory.
As we said in \cite{Hidden}, we now expect that, for special choices of the parameters, the whole structure could be retrieved by dimensional reduction of the $D=11$ supergravity theory, in the case where four of the eleven-dimensional space-time directions belong to a four-dimensional compact manifold preserving one-half of the supercharges.

In \cite{Minimal} (the other work in which I dealt with $D=7$ supergravity), we have explicitly performed the dimensional reduction of $D=11$ supergravity on an orbifold $T^4/\mathbb{Z}_2$ to the minimal $D=7$ theory. There, we pointed out that the minimal $D=7$ theory can be obtained as a truncation of the dimensional reduction of $D=11$ supergravity on a torus $T^4 $ (that would give the maximal $D=7$ theory), where the $SO(4)=SO(3)_+\times SO(3)_-$ holonomy on the internal manifold is truncated to $SO(3)_+$, so that only the reduced fields which are $SO(3)_-$-singlets are retained.

From the point of view of the fermionic fields, the truncation selects only $16$ out of the $32$ components of the eleven-dimensional Majorana spinors, which result to be described by pseudo-Majorana spinors valued in the seven-dimensional $SU(2)= SO(3)_+$ $R$-symmetry. In particular, the eleven-dimensional gravitino $1$-form $\Psi$ becomes, in $D=7$,
\begin{equation}
\Psi \rightarrow \psi_A \,, \;\;\; \text{with} \;\; A=1,2\,.
\end{equation}

As far as the bosonic fields are concerned, let us now parametrize the Lie algebra of $SO(4)$ (the holonomy group of the internal manifold) in terms of the four-dimensional `t Hooft matrices $J^{x\,\pm}_{ij}$, where $x=1,2,3$ and $i,j,  \ldots     =1 \ldots         ,4$ (the reader can find their properties in Section \ref{tooooft} of Appendix \ref{apphidden}). 

The above-mentioned truncation corresponds to drop out the contributions that are proportional to $J^{x\,-}_{ij}\in SO(3)_-$ in the decomposition of the eleven-dimensional bosonic forms to $D=7$, so that
\begin{eqnarray}
A^{(3)} &\to& B^{(3)} + A^x \wedge J^{x\,+}_{ij} V^i\wedge V^j , \label{117a3}\\
B^{(6)} &\to& B^{(6)} + A^{x\vert(4)} \wedge J^{x\,+}_{ij} V^i\wedge V^j- 8 B^{(2)}\wedge \Omega^{(4)} , \label{117b6}
\end{eqnarray}
where $V^i$'s are the vielbein of the compact manifold and where $\Omega^{(4)}=\frac 1{4!}V^{i_1}\wedge  \ldots  \wedge V^{i_4}\epsilon_{i_1    \ldots         i_4}$ denotes its volume-form.

Then, let us consider the dimensional reduction of the Lorentz-valued $1$-forms $ \{B^{\hat a\hat b}, \; B^{\hat a_1     \ldots         \hat a_5}\}$ (defining the Lie superalgebra hidden in the $D=11$ FDA) to the minimal $D=7$ theory.
The comparison of the $D=11$ to the $D=7$ theory would require to consider the version of the seven-dimensional theory which also includes the $1$-form $B^{a_1 \ldots a_5}$, that, in seven dimensions, is associated with the (non-dynamical) volume-form $dB^{(6)}$.
Indeed, by a straightforward dimensional reduction, we obtain:
\begin{eqnarray}
B^{\hat a\hat b} &\to& \left\{
                                \begin{array}{ll}
                                 B^{ab} \\
                                  A^x \,J^{x\,+}_{ij}
                                \end{array}
                              \right. ,  \label{babi}\\
 B^{\hat a_1    \ldots         \hat a_5} &\to& \left\{
                                \begin{array}{ll}
                                 B^{a_1    \ldots         a_5} \\
                                 - \frac{3\ii }{2 } A^{x \vert abc} \,J^{x\,+}_{ij}   \\
                                 - B^a \epsilon_{i_1    \ldots         i_4}
                                \end{array}
                              \right. , \label{bab5}
\end{eqnarray}
being $\hat a=0,1,  \ldots  10$, $a=0,1, \ldots  6$, and $i=7, \ldots 10$. Let us observe that neglecting $B^{a_1    \ldots         a_5}$ would imply, for consistency, to drop out also all the other forms in (\ref{bab5}).

We are now going to compare the dimensional reduction of eleven-dimensional fields considering only the fields appearing in the parametrization of the $3$-form (due to the fact that the hidden superalgebra underlying the $D=11$ theory was obtained in \cite{D'AuriaFre} by parametrizing only the $3$-form $A^{(3)}$ in terms of $1$-forms, without considering the one of the $6$-form $B^{(6)}$).
Then, considering the fact that the $D=7$ field $B^{(2)}$ descends from the $D=11$ $6$-form $B^{(6)}$ (as we can see in equation (\ref{117b6})), the comparison of the two theories could shed some light on the parametrization of the form $B^{(6)}$ in the $D=11$ model. 
In particular, the analysis done in Section \ref{7D} shows that the full hidden superalgebra in $D=7$ also includes a second nilpotent fermionic generator dual to a spin-$3/2$ field appearing in the parametrization of $B^{(2)}$ (see equation (\ref{B2par})); since $B^{(2)}$ is a descendent of $B^{(6)}$, this could suggest that considering also the parametrization of $B^{(6)}$ in the $D=11$ case would amount to include one extra, nilpotent, fermionic generator. An explicit verification of this conjecture is left to future investigations.

The set of relations we found in \cite{Hidden} between the $D=7$ and $D=11$ structure constants are the following ones:
\begin{align} \label{rele117}
& e_1 = \ii E_1, \;\;\; e_2 = 120 \ii E_3, \;\;\; e_3 =E_2, \nonumber \\
& e_4 = 4 \ii E_2, \;\;\; e_5 =120E_3 , \;\;\; e_6 = \ii E_3 . 
\end{align}
The corresponding relation between the coefficients in the parametrizations of the $3$-form are:
\begin{align}\label{t117}
& \tau_0 = T_0, \;\;\; \tau_1=0 , \;\;\; \tau_2 =- 24 T_2, \;\;\; \tau_3= T_1 , \;\;\; \tau_4=7200 T_3 , \;\;\;
\tau_5 =-12960 T_4 , \nonumber \\
& \tau_6= - 216 T_2, \;\;\; \tau_7 =144 T_2 , \;\;\; \tau_8 = -4T_1, \;\;\; \tau_9 =216\cdot 180 T_4 , \;\;\; \tau_{10}=T_2 , \nonumber \\
& \tau_{11}=0, \;\;\; \tau_{12}=-240 T_3 , \;\;\; \tau_{13}=0, \;\;\; \tau_{14}=1944 T_4 .
\end{align}
Let us observe that, in particular, in the dimensional-reduced theory we have $\tau_1=0$, $\tau_{11}=0$, and $\tau_{13}=0$.

The dimensional reduction of the $D=11$ theory to the $D=7$ one also entails the relation (\ref{117a3}), implying the condition $T_0=1$ on the set of coefficients of the $D=11$ case; curiously, this selects the particular solution (\ref{t01}) we have recalled in Appendix \ref{apphidden}, originally found in \cite{D'AuriaFre}. 

Let us finally report in the following the (anti)commutation relations of the hidden superalgebra in the $D=7$ case obtained by dimensional reduction from the $D=11$ theory:
\begin{align}
& \lbrace Q^A, \bar{Q}_B \rbrace = - \ii \Gamma^a \left(  P_a+ Z_a \right) \delta^A_{\; B} - \frac{1}{2} \Gamma^{ab}Z_{ab}\delta^A_{\; B}-\sigma^{x \vert A}_{\;\;\;\;\; B}\left(\ii T^x + \frac{1}{18} \Gamma^{abc}T^{x \vert}_{\;\;\; abc}\right), \\
& [Q_A, P_a]= -2 \ii \begin{pmatrix}
5E_2 \\
0
\end{pmatrix} \Gamma_a Q'_A, \\
& [Q_A, Z_a]=-720 \begin{pmatrix}
E_2/48 \\
E_2/72
\end{pmatrix} \Gamma_a  Q'_A, \\
& [Q_A, Z_{ab}]=-4E_2 \Gamma_{ab} Q'_A, \\
& [Q_A, T^x]= - 8 \ii E_2 \sigma^{x \vert B}_{\;\;\;\;\;A} Q'_B , \\
& [Q_A, T^{x \vert}_{\;\;\; abc}]= -1440\begin{pmatrix}
E_2/48 \\
E_2/72
\end{pmatrix}  \Gamma_{abc}\sigma^{x \vert B}_{\;\;\;\;\;A}Q'_B, \\
& [Q_A, Z_{a_1    \ldots        a_5}] = -2(5!)\ii\begin{pmatrix}
E_2/48 \\
E_2/72
\end{pmatrix}  \Gamma_{a_1    \ldots        a_5} Q'_A\,.
\end{align}
Note that here, indeed, two inequivalent solutions exist. In particular, in the second solution the commutator $[Q_A, P_a]$ vanishes in correspondence of the solution $e_1=E_1=0$.
This has a special meaning in the $D=7$ theory, since it can be obtained if we further
require the following identification to hold in the seven-dimensional theory:
\begin{equation}
B^{a_1     \ldots         a_5}= \frac{1}{2} B_{ab}\epsilon^{a_1     \ldots         a_5ab}\label{kfields}\,.
\end{equation}
This identification is possible in $D=7$ due to the degeneration of the Lorentz-index structure for the two $1$-forms in (\ref{kfields}); however, in the corresponding $D=11$ theory, the two $1$-forms that get identified through (\ref{kfields}), namely $B^{ab}$ and $B^{a_1 \ldots a_5}$, are associated with the mutually non-local exterior forms $A^{(3)}$ and $B^{(6)}$, respectively. As we have said in \cite{Hidden}, we speculate that the absence of the coupling between the translation generator and the fermionic generator $Q'$ in this case could possibly be related to the intrinsically topological structure in $D=11$ inherent in the identification (\ref{kfields}).

\section{Further analysis of the symmetries of $D=11$ supergravity}\label{furthanal}

In this section, that is based on the work \cite{Malg} in collaboration with L. Andrianopoli and R. D'Auria, our aim is to clarify the relations occurring among the following superalgebras: The $\osp(1|32)$ algebra, the $M$-algebra and the DF-algebra (that is the hidden superalgebra underlying the FDA of $D=11$ supergravity introduced in \cite{D'AuriaFre}, further analyzed in \cite{Hidden}, and recalled before in this thesis).

The DF-algebra found by R. D'Auria and P. Fré in $1981$ can be seen as a (Lorentz-valued) central extension of the $M$-algebra including the nilpotent fermionic generator $Q'$. 
On the other hand, $\osp(1|32)$ is the most general simple superalgebra involving a fermionic generator with $32$ components, $Q_\alpha$, $\alpha =1,\ldots ,32$. 
This is also the dimension of the fermionic generator of eleven-dimensional supergravity. It is then natural that, already from the first construction of $D=11$ supergravity in \cite{Cremmer}, it was conjectured that $\osp(1|32)$ should somehow underlie, at least in some contracted version, the $D=11$ theory. They are however quite different: As we have already said, $D=11$ supergravity contains, besides the super-Poincar\'{e} fields given by the Lorentz spin connection $\omega^{ab}$ and the supervielbein $\lbrace{ V^a,\Psi^\alpha \rbrace}$ ($a=0,1,\ldots ,10$), also a $3$-form $A^{(3)}$, satisfying, in the superspace vacuum, the equation we have already discussed, namely
\begin{equation}\label{solito}
d A^{(3)}-\frac12 \bar\Psi\wedge\Gamma_{ab}\Psi\wedge V^a\wedge V^b=0.
\end{equation}
This theory, as we have largely discussed, is based on a FDA on the superspace spanned by the supervielbein. 

On the other hand, the fields involved in $\osp(1|32)$ are $1$-forms dual to generators which include, besides the $AdS$ generators $J_{ab}$ and $P_a$ ($a=0,1,\ldots,10$), and the supersymmetry charge $Q_\alpha$, also an extra generator $Z_{a_1\ldots a_5}$ carrying five antisymmetrized Lorentz indexes (its dual is a five-indexed antisymmetric Lorentz $1$-form $B^{a_1\ldots a_5}$).
Thus, in the case of $\osp(1|32)$, the set of $1$-forms is $\sigma^\Lambda\equiv\{\omega^{ab}, V^a, B^{a_1\ldots a_5}, \Psi^\alpha\}$, and these $1$-forms are dual to the $\osp(1|32)$ generators $T_\Lambda\equiv \{J_{ab}, P_a, Z_{a_1\ldots a_5}, Q_\alpha\}$, respectively.

The explicit form of the Maurer-Cartan equations for the $\osp(1|32)$ Lie superalgebra, once decomposed in terms of its subalgebra $\so(1,10)$, is the following:
\begin{equation}
\label{osp32}
\begin{aligned}
& d\omega^{ab} -\omega^{ac}\wedge \omega_{c}^{\ b} + e^2 V^a \wedge V^b+ \frac{e^2}{4 !}B^{a b_1 \ldots b_4}\wedge B^b_{\;\; b_1 \ldots b_4} +\frac{e}{2}\bar{\Psi}\wedge \Gamma^{ab}\Psi=0 , \\
& DV^{a}-  \frac{e}{2\cdot (5!)^2}\epsilon^{a b_1 \ldots b_5 c_1 \ldots c_5}B_{b_1 \ldots b_5}\wedge B_{c_1 \ldots c_5}-   \frac{\rm i}{2}\bar{\Psi} \wedge \Gamma^a \Psi =0,\\
& D B^{a_1 \ldots a_5} - \frac{e}{5!}\epsilon^{a_1 \ldots a_5 b_1 \ldots b_6}B_{b_1 \ldots b_5}\wedge V_{b_6}+ \frac{5 e}{6!}\epsilon^{a_1 \ldots a_5 b_1 \ldots b_6}B^{c_1 c_2}_{\;\;\;\;\;\;\; b_1 b_2 b_3}\wedge B_{c_1 c_2 b_4 b_5 b_6} + \\
& - \frac{\rm i}{2}\bar{\Psi}\wedge \Gamma^{a_1 \ldots a_5}\Psi =0 ,\\
& D \Psi - \frac{\rm i}{2} e \Gamma_a \Psi \wedge V^a - \frac{\rm i}{2 \cdot 5!} e \Gamma_{a_1 \ldots a_5}\Psi \wedge B^{a_1 \ldots a_5}=0,
\end{aligned}
\end{equation}
being $D= D(\omega)$ the Lorentz covariant derivative. Here, $e$ is a dimensionful constant. Indeed, in (\ref{osp32}) we are considering dimensionful $1$-form generators: Precisely, the bosonic $1$-forms $V^a$ and $B^{a_1\ldots a_5}$ carry length dimension $1$, the gravitino $1$-form $\Psi$ has length dimension $1/2$, while the Lorentz spin connection $\omega^{ab}$ is adimensional. As a consequence of this, the parameter $e$ has dimension $-1$ and can be thought of as proportional to the square root of a cosmological constant.

Let us observe that the presence of the bosonic $1$-form $B^{a_1\ldots a_5}$ in the simple superalgebra (\ref{osp32}) does \textit{not} allow to interpret a theory based on such an algebra as a theory on \textit{ordinary superspace}, whose cotangent space is spanned by the supervielbein, with Lorentz spin connection $\omega^{ab}$. To allow an interpretation of this type, the Lorentz covariant derivative of the $1$-form fields should be expressed only in terms of $2$-forms bilinears of the supervielbein.
This is the case of the Lie supergroup manifold $OSp(1|32)$ unless one would enlarge the ordinary notion of superspace by including the $1$-form $B^{a_1\ldots a_5}$ as an \textit{extra bosonic cotangent vector}, playing, in a certain sense, the role of a ``\textit{dual vielbein}''.

In the current section, we investigate the role played by $\osp(1|32)$ on the FDA of $D=11$ supergravity, and clarify the analogies and differences between the two algebraic structures.
Referring to what we have just said, the comparison between $D=11$ supergravity and a theory based on the Lie superalgebra $\osp(1|32)$ could be summarized as follows: On one hand, we have a theory which is well defined on superspace, but which involves a $3$-form, and is therefore based on an algebraic structure which is associated with a FDA rather than a Lie superalgebra; on the other hand, we have an algebraic structure corresponding to a Lie superalgebra, $\osp(1|32)$, which one can however hardly associate to a theory on ordinary superspace (due to the fact that it defines the tangent space to a Lie supergroup manifold corresponding to an \textit{enlarged superspace}).

As we have already discussed, the Lie superalgebra of $1$-forms leaving invariant $D=11$ supergravity and reproducing the FDA \textit{on ordinary eleven-dimensional superspace} (introduced in \cite{D'AuriaFre}) involves the extra bosonic $1$-forms $B^{a_1\ldots a_5}$ and $B^{ab}$, dual to the central generators of a central extension of the supersymmetry algebra including, besides the Poincar\'e algebra, the anticommutator
\begin{equation}\label{bla}
\left\{ Q ,\bar Q \right\} =- \left[\ii  \left(C
\Gamma ^{a} \right)  P_{a}+\frac{1}{2}\left( C \Gamma
^{ab}\right) Z_{ab}
+\frac{\ii}{5!}\left( C \Gamma ^{a_1\ldots a_5} \right)Z_{a_1\ldots a_5}\right]\, .
\end{equation}

The Lie superalgebra (\ref{bla}) was named \textit{$M$-algebra} \cite{deAzcarraga:1989mza, Sezgin:1996cj, Townsend:1997wg, Hassaine:2003vq, Hassaine:2004pp}, and it is commonly considered as the Lie superalgebra underlying $M$-theory \cite{Schwarz:1995jq, Duff:1996aw, Townsend:1996xj} in its low-energy limit, corresponding to eleven-dimensional supergravity in the presence of non trivial $M$-brane sources \cite{Achucarro:1987nc, Townsend:1995gp, Bergshoeff:1987cm, Duff:1987bx, Bergshoeff:1987qx, Townsend:1995kk} (as we have already mentioned in Chapter \ref{chapter 2}).
The algebra (\ref{bla}) generalizes to $D=11$ supergravity (and, by dimensional reduction, to all supergravities in dimensions higher than four) the topological notion of central extension of the supersymmetry algebra introduced in \cite{Witten:1978mh}, as it encodes the on-shell duality symmetries of string and $M$-theory \cite{Hull:1994ys, Duff:1990hn, Witten:1995ex, Duff:1995wd, Becker:1995kb}.

A field theory theory based on the $M$-algebra (\ref{bla}), however, is naturally described on an \textit{enlarged superspace} spanning, besides the gravitino $1$-forms, also the set of bosonic fields $\{V^a, B^{ab}, B^{a_1\ldots a_5} \}$.
If the low energy limit of $M$-theory should be based on the ordinary superspace spanned by the supervielbein $\lbrace{ V^a,\Psi \rbrace}$, as it happens for $D=11$ supergravity, then the $M$-algebra \textit{cannot} be the final answer, due to the fact that its generators are \textit{not} sufficient to reproduce the FDA on which $D=11$ supergravity is based.

This issue was raised already in \cite{D'AuriaFre}, and, as we have recalled, solved by still enlarging the enlarged superspace with the inclusion of an \textit{extra, nilpotent, fermionic generator} $Q'$, whose dual spinor $1$-form $\eta$ satisfies 
\begin{equation}\label{detanew}
  D\eta=\ii E_1 \Gamma_a \Psi \wedge V^a + E_2 \Gamma_{ab}\Psi \wedge B^{ab}+ \ii E_3 \Gamma_{a_1\ldots a_5}\Psi \wedge B^{a_1\ldots a_5}\,.
\end{equation}
In this way, the authors of \cite{D'AuriaFre} disclosed the hidden superalgebra that, in \cite{Malg} as well as in this thesis, we have called DF-algebra, containing the $M$-algebra (\ref{bla}) as a subalgebra, but including also a nilpotent fermionic generator $Q'$ (satisfying $Q'^2=0$), dual to a spinor $1$-form $\eta$, whose contribution to the DF-algebra Maurer-Cartan equations is given by (\ref{detanew}).
The DF-algebra underlies the formulation of the $D=11$ FDA on superspace (and, therefore, the $D=11$ theory on space-time introduced in \cite{Cremmer}) once the $3$-form is expressed in terms of $1$-form generators including, as we have already seen, also the spinor $1$-form $\eta$.

As we have shown in Ref. \cite{Hidden} and previously reviewed, this in turn implies that the group manifold generated by the DF-algebra has a fiber bundle structure whose base space is ordinary superspace, while the fiber is spanned, besides the Lorentz spin connection $\omega^{ab}$, also by the bosonic $1$-form generators $B^{ab}$ and $B^{a_1,\dots a_5}$. In particular, the nilpotent generator $Q'$, dual to the $1$-form $\eta$, allows to consider the extra $1$-forms $B^{ab}$ and $B^{a_1\ldots a_5}$ as gauge fields in ordinary superspace and not as additional vielbeins of an enlarged superspace.

During the years, many attempts have been made to relate $\osp(1|32)$ to the full DF-algebra, or to its $M$-subalgebra, and, thus, to $D=11$ supergravity (see, in particular, Ref. \cite{Castellani}).
Furthermore, in Refs. \cite{Azca1, Azca2, Azca3, Iza1}, the authors discussed the precise relation occurring between the $M$-algebra and $\osp(1|32)$. In particular, in \cite{Azca1} the general theory of expansions of Lie algebras was introduced and applied, showing that the $M$-algebra can be found as an expansion of $\osp(1|32)(2, 1, 2)$ (this was further explained in \cite{Azca2} and considered in the context of the so called $S$-expansion method in \cite{Iza1}). Then, in \cite{Azca3}, the possibility of an ``enlarged superspace variables/fields correspondence principle in $M$-theory'' was discussed.

Important contributions to the relations among $\osp(1|32)$, the full DF-algebra, or its $M$-subalgebra, and $D=11$ supergravity were also given in Refs. \cite{Bandos:2004xw, Bandos:2005, Hassaine:2003vq, Hassaine:2004pp, Iza1, Troncoso:1997va, Horava:1997dd, Troncoso:1998ng, Zanelli:2005sa, Izaurieta:2011fr}, principally in the construction of a Chern-Simons $D=11$ supergravity based on the supergroup $OSp(1|32)$.

Let us mention here that in Chapter \ref{chapter 6} of this thesis we will describe an analytic method for connecting different (super)algebras and, in particular, we will give an example of application in which we will show the way in which $\osp(1|32)$ is linked to the $M$-algebra, reproducing the result obtained in \cite{Iza1}.

In this section, we will show, on the same lines of the work \cite{Malg}, that the DF-algebra (which accounts for the non-trivial $4$-form cohomology of $D=11$ supergravity) \textit{cannot} be (directly) found as a contraction from $\osp(1|32)$.
More precisely, we will focus on the $4$-form cohomology in $D=11$ superspace of the supergravity theory, strictly related to the presence in the theory of the $3$-form $A^{(3)}$. Indeed, once formulated in terms of its hidden superalgebra of $1$-forms, we will find that $A^{(3)}$ can be decomposed into the sum of two parts, having different group-theoretical meaning: One allows to reproduce the FDA of $D=11$ supergravity (due to non-trivial contributions to the $4$-form cohomology in superspace) and explicitly breaks $\osp(1|32)$, while the other does \textit{not} contribute to the $4$-form cohomology (being a closed form in the vacuum); however, this second part defines a one-parameter family of trilinear forms invariant under a symmetry algebra related to $\mathfrak{osp}(1|32)$ by redefining the spin connection and adding a new Maurer-Cartan equation (it is a $3$-cocycle of the FDA enjoying invariance under $OSp(1|32)$).

Moreover, we will further discuss on the crucial role played by the $1$-form spinor $\eta$ (dual to $Q'$) for the $4$-form cohomology of the $D=11$ theory on superspace.

\subsection{Torsion-deformed $\mathfrak{osp}(1|32)$ algebra}\label{unusual}

Let us start our analysis by reformulating, as we have done in \cite{Malg}, $\osp(1|32)$ in such a way to be able to compare it with the DF-algebra and with its $M$-subalgebra.

In particular, as we will see in a while, this reformulation allows to overcome a possible obstruction in relating the two suparalgebras, due to the presence in the $M$-algebra of the bosonic $1$-form generator $B^{ab}$ associated with the central charge $Z_{ab}$, while no such generator appears in the $\osp(1|32)$ Maurer-Cartan equations (\ref{osp32}).
This problem can be indeed easily overcome by exploiting the freedom of redefining the Lorentz spin connection in $\osp(1|32)$ by adding an antisymmetric tensor $1$-form $B^{ab}$ (carrying length dimension $1$) as follows:
\begin{equation}\label{newomega}
  \omega^{ab}\rightarrow \omega^{ab} + eB^{ab}\equiv\hat\omega^{ab}\,,
\end{equation}
where $e$ is a dimensionful parameter, with length dimension $-1$, which can be identified with the one already present in $\osp(1|32)$ as written in (\ref{osp32}). The discussion presented here essentially follows some of the results obtained in Ref. \cite{Castellani} and further analyzed and clarified in \cite{Malg}.

Such a redefinition is always possible and it implies a \textit{change of the torsion $2$-form}.

After this redefinition of the spin connection, renaming $\hat \omega \rightarrow \omega$, equations (\ref{osp32}) take the following form:
\begin{equation}\label{osp32'}
\begin{aligned}
& d\omega^{ab} -\omega^{ac}\wedge \omega_{c}^{\ b} - eDB^{ab} - e^2 B^{ac}\wedge B_{c}^{\ b}  + e^2 V^a \wedge V^b+ \\
& + \frac{e^2}{4 !}B^{a b_1 \ldots b_4}\wedge B^b_{\;\; b_1 \ldots b_4} +\frac{e}{2}\bar{\Psi}\wedge \Gamma^{ab}\Psi=0 \, , \\
& DV^{a} +e B^{ab}\wedge V_b -  \frac{e}{2\cdot (5!)^2}\epsilon^{a b_1 \ldots b_5 c_1 \ldots c_5}B_{b_1 \ldots b_5}\wedge B_{c_1 \ldots c_5}-   \frac{\rm i}{2}\bar{\Psi} \wedge \Gamma^a \Psi =0\, ,\\
& D B^{a_1 \ldots a_5} - 5 e B^{m [a_1}\wedge B^{a_2 \ldots a_5 ]}_{\; \; \; \; \; \; \; \;\;\;\; \; m}- \frac{e}{5!}\epsilon^{a_1 \ldots a_5 b_1 \ldots b_6}B_{b_1 \ldots b_5}\wedge V_{b_6}+ \\
&+\frac{5 e}{6!}\epsilon^{a_1 \ldots a_5 b_1 \ldots b_6}B^{c_1 c_2}_{\;\;\;\;\;\;\; b_1 b_2 b_3}\wedge B_{c_1 c_2 b_4 b_5 b_6}- \frac{\rm i}{2}\bar{\Psi}\wedge \Gamma^{a_1 \ldots a_5}\Psi =0\, ,\\
& D \Psi - \frac{\rm i}{2} e \Gamma_a \Psi \wedge V^a - \frac{1}{4}e \Gamma_{ab}\Psi \wedge B^{ab}- \frac{\rm i}{2 \cdot 5!} e \Gamma_{a_1 \ldots a_5}\Psi \wedge B^{a_1 \ldots a_5}=0\,.
\end{aligned}
\end{equation}

Now, if one requires, as an extra condition, that the Lorentz $\so(1,10)$ spin connection $\omega^{ab}$ satisfies
\begin{equation}
\mathcal{R}^{ab}=d\omega^{ab} -\omega^{ac}\wedge \omega_c^{\ b}=0 , \label{min}
\end{equation}
corresponding to a Minkowski background $D^2=0$,
then, the first equation in (\ref{osp32'}) (which corresponds to the Maurer-Cartan equation for the $\osp(1|32)$ connection) splits into two equations: Equation (\ref{min}) plus the condition
\begin{equation}\label{bab}
DB^{ab} + e B^{ac}\wedge B_{c}^{\ b}  = e V^a \wedge V^b+ \frac{e}{4 !}B^{a b_1 \ldots b_4}\wedge B^b_{\;\; b_1 \ldots b_4} +\frac{1}{2}\bar{\Psi}\wedge \Gamma^{ab}\Psi\,,
\end{equation}
which defines the Maurer-Cartan equation for the new tensor field $B^{ab}$.

The algebra obtained from $\osp(1|32)$ after having performed the above procedure is \textit{not} isomorphic to $\osp(1|32)$, because of the extra constraint (\ref{min}), which implies (\ref{bab}).
A generalization of this superalgebra was introduced in the literature in 1982, in Ref. \cite{Castellani}.\footnote{The paper \cite{Castellani} appeared soon after \cite{D'AuriaFre}, as a possible semisimple extension of the DF-algebra.}

Actually, the algebra introduced in \cite{Castellani} generalizes (\ref{osp32'}) with the constraint (\ref{bab}), since it contains
an extra Maurer-Cartan equation for an extra spinor $1$-form of length dimension $3/2$. In \cite{Malg}, we have called it $\eta_{SB}$ (to avoid confusion with the extra spinor $1$-form $\eta$ of the DF-algebra). 
The explicit form of the Maurer-Cartan equations for the algebra presented in \cite{Castellani} reads as follows:
\begin{equation}\label{castella}
\begin{aligned}
& \mathcal{R}^{ab}\equiv d\omega^{ab} -\omega^{ac}\wedge \omega_{c}^{\ b} = 0 , \\
& DV^{a}=- e B^{ab}\wedge V_b +  \frac{e}{2\cdot (5!)^2}\epsilon^{a b_1 \ldots b_5 c_1 \ldots c_5}B_{b_1 \ldots b_5}\wedge B_{c_1 \ldots c_5}+   \frac{\rm i}{2}\bar{\Psi} \wedge \Gamma^a \Psi , \\
& D B^{ab} =e V^a \wedge V^b- e B^{ac}\wedge B_{c}^{\; b}+ \frac{e}{24}B^{a b_1 \ldots b_4}\wedge B^b_{\;\; b_1 \ldots b_4}+ \frac{1}{2}\bar{\Psi}\wedge \Gamma^{ab}\Psi, \\
& D B^{a_1 \ldots a_5} =5 e B^{m [a_1}\wedge B^{a_2 \ldots a_5 ]}_{\; \; \; \; \; \; \; \;\;\;\; \; m} + \frac{e}{5!}\epsilon^{a_1 \ldots a_5 b_1 \ldots b_6}B_{b_1 \ldots b_5}\wedge V_{b_6}+ \\
&-\frac{5 e}{6!}\epsilon^{a_1 \ldots a_5 b_1 \ldots b_6}B^{c_1 c_2}_{\;\;\;\;\;\;\; b_1 b_2 b_3}\wedge B_{c_1 c_2 b_4 b_5 b_6}+ \frac{\rm i}{2}\bar{\Psi}\wedge \Gamma^{a_1 \ldots a_5}\Psi , \\
& D \Psi = \frac{\rm i}{2} e \Gamma_a \Psi \wedge V^a + \frac{1}{4}e \Gamma_{ab}\Psi \wedge B^{ab}+ \frac{\rm i}{2 \cdot 5!} e \Gamma_{a_1 \ldots a_5}\Psi \wedge B^{a_1 \ldots a_5}, \\
& D \eta_{SB} = \frac{\rm i}{2}   \Gamma_a \Psi \wedge V^a + \frac{1}{4}  \Gamma_{ab}\Psi \wedge B^{ab}+ \frac{\rm i}{2 \cdot 5!}  \Gamma_{a_1 \ldots a_5}\Psi \wedge B^{a_1 \ldots a_5}= \frac 1 e D\Psi\, ,
\end{aligned}
\end{equation}
where $D$, as usual, denotes the Lorentz covariant derivative.

Actually, (\ref{castella}) is a (Lorentz-valued) central extension of (\ref{osp32'}) after having imposed (\ref{min}) and, consequently, (\ref{bab}), since the dual of $\eta_{SB}$ is a nilpotent generator commuting with all the generators but the Lorentz ones (the rational reason of its introduction being that of \textit{trying} to reproduce the DF-algebra with the In\"{o}n\"{u}-Wigner contraction $e \rightarrow 0$). In the following, we will refer to the algebra (\ref{castella}) as the SB-algebra, and to its semisimple subalgebra given by (\ref{osp32'}), (\ref{min}), and (\ref{bab}) as the \textit{restricted} SB-algebra (for short, RSB-algebra in the sequel).\footnote{The acronym SB(-algebra) stands for ``Stony Brook''(-algebra). Indeed, in \cite{Malg}, having observed that the authors of \cite{Castellani} were all affiliated to Stony Brook University, we decided of adopting the acronym SB.}

Let us mention that the algebra (\ref{castella}) is actually closed under differentiation even if one deletes the last equation containing the covariant differential $D\eta_{SB}$ (namely, when considering what corresponds to its subalgebra, the RSB-algebra); this Maurer-Cartan equation is, in fact, a double of the gravitino Maurer-Cartan equation, rescaled with the parameter $e$.\footnote{Let us mention that, as said in \cite{Castellani}, the group associated with the RSB-algebra is $O(10,1) \times OSp(1 \vert 32)$.}

Furthermore, the Maurer-Cartan equation for the $1$-form $\eta_{SB}$ does \textit{not} depend on any free parameter, meaning that it cannot be identified with the 1-form $\eta$ of the DF-algebra (see equation (\ref{detanew})).

We can thus conclude that, at the price of introducing the (torsion) field $B^{ab}$ satisfying (\ref{bab}), $\osp(1|32)$ can be mapped into the RSB-algebra, whereby the spin connection $\omega^{ab}$ is identified with the Lorentz connection of a $D=11$ Minkowski space-time with vanishing Lorentz curvature (albeit with a modification of the (super)torsion, which is non vanishing in both cases).
We refer to the RSB-algebra also as ``torsion-deformed $\osp(1|32)$ algebra''.

The RSB-algebra can be easily compared to the $M$-algebra. Indeed, the Maurer-Cartan equations of the RSB-algebra exactly reproduce the $M$-algebra (but not the full DF-algebra) by the In\"{o}n\"{u}-Wigner contraction $e \rightarrow 0$. 

For the RSB-algebra given by (\ref{osp32'}), (\ref{min}), and (\ref{bab}), analogously to what happens for $\osp(1|32)$ in the standard formulation (\ref{osp32}), an interpretation in terms of ordinary superspace spanned by the supervielbein is \textit{not} possible, because of the presence of two kinds of extra ``vielbeins'', $B^{ab}$ and $B^{a_1\cdots a_5}$, whose dual generators are not (Lorentz-valued) central charges, in this case: Indeed, the bosonic $1$-forms $B^{ab}$ and $B^{a_1\cdots a_5}$ are elements of a \textit{semisimple} bosonic subalgebra and, for this reason, independently of their super-extension, they cannot be related to central charges.
The same observation also holds for the SB-algebra, since it shares the same bosonic subalgebra with the RSB-algebra.

On the other hand, the DF-algebra is \textit{non-semisimple} and it enjoys a fiber bundle structure over ordinary superspace, where the fiber includes, besides the Lorentz connection, also the $1$-forms $B^{ab}$ and $B^{a_1\cdots a_5}$; in this theory, they are dual to Lorentz-valued central charges and can therefore be interpreted as abelian gauge fields on superspace (as we have shown in \cite{Hidden} and reviewed in Section \ref{CE} of this chapter).
  
At the dynamical level, the space-time components $B^{ab}_{\;\;\;\;|c}$ and $B^{a_1\cdots a_5}_{\;\;\;\;\;\;\;\;\;\;\;|c}$ of the $1$-form gauge fields $B^{ab}$ and $B^{a_1\cdots a_5}$ (we are using rigid Lorentz indexes), present extra degrees of freedom with respect to the component fields $A_{[abc]}$ and  $B_{[a_1\cdots a_6]}$, respectively, appearing in the FDA on which $D=11$ supergravity is based.\footnote{The possible interpretation of the field $A_{\mu\nu\rho}$ of $D=11$ supergravity in terms of the totally antisymmetric part of the contorsion tensor in $\osp(1|32)$ was already considered in Ref. \cite{Troncoso:1997va}. $B_{[a_1\cdots a_6]}$ are the components of the $6$-form $B^{(6)}$, related to $A^{(3)}$ by Hodge duality of their field-strengths.} As we will clarify in the following, the extra degrees of freedom are dynamically decoupled from the physical spectrum in the DF-algebra (contrary to what happens in the case of the $M$-algebra) because of the presence of the spinor $1$-form $\eta$, dual to the nilpotent fermionic generator $Q'$, which thus behaves as a BRST ghost, guaranteeing the equivalence of the hidden algebra with the supersymmetric FDA. This mechanism does \textit{not} work for the semisimple RSB-algebra, since, in that case, the extra components in $B^{ab}_{\;\;\;\;|c}$ and $B^{a_1\cdots a_5}_{\;\;\;\;\;\;\;\;\;\;\;|c}$ besides the fully antisymmetrized ones  are \textit{not} decoupled from the physical spectrum.

The detailed relation between the full SB-algebra and the DF-algebra (including the relations and differences between the extra spinors $\eta_{SB}$ and $\eta$ of the two algebras) is more subtle, and will be analyzed in the following, on the same lines of what we have done in the paper \cite{Malg}.

Before moving to that topic, let us analyze some properties of the RSB-algebra related to its feature of being a semisimple superalgebra.

\subsubsection{Properties of the RSB-algebra}

For any semisimple Lie algebra, as it is well known from the Lie algebras CE-cohomology (see Chapter \ref{chapter 2}), and as already pointed out in \cite{Castellani}, it is always possible to define a non-trivial $3$-cocycle $H^{(3)}$ (satisfying $dH^{(3)}=0$) given by
\begin{equation}\label{3cocycle}
 H^{(3)}= C_{ABC}\, \sigma^A\wedge\sigma^B\wedge\sigma^C = -2 h_{AB}\, \sigma^A \wedge d \sigma^B\, ,
\end{equation}
where $C_{ABC}=h_{AL}\, C^L_{\ BC}$ are the structure constants of the algebra, with an index lowered with the Killing metric $h_{AB}$.

The closure of $H^{(3)}$ can be easily proven by using the Maurer-Cartan equations:
\begin{equation}\label{MC}
  d\sigma^A+\frac12 C^A_{\ BC} \, \sigma^B\wedge\sigma^C=0 ,
\end{equation}
where the $\sigma^A$'s $1$-forms are in the coadjoint representation of the Lie (super)algebra. Indeed, we can write
\begin{equation}\label{ji}
  dH^{(3)}=- \frac 32 C_{ABC}\, C^C_{\ LM} \, \sigma^A\wedge\sigma^B\wedge\sigma^L\wedge \sigma^M=0 \,,
\end{equation}
where the vanishing of this expression is due to (super-)Jacobi identities.

For the semisimple RSB-algebra, the set of $1$-forms is $\sigma^A=\{\omega^{ab},V^a,\Psi^\alpha,\- B^{ab},B^{a_1\ldots a_5}\}$. However, the Lorentz quotient of the RSB-group admits the Lorentz-covariant Maurer-Cartan equations
\begin{equation}\label{MC2}
  D\sigma^\Lambda+\frac12 C^\Lambda_{\ \Sigma\Gamma}\, \sigma^\Sigma\wedge\sigma^\Gamma=0
\end{equation} 
for the restricted set of $1$-forms $\sigma^\Lambda=\{V^a,\Psi^\alpha,B^{ab},B^{a_1\ldots a_5}\}$, allowing to rewrite
\begin{equation}\label{h3h}
  H^{(3)}=-2\, \sigma^\Lambda\wedge  D\sigma^\Sigma \, h_{\Lambda\Sigma}
\end{equation}
satisfying $dH^{(3)}=0$ (see the CE-theorem \ref{CEteobello} in Chapter \ref{chapter 2}).
From a direct calculation, one can find, up to an overall normalization, that the cocycle $H^{(3)}$ can be written as follows:
\begin{eqnarray}\label{3cocycle2}
 H^{(3)}&=& V^a\wedge DV_a +\frac 12 B^{ab}\wedge DB_{ab}+\frac 1{5!} B^{a_1 \ldots a_5}\wedge DB_{a_1 \ldots a_5}-  \frac 1 e \bar\Psi \wedge D\Psi = \label{h1}\\
 &=&
 e\Bigl( B_{ab} \wedge V^a \wedge V^b + \frac 13 B_{a b}\wedge B^{b} _{\;c}\wedge B^{c a}+\frac 1{4!}B^{b_1 b_2} \wedge  B_{b_1 a_1...a_4}\wedge B_{b_2}^{\; \ a_1...a_4}+ \nonumber\\
& & + \frac 1{(5!)^2}  \epsilon_{a_1...a_5 b_1...b_5 m}B^{a_1...a_5}\wedge B^{b_1...b_5}\wedge V^m + \nonumber\\
& &- \frac 13 \frac 1{[2!\cdot (3!)^2 \cdot 5!]}  \epsilon_{m_1...m_6 n_1...n_5}B^{m_1m_2m_3p_1p_2}\wedge B^{m_4m_5m_6p_1p_2}\wedge B^{n_1...n_5}\Bigr)\,.\label{h2}
\end{eqnarray}
Let us observe that $H^{(3)}$ is actually a bosonic $3$-form (see equation (\ref{h2}), the same expression holding for the $3$-cocycle of its bosonic subalgebra).
An analogous result can be obtained for $\osp(1|32)$ by setting $B^{ab}= 0$ in (\ref{h1}) and (\ref{h2}).

Let us remark that the $e\to 0$ limit of $H^{(3)}$ is a \textit{singular} limit: Indeed, in this limit we have that $H^{(3)}\to 0$, but $\frac 1e H^{(3)}$ is finite if one considers the second expression (\ref{h2}), while $\frac 1e d H^{(3)}\neq 0$ in the limit, corresponding to the fact that the Killing metric of the contracted superalgebra at $e\to 0$ is degenerate.

For $e\neq 0$, instead, $ H^{(3)}$ is a $3$-cocycle of the superalgebra and (following the general Sullivan construction of FDAs \cite{Sullivan}) it could be trivialized in terms of a $2$-form $Q^{(2)}$, writing
\begin{equation}\label{dq}
  dQ^{(2)}+H^{(3)}=0.
\end{equation}
In this way, a new FDA in the semisimple case is realized.

It could be interesting to investigate about a hidden superalgebra of (\ref{dq}), which would allow to parametrize $Q^{(2)}$ in terms of an appropriate set of $1$-forms; however, to ascertain if one can associate a hidden Lie superalgebra to the FDA (\ref{dq}), one has to introduce extra fields besides the set of generators $\{\sigma^\Lambda\}$ of the SB-algebra.
This is left to future investigations.

\subsection{Relating $\mathfrak{osp}(1|32)$ to the DF-algebra}\label{relating}

In the following, we clarify the relation between the DF-algebra and the SB-algebra. 

The complete Maurer-Cartan equations for the DF-algebra can be found in Section \ref{11D} of Chapter \ref{chapter 2}.

For our purpose, it is convenient to rewrite the real parameters $\{ T_i,S_i, E_i\}$ appearing in the parametrization of the $3$-form $A^{(3)}$ in terms of $1$-forms (see the result reported in (\ref{11dsol}) of Chapter \ref{chapter 2}), namely in $A^{(3)}(\sigma)$, as follows:\footnote{Here we have defined, using the notations of \cite{Hidden}, $C\equiv E_2-60 E_3$ and $\alpha \equiv 5!\frac{E_3^2}{C^2}$.}

\begin{equation}\label{deco}
\begin{aligned}
& \left\{\begin{array}{ccl}
T_0 &=& \frac{1}{6}+ \alpha, \\
T_1 &=&-\frac{1}{90} + \frac{1}{3} \alpha, \\
T_2 &=&-\frac 1{4!} \alpha, \\
T_3 &=& \frac{1}{(5!)^2}\alpha , \\
T_4 &=& - \frac{1}{3[2!\cdot (3!)^2 \cdot 5!]} \alpha ,
\end{array}\right.
\left\{\begin{array}{ccl}
S_1 &=& \frac{1}{4!C} + \frac{1}{2\cdot (5!) E_3} \alpha, \\
S_2 &=& -\frac{1}{10\cdot (4!)C} + \frac{1}{4 \cdot(5!)E_3} \alpha, \\
S_3 &=& \frac{1}{2 \cdot (5!)^2 E_3} \alpha ,
\end{array}\right. \\
& \left\{\begin{array}{ccl}
E_1 &=& -10 C + \frac{C^2}{E_3} \alpha, \\
E_2 &=& C + \frac{C^2}{2E_3} \alpha, \\
E_3 &=& \frac{C^2}{ 5! E_3} \alpha .
\end{array}\right.
\end{aligned}
\end{equation}

Given the above expressions, it is then useful to decompose the spinor $1$-form $\eta$ of the DF-algebra as follows:
\begin{equation}
\eta= -10 C(\xi + \alpha \lambda), \label{etadec}
\end{equation}
where we introduced the spinor $1$-forms $\xi$ and $\lambda$ satisfying
\begin{eqnarray}
D\xi &=&  \ii \Gamma_a \Psi \wedge V^a -\frac 1{10} \Gamma_{ab}\Psi \wedge B^{ab}\,,\label{dxinew}\\
D\lambda &=&-\frac{C}{5 E_3}\left( \frac {\ii}2 \Gamma_a\Psi \wedge V^a + \frac 14 \Gamma_{ab}\Psi \wedge B^{ab} + \frac{\ii}{2(5!)}\Gamma_{a_1\ldots a_5}\Psi \wedge B^{a_1\ldots a_5}\right)= \nonumber\\
&=&-\frac{C}{5 E_3}D\eta_{SB}\,. \label{dlambda}
\end{eqnarray}
From equation (\ref{dlambda}), we can now see that $\lambda$ can be chosen as proportional to the spinor $1$-form $\eta_{SB}$ introduced in (\ref{castella}) as a Lorentz-valued central extension of the RSB-superalgebra: 
\begin{equation}
\lambda=- \frac{C}{5 E_3}\eta_{SB}.
\end{equation}

Then, equations (\ref{deco}) and (\ref{etadec}) allow to decompose also $A^{(3)}(\sigma)$ into two pieces, namely
\begin{equation}\label{a3d}
A^{(3)}(\sigma) = A^{(3)}_{(0)} + \alpha A^{(3)}_{(e)},
\end{equation}
where
\begin{equation}\label{a30}
A^{(3)}_{(0)}= \frac{1}{6}\left(B_{ab} \wedge V^a \wedge V^b - \frac{1}{15} B^{ab}\wedge B_{bc} \wedge B^c_{\;\; a} - \frac{5\ii}2\bar \Psi \wedge \Gamma_a \xi \wedge V^a + \frac{1}{4} \bar \Psi \wedge \Gamma_{ab} \xi \wedge B^{ab}\right)\,,
\end{equation}
while
\begin{equation}\label{a3e}
A^{(3)}_{(e)}= \frac 1 e H^{(3)} +2 \bar \eta_{SB}\wedge D \eta_{SB}\,.
\end{equation}
We now recognize, in the first term of (\ref{a3e}), the $OSp(1|32)$-invariant $3$-form $\frac 1e H^{(3)}$ introduced in (\ref{3cocycle2}), which is finite in the $e\to 0$ limit, but looses its character of being a $3$-cocycle (namely, a closed form), becoming just a $3$-cochain of the $M$-algebra. Explicitly, we have:
\begin{align}
\frac 1e H^{(3)}&=
 \Bigl( B_{ab} \wedge V^a \wedge V^b + \frac 13 B_{a b}\wedge B^{b} _{\;c}\wedge B^{c a}+\frac 1{4!}B^{b_1 b_2} \wedge  B_{b_1 a_1...a_4}\wedge B_{b_2}^{\; \ a_1...a_4}+ \nonumber\\
& + \frac 1{(5!)^2}  \epsilon_{a_1...a_5 b_1...b_5 m}B^{a_1...a_5}\wedge B^{b_1...b_5}\wedge V^m + \nonumber\\
&- \frac 13 \frac 1{[2!\cdot (3!)^2 \cdot 5!]}  \epsilon_{m_1...m_6 n_1...n_5}B^{m_1m_2m_3p_1p_2}\wedge B^{m_4m_5m_6p_1p_2}\wedge B^{n_1...n_5}\Bigr)\,, \label{3cocyclelim}
\end{align}
and, by a straightforward differentiation using the Maurer-Cartan equations of DF-algebra (see Section \ref{11D} of Chapter \ref{chapter 2}), we can easily verify that $d\left(\frac 1e H^{(3)}\right)_{e\rightarrow0}= 0$, while
\begin{eqnarray}
dA^{(3)}_{(0)}&=&\frac 12 \bar \Psi \wedge \Gamma_{ab} \Psi \wedge V^a \wedge V^b\label{da30},\\
dA^{(3)}_{(e)}&=&0\label{da3e}\,.
\end{eqnarray}

Now, let us observe that $A^{(3)}_{(0)}$ only depends on the restricted set of $1$-forms $\{V^a,\Psi,B^{ab},\xi\}$, which does not include the $1$-form $B^{a_1\ldots a_5}$, through an expression, that is (\ref{a30}), which does not contain any free parameter. The term $A^{(3)}_{(0)}$ is, however, the only one contributing to the (vacuum) $4$-form cohomology in superspace (see equation (\ref{da30})), $A^{(3)}_{(e)}$ being instead a closed $3$-form in the vacuum.\footnote{Surprisingly, it corresponds to one of the solutions found in the original paper of R. D'Auria and P. Fr\'{e} \cite{D'AuriaFre}.}

On the other hand, we can see that the one-parameter family of solutions to the DF-algebra, whose presence was clarified in \cite{Bandos:2004xw}, actually only depends on the contribution $A^{(3)}_{(e)}$, which appears as a trivial deformation of $A^{(3)}_{(0)}$ in $A^{(3)}$, since it does not contribute to the vacuum $4$-form cohomology (\ref{solito}). However, $A^{(3)}_{(e)}$ is invariant not only under the DF-algebra, but also under the SB-algebra, even at finite $e$. Instead, the other term, $A^{(3)}_{(0)}$, explicitly breaks the invariance under the SB-algebra.

In Figure \ref{figure11d} the reader can find a map which schematically collects and summarizes the relations among the superalgebras we have analyzed.

\begin{figure}
\centering
\pgfdeclareimage[height=8cm]{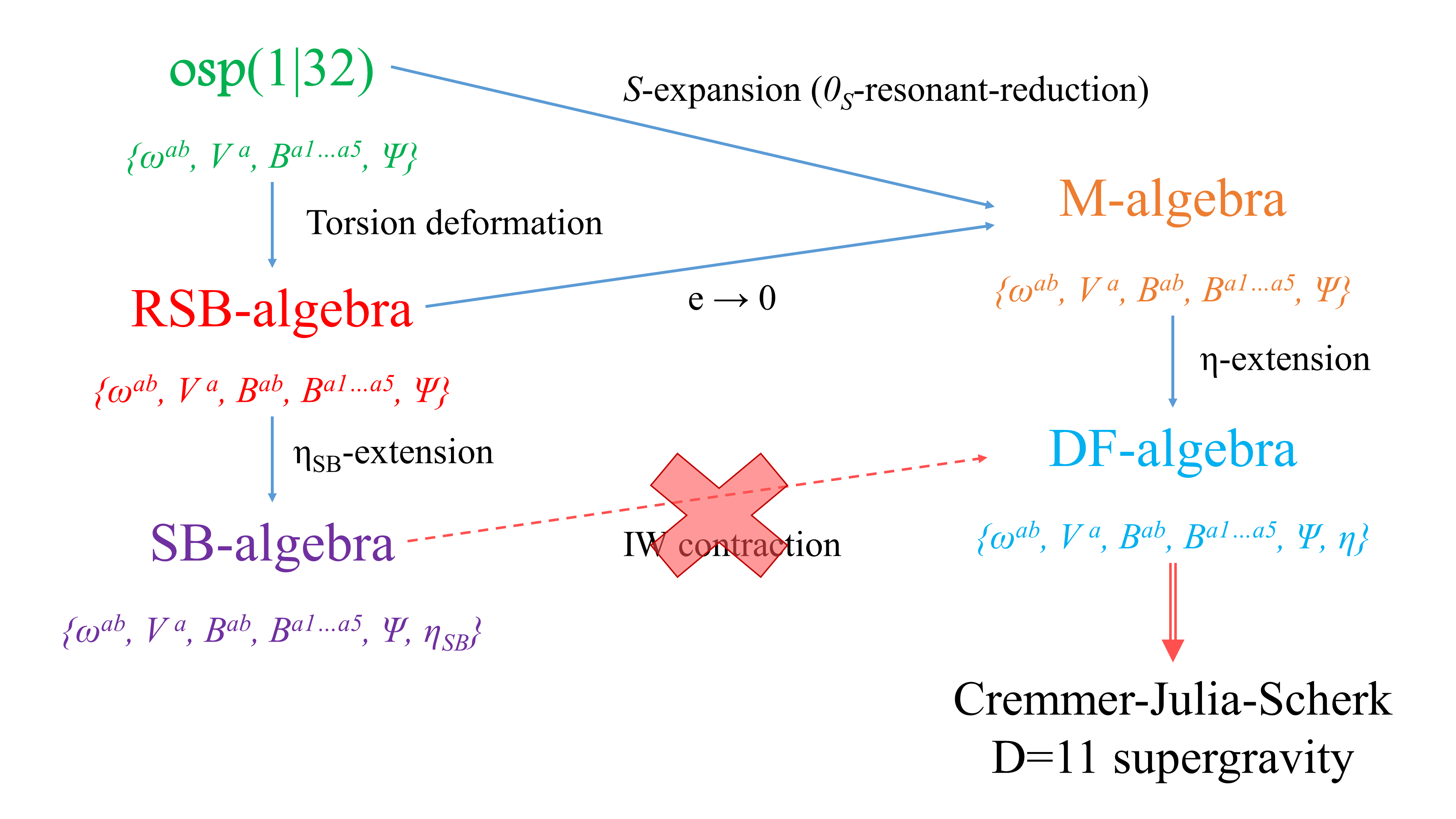}{map11d}
\pgfuseimage{map11d}
\caption[Map among superalgebras related to $D=11$ supergravity]{\textit{Map among superalgebras related to $D=11$ supergravity.} The RSB-algebra can be viewed as a torsion deformation of $\mathfrak{osp}(1|32)$, the latter being related to the $M$-algebra by an $S$-expansion (actually, an $S$-expansion with $0_S$-resonant-reduction, as we will explicitly show in Chapter \ref{chapter 6}). The $\eta$-extension (spinor central extension) of the $M$-algebra leads to the DF-algebra, which is the superalgebra underlying the FDA description of the Cremmer-Julia-Scherk supergravity theory in eleven dimensions. The SB-algebra, which includes the extra spinor $1$-form $\eta_{SB}$, cannot be directly related to the DF-algebra by an In\"{o}n\"{u}-Wigner contraction $e\rightarrow 0$, due to the fact that the structure constants of the the two superalgebras are different.}
\label{figure11d}
\end{figure}

\subsection{More on the role of the extra spinor $1$-forms}

Let us conclude our analysis by spending some words to discuss the role of the spinor $1$-forms $\xi$ and $\lambda$ introduced in the decomposition (\ref{etadec}) of $\eta$ and appearing in $A^{(3)}$ through equation (\ref{a3d}), following the lines of what we have done in \cite{Hidden} and previously recalled in this chapter.

The spinor $\xi$ appears in $A^{(3)}_{(0)}$ and its role is that of allowing for $dA^{(3)}_{(0)}$ to be a closed $4$-form on \textit{ordinary superspace}; it behaves as a cohomological ghost, since its supersymmetry and gauge transformations exactly cancel the non-physical contributions coming from the tensor field $B^{ab}$. In other words, the group manifold generated by the set of $1$-forms $\{\sigma^\Lambda\}$ including $\xi$ presents a fiber bundle structure with ordinary superspace as base space.

The role of the second spinor, $\lambda \propto \eta_{SB}$, appearing, instead, in the $\osp(1|32)$-invariant term $A^{(3)}_{(e)}$, at first sight appear less clear, since $dA^{(3)}_{(e)}=0$ in the FDA where the vacuum relation  (\ref{solito}) holds. It plays, however, a role that is analogous to the one of $\xi$: Indeed, in its absence, $A^{(3)}_{(e)}$ would reduce to the bosonic $3$-form $\frac 1 e H^{(3)}$, which is a closed $3$-form for $e\neq 0$, while this property is lost in the limit $e\to 0$. In the same limit, $ \frac 1 e d H^{(3)}$ is, instead, a $4$-form polynomial of all the $\sigma^\Lambda$'s, that is a cochain of the superspace enlarged to include $B^{ab}$ and $B^{a_1\cdots a_5}$. The role of $\eta_{SB}$ is then crucial in order to restore, also for $\alpha \neq 0$,
the correct $4$-form cohomology (\ref{solito}) on the vacuum superspace for $dA^{(3)}$, by allowing $dA^{(3)}_{(e)}=0$. On the other hand, in the semisimple case $e\neq 0$, $H^{(3)}$ is a closed $3$-form, and $\eta_{SB}$ looses its cohomological role.

Let us remark that when considering the interacting theory out of the vacuum, one should introduce a $4$-form super field-strength $G^{(4)}$ in superspace, namely
\begin{equation}\label{g4}
 G^{(4)}\equiv dA^{(3)}-\frac12 \bar\psi\wedge\Gamma_{ab}\psi\wedge V^a\wedge V^b\,.
\end{equation}
In this case, one would expect that the superspace $4$-form cohomology could also get non-trivial contributions from $dA^{(3)}_{(e)}$.

We can finally summarize our results as follows: We have found that, despite of the fact that the $M$-algebra is a In\"on\"u-Wigner contraction of the torsion deformation of $\osp(1|32)$ that we have called RSB-algebra, still the DF-algebra \textit{cannot} be directly obtained as an In\"on\"u-Wigner contraction from the SB-algebra, the latter being a Lorentz-valued, central extension of the RSB-algebra.
Correspondingly, the $D=11$ supergravity  theory is \textit{not} left invariant by $\osp(1|32)$, while being invariant under the DF-algebra. This is due to the fact that the spinor $1$-form $\eta$ of the DF-algebra (spinor central extension of the $M$-algebra) contributes to the DF-algebra with structure constants different from  the ones of the SB-algebra. In particular, referring to equation (\ref{etadec}), we can see that $\eta$ differs from ${\eta_{SB}}\propto \lambda$ by the extra $1$-form generator $\xi$. This has a counterpart in the expression of $A^{(3)}= A^{(3)}(\sigma^\Lambda)$, which trivializes the vacuum $4$-form cohomology in superspace in terms of DF-algebra $1$-forms $\sigma^\Lambda$'s. As we can see by looking at the decomposition (\ref{a3d}), $A^{(3)}(\sigma^\Lambda)$ is \textit{not} invariant under $\osp(1|32)$ (neither under its torsion deformation), and this is due to the contribution $A^{(3)}_{(0)}$, which explicitly breaks this symmetry. However, this latter term is the only one contributing to the vacuum $4$-form cohomology in superspace, because of the presence in the DF-algebra of the two spinors $\xi$ and $\eta_{SB}$ into which the cohomological spinor $\eta$ can be decomposed.

\section{Comments on the FDAs of $D=4$ theories}

Let us mention, now, that in Ref. \cite{SM4}, in collaboration with D. M. Pe\~{n}afiel, we have considered a new minimal super-Maxwell like algebra in $D=4$ dimensions, containing, besides the Poincaré and the supersymmetry generators, also Maxwell-like bosonic generators and an extra fermionic generator, and we have written the Maurer-Cartan equations dual to the superalgebra. Then, we have added a $4$-form field-strength to the theory and performed a study on the FDA in $D=4$ thus obtained, on the same lines of \cite{Hidden, Malg}. 

The minimal Maxwell superalgebras are minimal super-extension of the Maxwell algebra, which, in turn, is a non-central extension of Poincaré algebra involving an \textit{abelian} generator (along the lines of non-commutative geometry), and, as we have already mentioned in Chapter \ref{chapter 4}, it arises when one considers symmetries of systems evolving in flat Minkowski space filled in by a \textit{constant electromagnetic background} \cite{bacry, schrader}.\footnote{Indeed, if one constructs an action for a massive particle, invariant under the Maxwell symmetries, one obtains that ``it satisfies the equations of motion of a charged particle interacting with a constant electromagnetic field via the Lorentz force'' \cite{gibbons}.} 

In particular, the $\mathcal{N}=1$, $D=4$ supersymmetrization of the Maxwell algebra introduced in \cite{gomis2} seems to be specially appealing, since it describes the supersymmetries of generalized $\mathcal{N}=1$, $D=4$ superspace in the presence of a constant, abelian supersymmetric field-strength background.
Subsequently, this superalgebra and its generalizations have been
obtained as $S$-expansions of the $AdS$ superalgebra \cite{Concha2}. This family of superalgebras containing the Maxwell algebras type as bosonic subalgebras may be viewed as a generalization of the D'Auria-Fr\'{e} superalgebra (DF-algebra), introduced in \cite{D'AuriaFre} and recalled before in this thesis, and of the so called Green algebras \cite{green}.

The main reason for having chosen a Maxwell-like superalgebra in four dimensions (whose basis is given by a set of generators $\lbrace J_{ab},P_a,Z_{ab},\tilde{Z}_{ab},Q_\alpha ,\Sigma_\alpha \rbrace$) as a starting point of our analysis has been the fact that the Maxwell-like generators $Z_{ab}$ and $\tilde{Z}_{ab}$ can be related to dual bosonic $1$-forms associated with an antisymmetric $3$-form $A^{(3)}$ on superspace, appropriately introduced in the context of FDAs, whose field-strength is given by $F^{(4)}= dA^{(3)}$ (modulo gravitino $1$-form bilinears, when it has support on superspace). In $D=4$, $dA^{(3)}$ can be viewed as a trivial boundary term of an hypothetical Lagrangian (in \cite{SM4} we did not discuss the dynamics of the theory, while we concentrated, instead, on the pure FDA structure of the model).

Now, it is well known that $D=11$ supergravity admits spontaneous compactification to $D=4$ and that through the Freund-Rubin ansatz one ends up with a $\mathcal{M}_4 \times \mathcal{M}_7$ ground state (see \cite{Duff:1986hr} for exhaustive details). The two manifolds then correspond to (either) $AdS_4 \times S^7$ (or $AdS_7 \times S^4$). In this set up, one can see that, even if the $3$-form $A^{(3)}$ does not give any dynamical contribution to the theory in four dimensions, however, its field-strength (which is proportional to the volume element in four dimensions) can be related to the presence of \textit{fluxes} (see, for example, \cite{Trigiante2}), that are background quantities which can be switched on in a toroidal compactification.
In particular, the Freund-Rubin solution $AdS_4 \times S^7$ is characterized by the vacuum expectation value along the four non-compact space-time directions, $\langle F_{\mu \nu \rho \sigma}\rangle = m \epsilon_{\mu \nu \rho \sigma}$, of the $4$-form field-strength $F^{(4)}= dA^{(3)}$ of the eleven-dimensional theory, and it describes the full back-reaction of this flux on the space-time
geometry. Then, $dA^{(3)}$ can be written as
\begin{equation}
dA^{(3)} \propto e \Omega,
\end{equation}
where $\Omega \propto \epsilon_{abcd} V^a \wedge V^b \wedge V^c \wedge V^d$ is the volume element in four dimensions; $dA^{(3)}$ can thus be associated with a flux with charge $e$, being $e$ a constant parameter. 

With this motivations, we considered of some interest to study a semisimple Maxwell-like superalgebra in $D=4$ in the context of FDAs (involving a $3$-form $A^{(3)}$), by writing the deformation to the $4$-form $F^{(4)}$ induced by the presence a scaling parameter $e$. Since $dA^{(3)}$ is a boundary contribution in four dimensions, we expected a topological form of $dA^{(3)}$ to lie in the parametrization of $A^{(3)}$.
Indeed, on the same lines of what we have seen in the previous section, given the Maurer-Cartan equations $d \sigma^A + \frac{1}{2} C^A_{\;\;BC}\, \sigma^B \wedge \sigma^C =0$ for a semisimple Lie algebra, one can write (see the CE-theorem \ref{CEteobello} in Chapter \ref{chapter 2}):
\begin{equation}
A^{(3)} = C_{A BC} \, \sigma^A \wedge \sigma^B \wedge \sigma^C =-2 h_{AB} \sigma^A \wedge d \sigma^B= A^{(3)}(\sigma),
\end{equation}
satisfying $dA^{(3)}= 0$ and providing $dA^{(3)}$ with a topological hidden structure.
One can then request the parametrization $A^{(3)}(\sigma)$ in terms of $1$-forms to satisfy $dA^{(3)}\equiv 0$, which is allowed, as well as expected, in $D=4$ when $e\rightarrow 0$. However, since, as we have already mentioned, one can endow $dA^{(3)}$ with a support on superspace, we can alternatively request the parametrization $A^{(3)}(\sigma)$ to satisfy $dA^{(3)}(\sigma)= \frac{1}{2}\bar{\psi}\wedge \gamma_{ab}\psi \wedge V^a \wedge V^b$ in the superspace vacuum of the four-dimensional theory.\footnote{Indeed, \textit{in superspace} the super field-strength in the vacuum is given by $F^{(4)}=dA^{(3)}-\frac{1}{2} \bar{\psi}\wedge \gamma_{ab}\psi \wedge V^a \wedge V^b=0$.}

In \cite{SM4}, we first requested $dA^{(3)}$ to have a topological structure, by selecting a particular ansatz for the $3$-form parametrization, involving the scaling parameter $e$. Then, we checked that $dA^{(3)}(\sigma)= \frac{1}{2}\bar{\psi}\wedge \gamma_{ab}\psi \wedge V^a \wedge V^b$ in the limit $e\rightarrow 0$. We have obtained that this happens for a particular non-semisimple contraction of the $D=4$ Maxwell-like superalgebra we considered, which is also an extension (involving the parameter $e$) of the super-Poincar\'e algebra underlying supergravity in four dimensions. 
The superalgebra we got in this way could be considered in future works for the construction of a Lagrangian and for the study of the dynamics of the theory, also in the presence of a non-trivial space-time boundary. 

Let us mention that the extra fermionic generator $\Sigma$ of the $D=4$ Maxwell-like superalgebra is still nilpotent, but it does not appear as a ``central spinor'' extension of some other algebra.

It would be interesting to study the hidden parametrization of the $3$-form $A^{(3)}$ and, consequently, the hidden structure associated to the \textit{complete} extended $D=4$ supergravity theory (in both the $\mathcal{N}=1$ and the $\mathcal{N}=2$ cases) which also includes gauge fields and scalars, some of which can be dualized to antisymmetric tensors and can introduce non-trivial conditions in the vacuum of the $D=4$ theory \cite{Trigiante2}.
One could also add to this set up a $2$-form (on the lines of \cite{galalg}), which, contrary to the case of the $3$-form, in four-dimensional theories exhibits a dynamical role. The same could be done, for example, in the case of the $AdS$-Lorentz superalgebra in $D=4$ studied and analyzed in Chapter \ref{chapter 4}; we conjecture that, in that case, the $k^{ab}$ field appearing in the theory could also be viewed as an extra superspace direction, leading to a superspace which is enlarged with respect to the ordinary one. 

Another future analysis would be try to better understand the possible relations among the extra bosonic fields appearing in the above-mentioned theories in $D=4$ and the extra bosonic $1$-forms appearing in the hidden structure underlying $D=11$ (and $D=7$) supergravities. In this context, the study of the dimensional reduction from eleven (or directly seven) to four dimensions would certainly be clarifying (work in progress on this topic).

\chapter{New results on $S$-expansion} \label{chapter 6}

\ifpdf
    \graphicspath{{Chapter6/Figs/}{Chapter6/Figs/PDF/}{Chapter6/Figs/}}
\else
    \graphicspath{{Chapter6/Figs/Vector/}{Chapter6/Figs/}}
\fi

In Physics, there is a great interest in studying the relations among different Lie (super)algebras related to the symmetries of different theories, since this can disclose connections among the aforementioned theories. 

There are many different methods to obtain new Lie (super)algebras from given ones, for example deformations, extensions, expansions, and contractions.

In $2006$, a particular expansion approach, which goes under the name of \textit{$S$-expansion}, was developed \cite{Iza1}. In performing the $S$-expansion method, one combines the structure constants of an initial Lie (super)algebra $\mathfrak{g}$ with the inner multiplication law of a discrete set $S$, endowed with the structure of a semigroup, in such a way to define the Lie bracket of a new, larger, $S$-expanded (super)algebra, commonly written as $\mathfrak{g}_S= S \times \mathfrak{g}$. 

Several (super)gravity theories have been extensively analyzed in the context of expansions and contractions, enabling numerous results (among which, for example, those presented in Refs. \cite{Azca1, Azca2, Azca3, Iza4, Fierro2, Salgado, Fierro1, Concha:2013uhq, Concha1, Concha2, CR2, Concha:2014zsa, Concha:2016hbt, Concha:2016tms}).

This is the reason why, on the pure algebraic and group theoretical side of my PhD research, in collaborations with some colleagues and friends from Chile (M.C. Ipinza and D. M. Pe\~{n}afiel) and from the Polytechnic of Turin (F. Lingua), I have moved towards the $S$-expansion (and the In\"{o}n\"{u}-Wigner contraction) of Lie (super)algebras (for a review of $S$-expansion and In\"{o}n\"{u}-Wigner contraction, see Chapter \ref{chapter 3}). 

A fundamental task to accomplish when performing the $S$-expansion method is to find the appropriate semigroup connecting two different Lie (super)algebras. This involved, until more or less one year ago, a kind of ``trial and error'' process. 

With this in  mind, in the work \cite{Analytic} we have developed an \textit{analytic method} to find the semigroup(s) $S$ (there can also be more than one) linking two different (super)algebras, once certain particular conditions on the subspace decomposition of the starting and target (super)algebras and on the partition of the set(s) involved in the procedure are met. The details will be given in this chapter, where we will describe this analytic method and also give an interesting example of application, concerning the superalgebra $\mathfrak{osp}(1 \vert 32)$ and the $M$-algebra. 

The $S$-expansion is valid regardless of the structure of the original Lie (super)algebra; however, when something about the structure
of the starting (super)algebra is known and when certain particular conditions are met, the $S$-expansion is able not only to lead to diverse expanded (super)algebras, but also to reproduce the effects of the so called \textit{standard} as well as the \textit{generalized} In\"on\"u-Wigner contraction (see Chapter \ref{chapter 3} for definitions).

In \cite{GenIW}, we have developed a new prescription for $S$-expansion which involves an \textit{infinite abelian semigroup} and the subsequent \textit{removal of an infinite subalgebra}. 
We have shown that the ideal subtraction corresponds to a \textit{reduction} (in the sense intended in Chapter \ref{chapter 3}) and, in particular, it can be viewed as a (generalization of the) \textit{$0_S$-reduction of $S$-expanded algebras}.

The ``infinite $S$-expansion'' is an extension and generalization of the finite case and, with the subtraction of an infinite ideal subalgebra from an infinite resonant subalgebra of the infinitely $S$-expanded (super)algebra, it also offers an alternative view of the generalized In\"on\"u-Wigner contraction. Indeed, in \cite{GenIW} we have explicitly shown how to reproduce a generalized In\"on\"u-Wigner contraction within our scheme. 
The In\"on\"u-Wigner contraction does not change the dimension of the original (super)algebra. Thus, the subtraction of the infinite ideal subalgebra here is crucial, since it allows to end up with Lie (super)algebras with a finite number of generators.

We have also given a theorem for writing the invariant tensors for the (super)algebras obtained by applying our method of infinite $S$-expansion with ideal subtraction. Indeed, since the ideal subtraction  can be viewed as a $0_S$-reduction, one can then apply Theorem VII.2 of Ref. \cite{Iza1}, ending up with the invariant tensors for the $0_S$-reduced (super)algebras. 
This is very useful, since it allows to develop the dynamics and construct the Lagrangian of several physical theories, starting from their algebraic structure (in particular, in this context the construction of Chern-Simons forms becomes more accessible).
In the current chapter we will also recall this new prescription for $S$-expansion, following what we have done in \cite{GenIW}.

\section{An analytic method for $S$-expansion}\label{Method}

In this section, we describe the analytic method we have developed in \cite{Analytic} for linking different (super)algebras in the context of $S$-expansion.

To this aim, let us first of all consider a finite Lie (super)algebra $\mathfrak{g}$, with basis $\lbrace T_A \rbrace$, which can be decomposed into $n$ subspaces $V_{p}$, with $p=0,1,\ldots,n-1$, namely $\mathfrak{g}=\bigoplus _{p=0}^{n-1} V_{p}$. 

Then, let us consider another Lie (super)algebra $\tilde{\mathfrak{g}}$ (our target). Here and in the following, the quantities having a ``tilde'' symbol above will be quantities pertaining to the target (super)algebra. 

By definition, the (graded) Jacobi identities are clearly satisfied for both $\mathfrak{g}$ and $\tilde{\mathfrak{g}}$.

Let us also define a discrete, finite, and \textit{abelian magma}\footnote{A \textit{magma} (or also \textit{groupoid}) is a basic algebraic structure, consisting of a set equipped with a single, binary operation that is closed by definition, without any other requirement.} $\tilde{S}$ with $P$ elements as follows:
\begin{equation}
\tilde{S} = \lbrace \lambda_0, \lambda_1 ,\ldots, \lambda_{P-1} \rbrace , \quad \lambda_\alpha \lambda_\beta = \lambda_\gamma, \quad \lambda_\alpha \lambda_\beta = \lambda_\beta \lambda_\alpha , \; \forall \lambda_\alpha , \, \lambda_\beta , \, \lambda_\gamma \in \tilde{S} .
\end{equation}
Now, let us decompose $\tilde{S}$ into $n$ subsets $S_{\Delta_p}$, $p=0,1,\ldots,n-1$, such that we can write the following partition:
\begin{equation}\label{decompinizio}
\tilde{S}= \sqcup _{\Delta_p } S_{\Delta_p } ,
\end{equation}
where with the symbol $\sqcup $ we denote the disjoint union of (sub)sets.
The composed index $\Delta_p$ labeling the subsets $S_{\Delta_p}$, $p=0,1,\ldots,n-1$, takes into account the cardinality (number of elements) of each subsets through the capital Greek letter $\Delta$.

We now give the conditions under which our analytic method can be applied, that is we require:
\begin{enumerate}
\item The (target) Lie (super)algebra to be decomposed into $n$ subspaces $\tilde{V}_{p}$, $p=0,1,\ldots,n-1$, that is $\tilde{\mathfrak{g}}=\bigoplus _{p=0}^{n-1} \tilde{V}_{p}$. This means that we request that the number of subspaces in the decomposition of $\tilde{\mathfrak{g}}$ is equal to the number of subspaces in the decomposition of the initial (super)algebra $\mathfrak{g}=\bigoplus _{p=0}^{n-1} V_{p}$.
\item The dimensions of the subspaces of the Lie (super)algebra $\tilde{\mathfrak{g}}$ to be multiples of the dimensions of the subspaces of $\mathfrak{g}$.
\item $V_0$ and $\tilde{V}_0$ to be the vector spaces of the subalgebras $\mathfrak{h}_0$ and $\tilde{\mathfrak{h}}_0$ of $\mathfrak{g}$ and $\tilde{\mathfrak{g}}$, respectively.
\item The subspace decomposition of $\mathfrak{g}$ to satisfy the following Weimar-Woods conditions \cite{WW1, WW2} (see equation (\ref{conditionsWW}) in Chapter \ref{chapter 3}):
\begin{equation}\label{WWperg}
[V_p , V_q]\subset \bigoplus_{s\leq p + q} V_s,
\end{equation}
$p,q=0,1,\ldots,n-1$. Analogously, we require the subspace decomposition of $\tilde{\mathfrak{g}}$ to satisfy
\begin{equation}\label{WWpergtilde}
[\tilde{V}_p , \tilde{V}_q]\subset \bigoplus_{s\leq p + q} \tilde{V}_s,
\end{equation}
$p,q=0,1,\ldots,n-1$.
\item The partition (\ref{decompinizio}) to be \textit{in resonance} (see Definition \ref{defresonant} in Chapter \ref{chapter 3}) with the decomposition of $\mathfrak{g}$ into subspaces  $\mathfrak{g}=\bigoplus _{p=0}^{n-1} V _{p}$, namely we require:
\begin{equation}\label{prrrrrrrrrrrrrrrruff}
S_{\Delta_p}\cdot S_{\Delta_q} \subset \bigcap_{s \leq p + q} S_{\Delta_s},
\end{equation}
where the product $S_{\Delta_p}\cdot S_{\Delta_q} $ is defined as
\begin{equation}\label{prodottellobello}
S_{\Delta_p}\cdot S_{\Delta_q}  = \lbrace \lambda_\gamma \mid \lambda_\gamma= \lambda_{\alpha_p}\lambda_{\alpha_q}, \; \text{with} \; \lambda_{\alpha_p}\in S_{\Delta_p}, \lambda_{\alpha_q}\in S_{\Delta_q} \rbrace \subset \tilde{S}.
\end{equation}
Under resonance condition, the association between the subsets $S_{\Delta_p}$'s of $\tilde{S}$ and the subspaces $V_p$'s of $\mathfrak{g}$ is uniquely determined ($S_{\Delta_p} \leftrightarrow V_p$).
\end{enumerate}

Thus, our method will be limited to these choices on the subspace decomposition of $\mathfrak{g}$ and $\tilde{\mathfrak{g}}$ and on the subsets partition of $\tilde{S}$; before applying our analytic method, one should always fulfill these requirements. On the other hand, the above assumptions can also be seen as a convenient (even if restrictive) criterion for choosing the decomposition of Lie (super)algebras and a proper resonant partition for the semigroup involved in the procedure in the context of $S$-expansion.

Now, let us suppose that $\tilde{S}$ also fulfills \textit{associativity} (that will be checked at the final step of the method). Thus, $\tilde{S}$ is now supposed to be a semigroup, and we will denote it with $S$, according with the notation of \cite{Iza1}, where the $S$-expansion (that is semigroup expansion) procedure was developed for the first time.

Now, let us define $\mathfrak{g}_S \equiv S \times \mathfrak{g}$. In this way, $\mathfrak{g}_S$ results to be an $S$-expanded (super)algebra obtained by $S$-expanding $\mathfrak{g}$ (see Chapter \ref{chapter 3}). Then, according to Theorem \ref{Tres} recalled in Chapter \ref{chapter 3}, we have that $\tilde{\mathfrak{g}}=\bigoplus_{p=0}^{n-1} W_p$, where $W_p=S_{\Delta_p} \times V_p$, is a \textit{resonant subalgebra} of the $S$-expanded (super)algebra $\mathfrak{g}_S$. Furthermore, by construction, we have that the subspace structure of $\tilde{\mathfrak{g}}$ is, in turn, in resonance with the partition (\ref{prrrrrrrrrrrrrrrruff}) of $S$, and we can write:
\begin{equation}\label{resredG}
\tilde{\mathfrak{g}} =V_0 \oplus V_1 \oplus \cdots \oplus V_{n-1}= \left(S_{\Delta_0}\times V_0\right)\oplus \left(S_{\Delta_1}\times V_1 \right) \oplus \cdots \oplus  \left(S_{\Delta_{n-1}}\times V_{n-1} \right).
\end{equation}
Thus, the following system of equations naturally arises:
\begin{equation}\label{systemres}
\left\{
\begin{aligned}
& \dim \left( \tilde{V}_{0}\right) =\dim \left( V_{0}\right) \left( 
\Delta_{0}\right),   \\
& \dim \left( \tilde{V}_{1}\right) =\dim \left( V_{1}\right) \left( 
\Delta_{1}\right),    \\
& \;\;\;\;\;\;\;\;\;\;\;\;\;\;\;\;\;\;\;\; \vdots  \\
& \dim \left( \tilde{V}_{n-1}\right) =\dim \left( V_{N-1}\right) \left( 
\Delta_{n-1}\right),  \\
& \tilde{P} =\sum_{p=0}^{n-1}\Delta _{p},
\end{aligned} \right.
\end{equation}
where $\tilde{P} \geq P$ (let us recall that $P$ is the total number of (non-zero) elements of $S$).

Let us mention that, by taking into account the possible presence of a (unique) \textit{zero element} $\lambda_{0_S} \in S$ (which can always be factorized out of the commutation relations) such that
\begin{equation}
\lambda_{0_S} \lambda_\alpha = \lambda_\alpha \lambda_{0_S} \equiv \lambda_{0_S}, \; \forall \lambda_\alpha \in S,
\end{equation}
the system (\ref{systemres}) acquires the following form:\footnote{And the whole procedure goes under the name of \textit{$0_S$-resonant-reduction}, since, imposing $\lambda_{0_S} T_A=\mathbf{0}$ (being $\lbrace T_A \rbrace$ the set of generators of $\mathfrak{g}$), we end up with a $0_S$-reduced algebra of the resonant subalgebra $\tilde{\mathfrak{g}}$ (see Chapter \ref{chapter 3} for details).}
\begin{equation}\label{system}
\left\{
\begin{aligned}
& \dim \left( \tilde{V}_{0}\right)  =\dim \left( V_{0}\right) \left( 
\Delta_{0}\right), \\
& \dim \left( \tilde{V}_{1}\right)  =\dim \left( V_{1}\right) \left( 
\Delta_{1}\right),    \\
& \;\;\;\;\;\;\;\;\;\;\;\;\;\;\;\;\;\;\;\; \vdots  \\
& \dim \left( \tilde{V}_{n-1}\right) =\dim \left( V_{n-1}\right) \left( 
\Delta_{n-1}\right),  \\
& \tilde{P} =\sum_{p=0}^{n-1}\Delta _{p}+1 ,
\end{aligned} \right.
\end{equation}
where, again, $\tilde{P} \geq P$, and where the $+1$ contribution now appearing in $\tilde{P} =\sum_{p=0}^{n-1}\Delta _{p}+1$ is due to the presence of the zero element $\lambda_{0_S}$.

Now, solving the system (\ref{systemres}) (or, equivalently, (\ref{system}), when $S$ is endowed with a zero element) we will immediately know the cardinality of each of the subsets $S_{\Delta_p}$. 
Indeed, both the systems (\ref{systemres}) and (\ref{system}) can be solved with respect to the variables $\tilde{P}$ and $\Delta_p$, and the unique (by construction) solution admits only values in $\mathbb{N}^*$ (the value zero is obviously excluded). 

In this way, we are left with the knowledge of the cardinality of each subset of $S$. Furthermore, by construction, we already know the partition structure of $S$ (which is in resonance with the subspace decomposition of $\mathfrak{g}$ (and $\tilde{\mathfrak{g}}$)).

Let us now complete this first part with the following theorem, which we have written and proven in \cite{Analytic}:
\begin{theorem}\label{T}
In the $S$-expansion procedure, when the commutator of two generators in the original Lie (super)algebra falls into a linear combination involving more than one generator, all the terms appearing in this resultant linear combination of generators must share the same element of $S$.
\end{theorem}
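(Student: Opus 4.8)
The plan is to unwind the definition of the $S$-expanded bracket given in Definition~\ref{defSexp} and show that consistency of the expansion forces the claimed single-element property. First I would take two basis elements $T_{(A,\alpha)} = \lambda_\alpha T_A$ and $T_{(B,\beta)} = \lambda_\beta T_B$ of $\mathfrak{g}_S = S \times \mathfrak{g}$, and compute their induced bracket using the prescription $[T_{(A,\alpha)},T_{(B,\beta)}] \equiv \lambda_\alpha \lambda_\beta [T_A,T_B]$. If in the original algebra one has $[T_A,T_B] = C_{AB}^{\;\;\;\;C} T_C$ with \emph{several} nonzero structure constants $C_{AB}^{\;\;\;\;C_1}, C_{AB}^{\;\;\;\;C_2}, \ldots$ (i.e.\ the commutator is a genuine linear combination $\sum_i C_{AB}^{\;\;\;\;C_i} T_{C_i}$), then by linearity of the expansion map $T_C \mapsto \lambda_\gamma T_C$ distributed over the sum, each term $C_{AB}^{\;\;\;\;C_i} T_{C_i}$ in the combination is multiplied by the \emph{same} semigroup product $\lambda_\gamma = \lambda_\alpha \lambda_\beta$, since $\lambda_\alpha\lambda_\beta$ is a single, well-defined element of $S$ (this is guaranteed by closure of the magma/semigroup: $\lambda_\alpha\lambda_\beta = \lambda_\gamma \in S$, uniquely). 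Hence $[T_{(A,\alpha)},T_{(B,\beta)}] = \sum_i K_{\alpha\beta}^{\;\;\;\gamma} C_{AB}^{\;\;\;\;C_i} T_{(C_i,\gamma)}$ with a \emph{common} $\gamma$, which is exactly the assertion.

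The key steps, in order, would be: (i) recall that the $2$-selector $K_{\alpha\beta}^{\;\;\;\gamma}$ in~\eqref{kseldef} is nonzero for exactly one value of $\gamma$ (namely the one with $\lambda_\alpha\lambda_\beta = \lambda_\gamma$), because the semigroup product is single-valued; (ii) write out~\eqref{expandedone} for a bracket whose right-hand side in $\mathfrak{g}$ is a sum over multiple generators, and observe that the selector $K_{\alpha\beta}^{\;\;\;\gamma}$ factors out in front of the whole sum $C_{AB}^{\;\;\;\;C} T_{(C,\gamma)}$; (iii) conclude that no term of the linear combination can carry an index $\gamma' \neq \gamma$ without violating either the definition of the expanded bracket or the single-valuedness of the product in $S$. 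I would also note the contrapositive as a sanity check: if two distinct terms of the combination were allowed to sit on different semigroup elements $\lambda_{\gamma}$ and $\lambda_{\gamma'}$, then the map would fail to be induced by the product rule $\lambda_\alpha\lambda_\beta [T_A,T_B]$, and in particular the Jacobi identities of $\mathfrak{g}_S$ (inherited from those of $\mathfrak{g}$ via $K_{\alpha\beta}^{\;\;\;\gamma} C_{AB}^{\;\;\;\;C}$) would no longer close.

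The main obstacle is essentially presentational rather than mathematical: one must be careful to state precisely what ``the commutator falls into a linear combination involving more than one generator'' means at the level of the expansion, so that the statement is not vacuously true or accidentally circular. The substantive content is simply that the expansion acts \emph{uniformly} on the whole bracket — it multiplies the entire right-hand side by one semigroup element — and this is forced by the associativity/closure of $S$ together with the defining relation $[T_{(A,\alpha)},T_{(B,\beta)}] \equiv \lambda_\alpha\lambda_\beta[T_A,T_B]$. I expect the proof to be short; the only delicate point is making explicit that $K_{\alpha\beta}^{\;\;\;\gamma}$ is, for fixed $\alpha,\beta$, supported on a single $\gamma$, which is immediate from~\eqref{kseldef} and the fact that $S$ is a (single-valued) magma. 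Once that is in place, the displayed identity $C_{(A,\alpha)(B,\beta)}^{\qquad\qquad(C,\gamma)} = K_{\alpha\beta}^{\;\;\;\gamma} C_{AB}^{\;\;\;\;C}$ from~\eqref{strconstant}, read for a fixed pair $(\alpha,\beta)$ and ranging over all $C$, exhibits the common $\gamma$ and completes the argument.
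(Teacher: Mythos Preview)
Your proposal is correct and follows essentially the same approach as the paper: both arguments rest entirely on the single-valuedness of the semigroup product, i.e.\ that for fixed $(\alpha,\beta)$ the selector $K_{\alpha\beta}^{\;\;\;\gamma}$ is supported on a unique $\gamma$, so the whole linear combination on the right-hand side of the bracket inherits that one common element. The paper phrases this as a brief \emph{reductio ad absurdum} (assuming two different elements would break the uniqueness of the composition law), whereas you unfold it as a direct computation and then note the contrapositive; the mathematical content is identical.
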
 
\begin{proof}
The proof of Theorem \ref{T} can be treated as a \textit{reductio ad absurdum}. Indeed, if the linear combination of generators were coupled with different elements of $S$, this would mean having different two-selectors associated with the same resulting element, and, according to the definition of two-selector given in (\ref{kseldef}) of Chapter \ref{chapter 3}, this would break the uniqueness of the internal composition law of $S$. 
But this cannot be true, since the composition law associates each couple of elements $\lambda_\alpha$ and $\lambda_\beta$ with a \textit{unique} element $\lambda_\gamma$ (see again (\ref{kseldef}) in Chapter \ref{chapter 3}). 
We can thus conclude that when the commutator of two generators in the original Lie (super)algebra falls into a linear combination involving more than one generator, the terms appearing in this resultant linear combination of generators must be multiplied by the same element.
\end{proof}
Theorem \ref{T} also reflects on the subspace structure of the original Lie (super)algebra and on the subsets partition of $S$.

We have thus exhausted the information coming from the starting (super)algebra $\mathfrak{g}$. We can now exploit the information coming from the target (super)algebra $\tilde{\mathfrak{g}}$ to fix some detail on the multiplication rules and to build up the whole multiplication table of $S$.\footnote{Let us recall that, at the end of the whole procedure, one should check the associativity of $S$, since, till now, we have just \textit{hypothesized} that $S$ is a semigroup.} This step is based on the following \textit{identification criterion}.  

\subsection{Identification criterion between generators}\label{IdCr}

We can now write (according with what we have recalled in Chapter \ref{chapter 3}) the following identification between generators: 
\begin{equation} \label{identification}
\tilde{T}_{a_p} = T_{{a_p},\alpha} \equiv \lambda_\alpha T_{a_p},
\end{equation}
being $\lbrace T_{a_p} \rbrace$ a basis of $V_p$, while $\lbrace \tilde{T}_{a_p}\rbrace$ is a basis of $\tilde{V}_p$; $\lambda_\alpha \in S$ denotes a general element of $S$.
We have to perform the identification (\ref{identification}) for each element of $S$, that is we have to associate each element of each subset with the generators in the subspace related to the considered subset. 
This means
\begin{equation}
\tilde{T}_{a_p} = \lambda_{(\alpha,\Delta_p )}T_{a_p},
\end{equation}
where $\lambda_{(\alpha, \Delta_p)}\equiv \lambda_{\alpha} \in S_{\Delta_p}$.

Now, observe that for $\tilde{\mathfrak{g}}$ we can write the commutation relations
\begin{equation}\label{commtarg}
\left[ \tilde{T}_{a_p}, \tilde{T}_{b_q} \right] =\tilde{C}_{a_p b_q}^{\;\;\;\;\;\;\;\;c_r} \tilde{T}_{c_r},
\end{equation}
where we have denoted by $\tilde{C}_{a_p b_q}^{\;\;\;\;\;\;\;\;c_r}$ the structure constants of $\tilde{\mathfrak{g}}$; due to the identification (\ref{identification}), they read $\tilde{C}_{a_p b_q}^{\;\;\;\;\;\;\;\;c_r} \equiv C_{(a_p,\alpha)(b_q,\beta)}^{\;\;\;\;\;\;\;\;\;\;\;\;\;\;\;\;\;\;\;\;\;\;\;(c_r,\gamma)}$, where $ C_{a_p b_q}^{\;\;\;\;\;\;\;\;c_r}$ are the structure constants of $\mathfrak{g}$.

Then, since for $\mathfrak{g}$ we can write
\begin{equation}\label{commstart}
\left[T_{a_p}, T_{b_q} \right]=  C_{a_p b_q}^{\;\;\;\;\;\;\;\;c_r} \; T_{c_r} ,
\end{equation}
we are now able to write the structure constants of $\tilde{\mathfrak{g}}$ in terms of the two-selector and of the structure constants of $\mathfrak{g}$ (see Chapter \ref{chapter 3}), namely
\begin{equation}
\tilde{C}_{a_p b_q}^{\;\;\;\;\;\;\;\;c_r} \equiv C_{(a_p,\alpha)(b_q,\beta)}^{\;\;\;\;\;\;\;\;\;\;\;\;\;\;\;\;\;\;\;\;\;\;\;(c_r,\gamma)}= K_{\alpha \beta}^{\;\;\;\gamma}C_{a_p b_q}^{\;\;\;\;\;\;\;\;c_r} .
\end{equation}

Subsequently, we can exploit the identification (\ref{identification}) in order to write the commutation relations of $\tilde{\mathfrak{g}}$, namely (\ref{commtarg}), in terms of the commutation relations between the expanded generators of $\mathfrak{g}$ (factorizing the elements of $S$ out of the commutators). Thus, we end up with the following relations:
\begin{equation}\label{commrelafterid}
\left[\lambda_\alpha T_{a_p}, \lambda_\beta T_{b_q} \right] = K_{\alpha \beta}^{\;\;\;\; \gamma}  C_{a_p b_q}^{\;\;\;\;\;\;\;\;c_r} \lambda_\gamma T_{c_r}, \;\;\; \rightarrow \;\;\; \lambda_\alpha \lambda_\beta \left[T_{a_p}, T_{b_q} \right] =K_{\alpha \beta}^{\;\;\;\; \gamma}  C_{a_p b_q}^{\;\;\;\;\;\;\;\;c_r} \lambda_\gamma T_{c_r}.
\end{equation}
If we now compare the commutation relations (\ref{commrelafterid}) with the ones in (\ref{commstart}), we can deduce further information on the multiplication rules between the elements of $S$, that is to say:
\begin{equation}
\lambda_\alpha \lambda_ \beta = \lambda_\gamma .
\end{equation}
Let us stress that we should repeat this procedure for all the commutation rules of $\tilde{\mathfrak{g}}$, in order to get all the multiplication rules between the elements of $S$. 

During this process, the possible existence of the (unique) zero element $\lambda_{0_S} \in S$ can play a crucial role. Indeed, in the case in which the commutation relations of the target (super)algebra $\tilde{\mathfrak{g}}$ read
\begin{equation}
\left[ \tilde{T}_A, \tilde{T}_B \right] = 0,
\end{equation}
and, at the same time, from the initial (super)algebra $\mathfrak{g}$ we have
\begin{equation}\label{nozero}
\left[ T_A, T_B \right] \neq 0,
\end{equation}
considering the relations
\begin{equation}
\left[\lambda_\alpha T_A, \lambda_\beta T_B \right] = \lambda_\alpha \lambda_\beta \left[ T_A,  T_B \right]  =  C_{AB}^{\;\;\;\;C} \lambda_\gamma T_C =0
\end{equation}
together with (\ref{nozero}), we conclude that
\begin{equation}
\lambda_\alpha \lambda_ \beta = \lambda_{0_S}.
\end{equation}
If we then impose $\lambda_{0_S} T_A = \mathbf{0}$, we end up with the \textit{$0_S$-reduced} algebra of the resonant subalgebra $\tilde{\mathfrak{g}}$ of the $S$-expanded (super)algebra $\mathfrak{g}_S = S \times \mathfrak{g}$.

At the end of the whole procedure, in any case, we are left with the complete multiplication table(s) describing the abelian semigroup(s) $S$ for moving from the initial Lie (super)algebra $\mathfrak{g}$ to the target one $\tilde{\mathfrak{g}}$. One can end up with more than one abelian semigroup linking the two (super)algebras; this results in a degeneracy of the multiplication table. When performing our analytic method, however, one just supposes the abelian magma $\tilde{S}$ to fulfill associativity and, thus, to be a semigroup $S$. The final check for associativity can remove (completely or in part) the degeneracy appearing in the multiplication table.

Let us mention here that, in the cases in which the (graded) Jacobi identities of the initial (super)algebra $\mathfrak{g}$ are trivially satisfied (that is, each term of the Jacobi identities is equal to zero), the abelian magma $\tilde{S}$ does not necessarily need to be a semigroup (namely, to fulfill associativity) in order end up with a consistent result after having applied our analytic method. Indeed, in those cases, after having performed the identification (\ref{identification}), we can write the (graded) Jacobi identities of the $S$-expanded (super)algebra $\mathfrak{g}_S= \tilde{S} \times \mathfrak{g}$ in terms of the generators of $\mathfrak{g}$, factorizing (thanks to the fact that the magma $\tilde{S}$ is abelian) the elements of $\tilde{S}$, and this will not give any constraints on the multiplication rules among the elements of $\tilde{S}$; in particular, $\tilde{S}$ does not have to fulfill associativity. This means that, when the (graded) Jacobi identities of the initial (super)algebra $\mathfrak{g}$ are trivially satisfied, $\tilde{\mathfrak{g}} = \bigoplus_{p=0}^{n-1} W_p $, with $W_p= S_{\Delta_p} \times V_p$, is a resonant subalgebra of $\mathfrak{g}_S=\tilde{S} \times \mathfrak{g}$, where $\tilde{S}$ is just an abelian magma. We can then perform all the above procedure and end up with the multiplication table(s) associated with $\tilde{S}$ (also in this case, we can end up with more than one abelian magma).

\subsection{A simple algorithm to check associativity}\label{Associativity}

The final step of our method consist in checking that $\tilde{S}$ is indeed an abelian semigroup, $S$.\footnote{Which, as we have already said, is not required only in the cases in which the (graded) Jacobi identities of the initial (super)algebra $\mathfrak{g}$ are trivially satisfied.}

The check for associativity can be rather tedious if performed by hand, but, fortunately, it can be implemented by means of a simple computational algorithm. 
Indeed, mapping all the elements $\lambda_\alpha$'s of $\tilde{S}$ to the set of the integer numbers $\lambda_\alpha \leftrightarrow \alpha \in\mathbb{N}$, the multiplication table of $S$ can then be stored as a matrix $M$, such that
\begin{equation}
\lambda_\beta \lambda_\gamma =\lambda_\alpha \leftrightarrow M(\beta,\gamma)= \alpha ,
\end{equation}
where $\alpha$ is the index associated with the element $\lambda_\alpha$. Associativity can now be easily tested by checking that, for any $\alpha$, $\beta$, and $\gamma$, the following relation holds:
\begin{equation}\label{eq_ass_test}
M(M(\alpha,\beta),\gamma)=M(\alpha, M(\beta, \gamma)).
\end{equation}

\subsection{Example of application: From $\osp(1|32)$ to the $M$-algebra} \label{Examples}

In our paper \cite{Analytic}, the reader can find many examples of application of our analytic method.
In the following, we will present an example involving the Lie superalgebra $\mathfrak{osp}(1 \vert 32)$ and the $M$-algebra (which are related to $D=11$ supergravity and have already been presented and discussed in Chapter \ref{chapter 5}), reproducing the result presented in \cite{Iza1}.

The authors of \cite{Iza1} also considered a ``D'Auria-Fr\'{e}-like'' superalgebra and another $D=11$ superalgebra different from but resembling in some aspects to both the $M$-algebra and the DF-algebra, always in the context of $S$-expansion. 
The D'Auria-Fr\'{e}-like superalgebra of \cite{Iza1} presents the same structure (the same number and type of generators, with commutators valued on the same subspaces) as the one introduced by R. D'Auria and P. Fré in \cite{D'AuriaFre} (DF-algebra), but some details are different, so that it \textit{cannot} really correspond the DF-algebra. 
Indeed, according to the results presented in Chapter \ref{chapter 5} of this thesis, the DF-algebra \textit{cannot} be \textit{directly} related through $S$-expansion to $\mathfrak{osp}(1|32)$, while this is allowed when dealing with the D'Auria-Fr\'{e}-like superalgebra considered in \cite{Iza1}.

Let us also observe that, instead, the $M$-algebra can be obtained from the DF-algebra with a suitable truncation of the nilpotent fermionic generator $Q'$. Indeed, due to the presence of the spinor $1$-form $\eta$ (dual to the generator $Q'$) in the DF-algebra, the latter can be viewed as a spinor central extension of the $M$-algebra.
One can thus move from $\mathfrak{osp}(1|32)$ to the DF-algebra by performing two subsequent steps: First, one has to go from $\mathfrak{osp}(1|32)$ to the $M$-algebra (we will see in a while the procedure one can adopt to perform this step); then, one has to centrally extend the $M$-algebra with an extra (nilpotent) fermionic generator.

We now want to find the semigroup leading from $\mathfrak{osp}(1 \vert 32)$ to the $M$-algebra through our analytic method.

Let us first collect the useful information coming from the starting algebra $\mathfrak{osp}(1 \vert 32)$.\footnote{The detailed (anti)commutation relations for $\mathfrak{osp}(1 \vert 32)$ and for the $M$-algebra can be found in Ref. \cite{Iza1}.} 

The generators of $\mathfrak{osp}(1 \vert 32)$ are given, with respect to the Lorentz subgroup $SO(1,10)\subset OSp(1 \vert 32)$, by the following set:
\begin{equation}
\lbrace{P_a,J_{ab},Z_{a_1 \ldots a_5},Q_\alpha \rbrace},
\end{equation}
where $J_{ab}$, $P_a$, $Q_\alpha$ can be respectively interpreted as the Lorentz, translations and supersymmetry generators; $Z_{a_1 \ldots a_5}$ is a $5$-indexes skew-symmetric generator.
Let us perform the following subspace decomposition on the $\mathfrak{osp}(1 \vert 32)$ algebra:
\begin{equation}
\mathfrak{osp}(1 \vert 32) = V_0 \oplus V_1 \oplus V_2,
\end{equation}
\begin{eqnarray}
\left[ V_{0},V_{0}\right] &\subset &V_{0} ,\\
\left[ V_{0},V_{1}\right] &\subset &V_{1} ,\\
\left[ V_{0},V_{2}\right] &\subset &V_{2} ,\\
\left[ V_{1},V_{1}\right] &\subset &V_{0}\oplus V_{2}, \\
\left[ V_{1},V_{2}\right] &\subset &V_{1} ,\\
\left[ V_{2},V_{2}\right] &\subset &V_{0}\oplus V_{2},
\end{eqnarray}
where we have set $V_0= \lbrace J_{ab} \rbrace$, $V_1=\lbrace Q_\alpha \rbrace$, and $V_2=\lbrace P_a, Z_{a_1 \ldots a_5} \rbrace$. 
Thus, the dimensions of the internal decomposition of $\mathfrak{osp}(1 \vert 32)$ read
\begin{eqnarray}
\dim \left( V_{0}\right) &=&\underset{J_{ab}}{\underbrace{55}} ,\\
\dim \left( V_{1}\right) &=&\underset{Q_{\alpha }}{\underbrace{32}}, \\
\dim \left( V_{2}\right) &=&\underset{P_{a}}{\underbrace{11}}+\underset{%
Z_{a_1 \ldots a_5}}{\underbrace{462}}=473.
\end{eqnarray}

Now, we do an analogous analysis for the $M$-algebra (our target). Its generators (which we denote with an upper ``tilde'' symbol) are given by the set
\begin{equation}
\lbrace \tilde{P}_a, \tilde{J}_{ab}, \tilde{Z}_{ab},\tilde{Z}_{a_1 \ldots a_5},\tilde{Q}_\alpha \rbrace .
\end{equation}
We can thus proceed by performing the following subspace decomposition on the $M$-algebra:
\begin{equation}
\text{$M$-algebra} = V_0 \oplus V_1 \oplus V_2,
\end{equation}
\begin{eqnarray}
\left[ \tilde{V}_{0},\tilde{V}_{0}\right] &\subset &\tilde{V}_{0} ,\\
\left[ \tilde{V}_{0},\tilde{V}_{1}\right] &\subset & \tilde{V}_{1} ,\\
\left[ \tilde{V}_{0},\tilde{V}_{2}\right] &\subset & \tilde{V}_{2} ,\\
\left[ \tilde{V}_{1},\tilde{V}_{1}\right] &\subset &\tilde{V}_{0}\oplus \tilde{V}_{2}, \\
\left[ \tilde{V}_{1},\tilde{V}_{2}\right] &\subset & \emptyset ,\\
\left[ \tilde{V}_{2},\tilde{V}_{2}\right] &\subset & \emptyset ,
\end{eqnarray}
where we have set $\tilde{V}_0=\lbrace \tilde{J}_{ab},\tilde{Z}_{ab}\rbrace$, $\tilde{V}_1=\lbrace \tilde{Q}\rbrace$, and $\tilde{V}_2=\lbrace \tilde{P}_a, \tilde{Z}_{a_1 \ldots a_5} \rbrace$ (as usual, each subspace also contains the empty set $\emptyset$).
We can see that
\begin{eqnarray}
\dim \left( \tilde{V}_{0}\right) &=&\underset{\tilde{J}_{ab},\; \tilde{Z}_{ab}}{\underbrace{110}} ,\\
\dim \left( \tilde{V}_{1}\right) &=&\underset{Q_{\alpha }}{\underbrace{32}}, \\
\dim \left( \tilde{V}_{2}\right) &=&\underset{\tilde{P}_{a}}{\underbrace{11}}+\underset{%
\tilde{Z}_{a_1 \ldots a_5}}{\underbrace{462}}=473.
\end{eqnarray}

Both the superalgebra $\mathfrak{osp}(1 \vert 32)$ and the $M$-algebra satisfy the Weimar-Woods conditions (\ref{WWperg}) and (\ref{WWpergtilde}), and they also fulfill the other requirements allowing to apply our method.

We can now proceed with the study of the system (\ref{system}), which, in this case, reads
\begin{equation}
\left\{
\begin{aligned}
& 110 =55 \Delta_0, \\
& 32 =32 \Delta_1, \\
& 473 =473 \Delta_2, \\
& \tilde{P} = \Delta_0 + \Delta_1 + \Delta_2 +1,
\end{aligned} \right.
\end{equation}
where $\Delta_0$, $\Delta_1$, $\Delta_2$ denote the cardinality of the subsets related to the subspaces $V_0$, $V_1$, and $V_2$, respectively, and we have taken into account the possible existence of the zero element of the set $\tilde{S}$ involved in the procedure.
This system admits the unique solution
\begin{equation}
\tilde{P}=5, \;\;\; \Delta_0 = 2, \;\;\; \Delta_1 = 1, \;\;\; \Delta_2 = 1.
\end{equation}
Thus, we can now write the following (resonant) subset partition of $\tilde{S}$:
\begin{eqnarray}
S_{2_0} &=& \lbrace \lambda_\alpha, \lambda_\beta \rbrace , \\
S_{1_1} &=& \lbrace \lambda_\gamma \rbrace , \\
S_{1_2} &=& \lbrace \lambda_\delta  \rbrace . 
\end{eqnarray}
Then, taking into account the resonance condition, we can write the multiplication rules
\begin{eqnarray}
\lambda_{\alpha,\beta} \lambda_{\alpha,\beta} &=& \lambda_{\alpha,\beta,0_S}, \\
\lambda_{\alpha,\beta} \lambda_{\gamma} &=& \lambda_{\gamma,0_S}, \\
\lambda_{\alpha,\beta} \lambda_{\delta} &=& \lambda_{\delta,0_S}, \\
\lambda_{\gamma} \lambda_{\gamma} &=& \lambda_{\alpha,\beta,\delta,0_S}, \\
\lambda_{\gamma} \lambda_{\delta} &=& \lambda_{\delta,0_S}, \\
\lambda_{\delta} \lambda_{\delta} &=& \lambda_{\alpha,\beta,\delta,0_S},
\end{eqnarray}
where we have already taken into account the fact that 
\begin{eqnarray}
\lambda_{0_S}\lambda_{0_S} & = &\lambda_{0_S}, \\
\lambda_{0_S}\lambda_{\alpha,\beta,\gamma,\delta} & = &\lambda_{0_S},
\end{eqnarray}
by definition of zero element.

We can now fix the degeneracy appearing in the above multiplication rules by analyzing the information coming from the target superalgebra, that is, in this case, the $M$-algebra. To this aim, we first perform the identification
\begin{equation}
\lambda_\alpha J_{ab} = \tilde{J}_{ab}, \;\; \lambda_\beta J_{ab} = \tilde{Z}_{ab}, \;\; \lambda_\gamma Q = \tilde{Q}, \;\; \lambda_\delta P_a = \tilde{P}_a, \;\; \lambda_\delta Z_{a_1 \ldots a_5} = \tilde{Z}_{a_1 \ldots a_5}.
\end{equation}
Then, we have to write the commutation relations between the generators of the $M$-algebra in terms of the commutation relations between the generators of the $S$-expanded $\mathfrak{osp}(1 \vert 32)$. The details of this calculation can be found in Section \ref{osp} of Appendix \ref{appanex}, while here we just report and discuss the results we end up with.

The whole procedure fixes the degeneracy of the multiplication rules (and, in particular, $\lambda_\delta = \lambda_\beta$) between the elements of the subsets of $\tilde{S}$, and we are finally able to write the complete multiplication table of $\tilde{S}$, which reads
\begin{equation}
\begin{array}{c|cccc}
 & \lambda _{\alpha} & \lambda _{\beta} & \lambda _{\gamma} & \lambda _{0_S}  \\ 
\hline
\lambda _{\alpha} & \lambda _{\alpha} & \lambda _{\beta} & \lambda _{\gamma} & \lambda _{0_S} 
\\ 
\lambda _{\beta} & \lambda _{\beta} & \lambda _{0_S} & \lambda _{0_S} & \lambda _{0_S}
\\ 
\lambda _{\gamma} & \lambda _{\gamma} & \lambda _{0_S} & \lambda _{\beta} & \lambda _{0_S}
\\ 
\lambda _{0_S} & \lambda _{0_S} & \lambda _{0_S} & \lambda _{0_S} & \lambda _{0_S}
\end{array}
\end{equation}
Then, after having performed the identification
\begin{equation}
\alpha \leftrightarrow  0, \;\;\; \beta \leftrightarrow  2, \;\;\; \gamma \leftrightarrow  1, \;\;\;
0_S \leftrightarrow  3,
\end{equation} 
we can reorganize the multiplication table above as follows:
\begin{equation}
\begin{array}{c|cccc}
 & \lambda _{0} & \lambda _{1} & \lambda _{2} & \lambda _{3}  \\ 
\hline
\lambda _{0} & \lambda _{0} & \lambda _{1} & \lambda _{2} & \lambda _{3} 
\\ 
\lambda _{1} & \lambda _{1} & \lambda _{2} & \lambda _{3} & \lambda _{3}
\\ 
\lambda _{2} & \lambda _{2} & \lambda _{3} & \lambda _{3} & \lambda _{3}
\\ 
\lambda _{3} & \lambda _{3} & \lambda _{3} & \lambda _{3} & \lambda _{3}
\end{array}
\end{equation}
This is exactly the multiplication table of the \textit{semigroup} commonly denoted by $S^{(2)}_E$, which satisfies the following multiplication rules:
\begin{equation}
\lambda_\alpha \lambda_\beta = \left\{ \begin{aligned} &
\lambda_{\alpha+\beta} , \;\;\;\;\; \text{when} \; \alpha+\beta \leqslant 3,
\\ & \lambda_3 , \;\;\;\;\;\;\;\;\; \text{when} \; \alpha + \beta > 3 , \end{aligned} 
\right. \quad  \forall \alpha, \; \beta \in S^{(2)}_E.
\end{equation}

We can thus come to the same conclusions given in Ref. \cite{Iza1}, namely that $S^{(2)}_E$ is the semigroup leading, through an $S$-expansion procedure (actually, a $0_S$-resonant-reduction, due to the presence of the zero element $\lambda_{0_S}$), from $\mathfrak{osp}(1 \vert 32)$ to the $M$-algebra. Our prescription immediately allows to recover this result, without resort to any ``trial and error'' process. 

We have thus shown that our method is reliable and it can also be applied to rather complicated superalgebras in higher dimensions. This becomes particularly interesting when the superalgebras are associated with higher-dimensional supergravity theories (as in the case discussed in this example). One of the possible future developments could consist in developing extensions and generalizations of our analytic method.

\section{Infinite $S$-expansion with ideal subtraction} \label{Generalized}

In this section, we describe our prescription for infinite $S$-expansion (involving an infinite abelian semigroup $S^{(\infty)}$) with subsequent subtraction of an infinite ideal subalgebra, and we show how to reproduce a generalized In\"on\"u-Wigner contraction in this context. This method is both a new prescription for $S$-expansion and an alternative way of seeing the (generalized) In\"on\"u-Wigner contraction.
We also give a theorem for writing the invariant tensors of (super)algebras obtained through infinite $S$-expansion with ideal subtraction.
The discussion we present here is based on the work \cite{GenIW}, in collaboration with D. M. Pe\~{n}afiel.

As we will discuss in the following, the prescription for infinite $S$-expansion with subtraction of an infinite ideal subalgebra developed in \cite{GenIW} leads to reduced algebras (in the sense intended in Chapter \ref{chapter 3}). In particular, the subtraction of the infinite ideal can be viewed as a $0_S$-reduction involving an infinite number of elements which play the role of ``generating zeros". 

The subtraction of the ideal allows to obtain Lie (super)algebras with a finite number of generators, after having infinitely expanded the original Lie (super)algebras. 

\subsection{General formulation of our prescription}

In the $S$-expansion procedure, if the finite semigroup is generalized to the case of an
\textit{infinite semigroup}, then the $S$-expanded algebra will be an \textit{infinite-dimensional algebra} \cite{Caroca}.
We can thus generate an infinitely $S$-expanded algebra as a ``loop-like'' Lie algebra \cite{Caroca}, where the semigroup elements can be represented by the set $(\mathbb{N},+)$,\footnote{The loop algebra of \cite{Caroca} was constructed by considering the semigroup $(\mathbb{Z},+)$ (which is an abelian group with the sum operation). Here we restrict to the case of $(\mathbb{N},+)$, following \cite{GenIW}; this is the reason why we have written ``loop-like''.} that presents the same multiplication rules (extended to an infinite set) of the general semigroup $S^{(N)}_E = \lbrace{ \lambda_\alpha\rbrace}_{\alpha=0}^{N+1}$, that is to say: $\lambda_\alpha \lambda_\beta = \lambda_{\alpha+\beta}$ if $\alpha+\beta \leq N+1$, and $\lambda_\alpha \lambda_\beta = \lambda_{N+1}$ if $\alpha+\beta > N+1$. 

We now write the following definition from Ref. \cite{GenIW}:
\begin{definition}\label{definftysemigr}
Let $\left\{\lambda_\alpha\right\}_{\alpha=0}^{\infty}=\left\{\lambda_0,\lambda_1,\lambda_2, \ldots, \lambda_\infty \right\}$ be an infinite discrete set of elements. Then, the infinite set $\left\{\lambda_\alpha\right\}_{\alpha=0}^{\infty}$ satisfying commutation rules like the ones of the set $(\mathbb{N},+)$ (that is to say, of $S^{(N)}_E$), namely
\begin{equation}\label{prodsemigrinfinito}
\lambda_\alpha \lambda_\beta = \lambda_{\alpha + \beta},
\end{equation}
where 
\begin{equation}\label{infelementprod}
\lambda_\alpha \lambda_{\infty} = \lambda_\infty , \quad \forall \lambda_\alpha \in \left\{\lambda_\alpha\right\}_{\alpha=0}^{\infty}, \quad \text{and} \; \lambda_\infty \lambda_\infty = \lambda_\infty ,
\end{equation}
is an infinite abelian semigoup symbolized by $S^{(\infty)}$.
\end{definition}
Notice that, since the multiplication rules in (\ref{infelementprod}) hold, the element $\lambda_\infty \in S^{(\infty)}$ can be regarded as an ``ideal element'' of $S^{(\infty)}$.

Now, let $\mathfrak{g} = \bigoplus_{p \in I} V_p$ be a subspace decomposition of $\mathfrak{g}$.
We now perform an infinite $S$-expansion on $\mathfrak{g}$ using the semigroup $S^{(\infty)}$. 

The infinite $S$-expanded algebra can be rewritten as:
\begin{align}
\mathfrak{g}_S^\infty &=\left\{\lambda_\alpha\right\}^\infty_{\alpha=0}\times\mathfrak{g}= \nonumber\\
&=\left\{\lambda_\alpha\right\}^\infty_{\alpha=0}\times \left[ \bigoplus_{p \in I} V_p \right] \label{suma}.
\end{align}
Observe that the Jacobi identity is fulfilled for the infinite $S$-expanded algebra $\mathfrak{g}_S^\infty$. This is due to the fact that the starting algebra satisfies the Jacobi identity and the semigroup $S^{(\infty)}$ is abelian (and associative by definition of semigroup): These are the requirements that the starting algebra and the semigroup involved in the procedure must satisfy so that the $S$-expanded algebra satisfies the Jacobi identity when performing an $S$-expansion process \cite{Iza1}.

We can now split the infinite semigroup in subsets in such a way to be able to properly extract a resonant subalgebra from the infinitely $S$-expanded one, and define partitions on these subsets such that one can isolate an ideal structure from the resonant subalgebra. In this way, we will reproduce a reduction, ending up with a finitely generated algebra (see Chapter \ref{chapter 3} for details). 

Now, in order proceed with the extraction of the infinite resonant subalgebra, we have to define a resonant subset decomposition of the infinite semigroup $S^{(\infty)}$
\begin{equation}\label{subsetdecinf1}
S^{(\infty)} = \bigcup_{p \in I} S_p ,
\end{equation}
under the product
\begin{equation}\label{prodello}
S_p \cdot S_q = \lbrace \lambda_\gamma \mid \lambda_\gamma= \lambda_{\alpha_p}\lambda_{\alpha_q}, \; \text{with} \; \lambda_{\alpha_p}\in S_p, \lambda_{\alpha_q}\in S_q \rbrace \subset S^{(\infty)} ,
\end{equation}
that is to say, a decomposition such that equation (\ref{groupdecomposition}) of Chapter \ref{chapter 3} is fulfilled. In this case, the $S_p$'s are infinite subsets.

Once such a resonant subset decomposition has been found, the direct sum
\begin{equation}
\mathfrak{g}^\infty_R = \bigoplus_{p \in I} W_p ,
\end{equation}
with 
\begin{equation}
W_p = S_p \times V_p , \quad p \in I ,
\end{equation}
is a \textit{resonant subalgebra} of $\mathfrak{g}^\infty_S$ (we have used Theorem \ref{Tres} recalled in Chapter \ref{chapter 3}). 

In particular, we observe that $\mathfrak{g}^\infty_R $ is the direct sum of a finite number of infinite subspaces $W_p$, due to the fact that the subsets $S_p$'s contains an infinite amount of semigroup elements.

In \cite{GenIW}, we have then presented the following theorem:
\begin{theorem}\label{TeoRedIdeal}
Let $\mathfrak{g}$ be a Lie (super)algebra and let $\mathfrak{g}^\infty_S = S^{(\infty)}\times \mathfrak{g}$ be the infinite $S$-expanded (super)algebra obtained using the infinite abelian semigroup $S^{(\infty)}$. 
Let $\mathfrak{g}^\infty_R$ be an infinite resonant subalgebra of $\mathfrak{g}^\infty_S$ and let $\mathcal{I}$ be an infinite ideal subalgebra of $\mathfrak{g}^\infty_R$.
Then, the (super)algebra
\begin{equation}
\check{\mathfrak{g}}_R = \mathfrak{g}^\infty_R \ominus \mathcal{I}
\end{equation}
is a reduced (super)algebra.
\end{theorem}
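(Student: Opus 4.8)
The plan is to reduce Theorem~\ref{TeoRedIdeal} to the already-established notion of reduced algebra recalled in Definition~\ref{defreduced}. The strategy is to exhibit a decomposition of the infinite resonant subalgebra $\mathfrak{g}^\infty_R$ as a direct sum of two vector subspaces, $\mathfrak{g}^\infty_R = \check{V}_0 \oplus \check{V}_1$, where $\check{V}_1$ is the underlying vector space of the infinite ideal $\mathcal{I}$ and $\check{V}_0$ is a complementary subspace spanned by the generators surviving the subtraction, i.e. $\check{V}_0 = \check{\mathfrak{g}}_R$ as a vector space. First I would record that, by hypothesis, $\mathcal{I}$ is an ideal of $\mathfrak{g}^\infty_R$, so $[\mathfrak{g}^\infty_R,\mathcal{I}]\subset \mathcal{I}$; written in terms of the two subspaces this gives precisely $[\check{V}_0,\check{V}_1]\subset \check{V}_1$ and $[\check{V}_1,\check{V}_1]\subset\check{V}_1$. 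Hence the commutation relations of $\mathfrak{g}^\infty_R$ take exactly the form (\ref{commreduced1})--(\ref{commreduced3}) of Definition~\ref{defreduced} (with $V_0\to\check V_0$, $V_1\to\check V_1$): in particular $[\check V_0,\check V_0]\subset \check V_0\oplus\check V_1$, $[\check V_0,\check V_1]\subset\check V_1$, $[\check V_1,\check V_1]\subset\check V_1\subset \check V_0\oplus \check V_1$.

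Next I would invoke Definition~\ref{defreduced} directly. Since the structure constants $\check{C}_{a_0 b_0}^{\;\;\;\;\;c_0}$ of $\mathfrak{g}^\infty_R$ in the sector $[\check V_0,\check V_0]$ satisfy the Jacobi identities — this is the content of Definition~\ref{defreduced}, and it holds here because $\mathfrak{g}^\infty_R$ itself is a Lie (super)algebra (a subalgebra of the infinite $S$-expanded algebra $\mathfrak{g}^\infty_S$, whose Jacobi identities follow from those of $\mathfrak{g}$ together with the abelianity and associativity of $S^{(\infty)}$, as noted after (\ref{suma})) — the bracket $[\check T_{a_0},\check T_{b_0}] = \check C_{a_0 b_0}^{\;\;\;\;\;c_0}\check T_{c_0}$ obtained by projecting out the $\check V_1$ components is itself a Lie (super)algebra. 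This projected bracket is, by definition, the bracket of $\check{\mathfrak{g}}_R = \mathfrak{g}^\infty_R\ominus\mathcal{I} \equiv |\check V_0|$, which is therefore a reduced algebra of $\mathfrak{g}^\infty_R$ in the precise sense of Definition~\ref{defreduced}. I would also point out, as the paper does in the finite case, that $\check{\mathfrak{g}}_R$ need not be a subalgebra of $\mathfrak{g}^\infty_R$ (nor of $\mathfrak{g}^\infty_S$), since $[\check V_0,\check V_0]$ generically has a nonzero component along $\check V_1$.

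To make the statement genuinely useful I would then make the connection with the $0_S$-reduction explicit: the ideal $\mathcal{I}$ being subtracted is generated, in the resonant subalgebra, by the semigroup elements lying in the ``ideal part'' of each resonant subset $S_p$ — that is, by writing $S_p = \check S_p \cup \hat S_p$ with $\hat S_p$ the subset contributing to $\mathcal{I}$, one checks that conditions (\ref{intzero}) and (\ref{ressemigroup}) of Theorem~\ref{teoiza} are met (here $\hat S_p$ will typically contain the ideal element $\lambda_\infty$ of $S^{(\infty)}$, exactly as the zero element $\lambda_{0_S}$ enters the $0_S$-reduction), so that $|\check{\mathfrak{g}}_R|$ with $\check{\mathfrak{g}}_R = \bigoplus_p \check S_p\times V_p$ is a reduced algebra by Theorem~\ref{teoiza}. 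The subtlety to be careful about — and what I expect to be the main obstacle — is precisely the bookkeeping at the level of subsets: one must verify that the chosen partition $S_p = \check S_p\cup\hat S_p$ simultaneously (i) is disjoint, (ii) satisfies the ``resonant ideal'' closure $\check S_p\cdot\hat S_q\subset\bigcap_{r\in i_{(p,q)}}\hat S_r$, and (iii) leaves $\check{\mathfrak{g}}_R$ with only \emph{finitely many} generators, i.e. each $\check S_p$ must be finite even though $S_p$ is infinite. Establishing that such a partition exists (and, where relevant, that it is the one reproducing a prescribed generalized In\"on\"u--Wigner contraction with integer exponents in the sense of Weimar-Woods) is the real work; once it is in place, the reduction statement itself follows formally from Definitions~\ref{defreduced} and Theorem~\ref{teoiza} as sketched above.
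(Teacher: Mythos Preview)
Your proposal is correct, and in fact your first two paragraphs give a cleaner argument than the paper's. You use the ideal hypothesis directly: once $\mathcal{I}$ is an ideal of $\mathfrak{g}^\infty_R$, any vector-space complement $\check V_0$ automatically satisfies $[\check V_0,\check V_1]\subset\check V_1$, and Definition~\ref{defreduced} immediately yields that $|\check V_0|=\mathfrak{g}^\infty_R\ominus\mathcal{I}$ is a reduced algebra. This is nothing more than the standard fact that the quotient of a Lie (super)algebra by an ideal is again a Lie (super)algebra, phrased in the paper's ``reduced algebra'' language.

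The paper, by contrast, proves the theorem \emph{via} the subset-partition machinery of Theorem~\ref{teoiza}: it posits a partition $S_p=\check S_p\cup\hat S_p$ satisfying (\ref{intzeronewella})--(\ref{ressemigroupnewella}), obtains the decomposition $\mathfrak{g}^\infty_R=\check{\mathfrak{g}}_R\oplus\hat{\mathfrak{g}}^\infty_R$, and then identifies $\mathcal{I}$ with $\hat{\mathfrak{g}}^\infty_R$ once the latter is shown to close on itself. This is precisely what you sketch in your third paragraph, and you are right that the genuine content lies there: the abstract statement ``quotient by an ideal is reduced'' is trivial, while the constructive input --- exhibiting the partition with finite $\check S_p$ and verifying (\ref{intzero})--(\ref{ressemigroup}) so that $\check{\mathfrak{g}}_R$ is finitely generated and reproduces a prescribed generalized In\"on\"u--Wigner contraction --- is what makes the theorem useful. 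Your direct argument buys brevity and generality (it works for any ideal, not just one arising from a semigroup partition); the paper's route buys the explicit link to Theorem~\ref{teoiza} and hence to the $0_S$-reduction picture, which is what the subsequent applications in Section~\ref{GENSE} actually exploit.
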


\begin{proof}
After having performed the infinite $S$-expansion on $\mathfrak{g}$, obtaining $\mathfrak{g}^\infty_S = S^{(\infty)}\times \mathfrak{g}$, and after having extracted a resonant subalgebra $\mathfrak{g}^\infty_R $ from $\mathfrak{g}^\infty_S$, we can write an $S_p$ partition $S_p = \hat{S}_p \cup \check{S}_p$ (where the $\check{S}_p$'s are finite subsets, while the $\hat{S}_p$'s are infinite ones) satisfying the conditions (\ref{intzero}) and (\ref{ressemigroup}) of Chapter \ref{chapter 3}, which we also report here for completeness:
\begin{equation}\label{intzeronewella}
\hat{S}_p \cap \check{S}_p = \emptyset ,
\end{equation}
\begin{equation}\label{ressemigroupnewella}
\check{S}_p \cdot \hat{S}_q \subset \bigcap_{r \in i_{(p,q)}}\hat{S}_r.
\end{equation}
Once such a partition has been found, it induces, according with Theorem \ref{teoiza} recalled in Chapter \ref{chapter 3}, the following decomposition on the resonant subalgebra $\mathfrak{g}^\infty_R$:
\begin{equation}
\mathfrak{g}^\infty_R = \check{\mathfrak{g}}_R \oplus \hat{\mathfrak{g}}^\infty_R ,
\end{equation}
where
\begin{equation}
\check{\mathfrak{g}}_R = \bigoplus_{p\in I} \check{S}_p \times V_p , 
\end{equation}
\begin{equation}
\hat{\mathfrak{g}}^\infty_R = \bigoplus_{p \in I} \hat{S}_p \times V_p.
\end{equation}
Then, we have
\begin{equation}\label{reducedcond1}
\left[\check{\mathfrak{g}}_R , \hat{\mathfrak{g}}^\infty_R\right] \subset  \hat{\mathfrak{g}}^\infty_R ,
\end{equation}
and, therefore, $\vert \check{\mathfrak{g}}_R \vert$ correspond to a reduced (super)algebra of $\mathfrak{g}^\infty_S$.
Moreover, in the case in which
\begin{equation}
\left[\hat{\mathfrak{g}}^\infty_R , \hat{\mathfrak{g}}^\infty_R \right] \subset \hat{\mathfrak{g}}^\infty_R ,
\end{equation} 
$\hat{\mathfrak{g}}^\infty_R$ is, in particular, an \textit{infinite ideal subalgebra}, due to the fact that it also satisfies (\ref{reducedcond1}).

We can thus write
\begin{equation}
\check{\mathfrak{g}}_R = \mathfrak{g}^\infty_R \ominus \mathcal{I},
\end{equation}
where we have denoted by $\mathcal{I}$ the infinite ideal subalgebra, $\mathcal{I}\equiv \hat{\mathfrak{g}}^\infty_R$, and where $\check{\mathfrak{g}}_R$ corresponds to the reduced algebra we obtain at the end of the procedure.
\end{proof}

Notice that $\check{\mathfrak{g}}_R$ is \textit{finite}, since it is the direct sum of products between finite subsets and finite subspaces, while $\hat{\mathfrak{g}}^\infty_R$ is \textit{infinite}, due to the fact that the $\hat{S}_p$'s are infinite subsets.

Thus, we have explicitly shown that the subtraction of an infinite ideal subalgebra from an infinite resonant subalgebra of an infinite $S$-expanded (super)algebra corresponds to a \textit{reduction} and leads to a \textit{finite, reduced algebra} (in the sense intended in Chapter \ref{chapter 3}).

A reduced algebra, in general, does not correspond to a subalgebra; this means that the (super)algebra we end up with after having performed the ideal subtraction does not correspond, in general, to a subalgebra.

Let us finally observe that \textit{the ideal subtraction can be viewed as a (generalization of the) $0_S$-reduction}, in the sense that all the elements of the infinite ideal are mapped to zero after the ideal subtraction; this has the same effect which is usually produced by the zero element $\lambda_{0_S}$ of a semigroup, namely
\begin{equation}
\lambda_{0_S} T_A = \bf{0} .
\end{equation}
We are now conferring the role of ``generating zeros'' to a particular infinite set of generators: The ones belonging to the infinite ideal. The reduced algebra $\check{\mathfrak{g}}_R$ can be viewed, in this sense, as a $0_S$-reduced algebra.

One of the fundamental step in performing our procedure consists, after having found a resonant subset decomposition of $S^{(\infty)}$, in choosing properly the $S_p$ partition $S_p = \check{S}_p \cup \hat{S}_p$, in order to be able to extract an infinite ideal subalgebra $\mathcal{I}$ from the infinite resonant subalgebra $\mathfrak{g}^\infty_R$.

\subsection{How to reproduce a generalized In\"{o}n\"{u}-Wigner contraction}\label{GENSE}

In order to see how to reproduce a generalized In\"{o}n\"{u}-Wigner contraction by following the above prescription, let us now apply our method to the case in which the original Lie algebra $\mathfrak{g}$ can be decomposed into $n+1$ subspaces
\begin{equation}
\mathfrak{g} = V_0 \oplus V_1 \oplus \ldots \oplus V_n 
\end{equation}
satisfying the following Weimar-Woods conditions \cite{WW1, WW2}:
\begin{equation}
\left[V_p,V_q\right]\subset\bigoplus_{s\leq p+q}V_s,\quad p,q=0,1,\ldots,n .
\end{equation}

Now we can properly choose the subset partition of $S^{(\infty)}$ and apply our method of infinite $S$-expansion with ideal subtraction in order to show that the generalized In\"{o}n\"{u}-Wigner contraction fits within our scheme. 

First, we perform the infinite $S$-expansion on $\mathfrak{g}$, obtaining the $S$-expanded algebra
\begin{equation}\label{sumanew}
\mathfrak{g}^\infty_S = S^{(\infty)} \times \mathfrak{g} =\left( \lbrace \lambda_\alpha \rbrace _0 ^\infty \times V_0 \right) \oplus \left( \lbrace \lambda_\alpha \rbrace_0 ^\infty \times V_1 \right) \oplus \ldots \oplus \left( \lbrace \lambda_\alpha \rbrace_0^\infty \times V_n \right).
\end{equation}
Then, in order to be able to perform the extraction of the infinite resonant subalgebra, we have to split the semigroup $S^{(\infty)}$ into $n+1$ infinite subsets $S_p$ such that, when $\mathfrak{g}$ satisfies the Weimar-Woods conditions (that is our case), the following condition
\begin{equation}\label{rescond}
S_p \cdot S_q \subset \bigcap_{r\leq p+q} S_r
\end{equation}
is fulfilled. Here, $S_p \cdot S_q$ denotes the set of all products of all elements of $S_p$ and all elements of $S_q$. Thus, we must define such a decomposition for the infinite semigroup $S^{(\infty)}$. 

Now, let
\begin{equation}\label{subsetdecinf}
S^{(\infty)} = \bigcup_{p=0}^n S_p
\end{equation}
be a subset decomposition of $S^{(\infty)}$, where the subsets $S_p \subset S^{(\infty)}$ are defined by
\begin{equation}
S_p = \lbrace \lambda_{\alpha_p}, \; \alpha_p = p, \ldots, \infty  \rbrace , \quad p = 0, \ldots n.
\end{equation}
The subset decomposition (\ref{subsetdecinf}) is a \textit{resonant} one under the semigroup product (\ref{prodello}), since it satisfies (\ref{rescond}).
Then, according to Theorem \ref{Tres} recalled in Chapter \ref{chapter 3}, the direct sum
\begin{equation}
\mathfrak{g}^\infty_R = \bigoplus_{p=0}^n W_p ,
\end{equation}
with 
\begin{equation}
W_p = S_p \times V_p ,
\end{equation}
is a resonant subalgebra of $\mathfrak{g}^\infty_S$. 

Let us now consider $\mathfrak{g}^\infty_R $ and write the following $S_p$ partition: $S_p = \hat{S}_p \cup \check{S}_p $, where
\begin{equation}\label{partinf1}
\check{S}_p = \lbrace \lambda_{\alpha_p} , \; \alpha_p = p \rbrace  \equiv \lbrace \lambda_p \rbrace ,
\end{equation}
\begin{equation}\label{partinf2}
\hat{S}_p = \lbrace \lambda_{\alpha_p} , \; \alpha_p= p+1,\ldots , \infty  \rbrace .
\end{equation}
The $S_p$ partition just defined satisfies
\begin{equation}\label{firstcond}
\hat{S}_p \cap \check{S}_p = \emptyset ,
\end{equation}
which is exactly the condition (\ref{intzeronewella}).
The second condition that must be fulfilled in order to have the chance of extracting a reduced algebra from the resonant subalgebra $\mathfrak{g}^\infty_S$, when the original algebra $\mathfrak{g}$ satisfies the Weimar-Woods conditions, reads
\begin{equation}\label{secondcond}
\check{S}_p \cdot \hat{S}_q \subset \bigcap _{r \leq p+q} \hat{S}_r .
\end{equation} 
In the present case, $\check{S}_p$ and $\hat{S}_q$ are given by
\begin{equation}
\check{S}_p = \lbrace \lambda_{p} \rbrace , 
\end{equation}
\begin{equation}
\hat{S}_q = \lbrace \lambda_{\alpha_q}, \; \alpha_q = q+1, \ldots , \infty \rbrace ,
\end{equation}
respectively.
Thus, the condition (\ref{secondcond}) is fulfilled, since
\begin{equation}
\bigcap _{r=0}^{p+q} \hat{S}_r= \hat{S}_{p+q},
\end{equation}
where $\hat{S}_{p+q} = \lbrace \lambda_{p+q+m}, m =1, \ldots, \infty \rbrace$, and
\begin{equation}
\check{S}_p \cdot \hat{S}_q = \hat{S}_{p+q} ,
\end{equation}
where we have used (\ref{prodsemigrinfinito}). 

We can then conclude that the $S_p$ partition we have chosen satisfies the reduction condition, and we can now extract a reduced algebra from the resonant subalgebra $\mathfrak{g}^\infty_S$.
Indeed, what we have done induces, according to Theorem \ref{teoiza} recalled in Chapter \ref{chapter 3}, the following decomposition on the resonant subalgebra:
\begin{equation}
\mathfrak{g}^\infty_R = \check{\mathfrak{g}}_R \oplus \hat{\mathfrak{g}}^\infty_R ,
\end{equation}
where
\begin{equation}
\check{\mathfrak{g}}_R = \bigoplus_{p=0}^n \check{S}_p \times V_p , 
\end{equation}
\begin{equation}
\hat{\mathfrak{g}}^\infty_R = \bigoplus_{p=0}^n \hat{S}_p \times V_p.
\end{equation}
We can now write
\begin{equation}
\hat{\mathfrak{g}}^\infty_R = \bigoplus_{p=0}^n W' _p ,
\end{equation}
with
\begin{equation}
W' _p = \hat{S}_p \times V_p = S_{p+1} \times V_p ,
\end{equation}
where $S_{p+1}= \lbrace \lambda_{p+m}, \; m=1, \ldots , \infty \rbrace$. One can now easily prove that, by construction, we have
\begin{equation}
\left[\hat{\mathfrak{g}}^\infty_R , \hat{\mathfrak{g}}^\infty_R \right] \subset \hat{\mathfrak{g}}^\infty_R .
\end{equation}
This means that $\hat{\mathfrak{g}}^\infty_R$ is an infinite subalgebra of $\mathfrak{g}^\infty_R$ and, consequently, an infinite subalgebra of $\mathfrak{g}^\infty_S$. In particular, it is an ideal subalgebra, since it also satisfies 
\begin{equation}
\left[\check{\mathfrak{g}}_R , \hat{\mathfrak{g}}^\infty_R \right] \subset \hat{\mathfrak{g}}^\infty_R .
\end{equation}
Thus, we can finally write:
\begin{equation}
\check{\mathfrak{g}}_R = \mathfrak{g}^\infty_R \ominus \mathcal{I},
\end{equation}
and, applying Theorem \ref{TeoRedIdeal}, we can prove that the algebra $\check{\mathfrak{g}}_R$ corresponds to a \textit{reduced algebra}. Now, since the generalized In\"{o}n\"{u}-Wigner contraction corresponds to the reduction of a resonant subalgebra of the $S$-expanded one \cite{Iza1}, we can conclude that the generalized In\"{o}n\"{u}-Wigner fits within our scheme.

\subsection{Invariant tensors and infinite $S$-expansion}  
          
The authors of \cite{Iza1} developed a theorem for writing the components of the invariant tensor of a target algebra obtained through a finite $S$-expansion in terms of those of the initial algebra (see Theorem VII.1 of Ref. \cite{Iza1}). Furthermore, in Theorem VII.2 of the same paper, they gave an expression for the invariant tensor of a $0_S$-reduced algebra. 

Starting from their results, in \cite{GenIW} we presented the following theorem: 

\begin{theorem}\label{teolalala}
Let $\mathfrak{g}$ be a Lie (super)algebra of basis $\lbrace T_A \rbrace$ and let $\langle T_{A_0}\ldots T_{A_N}\rangle$ be an invariant tensor for $\mathfrak{g}$. 
Let $\mathfrak{g}^\infty_S = S^{(\infty)}\times \mathfrak{g}$ be the infinite $S$-expanded (super)algebra obtained using the infinite abelian semigroup $S^{(\infty)}$.
Let $\mathfrak{g}^\infty_R$ be an infinite resonant subalgebra of $\mathfrak{g}^\infty_S$ and let $\mathcal{I}$ be an infinite ideal subalgebra of $\mathfrak{g}^\infty_R$.
Then, 

\begin{equation}\label{topologicalInvariant}
\langle T^{\alpha_{p_0}}_{A_{p_0}}\ldots T^{\alpha_{p_N}}_{A{p_N}}\rangle =  \alpha^{m}\delta^{\alpha_{p_0} + \alpha_{p_1}+\alpha_{p_2}+\ldots + \alpha_{p_N}}_m \langle T_{A_0}\ldots T_{A_N}\rangle ,
\end{equation}
where the $\alpha^{m}$'s are arbitrary constants, corresponds to an invariant tensor for the finite (super)algebra

\begin{equation}
\check{\mathfrak{g}}_R = \mathfrak{g}^\infty_R \ominus \mathcal{I},
\end{equation}
having denoted the generators of $\check{\mathfrak{g}}_R$ by $\lambda_{\alpha_{p_i}} T_{A_{p_i}}\equiv T^{\alpha_{p_i}}_{A_{p_i}}$, $i=0, \ldots, N $, and where the set $\lbrace \lambda_{\alpha_p} \rbrace$ is finite. 
\end{theorem}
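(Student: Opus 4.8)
The plan is to reduce Theorem~\ref{teolalala} to the already-established machinery of Ref.~\cite{Iza1}, namely Theorem~VII.1 (invariant tensors of $S$-expanded algebras) and Theorem~VII.2 (invariant tensors of $0_S$-reduced algebras), combined with Theorem~\ref{TeoRedIdeal} of the present chapter, which asserts that the ideal subtraction $\check{\mathfrak{g}}_R = \mathfrak{g}^\infty_R \ominus \mathcal{I}$ is a reduction and, as stressed immediately after its proof, behaves precisely as a ($0_S$-type) reduction in which every element of the infinite ideal $\mathcal{I}$ plays the role of a ``generating zero''. So the first step is to recall the formula of Theorem~VII.1 of \cite{Iza1}: if $\langle T_{A_0}\ldots T_{A_N}\rangle$ is an invariant tensor for $\mathfrak{g}$ and $S$ is an abelian semigroup with $2$-selector $K_{\alpha\beta}^{\ \ \gamma}$, then
\begin{equation}
\langle T_{(A_0,\alpha_0)}\ldots T_{(A_N,\alpha_N)}\rangle = \alpha^\gamma \, K^{\ \ \gamma}_{\alpha_0\ldots\alpha_N}\,\langle T_{A_0}\ldots T_{A_N}\rangle
\end{equation}
is an invariant tensor for the $S$-expanded algebra $\mathfrak{g}_S$, where $K^{\ \ \gamma}_{\alpha_0\ldots\alpha_N}$ is the $(N+1)$-selector built by iterating the $2$-selector. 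For $S^{(\infty)}$, with the loop-like multiplication rule $\lambda_\alpha\lambda_\beta=\lambda_{\alpha+\beta}$ (Definition~\ref{definftysemigr}), the $(N+1)$-selector collapses to $K^{\ \ m}_{\alpha_{p_0}\ldots\alpha_{p_N}} = \delta^{\alpha_{p_0}+\cdots+\alpha_{p_N}}_m$ (away from the ideal element $\lambda_\infty$), which is exactly the combinatorial kernel appearing on the right-hand side of \eqref{topologicalInvariant}.

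Next I would pass to the infinite resonant subalgebra $\mathfrak{g}^\infty_R$. By Theorem~VII.1 restricted to a resonant subalgebra (this restriction is already invoked in \cite{Iza1} and is routine), the same expression furnishes an invariant tensor for $\mathfrak{g}^\infty_R$, where now the labels $\alpha_{p_i}$ are constrained to lie in the subsets $S_{p_i}$ of the resonant partition. The only genuinely new point is the passage from $\mathfrak{g}^\infty_R$ to $\check{\mathfrak{g}}_R = \mathfrak{g}^\infty_R\ominus\mathcal{I}$. Here I would invoke Theorem~VII.2 of \cite{Iza1}: the invariant tensor of a reduced algebra obtained by a $0_S$-type reduction is simply the restriction of the invariant tensor of the ambient algebra to the surviving generators, with all components involving the ``zero'' directions (here: the generators of $\mathcal{I}$) set to zero. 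Since — as established in Theorem~\ref{TeoRedIdeal} and the discussion following it — subtracting $\mathcal{I}$ is literally the statement that $\lambda_{\alpha_p}T_A=\mathbf 0$ for every $\lambda_{\alpha_p}\in\hat S_p$, the surviving labels are exactly those $\alpha_{p_i}$ with $\lambda_{\alpha_{p_i}}\in\check S_{p_i}$, i.e.\ the finite set $\{\lambda_{\alpha_p}\}$ mentioned in the statement. Writing $\lambda_{\alpha_{p_i}}T_{A_{p_i}}\equiv T^{\alpha_{p_i}}_{A_{p_i}}$ and renaming the arbitrary constants $\alpha^\gamma\to\alpha^m$, one lands precisely on \eqref{topologicalInvariant}, and the invariance (cyclic symmetry and $\check{\mathfrak{g}}_R$-invariance) is inherited from that of the ambient tensor under restriction — this is the content of Theorem~VII.2.

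The remaining verification to spell out is that the object in \eqref{topologicalInvariant} really does satisfy the defining identity of an invariant tensor for $\check{\mathfrak{g}}_R$, namely $\sum \check C_{B C}^{\ \ A_i} \langle T_{A_0}\ldots T_{A_{i-1}} T_A T_{A_{i+1}}\ldots T_{A_N}\rangle = 0$ with $\check C$ the structure constants of the reduced algebra. This follows by combining (i) the fact that the structure constants of $\check{\mathfrak{g}}_R$ are those of $\mathfrak{g}^\infty_R$ with the contributions valued in $\mathcal{I}$ dropped (Definition~\ref{defreduced} applied to this decomposition), and (ii) the observation that in the invariance identity for $\mathfrak{g}^\infty_R$ the dropped terms are exactly those whose invariant-tensor component vanishes after the restriction, so the two identities are consistent. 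I expect the main obstacle — and the only place requiring real care rather than bookkeeping — to be precisely this consistency check: one must confirm that no term which is set to zero on the structure-constant side would have survived with a nonzero invariant-tensor coefficient, and vice versa, i.e.\ that the reduction condition $[\check{\mathfrak{g}}_R,\mathcal{I}]\subset\mathcal{I}$ (established in Theorem~\ref{TeoRedIdeal}) is exactly what makes the restricted tensor closed under the reduced Lie bracket. Once that compatibility is in hand, the theorem follows formally from Theorems~VII.1 and VII.2 of \cite{Iza1} together with Theorem~\ref{TeoRedIdeal}, with the loop-like $(N+1)$-selector computation being the only explicit calculation, and an entirely elementary one.
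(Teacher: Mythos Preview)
Your proposal is correct and follows essentially the same route as the paper: invoke Theorem~\ref{TeoRedIdeal} to identify the ideal subtraction with a $0_S$-reduction, then apply Theorem~VII.2 of Ref.~\cite{Iza1} to read off the invariant tensor of the reduced algebra. The paper's proof is considerably terser---it does not spell out the $(N+1)$-selector computation for $S^{(\infty)}$ or the explicit consistency check of the invariance identity that you sketch---but the logical skeleton is identical, and your additional detail is all sound bookkeeping rather than a different argument.
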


\begin{proof}
The proof of Theorem \ref{teolalala} can be developed by applying Theorem \ref{TeoRedIdeal}. Indeed, as stated in Theorem \ref{TeoRedIdeal}, we have that the subtraction of an infinite ideal subalgebra from an infinite resonant subalgebra of an infinitely $S$-expanded (super)algebra (using the semigroup $S^{(\infty)}$ on the original (super)algebra) corresponds to a reduction. In particular, we have seen that it reproduces the same result of a $0_S$-reduction. 

In this way, one can write the invariant tensor of the (super)algebra obtained with our method of infinite $S$-expansion with ideal subtraction by applying Theorem VII.2 of Ref. \cite{Iza1}, which, indeed, gives an expression for the invariant tensor for a $0_S$-reduced algebra.

Thus, it is straightforward to show that the invariant tensor for the (super)algebra $\check{\mathfrak{g}}_R=\mathfrak{g}_R^\infty\ominus\mathcal{I}$ can be written in the form
\begin{equation}
\langle T^{\alpha_{p_0}}_{A_{p_0}}\ldots T^{\alpha_{p_N}}_{A{p_N}}\rangle =  \alpha^{m}\delta^{\alpha_{p_0} + \alpha_{p_1}+\alpha_{p_2}+\ldots + \alpha_{p_N}}_m \langle T_{A_0}\ldots T_{A_N}\rangle ,
\end{equation}
being $\alpha^{m}$ arbitrary constants, where we have denoted the generators of $\check{\mathfrak{g}}_R$ by $\lambda_{\alpha_{p_i}} T_{A_{p_i}}\equiv T^{\alpha_{p_i}}_{A_{p_i}}$, with $i=0, \ldots, N$, and where the set $\lbrace \lambda_{\alpha_p} \rbrace$ is finite. 
\end{proof}

Let us conclude by saying that, in Ref. \cite{GenIW}, the interested reader can also find examples of application of our prescription for infinite $S$-expansion with ideal subtraction, in which we reproduced some results already known from the literature and also gave some new features.

\chapter{Conclusions and future developments} \label{chapter 7}

\ifpdf
    \graphicspath{{Chapter7/Figs/}{Chapter7/Figs/PDF/}{Chapter7/Figs/}}
\else
    \graphicspath{{Chapter7/Figs/Vector/}{Chapter7/Figs/}}
\fi

In this concluding chapter, we summarize the original results obtained during my PhD research activity (and collected, reorganized, and clarified in this thesis), and discuss some possible future developments. 

\section{Original results concerning supergravity theories}


In \cite{Gauss} (see Chapter \ref{chapter 4}), we have presented the explicit construction of the $\mathcal{N} = 1$, $D = 4$ $AdS$-Lorentz supergravity bulk Lagragian in the rheonomic framework, showing an alternative way to introduce a generalized supersymmetric cosmological term to supergravity. Subsequently, we have studied the supersymmetry invariance of the Lagrangian in the presence of a non-trivial space-time boundary; we have found that the supersymmetric
extension of a Gauss-Bonnet like term is required in order to restore the supersymmetry invariance of the full Lagrangian (bulk plus boundary). The Lagrangian we have finally obtained can be recast in a suggestive MacDowell-Mansouri like form \cite{MM}.

The results we have presented in \cite{Gauss} and reviewed in this thesis could be useful to study higher-dimensional supergravity theories in the presence of a non-trivial boundary using the rheonomic (geometric) approach. Furthermore, it would be interesting to analyze the possible role played by the bosonic field $k^{ab}$ appearing in our model in the context of the $AdS$/CFT correspondence, in particular in the holographic renormalization language.  



The core of my PhD research activity is concentrated in \cite{Hidden} and \cite{Malg} (see Chapter \ref{chapter 5}), both in collaboration with L. Andrianopoli and R. D'Auria. In these papers, in particular, we have deeply analyzed and discussed diverse superalgebras in eleven dimensions, which are somehow ``hot topics'' of the supergravity research field since the action of $D=11$ supergravity was first constructed in \cite{Cremmer}.

In particular, in \cite{Hidden} we have reconsidered the hidden superalgebra structure underlying supergravity theories in  space-time dimensions $D>5$ (and, in general, supersymmetric theories necessarily involving $p$-form gauge fields with $p>1$), first introduced in \cite{D'AuriaFre} in the context of the $D=11$ supergravity theory (we have called the hidden superalgebra underlying $D=11$ supergravity the DF-algebra). It generalizes the supersymmetry algebra to include the set of almost-central charges (carrying Lorentz indexes) which are currently associated with $(p-1)$-brane charges.

We have focused on the role played by the nilpotent spinor charges naturally appearing in the hidden superalgebra, showing that such extra charges, besides they are required for the equivalence of the hidden superalgebra to the FDA, are also necessary to project out of the physical superspace the non-physical degrees of freedom, decoupling them from the physical spectrum of the theory. In this sense, the extra spinors behave like cohomological BRST ghosts.

Thus, analyzing in detail the $D=11$ case, we have clarified the physical interpretation of the spinor $1$-form field dual to the nilpotent spinor charge: It is not a physical field in superspace, its differential being parametrized in an enlarged superspace which also includes the almost-central charges as bosonic tangent space generators, besides the supervielbein $\{V^a,\psi^\alpha\}$. Because of this feature, it guarantees that, instead, the $1$-forms dual to the almost-central charges are abelian gauge fields whose generators, together with the nilpotent fermionic generators, close an abelian ideal of the supergroup.

As the generators of the hidden Lie superalgebra span the tangent space of a supergroup manifold, then, in our geometrical approach, the fields are naturally defined in an enlarged manifold corresponding to the supergroup manifold, where all the invariances of the FDA are diffeomorphisms, generated by Lie derivatives. The extra spinor $1$-form $\eta$ (dual to the nilpotent fermionic generator $Q'$) appearing in the hidden superalgebra underlying $D=11$ supergravity allows, in a dynamical way, the diffeomorphisms in the directions spanned by the almost-central charges to be in particular gauge transformations, so that one obtains the ordinary superspace as the quotient of the supergroup over the fiber subgroup of gauge transformations.

We have further considered a lower-dimensional case,
with the aim of investigating a possible enlargement of the hidden supergroup structure found in $D=11$, focusing, in particular, on the minimal $D=7$ FDA. In that case, we have been able to parametrize in terms of $1$-forms the couple of mutually non-local forms $B^{(2)}$ and $B^{(3)}$. An analogous investigation in $D=11$ would have required the knowledge of the explicit parametrization of $B^{(6)}$, that is mutually non-local with $A^{(3)}$, but which at the moment has not been worked out yet.

In the $D=7$ case, we have found that \textit{two} nilpotent spinor charges are required in order to find the most general hidden Lie superalgebra equivalent to the FDA in superspace. In this case, two subalgebras exist, where only one spinor, parametrizing only one of the two mutually non-local $p$-forms, is present. We have called them \textit{Lagrangian subalgebras}, since they should correspond to the expected symmetries of a Lagrangian description of the theory in terms of $1$-forms, or, for the corresponding FDA, to the presence of either $B^{(2)}$ or $B^{(3)}$ in the Lagrangian.

Actually, as we will further discuss later on, also the $D=11$ case admits the presence of (at least) two nilpotent fermionic generators, in the sense that the extra spinor $1$-form $\eta$ appearing in the DF-algebra can be parted into two contributions, whose integrability conditions close separately (see the work \cite{Malg}, recalled in the second part of Chapter \ref{chapter 5} of this thesis).


In particular, in \cite{Malg} we have shown that, despite the $M$-algebra is a In\"on\"u-Wigner contraction of the $\osp(1|32)$ algebra (more precisely, of its torsion deformation, namely of the superalgebra we have called the RSB-algebra), still the DF-algebra cannot be directly obtained as an In\"on\"u-Wigner contraction from the SB-algebra, the latter being a (Lorentz-valued) central extension of the RSB-algebra.
Correspondingly, $D=11$ supergravity is not left invariant by the $\osp(1|32)$ algebra (not even in its torsion deformed formulation RSB), while being invariant under the DF-algebra.
This is due to the fact that the spinor $1$-form $\eta$ of the DF-algebra (that is a  spinor central extension of the $M$-algebra) contributes to the DF-algebra with structure constants different from the ones of the SB-algebra (which is related to the $\osp(1|32)$ algebra). 

More precisely, we have seen that $\eta$ differs from ${\eta_{SB}}\propto \lambda$ by the extra $1$-form generator $\xi$.
This has a counterpart in the expression of $A^{(3)}= A^{(3)}(\sigma^\Lambda)$, which trivializes the vacuum $4$-form cohomology in superspace in terms of DF-algebra $1$-form generators $\sigma^\Lambda$. $A^{(3)}(\sigma^\Lambda)$ is not invariant under the $\osp(1|32)$ algebra (neither under its torsion deformation RSB) because of the contribution $A^{(3)}_{(0)}$ explicitly breaking this symmetry; however, such term is the only one contributing to the vacuum $4$-form cohomology in superspace, due to the presence in the DF-algebra of the two spinors $\xi$ and $\eta_{SB}$ into which the cohomological spinor $\eta$ can be decomposed.

The decomposition of $A^{(3)}(\sigma^\Lambda)= A^{(3)}_{(0)}+ \alpha  A^{(3)}_{(e)}$ in superspace, where we have disclosed different contributions to the $4$-form cohomology on superspace from the two terms $dA^{(3)}_{(0)}(\sigma^\Lambda)$ and $dA^{(3)}_{(e)}(\sigma^\Lambda)$, suggests that such contributions could be possibly related to the general analysis done in \cite{Witten:1996md, Diaconescu:2000wy, Diaconescu:2000altro, Diaconescu:2003bm}, where the $4$-form cohomology of the $M$-theory on a spin manifold $Y$ is shown to be shifted, with respect to the integral cohomology class, by the canonical integral class of the spin bundle of $Y$. Referring to our discussion, it appears reasonable to conjecture that one could rephrase the above statement into the following one, in terms of the super field-strength $G^{(4)}$ in superspace: $G^{(4)}$ has integral periods in superspace, while the periods of $dA^{(3)}$ are shifted by the contribution (possibly fractional) of the spin bundle. Since our analysis refers to the FDA describing the vacuum in \textit{superspace}, we should consider as spin manifold $Y$ flat superspace, where the \textit{integral} cohomology class is trivial.    This corresponds, in our formulation, to the trivial contribution from the RSB-invariant term $A^{(3)}_{(e)}(\sigma^\Lambda)$, the only non-trivial contribution to the $4$-form cohomology on flat superspace coming from $dA^{(3)}_{(0)}(\sigma^\Lambda)$, which accounts for the contribution from the spin bundle.  

A deeper analysis of the correspondence between the two approaches, for the vacuum theory and for the dynamical theory out of the vacuum, is currently under investigation and left to future works. In particular, it is still to be explicitly shown that the contribution to the $4$-form cohomology in superspace from $dA^{(3)}_{(0)}(\sigma^\Lambda)$ could assume both integer and half-integer values. In this direction, the techniques developed in \cite{Castellani:2016ibp}, where a formulation of supergravity in superspace with \textit{integral forms} was introduced, could be particular useful.

Some future works could consist on extending the study of the hidden gauge structure of eleven-dimensional supergravity to the complete FDA (including the $6$-form), considering the full dynamical content of the theory out of the vacuum and performing an analysis for the $6$-form $B^{(6)}$ of $D=11$ supergravity similar to the one we have done in \cite{Malg} for the $3$-form $A^{(3)}$. We expect that, in the study of the $6$-form, a cohomological $1$-form spinor different from $\eta$ should play a crucial role. The decomposition of the spinor $\eta$ into a linear combination of $1$-form spinors, $\xi$ and $\eta_{SB}$, suggests that possibly the relevant spinor in the case of $B^{(6)}$ could correspond to a different linear combination of $\xi$ and $\eta_{SB}$. Such analysis should preliminarily require the knowledge of the parametrization of $B^{(6)}$ in terms of $1$-forms, which, as we have already mentioned, is not available yet.

On the other side, it appears that the extra spinor $1$-form $\eta$ could be an important addition towards the construction of a possible \textit{off-shell} theory underlying $D=11$ supergravity.
In \cite{Hassaine:2003vq}, a supersymmetric $D=11$ Lagrangian invariant under the $M$-algebra and closing off-shell without requiring auxiliary fields was constructed as a Chern-Simons form. It would be very intriguing to investigate the possible connections between our formulation and the approach adopted in \cite{Hassaine:2003vq}.

It might also be worth analyzing the connection between our approach and the theories of \textit{generalized geometry}.
In particular, the approach presented in \cite{Hidden}, where all the invariances of the FDA are expressed as Lie derivatives of the $p$-forms in the hidden supergroup manifold, could be an appropriate framework to discuss theories defined in enlarged versions of superspace recently considered in the literature, such as \textit{Double Field Theory} (DFT) and \textit{Exceptional Field Theory} (EFT) (see, for example, \cite{Hohm:2013pua, Hohm:2013uia, Hohm:2014qga} and references therein). This conjecture is based on the fact that we have recognized that the presence of extra bosonic $1$-forms in the Lie superalgebras appears to be quite analogous to the presence of extra coordinate directions in the formulation of DFT and EFT. 
We expect that our approach, where the gauge and supersymmetry constraints are dynamically implemented by the presence of the nilpotent fermionic generators, could be appropriate to formulate the constraints on which the consistency of DFT and EFT are based. In particular, the $1$-form fields $\sigma^\Lambda$ of the DF-algebra should give an alternative description of EFT, where the \textit{section constraints}, required in that theory to project the field equations on ordinary superspace, should be dynamically implemented through the presence of the cohomological spinor $\eta$. Some work is in progress on this topic.

In this context, referring to the concluding comments in Chapter (\ref{chapter 5}) done when discussing a $D=4$ case considered in \cite{SM4} and to the fact that the description of supergravity in eleven dimensions in terms of its hidden DF-algebra could be useful in the analysis of its compactification to lower dimensions, it might be attractive to better understand the possible relations between the extra bosonic fields appearing in different $D=4$ theories (such as those related to the $AdS$-Lorentz and Maxwell-like superalgebras considered in \cite{Gauss} and \cite{SM4}, respectively) and the extra bosonic $1$-forms appearing in the hidden structures underlying $D=11$ (and $D=7$) supergravities; the study of the dimensional reduction from eleven (or directly seven) to four dimensions would certainly be clarifying.


\section{New results in the context of $S$-expansion}


On the pure group theoretical and algebraic side of my research, driven by the fact that connecting different Lie (super)algebras can give birth to new links among physical theories (and, sometimes, also to new physical theories), in the work \cite{Analytic} (see Chapter {\ref{chapter 6}}) we have developed an \textit{analytic method} (in the context of $S$-expansion) to find the semigroup(s) $S$ (we could also find more than one semigroup) linking two different (super)algebras, once certain particular conditions on the subspace decomposition of the starting and target (super)algebras and on the partition of the set(s) involved in the procedure are met. 

In the cases in which the (graded) Jacobi identities of the initial (super)algebra $\mathfrak{g}$ are trivially satisfied (each term of the Jacobi identities is equal to zero, separately), the abelian magma(s) $\tilde{S}$ involved in the procedure does not necessarily be a semigroup $S$, since associativity, in those particular cases, is not a necessary condition for the consistency of the method.

We have then given an interesting example of application
involving the Lie superalgebra $\mathfrak{osp}(1 \vert 32)$ and the $M$-algebra, reproducing the result presented in \cite{Iza1}, namely obtaining that $S^{(2)}_E$ is the semigroup leading from $\mathfrak{osp}(1 \vert 32)$ to the $M$-algebra. Our analytic method immediately allowed to recover this result, without resorting to any ``trial and error'' process; it is reliable and can also be adopted in more complicated cases.

Let us mention here that one can move from $\mathfrak{osp}(1|32)$ to the DF-algebra by performing two subsequent steps: First, one has to go from $\mathfrak{osp}(1|32)$ to the $M$-algebra (through, for example, $S$-expansion and $0_S$-resonant-reduction, with the abelian semigroup $S^{(2)}_E$); then, one has to ``centrally'' extend the $M$-algebra with an extra (nilpotent) fermionic generator.

A possible future development of the results presented in this context consists in extensions and generalizations of our method (for example, trying to release some of the initial assumptions).


Subsequently, in \cite{GenIW} (see the second part of Chapter \ref{chapter 6}), we have given a new prescription for $S$-expansion, using an \textit{infinite} abelian semigroup $S^{(\infty)}$ and performing the subtraction of an infinite ideal subalgebra from an infinite resonant subalgebra of the infinitely $S$-expanded one. We have explicitly shown that the subtraction of the infinite ideal subalgebra corresponds to a reduction, leading to a reduced (super)algebra.
In particular, it can be viewed as a (generalization of the) $0_S$-reduction. This method also offers an alternative view of the generalized In\"on\"u-Wigner contraction. Indeed, an infinite $S$-expansion with ideal subtraction allows to reproduce the standard as well as the generalized In\"on\"u-Wigner contraction. The removal of the infinite ideal is crucial, since it allows to end up with finite-dimensional Lie (super)algebras.

We have then given a theorem for writing the invariant tensors for the (super)algebras obtained by applying our method of infinite $S$-expansion with ideal subtraction. Indeed, since the ideal subtraction  can be viewed as a $0_S$-reduction, one can then write the invariant tensors for the $0_S$-reduced (super)algebras in terms of those of the starting ones. 
This procedure allows to develop the dynamics and construct the Lagrangians of physical theories. 
In particular, in this context the construction of Chern-Simons forms becomes more accessible.

By performing our method, one can get diverse (super)algebras from the original one (depending on different choices for the resonant subspace partitions and subset decomposition of the starting algebra and of the semigroup, respectively, and, consequently, on the subtraction of different infinite ideal subalgebras), obtaining, in this way, an exhaustive overview on the possible reduced (super)algebras associated with the starting one.

In \cite{GenIW}, we have restricted our study to the case of an infinite semigroup $S^{(\infty)}$ related to the set $(\mathbb{N},+)$. We leave a possible upgrade to the set $(\mathbb{Z},+)$ to future works. 

Another possible development concerning $S$-expansion would consists in extending the procedures recalled in this thesis to include algebraic structures which link different (super)algebras by also involving Grassmann-like variables. Some work is in progress on this topic.

\begin{flushright}
``\textit{Learn from yesterday, live for today, hope for tomorrow. \\ The important thing is not to stop questioning}.''
\\
Albert Einstein
\end{flushright}


\begin{spacing}{0.9}





\end{spacing}


\begin{appendices} 


\chapter{The vielbein basis}\label{rgv}

The geometry of linear spaces as well as that of a general Riemannian manifold can be studied using the (orthonormal) moving frame\footnote{Namely, a frame of reference which moves together with the observer along a trajectory.} $\lbrace \overrightarrow{e}_i \rbrace$ and the so called dual vielbein (co)frame $\lbrace V^i \rbrace$. Let us discuss what we mean, following the same lines of Ref. \cite{Libro1}. 

From now on, we use Greek indexes to denote the coordinate indexes (also called holonomic indexes, world-indexes, or curved indexes), while the Latin indexes (called anholonomic, tangent space indexes, flat indexes, or intrinsic indexes) label the moving frame $\lbrace \overrightarrow{e}_i \rbrace$ and the new basis of $1$-forms $\lbrace V^i \rbrace$.

\section{Geometry of linear spaces in the vielbein basis}

Consider curvilinear coordinates $\lbrace x^\mu \rbrace$ on $\mathbb{R}^n$ ($n$-dimensional linear space); the tangent vectors at $P$ to the lines $x^\mu=\text{constant}$ span the so called \textit{natural basis}. The vectors of the \textit{natural frame} are given by
\begin{equation}\label{natbasismiau}
\overrightarrow{e}_\mu = \frac{\partial}{\partial x^\mu} \overrightarrow{P},
\end{equation}
where we have used the symbol $\overrightarrow{P}$ to denote the position vector of $P$ referred to some origin in $\mathbb{R}^n$.
Each vector at $\overrightarrow{P}$ can be expressed in terms of its local components. In particular, the displacement vector $d\overrightarrow{P}$ can be written as
\begin{equation}\label{vettorinodispl}
d \overrightarrow{P}= dx^\mu \frac{\partial}{\partial x^\mu}\overrightarrow{P}.
\end{equation}

Besides the natural basis (\ref{natbasismiau}), any other frame could be a suitable one. In particular, we can introduce a set of vectors $\lbrace \overrightarrow{e}_i \rbrace$ which are \textit{orthonormal} with respect to the $n$-dimensional Minkowski metric $\eta_{ij}=\left(1,-1,\ldots,-1 \right)$:\footnote{The choice of the signature $(+,-,-,\ldots,-)$, which actually corresponds to \textit{pseudo}-Riemannian, rather than Riemannian, geometries, is motivated by the fact that our aim is that of describing a theory of \textit{gravitation}. In the sequel, we will omit all the time the term ``pseudo'' and we will use Riemannian for pseudo-Riemannian, according with the convention of \cite{Libro1}.} 
\begin{equation}\label{orthobella}
\overrightarrow{e}_i \cdot \overrightarrow{e}_j = \eta_{ij}.
\end{equation}
The frame $\lbrace \overrightarrow{e}_i \rbrace$ is called the \textit{moving frame} and it is related to the natural basis (\ref{natbasismiau}) by a non-singular matrix $V^\mu_i$:
\begin{equation}\label{124a}
\overrightarrow{e}_i = V^\mu_ {\;\;i} \overrightarrow{e}_\mu , \quad \overrightarrow{e}_\mu =V^i_{\;\;\mu} \overrightarrow{e}_i,
\end{equation} 
\begin{equation}\label{124b}
V^\mu_{\;\;i}V^i_{\;\;\nu}=\delta^\mu_{\;\; \nu}, \quad V^\mu_{\;\;i} V^j_{\;\;\mu}=\delta_i^{\;\;j}.
\end{equation}
Then, introducing the \textit{differential $1$-forms} (antisymmetric tensors)
\begin{equation}
V^i = V^i_{\;\; \mu}dx^\mu ,
\end{equation}
equation (\ref{vettorinodispl}) becomes:
\begin{equation}\label{newvectordispl}
\overrightarrow{dP}=dx^\mu \left(V^i_{\;\;\mu}V^\nu_{\;\;i} \right)\frac{\partial}{\partial x^\nu} \overrightarrow{P}=V^i e_i (\overrightarrow{P}).
\end{equation}
The set of $1$-forms $\lbrace V^i \rbrace$ is the so called \textit{vielbein frame}, which is \textit{dual to the moving frame} $\lbrace e_i \rbrace$. Indeed:
\begin{equation}
V^i (\overrightarrow{e}_j)= V^i_{\;\;\mu}V^\nu_{\;\;j}dx^\mu (\overrightarrow{\partial}_\nu)=\delta^i_{\;\;j}.
\end{equation}
The relation occurring between two infinitesimally close frames $\lbrace \overrightarrow{e}_i \rbrace$ and $\lbrace \overrightarrow{e}_i + d\overrightarrow{e}_i \rbrace$ is
\begin{equation}
d\overrightarrow{e}_i = \frac{\partial \overrightarrow{e}_i}{\partial x^j}dx^j
\end{equation}
and, since $d \overrightarrow{e}_i$ is a vectorial $1$-form, we find:
\begin{equation}\label{aiutooooo}
d \overrightarrow{e}_i = - \overrightarrow{e}_j \omega^j_{\;\;i},
\end{equation}
where $\omega^j_{\;\;i}$ is an infinitesimal matrix of $1$-forms:
\begin{equation}
\omega^j_{\;\;i}= \omega^j_{\;\;i \vert \mu}dx^\mu .
\end{equation}
Differentiating the orthonormality relation (\ref{orthobella}) and using (\ref{aiutooooo}), one can show that
\begin{equation}
d (\overrightarrow{e}_i \cdot \overrightarrow{e}_j) = - (\omega_{ij} + \omega_{ji})=0 \quad \rightarrow \quad \omega_{ij}=-\omega_{ji} .
\end{equation}
Therefore, $\omega^i_{\;\;j}$ is an infinitesimal ``rotation'' matrix of the Lorentzian group $SO(1,n-1)$ and it is called the \textit{spin connection}.

\subsection{Torsion and curvature in linear spaces}

We now apply the $d$-operator to both sides of (\ref{newvectordispl}) and (\ref{aiutooooo}); the integrability condition $d^2=0$ gives the following equations:
\begin{align}
& R^i \equiv dV^i - \omega^i_{\;\;j} \wedge V^j =0, \\
& R^{i}_{\;\;j} \equiv d \omega^i_{\;\;k} \wedge \omega^k_{\;\; j}=0,
\end{align}
being ``$\wedge$'' the wedge product between differential forms.
The left-hand sides of these equations are called the \textit{torsion} and the \textit{curvature} $2$-forms, respectively.
In the $\mathbb{R}^n$ case, they are identically zero (the spin connection is a \textit{pure gauge}).

\subsection{Covariant derivatives}

Let us now consider a vector field $\overrightarrow{v}_i$ defined over a region of $\mathbb{R}^n$. Referring to the moving frame, we have
\begin{equation}
\overrightarrow{v}=v^i \overrightarrow{e}_i.
\end{equation}
Using (\ref{aiutooooo}), we can evaluate the $d\overrightarrow{v}$ due to an infinitesimal displacement:
\begin{equation}
d \overrightarrow{v}= dv^j \overrightarrow{e}_j - v^i \omega^j_{\;\;i} \overrightarrow{e}_j = (dv^i - \omega^i_{\;\;j} v^j)\overrightarrow{e}_i,
\end{equation}
where
\begin{equation}
dv^i - \omega^i_{\;\; j}v^j \equiv Dv^i
\end{equation}
is called the \textit{covariant derivative} of $v^i$.

The whole procedure can then extended to the case of $n$-dimensional smooth Riemannian manifolds $\mathcal{M}_n$. Let us see how. 

\section{Riemannian manifolds geometry in the vielbein basis}

Consider a $n$-dimensional manifold $\mathcal{M}_n$ on which a metric $g_{\mu \nu}$ has been defined. Then, $\mathcal{M}_n$ is, by definition, a (smooth) Riemannian manifold (namely, a smooth manifold with a Riemannian metric, see, for example, Ref. \cite{Libro1} for details).\footnote{In our case, the signature of the metric is actually that of pseudo-Riemannian geometry, as we have already mentioned in the previous section.}

Now, at each point $P$ of $\mathcal{M}_n$ we can set up an orthonormal local reference frame $\lbrace \overrightarrow{e}_i \rbrace$ spanning a basis of the tangent space $T_P(\mathcal{M})$ at $P$:
\begin{equation}\label{maheur}
\overrightarrow{e}_i \cdot \overrightarrow{e}_j \equiv \eta_{ij},
\end{equation}
where $\eta_{ij}$ is the Minkowskian metric on the tangent space.

Let us mention that we insist to consider orthonormal frames since one would also introduce \textit{spinor} fields on $\mathcal{M}_n$, which are $SO(1,n-1)$ representations. We are therefore forced to restrict the set of affine frames at $P$, related to each other by elements of $GL(n, \mathbb{R})$, to the subset of orthonormal frames related to each other by elements of $SO(1,n-1)$.
In particular, \textit{spinors cannot be described in the natural frame} $\lbrace \overrightarrow{\partial}_\mu \rbrace$. Indeed, under a coordinate transformation the vectors $\overrightarrow{\partial}_\mu$ transform as
\begin{equation}
\frac{\partial}{\partial x'^\mu} = \frac{\partial x^\nu}{\partial x'^\mu} \frac{\partial}{\partial x^\nu},
\end{equation}
where the Jacobian matrix $\left( \frac{\partial x^\nu}{\partial x'^\mu} \right)_P$ is, in general, an element of $GL(n, \mathbb{R})$.

The relation between the moving (orthonormal) frame and the natural one is (as in the Euclidean case):
\begin{align}
& \overrightarrow{e}_i = V^\mu _{\;\;i} \frac{\partial}{\partial x^\mu}, \\
& \frac{\partial}{\partial x^\mu}= V^i_{\;\; \mu}\overrightarrow{e}_i,
\end{align}
where $V^\mu _{\;\;i}$ is a non-singular matrix satisfying
\begin{equation}
V^\mu _{\;\;i}V^i_{\;\; \nu} = \delta^\mu_{\;\; \nu}, \quad V^\mu _{\;\;i}V^j_{\;\; \mu} = \delta^j_{\;\; i}
\end{equation}
($V^\mu _{\;\;i}$ is the inverse matrix of $V^i_{\;\; \nu}$).

The reader can find the relation with the usual tensor formulation, which utilizes the natural frame, in Ref. \cite{Libro1}.

We then express an infinitesimal displacement $\overrightarrow{d P}$ in terms of the moving frame at $T_P(\mathcal{M})$:
\begin{equation}\label{diffdispl}
\overrightarrow{d P} = V^i \overrightarrow{e}_i,
\end{equation} 
where $V^i$ are the \textit{vielbein fields dual to the moving frame} defined by
\begin{equation}
V^i (\overrightarrow{e}_j)= \delta^i_{\; \; j},
\end{equation}
that is
\begin{equation}
V^i = V^i_{\;\;\mu} dx^\mu .
\end{equation}
They are a basis for the $1$-forms on the cotangent plane at $P$.
In other words, being $\lbrace \overrightarrow{e}_i \rbrace$ the orthonormal moving frame, the corresponding orthonormal dual frame of covectors in $T^\star_P(\mathcal{M})$ is the vielbein frame.
Let us mention that, in this frame, a \textit{canonical oriented volume element} is given by the $n$-form (or \textit{volume form})
\begin{equation}
\Omega ^{(n)}=V^1 \wedge V^2 \wedge \ldots \wedge V^n ,
\end{equation}
where we have denoted by ``$\wedge$'' the wedge product between differential forms.

Actually, one can observe that the notation $\overrightarrow{d P}$ for the infinitesimal displacement is a little misleading, due to the fact that, in Riemannian geometry, (\ref{diffdispl}) is not, in general, an exact differential, since $P$ is not a function of the coordinates (contrary to what happens in the case of Euclidean geometry). With an abuse of notation, however, we continue to use the symbol ``$d$''. The same remark applies to the evaluation of the change of the moving frame under an infinitesimal translation $\overrightarrow{P} \rightarrow \overrightarrow{P} + \overrightarrow{d P}$:
\begin{equation}\label{diffdisplmoving}
d \overrightarrow{e}_i = - \overrightarrow{e}_j \omega^j_{\;\; i},
\end{equation}
where 
\begin{equation}
\omega^j_{\;\; i}= \omega^j_{\;\; i \vert \mu} dx^\mu 
\end{equation}
is called the \textit{connection}.
Then, applying the $d$-operator to both sides of (\ref{maheur}), one can (heuristically) prove\footnote{Following \cite{Libro1}, we have added the term ``heuristically'' because, at this point, in the differentiation we are actually using a differential operator that is not an exact one, and thus cannot be identified with what is commonly referred to as the $d$-operator.} that the infinitesimal matrix $\omega^j_{\;\; i}$ is antisymmetric
\begin{equation}\label{veraspinconn}
\omega_{ij}= - \omega_{ji} \, ,
\end{equation}
and therefore it is a ``rotation'' matrix that belongs to the Lie algebra of $SO(1,n-1)$ (as one would obtain in Euclidean geometry). In the sequel, we assume the validity of (\ref{veraspinconn}). In this case, $\omega^j_{\;\; i}$ is called the \textit{spin connection}.\footnote{Equation (\ref{veraspinconn}) can be referred to as the ``metric postulate'', since it strictly depends on the signature of the metric (see Ref. \cite{Libro1} for details).}

\subsection{Torsion and curvature in Riemannian manifolds geometry}

On any manifold $\mathcal{M}_n$ one can introduce the \textit{torsion} and the \textit{curvature} $2$-forms by means of the following definitions, respectively:
\begin{align}
& R^i \equiv dV^i - \omega^i_{\;\;j}\wedge V^i , \label{torsionnnnnn} \\
& R^{ij} \equiv d \omega^{ij} - \omega^i_{\;\; k }\wedge \omega^{kj}, \label{curvatureeeeeeeeee}
\end{align}
where $\omega^{ij}\equiv \omega^i_{\;\;k}\eta^{kj}$.
We will also refer to both $R^i$ and $R^{ij}$ together as the curvatures.

In general, $R^i$ and $R^{ij}$ have non-vanishing values (in Riemannian geometry). Equations (\ref{diffdispl}), (\ref{diffdisplmoving}), (\ref{torsionnnnnn}), and (\ref{curvatureeeeeeeeee}) are called the \textit{structure equations}. Let us mention that the structure equations (\ref{torsionnnnnn}) and (\ref{curvatureeeeeeeeee}) could also be (again heuristically) retrieved by taking the exterior derivative of both sides of equations (\ref{diffdispl}) and (\ref{diffdisplmoving}).

The \textit{metric tensor} on $\mathcal{M}_n$ can be written as
\begin{equation}
g_{\mu \nu}= V^i_{\;\;\mu}V^j_{\;\; \nu}\eta_{ij}.
\end{equation}
Then, differentiating both sides of equations (\ref{torsionnnnnn}) and (\ref{curvatureeeeeeeeee}), and using $d^2 =0$, we get the following integrability conditions:
\begin{align}
& dR^i + \omega^i_{\;\;j} \wedge R^j + R^{i}_{\;\;j} \wedge V^j =0, \label{b1bella} \\
& dR^i_{\;\;j} - R^i_{\;\;k} \wedge \omega^k_{\;\;j} + \omega^i_{\;\;k}\wedge R^k_{\;\;j}=0 . \label{b2bella}
\end{align}
Equations (\ref{b1bella}) and (\ref{b2bella}) are referred to as the \textit{Bianchi identities} obeyed by $R^i$ and $R^{i}_{\;\;j}$, respectively.

Now, let us explicitly observe that all the equations introduced so far are exterior equations and, as such, they are scalars under diffeomorphisms\footnote{Strictly speaking, diffeomorphisms are isomorphisms of smooth manifolds.} on $\mathcal{M}_n$. Latin indexes are inert under diffeomorphisms, being indexes of the local gauge group $SO(1,n-1)$. The same is true if we expand $\omega^i_{\;\;j}$, $R^i$, and $R^i_{\;\;j}$ in a local cotangent basis $\lbrace V^i \rbrace$:
\begin{align}
& \omega^i_{\;\;j} = \omega^i_{\;\;j \vert k} V^k, \label{expomlatin} \\
& R^i = R^i_{kl} V^k \wedge V^l , \label{exptorlatin} \\
& R^i_{\;\;j} = R^i_{\;\;j \vert k l} V^k \wedge V^l . \label{expcurvlatin}
\end{align}
Indeed, the component fields $\omega^i_{\;\;j \vert k}$, $R^i_{kl}$, and $R^i_{\;\;j \vert k l}$ carry indexes of the Latin type and are hence inter under diffeomorphisms. $R^i_{\;\;j \vert k l}$ is called the \textit{intrinsic curvature tensor}.

Summarizing: Our starting point was a Riemann manifold $\mathcal{M}_n$ endowed with a local (orthonormal) moving frame and its dual local vielbein frame $\lbrace V^i \rbrace$ in the cotangent plane. The frame $\lbrace V^i \rbrace$ is acted on by the local gauge group $SO(1,n-1)$. 
We have also introduced a local connection $1$-form $\omega^i_{\;\;j}$ and, postulating $\omega_{ij}=-\omega_{ji}$, we have identified it with an infinitesimal ``rotation'' matrix of $SO(1,n-1)$, called the spin connection. Then, we have defined the torsion and the curvature $2$-forms, and we have subsequently derived the Bianchi identities. 

If one further assumes
\begin{equation}\label{zerella}
R^i=0,
\end{equation}
then $\mathcal{M}_n$ is said to be a (Riemannian) manifold with a \textit{Riemannian connection}. 
In this case, one can express the spin connection in terms of (the space-time derivatives of) the vielbein field (see \cite{Libro1} for details).

\subsection{Lorentz covariant derivatives}

Let us now explore the \textit{gauge invariance under $SO(1,n-1)$} and define the \textit{covariant derivatives} in the case of Riemannian manifolds geometry.

Suppose we perform an $SO(1,n-1)$ gauge transformation on the local frames:
\begin{equation}\label{trasformella}
\overrightarrow{e}'_i = \overrightarrow{e}_j \Lambda^j_{\;\; i}, \quad \Lambda \in SO(1,n-1).
\end{equation}
From 
\begin{equation}
d \overrightarrow{P}= \overrightarrow{e}_i V^i = \overrightarrow{e}' _i V'^i
\end{equation}
(remember that, actually, ``$d$'' is not an exact differential operator) we obtain
\begin{equation}
V'^i = \left(\Lambda^{-1}\right)^i_{\;\;j}V^j .
\end{equation}
Then, from
\begin{equation}
d \overrightarrow{e}' = - \overrightarrow{e}' \omega'
\end{equation}
(where we have adopted a matrix notation), using (\ref{diffdisplmoving}) and (\ref{trasformella}), we have
\begin{equation}
- \overrightarrow{e} \omega \Lambda + \overrightarrow{e} d \Lambda = - \overrightarrow{e}\Lambda \omega',
\end{equation}
and therefore we can write
\begin{equation}
\omega ' = 	\Lambda^{-1} \omega \Lambda - \Lambda^{-1}  d \Lambda \;\;\; \Rightarrow \;\;\; \omega'^i_{\;\; j}= \left( \Lambda^{-1}\right)^i_{\;\; k} \omega^k_{\;\; l}\Lambda^l_{\;\; j} - \left( \Lambda^{-1} \right)^i_{\;\; k}\left( d\Lambda \right)^k_{\;\;j}.
\end{equation}
The result is that the spin connection $\omega^i_{\;\;j}$ undergoes an $SO(1,n-1)$ gauge transformation. 

Then, one can find that the torsion and the curvature $2$-forms transform in the vector and in the adjoint representations of $SO(1,n-1)$, respectively:
\begin{align}
& R'^i = \left( \Lambda^{-1}\right)^i_{\;\;j}R^j, \\
& R'^i_{\;\;j} = \left( \Lambda^{-1}\right)^i_{\;\;k} R^k_{\;\; l}\Lambda^l_{\;\;j}.
\end{align}
After that, computing the change of a vector 
\begin{equation}\label{piccolovett}
\overrightarrow{v}=v^i \overrightarrow{e}_i
\end{equation}
under an infinitesimal displacement, differentiating both sides of (\ref{piccolovett}) and using (\ref{diffdisplmoving}), one finds:
\begin{equation}
d \overrightarrow{v}= \overrightarrow{e}_i \left( d v^i - \omega^i_{\;\; j}v^j \right).
\end{equation}

Hence, we define the $SO(1,n-1)$ \textit{covariant exterior derivative} of $v^i$ by:
\begin{equation}
D v^i \equiv dv^i - \omega^i_{\;\; j} v^j.
\end{equation}
It is referred to as the \textit{Lorentz covariant derivative}.

One can also introduce $p$-form fields which are in the spinor representations of the gauge group $SO(1,n-1)$.
Let $\sigma$ be one such field in the lowest spinor representation, and let 
\begin{equation}
\Gamma_{ij}=\frac{1}{2}[\Gamma_i , \Gamma_j]
\end{equation}
be the Lorentz generators in the spinor representation, where $\Gamma^i$ are Dirac gamma matrices for $SO(1,n-1)$. Then, one can show that
\begin{equation}
D \sigma = d \sigma - \frac{1}{4} \omega_{ij} \wedge \Gamma^{ij} \sigma
\end{equation}
is the \textit{covariant derivative of the spinor $p$-form $\sigma$}.

Then, using the Lorentz covariant derivative, the torsion $2$-form can be rewritten as follows:
\begin{equation}
R^i = DV^i ,
\end{equation}
and the Bianchi identities (\ref{b1bella}) and (\ref{b2bella}) become, respectively:
\begin{align}
& D R^i + R^i_{\;\; j}\wedge V^i =0 , \label{b1newcov} \\
& D R^{i}_{\;\; j} =0. \label{b2newcov}
\end{align}

\section{Curvature tensor, Ricci tensor, and curvature scalar}

Let us finally make the symmetries of the intrinsic curvature tensor $R^i_{\;\;j \vert k l}$ explicit. Indeed, from equation (\ref{expcurvlatin}) one immediately gets
\begin{equation}
R^i_{\;\;j \vert kl}= - R^{i}_{\;\;j \vert lk} ,
\end{equation} 
and from the metric postulate (\ref{veraspinconn}):
\begin{equation}
R_{ij \vert kl}=- R_{ji \vert kl}.
\end{equation} 
Furthermore, when $\omega^i_{\;\;j}$ is a Riemannian connection, that is when equation (\ref{zerella}) holds, we get
\begin{equation}\label{dimentica}
R^i_{\;\; j}\wedge V^j =0.
\end{equation}
Expanding (\ref{dimentica}) along the vielbein basis, we find
\begin{equation}
R^i_{\;\; j \vert kl} V^j \wedge V^k \wedge V^l =0,
\end{equation}
which gives the cyclic identity
\begin{equation}
R^{i}_{\;\; j \vert kl} + R^{i}_{\;\; k \vert lj}+ R^{i}_{l \vert jk}=0.
\end{equation}
Then, one can show that
\begin{equation}
R_{ij \vert kl}=R_{kl \vert ij}.
\end{equation}
From $R^i_{\;\; j \vert kl}$ one may construct the \textit{Ricci tensor}
\begin{equation}\label{ricciolotensor}
R^i_{\;\; j \vert ik} \equiv R_{jk},
\end{equation}
which turns out to be symmetric in the indexes $j,k$, and the \textit{curvature scalar}
\begin{equation}\label{scalarettocurvatura}
\eta^{ij}R_{ij}\equiv R .
\end{equation}
Because of the aforementioned symmetry properties, any other contraction possibility gives, at most, a change of sign with respect to the definitions (\ref{ricciolotensor}) and (\ref{scalarettocurvatura}).


\chapter{Technical details on the hidden structure of FDAs}\label{apphidden}

In this appendix, we collect the notation and conventions adopted in Chapters \ref{chapter 2} and \ref{chapter 5}, together with some technical details.

\section{Fierz identities and irreducible representations}\label{fierz}

In this section, we give the $3$-gravitinos irreducible representations and the Fierz identities in $D=11$ and $D=7$ space-time dimensions.

\subsection{$3$-gravitinos irreducible representations in $D=11$}

The gravitino $\Psi_\alpha $ $(\alpha =1,    \ldots     , 32)$ of $D=11$ supergravity is a spinor $1$-form belonging to the spinor representation of $ SO(1,10) \simeq Spin(32)$.
The symmetric product
$(\alpha , \beta , \gamma)\equiv\Psi_{(\alpha} \wedge \Psi_\beta \wedge \Psi_{\gamma )}$, of dimension $\mathbf{5984}$, belongs to the three-times symmetric reducible representation  of $Spin(32)$:
The Fierz identities amount to decompose the representation $(\alpha , \beta , \gamma)$ into irreducible representations of $Spin(32)$. In this way, we obtain
\begin{equation}
\mathbf{5984} \to \mathbf{32}+\mathbf{320}+\mathbf{1408}+\mathbf{4224} .
\end{equation}
We denote the corresponding irreducible spinor representations of the Lorentz group $SO(1,10)$ as follows:

\begin{equation}
\Xi^{(32)} \in \mathbf{32} \,,\quad \Xi^{(320)}_a \in \mathbf{320}\,,\quad \Xi^{(1408)}_{a_1a_2}\in \mathbf{1408}\,,\quad \Xi^{(4224)}_{a_1    \ldots    a_5}\in \mathbf{4224}\,,
\end{equation}
where the indexes $a_1 \ldots a_n$ are antisymmetrized, and each of them satisfies 
\begin{equation}
\Gamma^a \Xi_{ab_1    \ldots     b_n}=0.
\end{equation}
Now, one can easily compute the coefficients of the explicit decomposition into the irreducible basis, obtaining (see Refs. \cite{Libro2, D'AuriaFre} for details):
\begin{eqnarray}
\Psi \wedge \bar{\Psi} \wedge \Gamma_a \Psi & = & \Xi^{(320)} _a+ \frac{1}{11}\Gamma_a \Xi^{(32)},  \\
\Psi \wedge \bar{\Psi} \Gamma_{a_1 a_2}\Psi & = & \Xi^{(1408)}_{a_1a_2}-\frac{2}{9}\Gamma_{[a_2}\Xi^{(320)}_{a_2]}+\frac{1}{11}\Gamma_{a_1 a_2}\Xi^{(32)},  \\
\Psi \wedge \bar{\Psi}\wedge \Gamma_{a_1    \ldots    a_5}\Psi & = & \Xi^{(4224)}_{a_1    \ldots    a_5}+2 \Gamma_{[a_1 a_2 a_3}\Xi^{(1408)}_{a_4a_5]}+ \frac{5}{9}\Gamma_{[a_1    \ldots    a_4}\Xi^{(320)}_{a_5]}- \frac{1}{77}\Gamma_{a_1    \ldots    a_5}\Xi^{(32)} .
\end{eqnarray}

\subsection{Irreducible representations in $D=7$}

In $D=7$, an analogous decomposition leads to:

\begin{eqnarray}
\psi_C \wedge \bar{\psi}^C \wedge \psi_A & = & \Xi _A,  \\
\psi_A \wedge \bar{\psi}^C \wedge \Gamma^{ab}\psi_C & = & \Xi^{ab} _A - \frac{2}{5}\Gamma^{[a}\Xi^{b]}_A + \frac{2}{7}\Gamma^{ab}\Xi_A ,  \\
\psi_A \wedge \bar{\psi}^C \wedge \Gamma^{a}\psi_C & = & \Xi^a_A + \frac{2}{7}\Gamma^a \Xi_A ,  \\
\psi_{(A} \wedge \bar{\psi}_B \wedge \psi_{C)} & = & \Xi_{(ABC)},  \\
\psi_C \wedge \bar{\psi}^C \wedge \Gamma^{abc}\psi_A & = & \frac{3}{2}\Gamma^{[a}\Xi^{bc]}_A + \frac{9}{10}\Gamma^{[ab}\Xi^{c]}_A - \frac{1}{7}\Gamma^{abc}\Xi_A ,  \\
\psi^C \wedge \bar{\psi}^A \wedge \Gamma^{abc}\psi^B & = & \Xi^{(ABC)\vert abc} + \frac{1}{5}\Gamma^{abc}\Xi^{(ABC)} + \nonumber \\
& & -\frac{2}{3}\epsilon^{C(A}\left(\frac{3}{2}\Gamma^{[a}\Xi^{bc]\vert B)} + \frac{9}{10}\Gamma^{[ab}\Xi^{c]\vert B)} - \frac{1}{7}\Gamma^{abc}\Xi^{\vert B)} \right) ,  \\
\psi^C \wedge \bar{\psi}^A \wedge \psi^B & = & \Xi^{(ABC)}- \frac{2}{3}\epsilon^{C(A}\Xi^{B)}.
\end{eqnarray}

\section{Some useful formulas in $D=7$}

\begin{equation}
\begin{aligned}
& \sigma^{x\vert B}_{\;\;\;\;\;A} \sigma^{x\vert D}_{\;\;\;\;\;C}= - \delta^B_{\;\;\;A} \delta^D_{\;\;\;C} + 2 \delta^B_{\;\;\;C} \delta^D_{\;\;\;A} , \\ 
& \sigma^{x \vert C}_{\;\;\;\;\;B} \sigma^{y\vert B}_{\;\;\;\;\;A}= \delta^{xy}\delta^C_{\;\;\;A} + \ii \epsilon_{xyz}\sigma^{z \vert C}_{\;\;\;\;\;A}.
\end{aligned}
\end{equation}

\section{Explicit solution for the $3$-form in $D=11$}\label{coeff11D}

In $D=11$ supergravity, for the consistency of the parametrization of the $3$-form $A^{(3)}$, given by equation (\ref{a3par}) of Chapter \ref{chapter 2}, the following set of equations must be satisfied:
\begin{equation} \label{cond11}
\left\{
\begin{array}{l}
T_0-2 S_1 E_1-1=0 , \cr
T_0-2 S_1 E_2 -2 S_2 E_1=0 , \cr
3 T_1-8 S_2 E_2=0 , \cr
T_2+10 S_2 E_3+10 S_3 E_2=0 , \cr
120 T_3-S_3 E_1-S_1 E_3=0 , \cr
T_2+1200 S_3 E_3 =0 ,  \cr
T_3-2S_3 E_3=0 , \cr
9T_4+10 S_3 E_3=0 , \cr
S_1+10 S_2-720 S_3=0,
\end{array}
\right.
\end{equation}
while the integrability condition $D^2 \eta=0$ further implies
\begin{equation}
E_1+10 E_2-720 E_3 = 0 
\end{equation}
(here we have corrected some misprints, which were in part already recognized in \cite{Bandos:2004xw}, appearing in \cite{D'AuriaFre}). This system is solved by the relations (\ref{11dsol}) written in Chapter \ref{chapter 2}.

In \cite{D'AuriaFre}, the coefficient $T_0$ was arbitrarily fixed to $T_0=1$, leading to two distinct solutions; then, if we now fix the normalization $T_0=1$ in our system, we see that we get two distinct solutions, depending on the parameter $E_2$ (which just fixes the normalization of $\eta$):
\begin{eqnarray}
& T_0 = 1 , \;\;\; T_1 =  \frac{4}{15} , \;\;\; T_2 = -\frac{5}{144}, \;\;\; T_3 = \frac{1}{17280}, \;\;\; T_4 = -\frac{1}{31104} ,  \nonumber \\
& S_1 = \begin{pmatrix}
0 \\
\frac{1}{2E_2}
\end{pmatrix}  , \;\;\; S_2 = \frac{1}{10 E_2}, \;\;\; S_3= \begin{pmatrix}
\frac{1}{720 E_2}  \\
\frac{1}{480 E_2}
\end{pmatrix} , \;\;\; E_1 = \begin{pmatrix}
5E_2 \\
0
\end{pmatrix} , \;\;\; E_3 = \begin{pmatrix}
\frac{E_2}{48} \\
\frac{E_2}{72}
\end{pmatrix} . \label{t01}
\end{eqnarray}

\section{Dimensional reduction of the gamma matrices}\label{gammamat}

Here we write the dimensional reduction of the gamma matrices from eleven to seven dimensions. We first decompose the gamma matrices in $D=11$ (hatted ones) as follows:
\begin{equation}
\hat{\Gamma}_{\hat{a}} \rightarrow  \left\{
\begin{aligned}
& D=4 \;\;\; \Gamma_i , \\
& D=7 \;\;\; \Gamma_a ,
\end{aligned} \right.
\end{equation}
where $\hat{a}=0,    \ldots    ,10$, $a=0,    \ldots    ,6$, and $i=7,8,9,10$.
Then, we can write the following decomposition:
\begin{equation}
\Gamma _i = \mathbf{1} _{4} \otimes \gamma _i , \;\;\; \Gamma_a= \Gamma_a \otimes \gamma_5 ,
\end{equation}
with
\begin{equation}
\gamma_5 = \begin{pmatrix}
\delta_{A}^{\; B} & 0 \\
0 & -\delta_{A'}^{\; B'}
\end{pmatrix} , \;\;\;\;\; \gamma^5 = \mathbf{1} _4,
\end{equation}
and
\begin{equation}
\gamma_i = \begin{pmatrix}
0 & (\gamma_i)_{A}^{\; A'} \\
(\gamma_i)_{A'}^{\; A} & 0
\end{pmatrix} , \;\;\;\;\; \lbrace \gamma_i , \gamma_j \rbrace = 2 \eta_{ij}= - 2 \delta_{ij},
\end{equation}
where $i,j,    \ldots    $ are the internal indexes running from $7$ to $10$. Let us mention that we are using a \textit{mostly minus} Minkowski metric.
Thus, we can finally write
\begin{equation}
\Gamma_a = \begin{pmatrix}
(\Gamma_a)_\alpha^{\; \beta}\delta_A^{\; B} & \mathbf{0} \\
\mathbf{0} & -(\Gamma_a)_\alpha^{\; \beta}\delta_{A'}^{\; B'}
\end{pmatrix} , \;\;\; \Gamma_i = \begin{pmatrix}
\mathbf{0} & (\gamma_i)_A^{\; A'}\delta_\alpha^{\; \beta} \\
(\gamma_i)_{A'}^{\; A}\delta_{\alpha}^{\; \beta} & \mathbf{0}
\end{pmatrix} .
\end{equation}

\section{Properties of the `t Hooft matrices}\label{tooooft}

In the following, we write the properties of the `t Hooft matrices.
The self-dual and antiself-dual `t Hooft matrices satisfy the quaternionic algebra:
\begin{align}
& J^{\pm \vert x}J^{\pm \vert y} = - \delta^{xy}\mathbf{1} _{4\times 4}+ \epsilon^{xyz} J^{\pm \vert z}, \\
& J^{\pm\vert x}_{ab} = \pm \frac{1}{2}\epsilon_{abcd}J^{\pm \vert x}_{cd}, \\
& [J^{+\vert x},J^{-\vert y}]=0, \;\;\; \forall \; x,\;y.
\end{align}
From the above relations, it follows:
\begin{equation}
\text{Tr}(J^{x}_{rs}J^{y}_{st}J^{z}_{tr})= \text{Tr}(\epsilon^{xyz'}J^{z'}J^{z})= \text{Tr}(-\epsilon^{xyz'}\delta^{z z'}\mathbf{1} _4 )= -4 \epsilon^{xyz}.
\end{equation}


\chapter{Detailed calculations concerning $S$-expansion}\label{appanex}

This appendix contains the detailed calculations referring to an example of application of the analytic method presented in \cite{Analytic} and recalled in Chapter \ref{chapter 6} of this thesis.

\section{From $\osp(1|32)$ to the $M$-algebra}\label{osp}

The (anti)commutation relations for $\mathfrak{osp}(1 \vert 32)$ and for the $M$-algebra can be found in Ref. \cite{Iza1}.
For simplicity, in the following we will just consider the structure of the (anti)commutation relations, since the explicit values of the coefficients are not relevant to our analysis. We also neglect the Lorentz indexes of the generators, labeling, in particular, by $\tilde{Z}_2$, $\tilde{Z}_5$, and $Z_5$ the generators $\tilde{Z}_{ab}$, $\tilde{Z}_{a_1 \ldots a_5}$, and $Z_{a_1 \ldots a_5}$, respectively (the former refer to the $M$-algebra, the latter to $\mathfrak{osp}(1 \vert 32)$).

We can write the (anti)commutation relations between the generators of the target $M$-algebra in terms of the (anti)commutation relations between the generators of $\mathfrak{osp}(1 \vert 32)$ (obtaining, in this way, the multiplication rules between the elements of $\tilde{S}$), schematically, as follows:

\begin{align}
& \left[\tilde{J}, \tilde{J} \right] = \left[\lambda_\alpha J_, \lambda_\alpha J \right]= \lambda_\alpha \lambda_\alpha \left[J,J \right]\; \text{and} \; \left[\tilde{J}, \tilde{J} \right] \propto \lambda_\alpha  J \;\;\; \Rightarrow \;\;\; \lambda_\alpha  \lambda_\alpha = \lambda_\alpha , \\
& \left[\tilde{J}, \tilde{P} \right] = \left[\lambda_\alpha J, \lambda_\delta P \right]= \lambda_\alpha \lambda_\delta \left[J, P \right]\; \text{and} \; \left[\tilde{J}, \tilde{P} \right] \propto \lambda_\delta P \;\;\; \Rightarrow \;\;\; \lambda_\alpha \lambda_\delta = \lambda_\delta , \\
& \left[\tilde{J}, \tilde{Z}_2 \right] = \left[\lambda_\alpha J, \lambda_\beta J \right]= \lambda_\alpha \lambda_\beta \left[J, J \right] \; \text{and} \; \left[\tilde{J}, \tilde{Z}_2 \right] \propto \lambda_\beta J \;\;\; \Rightarrow \;\;\; \lambda_\alpha \lambda_\beta = \lambda_\beta , \\
& \left[\tilde{J}, \tilde{Z}_5 \right] = \left[\lambda_\alpha J, \lambda_\delta Z_5 \right]= \lambda_\alpha \lambda_\delta \left[J,Z_5 \right] \; \text{and} \; \left[\tilde{J}, \tilde{Z}_5 \right] \propto \lambda_\delta  Z_5 \;\;\; \Rightarrow \;\;\; \lambda_\alpha \lambda_\delta = \lambda_\delta , \\
& \left[\tilde{J}, \tilde{Q} \right] = \left[\lambda_\alpha J, \lambda_\gamma Q \right]=\lambda_\alpha \lambda_\gamma \left[J ,Q \right]\; \text{and} \; \left[\tilde{J}, \tilde{Q} \right]\propto \lambda_\gamma Q \;\;\; \Rightarrow \;\;\; \lambda_\alpha \lambda_\gamma = \lambda_\gamma , \\
& \left[\tilde{P}, \tilde{P} \right] = \left[\lambda_\delta P, \lambda_\delta P \right]=\lambda_\delta \lambda_\delta \left[P,P \right] \; \text{and} \; \left[\tilde{P}, \tilde{P} \right]= 0, \; \text{while} \; \left[P,P \right] \neq 0 \;\;\; \Rightarrow \;\;\; \lambda_\delta \lambda_\delta = \lambda_{0_S} , \\
& \left[\tilde{P}, \tilde{Z}_2 \right] = \left[\lambda_\delta P, \lambda_\beta J \right]= \lambda_\delta \lambda_\beta \left[P, J \right] \; \text{and} \; \left[\tilde{P}, \tilde{Z}_2 \right]=0, \; \text{while} \; \left[P, J \right] \neq 0  \;\;\; \Rightarrow \;\;\; \lambda_\delta \lambda_\beta = \lambda_{0_S} , \\
& \left[\tilde{P}, \tilde{Z}_5 \right] = \left[\lambda_\delta P, \lambda_\delta Z_5 \right]= \lambda_\delta \lambda_\delta \left[P, Z_5 \right]  \; \left[\tilde{P}, \tilde{Z}_5 \right]=0, \; \text{while} \; \left[P, Z_5 \right] \neq 0 \;\;\; \Rightarrow \;\;\; \lambda_\delta \lambda_\delta = \lambda_{0_S} , \\
& \left[\tilde{P}, \tilde{Q} \right] = \left[\lambda_\delta P, \lambda_\gamma Q \right]=\lambda_\delta \lambda_\gamma \left[P,Q \right]\; \text{and} \; \left[\tilde{P}, \tilde{Q} \right] =0, \; \text{while} \; \left[P,Q \right] \neq 0 \;\;\; \Rightarrow \;\;\; \lambda_\delta \lambda_\gamma = \lambda_{0_S} , \\
& \left[\tilde{Z}_{2}, \tilde{Z}_{2} \right] = \left[\lambda_\beta J, \lambda_\beta J \right]= \lambda_\beta \lambda_\beta \left[J,J \right]\; \text{and} \;\left[\tilde{Z}_{2}, \tilde{Z}_{2} \right] =0, \; \text{while} \; \left[J,J \right] \neq 0 \;\;\; \Rightarrow \;\;\; \lambda_\beta \lambda_\beta = \lambda_{0_S} , \\
& \left[\tilde{Z}_{2}, \tilde{Z}_{5} \right] = \left[\lambda_\beta J, \lambda_\delta Z_5 \right]= \lambda_\beta \lambda_\delta \left[J,Z_5 \right] \; \text{and} \; \left[\tilde{Z}_{2}, \tilde{Z}_{5} \right]=0, \; \text{while} \left[J,Z_5 \right] \neq 0 \Rightarrow \lambda_\beta \lambda_\delta = \lambda_{0_S} , \\
& \left[\tilde{Z}_{2}, \tilde{Q} \right] = \left[\lambda_\beta J, \lambda_\gamma Q \right]=\lambda_\beta \lambda_\gamma \left[J,Q \right]\; \text{and}\; \left[\tilde{Z}_{2}, \tilde{Q} \right] =0, \; \text{while} \; \left[J,Q \right] \neq 0 \;\;\; \Rightarrow \;\;\; \lambda_\beta \lambda_\gamma = \lambda_{0_S} , \\
& \left[\tilde{Z}_{5}, \tilde{Z}_{5} \right] = \left[\lambda_\delta Z_5, \lambda_\delta Z_5 \right]= \lambda_\delta \lambda_\delta \left[Z_5,Z_5 \right], \; \left[\tilde{Z}_{5}, \tilde{Z}_{5} \right]=0, \; \text{while} \; \left[Z_5,Z_5 \right] \neq 0 \; \Rightarrow \; \lambda_\delta \lambda_\delta = \lambda_{0_S} , \\
& \left[\tilde{Z}_{5}, \tilde{Q} \right] = \left[\lambda_\delta Z_5, \lambda_\gamma Q \right]=\lambda_\delta \lambda_\gamma \left[Z_5,Q \right]\; \text{and}\; \left[\tilde{Z}_{5}, \tilde{Q} \right]=0, \; \text{while} \; \left[Z_5,Q \right] \neq 0 \; \Rightarrow \; \lambda_\delta \lambda_\gamma = \lambda_{0_S} , \\
& \lbrace \tilde{Q}, \tilde{Q} \rbrace = \lbrace\lambda_\gamma Q, \lambda_\gamma Q \rbrace = \lambda_\gamma \lambda_\gamma \lbrace Q,Q \rbrace \; \text{and} \; \lbrace \tilde{Q}, \tilde{Q} \rbrace \propto \lambda_\delta P + \lambda_\beta J +\lambda_\delta Z_{5} \; \Rightarrow  \label{lastcomm} \\
& \Rightarrow \; \lambda_\gamma \lambda_\gamma = \lambda_\beta , \;  \text{and} \; \lambda_\delta = \lambda_\beta.  \nonumber
\end{align}
Note that, in equation (\ref{lastcomm}), we must set 
\begin{equation}\label{equal}
\lambda_\beta=\lambda_\delta
\end{equation}
in order to get consistent relations without breaking the uniqueness of the internal composition law of $\tilde{S}$.
For performing this identification with consistency, we have exploited Theorem \ref{T} and the resulting statements.

This procedure fixes the degeneracy of the multiplication rules between the elements of the subsets of $\tilde{S}$; indeed, we are left with
\begin{align}
& \lambda_\alpha \lambda_\alpha = \lambda_\alpha , \\
& \lambda_\alpha \lambda_\beta = \lambda_\beta \lambda_\alpha = \lambda_\beta , \\
& \lambda_\alpha \lambda_\gamma = \lambda_\gamma \lambda_\alpha = \lambda_\gamma , \\
& \lambda_\beta \lambda_\beta = \lambda_{0_S}, \\
& \lambda_\beta \lambda_\gamma = \lambda_\gamma \lambda_\beta = \lambda_{0_S}, \\
& \lambda_\gamma \lambda_\gamma = \lambda_\beta .
\end{align}
We are then able to write the complete multiplication table of $\tilde{S}$.

\end{appendices}


\newpage

\pagestyle{empty}

\begin{center}
``Tesi discussa per il conseguimento del titolo di dottore di ricerca in Fisica, svolta presso il corso di dottorato in Fisica (ciclo 30) del
Politecnico di Torino''.
\end{center}

\begin{center}
``Thesis discussed for the Ph.D title achievement in Physics, carried out in the Politecnico di Torino Ph.D program in Physics (cycle 30th).''
\end{center}

\printthesisindex 

%

\end{document}